\setlist{nosep}
\DeclareMathAlphabet{\mathpzc}{OT1}{pzc}{m}{it} 
\DeclareRobustCommand{\rchi}{{\mathpalette\irchi\relax}}
\newcommand{\irchi}[2]{\raisebox{\depth}{$#1\chi$}} 
\definecolor{darkblue}{rgb}{0.05,0.25,0.65}
\definecolor{darkgreen}{RGB}{20,140,10}
\definecolor{lightgray}{rgb}{0.9,0.9,0.9}
\definecolor{darkorange}{RGB}{200,100,5}
\definecolor{darkyellow}{rgb}{.91,.91,0}
\definecolor{orangeii}{RGB}{200,100,5}
\definecolor{lightblue}{RGB}{243, 250, 255}
\newtheorem{theorem}{Theorem}[section]
\newtheorem{claim}{Claim}[section]
\newtheorem{lemma}[theorem]{Lemma}
\newtheorem{proposition}[theorem]{Proposition}
\newtheorem{corollary}[theorem]{Corollary}
\theoremstyle{definition}
\newtheorem{definition}[theorem]{Definition}
\newtheorem{notation}[theorem]{Notation}
\newtheorem{example}[theorem]{Example}
\newtheorem{examples}[theorem]{Examples}
\newtheorem{remark}[theorem]{Remark}
\newlength{\dhatheight}
\newcommand{\doublehat}[1]{%
    \hspace{1.3pt}
    \settoheight{\dhatheight}{\ensuremath{\widehat{#1}}}%
    \addtolength{\dhatheight}{-0.23ex}%
    \widehat{\vphantom{\rule{1pt}{\dhatheight}}%
    \hspace{-1.3pt}
    \smash{\widehat{#1}}}}
\let\PLAINthebibliography\thebibliography
\renewcommand\thebibliography[1]{
  \PLAINthebibliography{#1}
  \setlength{\parskip}{0.5pt}
  \setlength{\itemsep}{0.5pt plus .3ex}
}
\newcommand{\vspaceabove}{$\phantom{\mathclap{\vert^{\vert^{\vert^{\vert}}}}}$}
\newcommand{\proofstep}[1]{\scalebox{.85}{#1}}
\newcommand{\yields}{\Rightarrow}
\newcommand{\shape}{
  \raisebox{1pt}{\rm\normalfont\textesh}
}
\newcommand{\CurvatureAtPsiOne}{J}
\newcommand{\bfOmega}{\mathbf{\Omega}\hspace{-8.8pt}\mathbf{\Omega}}
\newcommand{\evencoordinateindex}{r}
\newcommand{\oddcoordinateindex}{\rho}
\newcommand{\ZTwo}{\mathbb{Z}_2}
\newcommand{\defneq}{\equiv}
\newcommand{\differential}{\mathrm{d}}
\newcommand{\covariantderivative}{\nabla}
\newcommand\bos[1]{\mathstrut\mkern2.5mu#1\mkern-14mu\raise1.7ex%
  \hbox{$\scriptstyle\rightsquigarrow$}}
\newcommand\bosonic[1]{\mathstrut\mkern2.5mu#1\mkern-14mu\raise1.7ex%
  \hbox{$\scriptstyle\rightsquigarrow$}}
\newcommand{\longsquiggly}{\xymatrix{{}\ar@{~>}[r]&{}}}
\newcommand{\FR}{\mathbb{R}}
\newcommand{\SmoothManifolds}{\mathrm{SmthMfd}}
\newcommand{\sManifolds}{\mathrm{sSmthMfd}}
\newcommand{\sSmoothSets}{\mathrm{sSmthSet}}
\newcommand{\dd}{\mathrm{d}}
\newcommand{\odd}{\mathrm{odd}}
\begin{document}

\setlength{\abovedisplayskip}{3pt}
\setlength{\belowdisplayskip}{3pt}
\setlength{\abovedisplayshortskip}{-3pt}
\setlength{\belowdisplayshortskip}{3pt}

\title{Flux Quantization on 11-dimensional Superspace}

\author{
  Grigorios Giotopoulos${}^{\ast}$,
  \;\;
  Hisham Sati${}^{\ast \dagger}$,
  \;\;
  Urs Schreiber${}^{\ast}$
}

\maketitle

\thispagestyle{empty}

\begin{abstract}
  Flux quantization of the C-field in 11d supergravity is arguably necessary for the (UV-)completion of the theory, in that it determines the torsion charges carried by small numbers $N \ll \infty$ of M-branes.
  However, hypotheses about C-field flux-quantization (``models of the C-field'') have previously been discussed only in the bosonic sector of 11d supergravity and ignoring the supergravity equations of motion. Here we highlight a duality-symmetric formulation of on-shell 11d supergravity on superspace, observe that this naturally lends itself to completion of the theory by flux quantization, and indeed that 11d super-spacetimes are put on-shell by carrying quantizable duality-symmetric super-C-field flux; the proof of which we present in detail.   
\end{abstract}

\vspace{.1cm}

\begin{center}
\begin{minipage}{11.5cm}
\tableofcontents
\end{minipage}
\end{center}

\medskip

\vfill

\hrule
\vspace{5pt}

{
\footnotesize
\noindent
\def\arraystretch{1}
\tabcolsep=0pt
\begin{tabular}{ll}
${}^*$\,
&
Mathematics, Division of Science; and
\\
&
Center for Quantum and Topological Systems,
\\
&
NYUAD Research Institute,
\\
&
New York University Abu Dhabi, UAE.  
\end{tabular}
\hfill
\href{https://ncatlab.org/nlab/show/Center+for+Quantum+and+Topological+Systems}{
\adjustbox{raise=-15pt}{
\includegraphics[width=3cm]{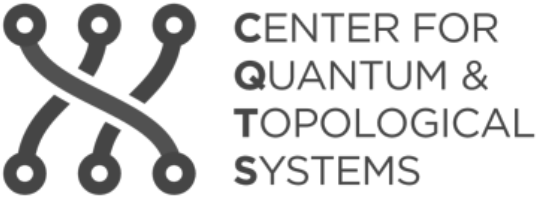}
}
}

\vspace{1mm} 
\noindent ${}^\dagger$The Courant Institute for Mathematical Sciences, NYU, NY

\vspace{.2cm}

\noindent
The authors acknowledge the support by {\it Tamkeen} under the 
{\it NYU Abu Dhabi Research Institute grant} {\tt CG008}.
}

\newpage


\noindent
{\bf Overview and Results.}
In \S\ref{SuperFluxQuantization} we address the open problem of flux- and charge-quantization (see \cite{SS24Flux}) of $D=11$, $\mathcal{N}=1$ supergravity (11d SuGra \cite{CJS78}, review in \cite{DNP86}\cite{MiemiecSchnakenburg06}), explaining how this is a necessary step towards the completion of 11d SuGra to the conjectured ``M-theory'' \cite{Duff99}.
Our main observation is that a solution proceeds naturally via {\it duality-symmetric} formulation
of the C-field \cite{BBS98}
but for its
{\it super-flux density} \eqref{SuperFluxDensitiesInIntroduction} on super-spacetime manifolds (``superspace supergravity'', going back to \cite{WZ77}\cite{SG79}\cite{CF80}\cite{BrinkHowe80}\cite{Howe82}\cite{DF82}).

\medskip 
Up to some mild but consequential change in perspective --- for us the whole theory is driven by the construction of (quantizable) super-C-field flux, which traditionally has instead been an afterthought --- the required duality-symmetric formulation of 11d superspace supergravity 
has previously been indicated in \cite[\S III.8.5]{CDF91} and (apparently independently in) \cite[\S 6]{CL94} and as such is known to experts (cf. \cite[\S 5]{HT03}\cite[\S 2]{Tsimpis04}). However, since the rather non-trivial proof has never been spelled out in print (and also seems not to exist in the proverbial drawers) --- 
while our new perspective clarifies its impact which may not have been fully appreciated before --- we take the occasion, in \S\ref{Supergravity}, to
demonstrate in detail this construction of on-shell 11d SuGra, with computer algebra checks recorded in \cite{AncillaryFiles}.

\medskip 
With this result in hand, the flux quantization of 11d SuGra follows by lifting the (rational-)homotopy theoretic formulation of flux quantization
\cite{FSS23Char}\cite{SS24Flux} 
to supergeometric homotopy theory \cite[\S 3.1.3]{SS20Orb}. Our brief survey of the required higher supergeometry in \S\ref{SuperCartanGeometry}
(with more details to be presented in \cite{GSS24}\cite{GSS25}) should thus make (flux quantized) 11d supergravity broadly accessible to both physicists and mathematicians. We also pause to carefully connect this more abstract formulation to the traditional notion of C-field gauge potentials (Prop. \ref{RecoveringTraditionalSuperCFieldGaugePotentials}).

\medskip 
While this article is therefore to some extent a unified and modernized review of (\S\ref{SuperFluxQuantization}) flux quantization, (\S\ref{SuperCartanGeometry}) higher supergeometry, and (\S\ref{Supergravity}) on-shell 11d SuGra on superspace,
we suggest that the new perspective opens the door to further progress towards the elusive M-theory: In follow-up articles \cite{GSS-M5Brane}\cite{GSS-Exceptional},
we use the approach to discuss flux-quantized {\it super-exceptional}-geometric supergravity compatible both with the super-exceptional embedding construction of the M5-brane \cite{FSS20Exc} as well as with the (level-)quantization of its Hopf-WZ/Page-charge term \cite{FSSHopf} for {\it nonabelian} worldvolume (higher) gauge fields \cite{FSS20TwistedString}, a major open issue in M-theory.

\section{Super-Flux Quantization of 11d SuGra}
\label{SuperFluxQuantization}

\noindent
{\bf The role of 11d Supergravity.}
It is known, but may remain underappreciated, that any field theory with fermions (such as the standard model) 
by necessity lives ``in superspace'' (as per the terminology of \cite{GGRS83}\cite{BK95}), in that its phase space is an object of super-geometry (\S\ref{SuperCartanGeometry}), regardless of whether or not 
the theory is super-symmetric. This is discussed in detail in the companion article \cite{GSS24} (quick exposition in \cite{Schreiber24}).

\medskip

This being so and contrary to common perception, on platonic grounds it would be surprising if the world were {\it not fundamentally} supersymmetric, with its observed bosonic symmetries just being broken fundamental super-symmetries. The only reason that this evident conclusion contradicts contemporary perception is the widespread focus on {\it global} supersymmetry, which however, like all global symmetries, cannot be expected to be more than accidental. Instead, fundamental supersymmetry is {\it local} supersymmetry, which in a relativistic world means \cite[p. 3]{Duff05}:\footnote{To recall the famous quote from \cite[p. 318]{Weinberg00}: ``Gravity exists, so if there is any truth to supersymmetry then any realistic supersymmetry theory must eventually be enlarged to a supersymmetric 
theory of matter and gravitation, known as supergravity. Supersymmetry  without supergravity is not an option, though it may be a good approximation at energies far below the Planck scale.''} {\it super-gravity} (henceforth SuGra; reviews include \cite{vN81}\cite{DNP86}\cite{CDF91}\cite[\S 31]{Weinberg00}\cite{VF12}\cite{Sezgin23}).

\medskip

Remarkably, theories of supergravity are both highly constrained as well as tightly interrelated, with the result that they seem to all revolve around the central instance in super-dimension $(11\vert \mathbf{32})$ (aka $D=11$, $\mathcal{N}=1$ supergravity \cite{CJS78}, review in \cite{MiemiecSchnakenburg06}, streamlined re-derivation in \S\ref{Supergravity}). The 
still elusive but plausibly existing (UV-)completion of 11d supergravity to a quantum theory has famously been conjectured (working title: ``M-Theory'' cf. \cite{Duff96}\cite{Duff99}\cite[\S 12]{Moore14}) to be the (similarly elusive) ``grand-unified theory of everything'' which ought to complete the standard model of particle physics coupled to gravity at sizable energies $1/\ell$, sizable coupling $g$,
and sizable `quantumness' $\hbar$.

\medskip

\noindent
{\bf The open problem of flux \& charge quantization of the C-field in 11d SuGra.}
While such extraordinary conjectures require extraordinary evidence, there has previously been little work on the completion of 11d supergravity even as a classical field theory. The traditional formulations of 11d SuGra  on local charts of spacetime (only) are incomplete because of the presence of the higher gauge field, namely the ``3-index photon'' or {\it C-field}, in the theory (\cite[p. 409]{CJS78}\cite[p. 1.15]{DF82}, review in \cite[pp. 31]{MiemiecSchnakenburg06}\cite[Ex. 2.12]{SS24Flux}). 

\smallskip 
As familiar from the ordinary photon field (the {\it A-field}) of Maxwell theory, the consistent definition of such higher gauge fields requires a specification of their {\it flux-quantization} laws (for review and pointers to the literature see \cite{SS24Flux})
which encodes in particular the torsion charges that may be reflected in the field flux on topologically non-trivial spacetimes or with non-trivial boundary conditions. 
Here ``torsion'' (cf. Rem. \ref{OnTheTermTorsion} for disambiguation) refers to charges which are torsion elements of their cohomology group in that some multiple $k$ of them {\it vanishes}. This means that the solitons (branes) carrying such $k$-torsion charges do not exist in large $N \gg  k$-numbers, and hence constitute small-$N$, hence large-$1/\ell$ information of the field theory.

But flux quantization in general, and of the C-field in particular, requires as input datum higher Bianchi identities in {\it duality-symmetric form} \cite{BBS98}\cite{CJLP98}\cite{Nu03}\cite{Sati06}\cite[\S 5.3]{Sati10}. Before we state the aim and conclusion of this article in \cref{DualitySymmetryOnSuperSpacetime}, we briefly recall in \S\ref{DualitySymmetryForFluxQuantization} why this is the case (following \cite{SS23FQ}\cite{SS24Flux}, full details in \cite{FSS23Char}).

\subsection{Duality-Symmetry for Flux Quantization}
\label{DualitySymmetryForFluxQuantization}

\noindent
{\bf The role of duality-symmetric Bianchi identities.}
The flux quantization law $\mathcal{A}$ of a higher gauge theory is a further choice of non-perturbative field content beyond that encoded by differential forms alone. Remarkably, the available choices of flux quantization laws are controlled by the form of the Bianchi identities 
\begin{equation}
  \label{PremetricBianchiIdentities}
  \overset{
    \mathclap{
      \raisebox{2pt}{
        \scalebox{.7}{
          \color{darkblue}
          \bf
          \def\arraystretch{.9}
          \begin{tabular}{c}
            de Rham
            \\
            differential
          \end{tabular}
        }
      }
    }
  }{
    \mathrm{d}
    \,
    \vec F
  }
  \;=\;
  \vec P\big(
    \vec F
\,  \big)
  \;:=\;
      \overset{
      \raisebox{2pt}{
        \scalebox{.7}{
          \color{darkblue}
          \bf
          polynomial \;\;\;
        }
      }
    }{\big(
    P^i\big(
      \vec F
    \, \big)
  \big)_{i \in I}
  }
\end{equation}
on the flux densities
\begin{equation}
  \label{TheFluxDensities}
  \vec F
  \;:=\;
  \Big(\!
    \overset{
      \raisebox{2pt}{
        \scalebox{.7}{
          \color{darkblue}
          \bf
          differential forms
        }
      }
    }{
    F^i
    \;\in\;
    \Omega
      _{\mathrm{dR}}
      ^{\mathrm{deg}_i}
    (X)
    }
  \!\Big)_{i \in I}
\end{equation}
(on spacetime $X$ indexed by some set $I$)
in their ``duality-symmetric'' or ``pre-metric'' guise, where the duality relation between magnetic and electric flux densities (the ``constitutive equation'')
\begin{equation}
  \label{TheHodgeDualityConstraint}
  \overset{
    \mathclap{
      \raisebox{2pt}{
        \scalebox{.7}{
          \color{darkblue}
          \bf
          \def\arraystretch{.9}
          \begin{tabular}{c}
            Hodge
            \\
            star-operator
          \end{tabular}
        }
      }
    }
  }{
  \star
  \,
  \vec F
  }
  \;\;\;\;
    =
  \;\;\;\;
  \overset{
    \mathclap{
      \raisebox{2pt}{
        \scalebox{.7}{
          \color{darkblue}
          \bf
          linear map
        }
      }
    }
  }{
  \vec \mu
  \big(
    \vec F
  \, \big)
  }
\end{equation}
is not imposed (yet) \cite[\S 2.4]{SS24Flux}.
For example (cf. \cite[Ex. 2.12]{SS24Flux}), the duality-symmetric form of the flux densities and their Bianchi identities in 11d supergravity is
\begin{equation}
  \label{PremetricCFieldFlux}
  \mathllap{
    \scalebox{.7}{
      \color{darkblue}
      \bf
      \def\arraystretch{.9}
      \begin{tabular}{c}
        Pre-metric/duality-symmetric
        \\
        C-field flux densities
        \\
        in 11d supergravity
      \end{tabular}
    }
  }
  \vec F
  \;\;
    =
  \;\;
  \left(\!\!
    \def\arraystretch{1.3}
    \begin{array}{l}
      G_4 
        \,\in\,
      \Omega^4_{\mathrm{dR}}(X)
      \\
      G_7 
        \,\in\,
      \Omega^7_{\mathrm{dR}}(X)
    \end{array}
\!\!  \right)
  \,,
  \hspace{.7cm}
  \def\arraystretch{1.1}
  \begin{array}{l}
    \differential
    \, 
    G_4
    \;=\; 0
    \,,
    \\
    \differential
    \, 
    G_7
    \;=\;
    \tfrac{1}{2}
    G_4 \wedge G_4
    \,,
  \end{array}
\end{equation}

\smallskip 
\noindent on which the electromagnetic duality relation of the C-field 
\begin{equation}
  \label{DualityConstraintOnCField}
  \def\arraystretch{1}
  \begin{array}{rcr}
    \star\, G_4 &=&  G_7
    \,,
    \\
    \star\, G_7 &=& - G_4
    \,.
  \end{array}
\end{equation}
is still to be imposed.
Now, under mild conditions the pre-metric Bianchi identities \eqref{PremetricBianchiIdentities} 
are equivalent \cite[\S 3.1]{SS24Flux} to the closure condition on an $\mathfrak{a}$-valued differential form
\vspace{-1mm} 
$$
  \mathrm{d}
  \, 
  \vec F
  \;=\;
  \vec P\big(
    \vec F
  \, \big)
  \hspace{.5cm}
  \Leftrightarrow
  \hspace{.5cm}
  \vec F
  \;\in\;
  \overset{
    \mathclap{
      \raisebox{2pt}{
      \scalebox{.7}{
        \color{darkblue} \bf
        \begin{tabular}{c}
          closed
          $\mathfrak{a}$-valued
          \\
          differential forms
        \end{tabular}
      }
      }
    }
  }{
  \Omega^1_{\mathrm{dR}}
  \big(
    X
    ;\,
    \mathfrak{a}
  \big)_{\mathrm{clsd}}\,.
  }
$$
Here $\mathfrak{a}$ is the $L_\infty$-algebra whose underlying graded vector space $\mathfrak{a}_\bullet$ is spanned by elements
$$
  \vec v
  \;\;
  :=
  \;\;
  \big(
    v_i \,\in\,
    \mathfrak{a}_{\mathrm{deg}_i-1}
  \big)_{i \in I}
$$
with $n$-ary graded-skew symmetric brackets 
$$
  \big[
    -,\cdots, -
  \big]
  \;:\;
  \mathfrak{a}^{\otimes^n}
  \xrightarrow{\phantom{--}}
  \mathfrak{a}
$$
given by the coefficients of the graded-symmetric polynomial appearing in \eqref{PremetricBianchiIdentities}:
$$
  \big[
    v_{j_1},
    \cdots,
    v_{j_n}
  \big]
  \;=\;
  P^i_{j_1 \cdots j_n}
  \,
  v_i
  \,,
  \;\;\;\;
  \mbox{where}
  \;\;\;\;
  P^i\big((F^j)_{j \in I}\big)
  \;\;
  =
  \;\;
  \underset{
    n \in \mathbb{N}
  }{\sum}
  P^i_{
    j_{1}
    \cdots 
    j_n
  }
  F^{j_1}
  \cdots
  F^{j_n}
  \,.
$$

\medskip

\noindent
{\bf Flux \& charge quantization.}
With this characteristic $L_\infty$-algebra $\mathfrak{a}$ of the higher gauge theory identified, a compatible {\it flux quantization law} is given  \cite[\S 3.2]{SS24Flux} by a classifying space $\mathcal{A}$ whose rational Whitehead $L_\infty$-algebra $\mathfrak{l}\mathcal{A}$ (the ``Quillen model'' of $\mathcal{A}$) coincides with $\mathfrak{a}$. 
Such a space comes with a generalized {\it character map} \cite{FSS23Char} assigning to total charges $[\rchi]$ quantized in $\mathcal{A}$-cohomology
the total fluxes $\big[\vec F_{\rchi}\big]$ sourced by these charges:
$$
  \begin{tikzcd}[row sep=-6pt]
     \scalebox{.7}{
      \color{darkblue}
      \bf
      \def\arraystretch{1}
      \begin{tabular}{c}
        Nonabelian
        \\
        generalized cohomology
        \\
        with coefficients in 
        $\mathcal{A}$
      \end{tabular}
    }
    &[-30pt]
    H^1\big(
      X
      ;\,
      \Omega \mathcal{A}
    \big) 
   \ar[
      rr,
      shorten=-4pt,
      "{
        \mathrm{ch}^{\mathcal{A}}_X
      }",
      "{
        \scalebox{.7}{
          \color{darkgreen}
          \bf
          character map
        }
      }"{swap}
    ]
    &&
    H^1_{\mathrm{dR}}\big(
      X
      ;\,
      \overbrace{
          \mathfrak{l}\mathcal{A}
      }^{
        \cong 
        \,
        \mathfrak{a}
      }
    \big)
    &[-30pt]
    \scalebox{.7}{
      \def\arraystretch{1}
      \begin{tabular}{c}
        \color{darkblue}
        \bf
        Nonabelian
        \\
        \color{darkblue}
        \bf
        de Rham cohomology
        \\
        \color{darkblue}
        \bf
        with coefficients in 
        $\mathfrak{l}\mathcal{A}$
        \\
        (Def. \ref{NonabelianDeRhamCohomology})
      \end{tabular}
    }
    \\[-3pt]
    &
    {[\rchi]}
    &\longmapsto&
    \big[
      \vec F_{\rchi}
    \big]
    \\
    &
    \scalebox{.7}{
      \color{darkblue}
      \bf
      \def\arraystretch{.9}
      \begin{tabular}{c}
        quantized 
        \\
        total charge
      \end{tabular}
    }
    &&
    \scalebox{.7}{
      \color{darkblue}
      \bf
      \def\arraystretch{.9}
      \begin{tabular}{c}
        sourced
        \\
        total flux
      \end{tabular}
    }
  \end{tikzcd}
$$

\vspace{-1mm} 
\noindent Hence $\mathcal{A}$-quantization of flux means first of all that flux densities $\vec F$ are
to be accompanied by total charges $[\rchi]$ such that their $\mathfrak{a}$-valued de Rham class 
$[\vec F\, ]$ coincides with the character of the total charge:
\begin{equation}
  \label{TotalChargeQuantization}
  \begin{tikzcd}[row sep=large, column sep=huge]
    &&
    &
    H^1(
      X
      ;\,
      \Omega \mathcal{A}
    )
    \ar[
      dd,
      "{
        \mathrm{ch}
          ^{\mathcal{A}}
          _X
      }"{swap},
      "{
        \scalebox{.7}{
          {
          \color{darkgreen}
          \bf
          \def\arraystretch{.9}
          \begin{tabular}{c}
            character
            \\
            map
          \end{tabular}
          }
          \color{gray}
          \cite[Def. IV.2]{FSS23Char}
        }
      }"{xshift=-7pt}
    ]
    \\
    \\
    \ast
    \ar[
      rr,
      "{
        \vec F
      }",
      "{
        \scalebox{.7}{
          \color{darkgreen}
          \bf
          flux density
        }
      }"{swap}
    ]
    \ar[
      uurrr,
      dashed,
      bend left=15,
      "{
        [\rchi]
      }",
      "{
        \scalebox{.7}{
          \color{darkgreen}
          \bf
          total charge
        }
      }"{swap, sloped}
    ]
    \ar[
      rrr,
      rounded corners,
      to path={
           ([yshift=+00pt]\tikztostart.south)  
        -- ([yshift=-12pt]\tikztostart.south)  
        -- node[xshift=30pt, yshift=5pt]{
             \scalebox{.7}{
               \color{darkgreen}
               \bf
               total flux
             }
           }
           ([yshift=-10pt]\tikztotarget.south)  
        -- ([yshift=-00pt]\tikztotarget.south)  
      }
    ]
    &&
    \Omega^1_{\mathrm{dR}}(
      X
      ;\,
      \mathfrak{a}
    )
    \ar[
      r,
      ->>
    ]
    &
    H^1_{\mathrm{dR}}(
      X
      ;\,
      \mathfrak{a}
    )
    \,.
  \end{tikzcd}
\end{equation}

\vspace{2mm} 
\noindent Stated in more detail \cite[\S 3.3]{SS24Flux}, the character map lifts from 
cohomology classes to {\it moduli stacks} and
$\mathcal{A}$-flux quantization means that non-perturbative gauge field configurations are triples consisting of:
\begin{itemize}
\item[\bf (i)] {\it flux densities} $\vec F \in \Omega^1_{\mathrm{dR}}\big(X;\, \mathfrak{a}\big)_{\mathrm{clsd}}$ satisfying their pre-metric Bianchi identities;

\item[\bf (ii)] {\it local charges} $\rchi : X \to \mathcal{A}$ representing classes in $\mathcal{A}$-cohomology;

\item[\bf (iii)] 
{\it deformations} $\widehat{A} : \mathrm{ch}(\rchi) \Rightarrow \eta^{\scalebox{.6}{\,$\shape$}}(\vec F\,)$ of the flux densities into the character fluxes sourced by the local charges.
\end{itemize}
The last component $\widehat{A}$ turns out to be equivalently the global form of the {\it gauge potential} which constitutes the actual flux-quantized higher gauge field:
\vspace{-1mm} 
\begin{equation}
  \label{TheGaugePotentials}
  \hspace{-1.5cm}
  \begin{tikzcd}[row sep=large, 
    column sep=65pt
  ]
    & &
    &
    \overset{
      \mathclap{
        \scalebox{.7}{
          \begin{tabular}{c}
            \color{darkblue}
            \bf
            Classifying space
            \\
            \color{darkblue}
            \bf
            of quantized charges
            \\
            \rm
            (Ex. \ref{PathInfinityGroupoids})
          \end{tabular}
        }
        \mathrlap{
          \scalebox{.7}{
            \color{gray}
            \begin{tabular}{c}
              subject to 
              \\
              $\mathfrak{l}\mathcal{A} \,\cong\, \mathfrak{a}$
            \end{tabular}
          }
        }
      }
    }{
      \mathcal{A}
    }
    \ar[
      dd,
      "{
        \mathbf{ch}^{\mathcal{A}}_X
      }"{swap},
      "{
        \scalebox{.7}{
          {
          \color{darkgreen}
          \bf
          \def\arraystretch{.9}
          \begin{tabular}{c}
            differential
            \\
            character
          \end{tabular}
          }
          \cite[Def. 9.2]{FSS23Char}
        }
      }"{xshift=-7pt}
    ]
    \\
    \\
    \underset{
      \scalebox{.7}{
        \def\arraystretch{.9}
        \begin{tabular}{c}
          \color{darkblue}
          \bf
          spacetime
          \\
          \color{darkblue}
          \bf
          manifold
          \\
          (Ex. \ref{CechResolutionOfSupermanifold})
        \end{tabular}
      }
    }{
      X
    }
    \ar[
      rr,
      "{ 
        \vec F 
      }"{pos=.65, name=t},
      "{
        \scalebox{.7}{
          \def\arraystretch{.9}
          \begin{tabular}{c}
            \color{darkgreen}
            \bf
            flux density
            \\
            \rm
            (Def. \ref{ClosedLInfinityValuedDifferentialForms})
          \end{tabular}
        }
      }"{swap, pos=.6}
    ]
    \ar[
      uurrr,
      bend left=15,
      dashed,
      "{
        \rchi
      }"{swap, pos=.4, name=s},
      "{
        \scalebox{.7}{
          \begin{tabular}{c}
            (Ex. \ref{NonabelianCohomology})
            \\
            \color{darkgreen}
            \bf
            local charge
          \end{tabular}
        }
      }"{sloped, pos=.4}
    ]
    &&
    \underset{
      \mathclap{
        \raisebox{-11pt}{
        \scalebox{.7}{
          \def\arraystretch{.85}
          \begin{tabular}{c}
            \color{darkblue}
            \bf
            Smooth set of
            \\
            \color{darkblue}
            \bf
            duality-symmetric
            \\
            \color{darkblue}
            \bf
            flux densities
            \\
            \eqref{SmoothSuperSetOfClosedForms}
          \end{tabular}
        }
        }
      }    
    }{
      \Omega^1_{\mathrm{dR}}(
        -;
        \mathfrak{a}
      )_{\mathrm{clsd}}
    }
   \quad
   \ar[
      r,
      shorten <=-12pt,
      "{
        \eta^{\scalebox{.6}{\,$\shape$}}
      }"{pos=.4},
      "{
        \scalebox{.7}{
          \def\arraystretch{.9}
          \begin{tabular}{c}
            \color{darkgreen}
            \bf
            up to
            \\
            \color{darkgreen}
            \bf
            deformations
            \\            \eqref{ShapeUnitForDeformationModuli}
          \end{tabular}
        }
      }"{swap, pos=.4}
    ]
    &
    \underset{
      \mathclap{
        \raisebox{-8pt}{
        \scalebox{.7}{
          \def\arraystretch{.85}
          \begin{tabular}{c}
            \color{darkblue}
            \bf
            their deformation
            \\
            \color{darkblue}
            \bf
            $\infty$-groupoid
            \\
            (Ex. \ref{InfinityGroupoidOfFluxDeformations})
          \end{tabular}
        }
        }
      }
    }{
    \shape
    \,
    \Omega^1_{\mathrm{dR}}(
      -;
      \mathfrak{a}
    )_{\mathrm{clsd}}
    }
    \ar[
      from=s,
      to=t,
      Rightarrow,
      "{
        \widehat A
        \mathrlap{
          \scalebox{.7}{
            \color{orangeii}
            \bf
            global gauge potential
          }
        }
      }"{description}
    ]
  \end{tikzcd}
\end{equation}

{
\noindent
For example, this procedure \eqref{TheGaugePotentials} recovers \cite[Ex. 3.10 \& \S 4.1]{SS24Flux} the following familiar examples of globally well-defined flux-quantized higher gauge fields:
\smallskip 
\begin{itemize}[leftmargin=.4cm]
\setlength\itemsep{2pt}
\item
{\it Maxwell field}: global gauge potentials are connections on $\mathrm{U}(1)$-principal bundles,
for the choice $\mathcal{A} \defneq B^2\mathbb{Z} \times B^2 \mathbb{Q}$ 

(as proposed by \cite{Dirac1931}\cite{Schwinger1966}\cite{Zwanziger1968} 

and recast in modern language by \cite{Alvarez85b}\cite[\S 7.1]{Brylinski93}\cite[\S 16.4e]{Frankel97})

or rather on electro-magnetic {\it pairs} of $\mathrm{U}(1)$ principal bundles,
for the choice $\mathcal{A} = B^2 \mathbb{Z} \times B^2 \mathbb{Z}$

(as considered in \cite{FreedMooreSegal07}\cite[Rem. 2.3]{BBSS17}\cite[Def. 1.16]{LS22}\cite[(3)]{LS23})

\item
{\it B-field} in 10d: global gauge potentials are connections on $\mathrm{U}(1)$-bundle gerbes,
for the choice $\mathcal{A} \defneq B^3 \mathbb{Z} \times B^3 \mathbb{Q}$

(as proposed by \cite{Gawedzki86}\cite{FW99}\cite{CJM04}, review in \cite{FNSW09}),

\item 
{\it RR-field}: global gauge potentials are cocycles in twisted differential K-theory,
for the choice $\mathcal{A} \defneq \mathrm{KU}_0 \sslash B^2 \mathbb{Z}$ 

(as proposed in various forms by \cite{MinasianMoore97}\cite{Witten98}\cite{MooreWitten99}\cite{FreedHopkins00}\cite{BouwknegtMathai01} and established in full form in \cite{GS-RR});
\end{itemize}

\smallskip 
and it seamlessly generalizes further to the case of interest here:
\smallskip 
\begin{itemize}[leftmargin=.4cm]
\item \label{HypothesisH}
{\it C-field} in 11d: global gauge potentials are cocycles in (twisted) differential Cohomotopy,
for the choice $\mathcal{A} \defneq S^4$ 

(``Hypothesis H'', proposed in \cite[\S 2.5]{Sati13}, checked in \cite{FSS20-H}\cite{FSSHopf}\cite{FSS22Twistorial} to reproduce the expectations from the M-theory literature, reviewed in \cite[\S 12]{FSS23Char}\cite[\S 4.3]{SS24Flux}).
\end{itemize}
}

\smallskip

\noindent
We recall a few more details of how this works in \cref{BackgroundOnFluxQuantization}.

\begin{center}
\small 
\def\arraystretch{1.2}
\begin{tabular}{cccc}
\hline 
{\bf Field} 
& 
{\bf Choice of $\mathcal{A}$ }
&
{\bf Induced global gauge potentials} 
& 
{\bf Details}
\\
\hline 
\rowcolor{lightgray} Maxwell field 
& 
$B^2\mathbb{Z} \times B^2 \mathbb{Z}$ 
&
cocycles in differential integral 2-cohomology 

&
\footnotesize  \cite[Prop. 9.5]{FSS23Char}
\\
 B-field in 10d 
&
$ B^3 \mathbb{Z} \times B^7 \mathbb{Z}$
& 
cocycles in higher differential integral cohomology
&
{\footnotesize \cite[Prop. 9.5]{FSS23Char}}
\\
\rowcolor{lightgray} RR field in 10d 
&
$\mathrm{KU}_0 \sslash B^2 \mathbb{Z}$ 
& 
cocycles in twisted differential K-theory 
&
\footnotesize  \cite[Ex. 11.2]{FSS23Char}
\\
C-field in 11d 
& 
$S^4 \sslash \mathrm{Spin}(5)$
& 
cocycles in (twisted) differential Cohomotopy 
&
\footnotesize  \cite[\S 12]{FSS23Char}
\\
\hline 
\end{tabular} 
\end{center}

This shows that flux quantization of a higher gauge field theory is the step where the actual global non-perturbative gauge field content of the theory is determined. As such, flux quantization is not an afterthought but the core of any higher gauge theory, non-perturbatively.

\medskip

\noindent
{\bf The duality issue after flux quantization.} However, this may seem to leave a puzzle. Since flux quantization applies to the pre-metric flux densities \eqref{PremetricBianchiIdentities}, providing their global higher gauge potentials, it may be unclear how to understand the duality-constraint \eqref{TheHodgeDualityConstraint} after flux quantization: Should it remain a constraint on just the underlying flux densities, or should it somehow be lifted to the gauge potentials, hence to the higher differential cocycles (as has been proposed in the case of RR-field fluxes quantized in K-theory)?

\medskip

\noindent
{\bf Flux quantization on phase space.}
In \cite{SS23FQ} we have observed that this issue goes away {\it on phase space} (cf. \cite[\S 2.5]{SS24Flux}). After pulling back the flux densities $\vec F$ \eqref{TheFluxDensities} to any Cauchy hypersurface (a ``spatial slice'') of spacetime (assumed to be globally hyperbolic), they become initial value data $\vec B$ with half of them playing the role of {\it independent canonical momenta}, while the duality constraint \eqref{TheHodgeDualityConstraint} ceases to be a relation among the $\vec B$ and instead controls their evolution away from the Cauchy surface.
This naturally suggests that in the ``canonical'' phase space perspective on the higher gauge theory, flux quantization \eqref{TheGaugePotentials} of the pre-metric flux densities gives the complete specification of the higher gauge fields.
That is, it would not be subjected to further duality constraints; and inspection shows \cite[\S 3.1, 3.2]{SS23FQ} that this is tacitly how basic examples are handled in the literature.

\smallskip

Nevertheless, here we intend to go one step further and resolve the flux-quantized duality issue on spacetime  itself, or rather on super-spacetime.

\subsection{Duality-Symmetry on Super-Spacetime}
\label{DualitySymmetryOnSuperSpacetime}

\noindent
{\bf Passage to super-spacetime.}
Most of the higher gauge theories \eqref{PremetricBianchiIdentities} of interest (notably of the  B-field, RR-field,
and C-field, for pointers see \cite[\S 2.4]{SS24Flux}) arise in the bosonic sector of higher-dimensional supergravity theories. 
Yet their global properties are traditionally discussed with disregard for the fermionic content of these theories, 
treating it as if just a tedious afterthought (however, cf. \cite{EvslinSati03}). This perspective seems convenient but
goes against the conceptual grain of supergravity theory, which is arguably all controlled by phenomena in the 
fermionic sector (cf. Rem. \ref{FormOfTheSuperTorsionConstraint} below). In fact, it is well-known (reviewed in \S\ref{Supergravity}) that supergravity theories have a slick formulation ``in superspace'', namely as phenomena of
super Cartan geometry (\S\ref{SuperCartanGeometry}), where the entire physics is essentially a consequence of just adjoining fermions to the basic rules of Cartan geometry.

\medskip

\noindent
{\bf Duality-symmetric super-flux and 11d SuGra.}
In this vein, the basic observation to be highlighted here is:

\smallskip 
\begin{itemize}[leftmargin=.75cm]

\item[\bf (i)] the pre-metric duality-symmetric formulation of the C-field flux in 11d supergravity exists on super-spacetime,

\item[\bf (ii)] where, remarkably, it implies/absorbs the duality constraint \eqref{TheHodgeDualityConstraint}, so that on super-spacetime the  higher gauge theory of the C-field is purely of the pre-metric form \eqref{PremetricBianchiIdentities},

\item[\bf (iii)] to which flux quantization \eqref{TheGaugePotentials} may be applied, yielding, for the first time, candidates for the full field content of 11d supergravity (\S\ref{SuperFluxQuantization}).
\end{itemize}

\smallskip

The first two points involve observing that demanding
the super C-field flux-densities $(G^s_4, G^s_7)$ to have an expansion in terms of super-coframe fields (Def. \ref{SuperSpacetime}) of the following form:
\begin{equation}
  \label{SuperFluxDensitiesInIntroduction}
  \def\arraystretch{1.5}
  \begin{array}{l}
    G^s_4
    \;\;:=\;\;
    \tfrac{1}{4!} (G_4)_{a_1 \cdots a_4}
    e^{a_1} \cdots e^{a_4}
    \;\,+\;\,
    \tfrac{1}{2}
    \big(\,
    \overline{\psi}
    \Gamma_{a_1 a_2}
    \psi
    \big)
    e^{a_1} \, e^{a_2}
    \\
    G^s_7
    \;\;:=\;\; 
    \tfrac{1}{7!}
    (G_7)_{a_1 \cdots a_7}
    e^{a_1} \cdots e^{a_7}
    \;\,+\;\,
    \tfrac{1}{5!}
    \big(\,
    \overline{\psi}
    \Gamma_{a_1 \cdots a_5}
    \psi
    \big)
    e^{a_1} \cdots e^{a_5}
  \end{array}
\end{equation}
while satisfying the pre-metric form \eqref{PremetricCFieldFlux} of the C-field Bianchi identities, now on super-spacetime
\begin{equation}
  \label{SuperCFieldBianchiInIntro}
  \def\arraystretch{1.2}
  \begin{array}{l}
    \differential
    \, 
    G^s_4
    \;=\; 0
    \\
    \differential
    \, 
    G^s_7
    \;=\;
    \tfrac{1}{2}
    G^s_4 \wedge G^s_4
    \,,
  \end{array}
\end{equation}
already {\it implies}  (\cite[p. 878]{CDF91}, cf. Lem. \ref{SuperBianchiIdentityForG7InComponents} below) the Hodge duality constraint on the bosonic component:
\begin{equation}
  \label{HodgeDualityImplied}
  \mbox{
    \eqref{SuperCFieldBianchiInIntro}
  }
  \;\;\;\;
  \Rightarrow
  \;\;\;\;
  (G_7)_{a_1 \cdots a_7}
  \;\;
  =
  \;\;
  \tfrac{1}{4!}
  \epsilon_{a_1 \cdots a_7 b_1 \cdots b_4}
  (G_4)^{b_1 \cdots b_4}.
\end{equation}

\newpage 
\noindent In fact, that is equivalent to the super-spacetime solving the equations of motion of 11d SuGra with the given $G_4$-flux source:
\medskip 
\begin{equation}
  \label{MainTheoremInIntroduction}
 \colorbox{lightgray}
  {
     \def\arraystretch{1.2}
    \begin{tabular}{c}
      $(11\vert\mathbf{32})$-dimensional 
      super-spacetimes
      $\big(
        X, (e, \psi, \omega)
      \big)$
      \\
      carrying super-flux 
      $(G_4^s, G_7^s)$
      (from \eqref{SuperFluxDensitiesInIntroduction}),
      \\
      satisfying its Bianchi identity
      (from \eqref{SuperCFieldBianchiInIntro}).
    \end{tabular}
  }
  \hspace{.9cm}
  \Leftrightarrow
  \hspace{.9cm}
  \colorbox{lightgray}{
    \begin{tabular}{c}
      Solutions of 11d SuGra
      \\
      with flux source $G_4$
      \\
      and dual flux $G_7$.
    \end{tabular}
    }
\end{equation}

\medskip 
\noindent This is our Thm. \ref{11dSugraEoMFromSuperFluxBianchiIdentity} below.\footnote{We re-amplify that \eqref{MainTheoremInIntroduction} holds only when demanding that the super-fluxes $(G_4^s, G_7^s)$, defined on super-spacetime, satisfy the Bianchi identities on the \textit{full super-spacetime}. It follows that their restrictions to the underlying spacetime again satisfy the Bianchi identities, but this purely spacetime condition is not sufficiently strong to imply the Hodge duality constraint and the rest of the 11d SuGra equations of motion.} It is in spirit and in computational detail close to the result of \cite{Howe97}\cite[\S 2]{CGNT05} that for diligent choices of spin connection $\omega$ the super-torsion constraint (which we assume as part of the definition of super-spacetimes \eqref{TorsionConstraint}) already enforces the equations of motion of 11d SuGra, with the flux density being a derived quantity from this perspective (cf. \cite[(7)]{Howe97} following \cite[(11)]{CF80}). In our formulation instead the flux density is the primary datum, so as to prepare the stage for flux-quantization and hence for the completion of the theory.

\medskip 
In particular, the implication in \eqref{HodgeDualityImplied} means that on super-spacetime the Hodge duality constraint \eqref{TheHodgeDualityConstraint} is entirely absorbed into the pre-metric Bianchi identities \eqref{PremetricBianchiIdentities}!  
Hence the presence of the Hodge duality constraint in higher gauge theory, and the above-mentioned problems that it brings with it appear just as an artifact of disregarding the natural superspace context of the theory (at least for the case of the C-field in 11d SuGra): 
\begin{quote}
\colorbox{lightblue}{ \emph{ Flux quantization of higher gauge theory fundamentally ought to be applied on super-spacetime. }}
\end{quote}

\noindent
{\bf Relation to the literature.}
The computations (in \S\ref{Supergravity}) behind the statement \eqref{MainTheoremInIntroduction} will not be new to experts (though no substantial details seem to have previously been published). In particular the  super-form of $G_4^s$ in \S\ref{SuperFluxDensitiesInIntroduction} is classical (\cite[p. 411]{CJS78}),
and that a super-form $G_7^s$ \eqref{SuperFluxDensitiesInIntroduction} satisfying \eqref{SuperCFieldBianchiInIntro} exists on on-shell 
super-spacetime is almost explicit in \cite[\S III.8.3-5 \& p. 878]{CDF91} and \cite[(6.9)]{CL94}. 
Nevertheless, since the tradition in the supergravity literature to indicate computations only in broad outline can make it hard to see which precise claims are being made on which precise assumptions, we use the occasion in \S\ref{Supergravity} to present a complete derivation of 11d supergravity from imposing just the C-field super-flux Bianchi identities \eqref{SuperCFieldBianchiInIntro} on an 11d super-spacetime. Besides serving as an exposition of the required computations, this presents 11d SuGra in a somewhat novel form adapted to its completion by flux quantization, which has previously received little to no attention.

\smallskip 
However,  the primacy we assign to the flux Bianchi identities has a technical impact also on these classical computations.
We find that imposing the duality-symmetric super-flux Bianchi identity \eqref{SuperCFieldBianchiInIntro} as 
a constraint {\it implies} the vanishing of the $(\psi^2)$-component of the gravitino field strength $\rho$ (cf. Rem. \ref{RoleOfDualitySymmetricBianchiIdentitiesInEnforcingTheSuperTorsionConstraint} below), a crucial constraint which previously has been motivated differently:
\begin{itemize}[leftmargin=.4cm]
\item In \cite[\S III.8.5]{CDF91}, this constraint is motivated as necessary for rheonomic parametrization.

\item In \cite{Howe97}\cite{CGNT05}, the satisfaction of this constraint is shown to be achievable by a careful (gauge) fixing of the frame super-field and spin connection. 
\end{itemize}
\noindent
Here we see instead that both of these moves are consequences of the duality-symmetric formulation of 11d SuGra, not needing to be implemented ``by hand''.

\medskip

\noindent
{\bf Flux-quantization on super-spacetime.} The key impact of our result \eqref{MainTheoremInIntroduction} is that it makes immediate how to proceed with flux quantization of (the C-field in) 11d supergravity, along the lines of \cite{FSS23Char}.
Namely in super homotopy theory \cite[\S 3.1.3]{SS20Orb} (reviewed in \S\ref{HigherSuperGeometry}) the structures on the right of \eqref{TheGaugePotentials} exist verbatim, with the moduli sheaves $\Omega^1_{\mathrm{dR}}(-;\mathfrak{a})_{\mathrm{clsd}}$ of closed $L_\infty$-algebra valued differential forms generalized to super-differential forms (see \S\ref{SuperDifferentialForms}), and  \eqref{MainTheoremInIntroduction} means that the characteristic $L_\infty$-algebra $\mathfrak{l}S^4$ of the C-field  (Ex. \ref{Rational4Sphere} \& \ref{ClosedlS4ValuedDifferentialForms}) still classifies the super-flux densities $(G_4^s, G_7^s)$, so that admissible flux-quantization laws on super-spacetime are still given by classifying spaces $\mathcal{A}$ whose  $\mathbb{R}$-Whitehead $L_\infty$-algebra (Ex. \ref{WhiteheadLInfinityAlgebras}) is that of the 4-sphere.

\newpage

\noindent
{\bf Completion of $D=11$ supergravity.}
The remarkable upshot of all this is the following:
\begin{claim}[{\bf Flux-quantized super-fields of 11d SuGra}]
\label{FluxQuantizedSuperFieldsOf11dSugra}
For 
\begin{itemize}[leftmargin=.5cm]
\item[\bf --]
$\big(X, (e, \psi, \omega)\big)$ an $(11\vert \mathbf{32})$-dimensional super-spacetime (Def. \ref{SuperSpacetime}),

\item[\bf --] $\mathcal{A}$, a choice of flux-quantization law as in \cite[\S 3.2]{SS24Flux} embodied by a classifying 
space with rational Whitehead $L_\infty$-algebra (Ex. \ref{WhiteheadLInfinityAlgebras}) that of the 4-sphere,
\end{itemize}
the full flux-quantized super-C-field histories on $X$ are diagrams in
super-homotopy theory (Def. \ref{SmoothSuperInfinityGroupoids}) of the following form:

\vspace{-4mm}
\begin{equation}
  \label{FluxQuantizedSugraFields}
  \hspace{-5mm} 
  \begin{tikzcd}[
    row sep=65pt, 
    column sep=60pt
  ]
    &[-20pt]
    &&
    &[+5pt]
    \overset{
      \mathclap{
      \scalebox{.7}{
        \def\arraystretch{.9}
        \begin{tabular}{c}
          \color{darkblue}
          \bf
          classifying space
          \\
          \rm
          (Ex. \ref{PathInfinityGroupoids})
          \\
          \color{darkblue}
          \bf
          of quantized
          \\
          \color{blue}
          \bf
          M-brane charges
        \end{tabular}
      }
      \mathrlap{
        \scalebox{.7}{
          \color{gray}
          \begin{tabular}{c}
            subject to
            \\
            $\mathfrak{l}\mathcal{A} \,\cong\, \mathfrak{l}S^4$
          \end{tabular}
        }
      }
      }    
    }{
      \mathcal{A}
    }
    \ar[
      d,
      "{
        \mathbf{ch}^{\mathcal{A}}
      }"{swap},
      "{
        \scalebox{.7}{
          \def\arraystretch{.9}
          \begin{tabular}{c}
            \color{darkgreen}
            \bf
            differential
            \\
            \color{darkgreen}
            \bf
            character
            \\
            \rm
            \cite[Def. 9.2]{FSS23Char}
          \end{tabular}
        }
      }"{xshift=-6pt}
    ]
    \\
    \underset{
      \mathclap{
        \raisebox{-2pt}{
      \scalebox{.7}{
        \def\tabcolsep{-2pt}
        \def\arraystretch{.9}
        \begin{tabular}{c}
          \color{gray}
          \bf
          ordinary
          \\
          \color{gray}
          \bf
          spacetime
          \\
          \rm
          (Ex. \ref{BosonicBodyOfSupermanifold})
        \end{tabular}
      }            
        }
      }
    }{
      \mathcolor{gray}
      {\bos{X}}
    }
    \ar[
      r,
      hook,
      gray,
      "{
        \eta^{\rightsquigarrow}_X
      }"
    ]
    &
    \underset{
      \mathclap{
        \raisebox{-5pt}{
      \scalebox{.7}{
        \def\tabcolsep{-2pt}
        \def\arraystretch{.9}
        \begin{tabular}{c}
          \color{darkblue}
          \bf
          super-
          \\
          \color{darkblue}
          \bf
          spacetime
          \\
          \rm
          (Def. \ref{SuperSpacetime}
          \\
          \rm
          Ex. \ref{CechResolutionOfSupermanifold})
        \end{tabular}
      }            
        }
      }
    }{
      X
    }
    \ar[
      rr,
      "{
        (G_4^s,G_7^s)
      }"{name=t},
      "{
        \scalebox{.7}{
          \begin{tabular}{c}
            \color{darkgreen}
            \bf
            super-C-field flux            \eqref{SuperFluxDensitiesInIntroduction}
            \\
            \rm
            (Ex. \ref{ClosedLInfinityValuedDifferentialForms})
          \end{tabular}
        }
      }"{swap}
    ]
    \ar[
      urrr,
      bend left=17,
      "{
        \scalebox{.7}{
          \begin{tabular}{c}
            \rm
            (Ex. \ref{NonabelianCohomology})
            \\
            \color{darkgreen}
            \bf
            local C-field charge
          \end{tabular}
        }
      }"{sloped},
      "{
        \rchi
      }"{sloped, pos=.48,swap, name=s}
    ]
    \ar[
      from=s,
      to=t,
      Rightarrow,
      "{
        (
          \widehat{C}{}^s_3, 
          \widehat{C}{}^s_6
        )
        \mathrlap{
          \scalebox{.7}{
            \color{darkorange}
            \bf
            global super C-field gauge potentials
          }
        }
      }"{description}
    ]
    &&
    \underset{
      \mathclap{
        \raisebox{-6pt}{
      \scalebox{.7}{
        \def\tabcolsep{-2pt}
        \def\arraystretch{.9}
        \begin{tabular}{c}
          \color{darkblue}
          \bf
          Smooth super-set of
          \\
          \color{darkblue}
          \bf
          duality-symmetric
          \\
          \color{blue}
          \bf
          C-field flux densities
          \\
          \rm
          (Ex. \ref{ClosedlS4ValuedDifferentialForms})
        \end{tabular}
      }            
        }
      }
    }{
    \Omega^1_{\mathrm{dR}}(
      -;
      \mathfrak{l}S^4
    )_{\mathrm{clsd}}
    }
    \ar[
      r,
      "{
        \eta^{\,\scalebox{.5}{$\shape$}}
      }",
      "{
        \scalebox{.7}{
          \def\arraystretch{.9}
          \begin{tabular}{c}
            \color{darkgreen}
            \bf
            up to 
            \\
            \color{darkgreen}
            \bf
            deformations
            \\
            \rm
            \eqref{ShapeUnitForDeformationModuli}
          \end{tabular}
        }
      }"{swap}
    ]
    &
    \underset{
      \mathclap{
        \raisebox{-6pt}{
      \scalebox{.7}{
        \def\tabcolsep{-2pt}
        \def\arraystretch{.9}
        \begin{tabular}{c}
          \color{darkblue}
          \bf
          their deformation
          \\
          \color{darkblue}
          \bf
          $\infty$-groupoid
          \\
          \rm
          (Ex. \ref{InfinityGroupoidOfFluxDeformations})
        \end{tabular}
      }            
        }
      }
    }{
    \shape
    \,
    \Omega^1_{\mathrm{dR}}(
      -;
      \mathfrak{l}S^4
    )_{\mathrm{clsd}}
    }
  \end{tikzcd}
\end{equation}
where the bottom part exists, by Thm. \ref{11dSugraEoMFromSuperFluxBianchiIdentity}, if and only if $\big(X,(e,\psi,\omega)\big)$ solves the equations of motion of 11d SuGra for the given flux density $G_4$ with $G_7$ its dual.
\end{claim}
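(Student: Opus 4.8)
The plan is to read the asserted diagram \eqref{FluxQuantizedSugraFields} as the instantiation, internal to the cohesive $\infty$-topos of smooth super-$\infty$-groupoids (Def. \ref{SmoothSuperInfinityGroupoids}), of the general flux-quantization pattern \eqref{TheGaugePotentials}, with the characteristic $L_\infty$-algebra $\mathfrak{a}$ specialized to $\mathfrak{l}S^4$; the only nontrivial geometric input being Thm. \ref{11dSugraEoMFromSuperFluxBianchiIdentity}. First I would check that all the structure appearing in \eqref{TheGaugePotentials} --- the smooth (super-)set $\Omega^1_{\mathrm{dR}}(-;\mathfrak{a})_{\mathrm{clsd}}$ of closed $\mathfrak{a}$-valued (super-)differential forms (Def. \ref{ClosedLInfinityValuedDifferentialForms}), its shape $\shape\,\Omega^1_{\mathrm{dR}}(-;\mathfrak{a})_{\mathrm{clsd}}$ together with the shape-unit, the classifying space $\mathcal{A}$ with its differential character $\mathbf{ch}^{\mathcal{A}}$, and the reading of a homotopy filling the outer cell as a global gauge potential --- is available in this context: it is, because these constructions live over any cohesive $\infty$-topos and the super-homotopy theory of \cite[\S 3.1.3]{SS20Orb} (reviewed in \S\ref{HigherSuperGeometry}, \S\ref{SuperDifferentialForms}) is such a one, the super-spacetime $X$ entering resolved by its super-\v Cech nerve (Ex. \ref{CechResolutionOfSupermanifold}) so that maps out of it compute the relevant nonabelian (differential) super-cohomology.

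Second I would identify the bottom horizontal map. By the Sullivan-model description of the rational $4$-sphere (Ex. \ref{Rational4Sphere}), the Chevalley--Eilenberg algebra of $\mathfrak{l}S^4$ is freely generated in degrees $4$ and $7$ with differential sending the degree-$7$ generator to one half the square of the degree-$4$ one; hence a map $X \to \Omega^1_{\mathrm{dR}}(-;\mathfrak{l}S^4)_{\mathrm{clsd}}$ is exactly a pair of closed super-forms $(G_4^s, G_7^s)$ satisfying the Bianchi identities \eqref{SuperCFieldBianchiInIntro} (Ex. \ref{ClosedlS4ValuedDifferentialForms}). Further demanding the frame-field expansion \eqref{SuperFluxDensitiesInIntroduction} cuts out the sub-super-set of such cocycles on which Thm. \ref{11dSugraEoMFromSuperFluxBianchiIdentity} speaks, and pins down the bosonic components $(G_4, G_7)$ referred to in the Claim.

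Third I would invoke Thm. \ref{11dSugraEoMFromSuperFluxBianchiIdentity}: the existence of such a $(G_4^s, G_7^s)$ --- equivalently, of the bottom horizontal arrow in \eqref{FluxQuantizedSugraFields} --- is equivalent to $\big(X,(e,\psi,\omega)\big)$ solving the equations of motion of 11d SuGra with flux source $G_4$ and dual flux $G_7$. Given that arrow, the remaining data in the diagram --- a local charge $\rchi : X \to \mathcal{A}$ representing a class in $\mathcal{A}$-cohomology, together with a homotopy $(\widehat{C}{}^s_3, \widehat{C}{}^s_6)$ exhibiting $\mathbf{ch}^{\mathcal{A}}\!\circ \rchi$ as a deformation of the shape of $(G_4^s, G_7^s)$ --- is precisely an $\mathcal{A}$-flux-quantized field in the sense of items (i)--(iii) above \eqref{TheGaugePotentials}, now read internally to smooth super-$\infty$-groupoids; so there is nothing further to establish. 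That this homotopy is the global incarnation of the traditional super-C-field gauge potentials is Prop. \ref{RecoveringTraditionalSuperCFieldGaugePotentials}.

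The genuinely hard step is Thm. \ref{11dSugraEoMFromSuperFluxBianchiIdentity} itself, discharged separately in \S\ref{Supergravity} by a component-wise unwinding of \eqref{SuperCFieldBianchiInIntro} --- passing through the Hodge self-duality \eqref{HodgeDualityImplied}, the vanishing of the $(\psi^2)$-component of the gravitino field strength, and finally the gravitino and Einstein equations. Everything else in the present Claim is formal assembly; the only point needing care is to confirm that the chosen model of super-homotopy theory really carries all the structural properties (cohesion, a well-behaved shape modality, existence of differential characters) on which the character-map construction of \cite{FSS23Char} depends --- which is exactly what the review in \S\ref{HigherSuperGeometry} is arranged to supply.
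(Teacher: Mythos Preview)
Your reading is correct and matches the paper's own treatment: Claim \ref{FluxQuantizedSuperFieldsOf11dSugra} is not given a separate proof in the paper but is presented as the formal assembly --- in the super-homotopy-theoretic setting of \S\ref{HigherSuperGeometry} --- of the general flux-quantization pattern \eqref{TheGaugePotentials}, with the characteristic $L_\infty$-algebra specialized to $\mathfrak{l}S^4$ and with Thm.~\ref{11dSugraEoMFromSuperFluxBianchiIdentity} supplying the only substantive input (the equivalence between the existence of the bottom arrow and the on-shell condition). Your identification of the remaining steps as formal, and of Prop.~\ref{RecoveringTraditionalSuperCFieldGaugePotentials} as justifying the gauge-potential notation, is exactly how the paper proceeds.
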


Before expanding on the implications of Claim \ref{FluxQuantizedSuperFieldsOf11dSugra}, we notice that it is backward-compatible with the traditional notion of C-field gauge potential, where it applies:

\begin{proposition}[\bf Recovering traditional super-C-field gauge potentials]
\label{RecoveringTraditionalSuperCFieldGaugePotentials}
 If the total C-field charge in Diagram \eqref{TotalChargeQuantization}
 vanishes, $[\rchi] = 0$ (as happens over any coordinate chart), such that the local charge equivalently factors through the point 
 \vspace{-3mm} 
 \begin{equation}
  \label{DiagramForTrivialCharge}
  \begin{tikzcd}[
    row sep=40pt, column sep=huge
  ]
    &[+10pt]
    \ast
    \ar[r, hook]
    &[-20pt]
    \mathcal{A}
    \ar[
      d,
      "{
        \mathbf{ch}  ^{\mathcal{A}}
      }"
    ]
    \\
    X
    \ar[
      r,
      "{
        (G_4^s, G_7^s)
      }"{name=t}
    ]
    \ar[
      ur,
      bend left=30,
      "{
        \scalebox{.7}{
          \color{darkgreen}
          \bf
          \def\arraystretch{.9}
          \begin{tabular}{c}
          trivial 
          \\
          charge
          \end{tabular}
        }
      }"{sloped},
      "{
        0
      }"{swap, name=s}
    ]
    \ar[
      from=s,
      to=t,
      Rightarrow,
      "{\color{orangeii} \bf 
        (C^s_3, C^s_6)
      }"{pos=.2}
    ]
    &
    \Omega^1_{\mathrm{dR}}(-;\mathfrak{l}S^4)_{\mathrm{clsd}}
    \ar[
      r,
      "{
        \eta^{\,\scalebox{.7}{$\shape$}}
      }"
    ]
    &
    \shape
    \,
    \Omega^1_{\mathrm{dR}}(-;\mathfrak{l}S^4)_{\mathrm{clsd}}
  \end{tikzcd}
 \end{equation}
 then the C-field gauge potentials according to Claim \ref{FluxQuantizedSuperFieldsOf11dSugra}
 correspond to super-differential forms
 \begin{equation}
   \label{GlobalGaugePotentials}
   \mathllap{
   \scalebox{.7}{
    \color{darkblue}
    \bf
    \begin{tabular}{c}
      ordinary 
      gauge potentials
      \\
      for the C-field
    \end{tabular}
    }
  }
  \left.
  \def\arraystretch{1.3}
  \begin{array}{ll}
    C_3^s 
    \in
    \Omega^3_{\mathrm{dR}}(X)
    \\
    C_6^s 
    \in
    \Omega^6_{\mathrm{dR}}(X)
  \end{array}
 \!\! \right\}
  \;\;
  \mbox{s.t.}
  \;\;
  \left\{\!\!\!
  \def\arraystretch{1.3}
  \begin{array}{ll}
  \mathrm{d}\, C^s_3 =
  G_4^s
  \,,
  \\
  \mathrm{d}\, C_6^s =
  G_7^s - \tfrac{1}{2}
  C_3^s \, G_4^s
  \,,
  \end{array}
  \right.
 \end{equation}
 as traditionally considered in the literature {\rm(e.g. \cite[(6.7), (6.11)]{CL94}\cite[(III.8.32d,e)]{CDF91})}.
\end{proposition}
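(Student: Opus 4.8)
The plan is to unwind Diagram~\eqref{DiagramForTrivialCharge} over the trivial charge into an explicit deformation on $X \times \Delta^1$ and then to fibre-integrate over the deformation-simplex. First I would note that when $[\rchi] = 0$ the local charge factors through the point, so by basepoint-compatibility of the differential character $\mathbf{ch}^{\mathcal{A}}$ the composite $\mathrm{ch}(\rchi)$ equals $\eta^{\scalebox{.7}{$\shape$}}$ applied to the zero form (the trivial $\mathfrak{l}S^4$-valued closed form). Hence the deformation datum $(\widehat{C}{}^s_3, \widehat{C}{}^s_6)$ of Claim~\ref{FluxQuantizedSuperFieldsOf11dSugra} is precisely a $1$-morphism in the deformation $\infty$-groupoid $\shape\,\Omega^1_{\mathrm{dR}}(-;\mathfrak{l}S^4)_{\mathrm{clsd}}$ over $X$ from the zero $0$-simplex to $\eta^{\scalebox{.7}{$\shape$}}(G^s_4, G^s_7)$. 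By the simplicial model of this $\infty$-groupoid (Ex.~\ref{InfinityGroupoidOfFluxDeformations}), such a $1$-morphism is a closed $\mathfrak{l}S^4$-valued super-differential form $(\mathbf{G}_4, \mathbf{G}_7)$ on $X \times \Delta^1$, i.e.\ $\mathrm{d}\,\mathbf{G}_4 = 0$ and $\mathrm{d}\,\mathbf{G}_7 = \tfrac{1}{2}\,\mathbf{G}_4 \wedge \mathbf{G}_4$, restricting on the two faces to $(0,0)$ and to $(G^s_4, G^s_7)$.

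Next I would decompose along the simplex coordinate $t$, writing $\mathbf{G}_4 = G_4^{(t)} + \mathrm{d}t \wedge A_3^{(t)}$ and $\mathbf{G}_7 = G_7^{(t)} + \mathrm{d}t \wedge A_6^{(t)}$ with $G_4^{(t)}, A_3^{(t)}, G_7^{(t)}, A_6^{(t)}$ $t$-families of super-forms on $X$. Unpacking closure separates it into the pointwise-in-$t$ Bianchi identities for $(G_4^{(t)}, G_7^{(t)})$ together with the transgression equations $\partial_t G_4^{(t)} = \mathrm{d}_X A_3^{(t)}$ and $\partial_t G_7^{(t)} = \mathrm{d}_X A_6^{(t)} + G_4^{(t)} \wedge A_3^{(t)}$, the signs being fixed by the orientation of $\Delta^1$. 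Using the face values $G_4^{(0)} = G_7^{(0)} = 0$, fibre-integration over $t \in [0,1]$ and the fundamental theorem of calculus give $G^s_4 = \mathrm{d}_X\big( \int_0^1 A_3^{(t)}\,\mathrm{d}t\big)$, so I would set $C^s_3 := \int_0^1 A_3^{(t)}\,\mathrm{d}t$. The $6$-form needs a Chern--Simons-type correction because $G_4^{(t)}$ depends on $t$: writing $\gamma^{(t)} := \int_0^t A_3^{(\tau)}\,\mathrm{d}\tau$ (so that $G_4^{(t)} = \mathrm{d}_X \gamma^{(t)}$, $\gamma^{(0)} = 0$, $\gamma^{(1)} = C^s_3$), a one-line integration by parts in $t$ turns $\int_0^1 G_4^{(t)} \wedge A_3^{(t)}\,\mathrm{d}t$ into $\tfrac{1}{2}\, C^s_3 \wedge G^s_4$ plus an $X$-exact term; absorbing the latter, one sets $C^s_6 := \int_0^1 A_6^{(t)}\,\mathrm{d}t + \tfrac{1}{2}\int_0^1 \gamma^{(t)} \wedge A_3^{(t)}\,\mathrm{d}t$ and obtains $\mathrm{d}_X C^s_6 = G^s_7 - \tfrac{1}{2}\, C^s_3 \wedge G^s_4$, which is~\eqref{GlobalGaugePotentials}; a super-form bidegree count confirms $C^s_3 \in \Omega^3_{\mathrm{dR}}(X)$ and $C^s_6 \in \Omega^6_{\mathrm{dR}}(X)$.

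For the converse I would produce, from any $(C^s_3, C^s_6)$ satisfying~\eqref{GlobalGaugePotentials}, the explicit deformation $\mathbf{G}_4 := \mathrm{d}(t\, C^s_3)$, $\mathbf{G}_7 := \mathrm{d}(t\, C^s_6) + \tfrac{1}{2}\, t\, C^s_3 \wedge \mathrm{d}(t\, C^s_3)$, which is manifestly $\mathfrak{l}S^4$-closed (since $\mathrm{d}(\alpha \wedge \mathrm{d}\alpha) = \mathrm{d}\alpha \wedge \mathrm{d}\alpha$ for any $\alpha$) and restricts to $(0,0)$ and to $(G^s_4, G^s_7)$ on the two faces --- here~\eqref{GlobalGaugePotentials} is used exactly to see that $(G^s_4, G^s_7)$ itself obeys the $\mathfrak{l}S^4$-Bianchi identities, so that the faces are legitimate $0$-simplices. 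These two constructions being mutually inverse up to the evident $2$-morphisms of $\shape\,\Omega^1_{\mathrm{dR}}(-;\mathfrak{l}S^4)_{\mathrm{clsd}}$ --- which translate into the usual C-field gauge transformations $C^s_3 \mapsto C^s_3 + \mathrm{d}\lambda^s_2$, $C^s_6 \mapsto C^s_6 + \mathrm{d}\lambda^s_5 + (\text{term in }\lambda^s_2, C^s_3)$ --- identifies the gauge potentials of Claim~\ref{FluxQuantizedSuperFieldsOf11dSugra} over the trivial charge with the data~\eqref{GlobalGaugePotentials}; finally I would match the orientation and normalisation conventions so as to reproduce the stated formulas verbatim, as in \cite[(6.7), (6.11)]{CL94} and \cite[(III.8.32d,e)]{CDF91}.

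The conceptual content here is light; the main obstacle is bookkeeping. The delicate step is the Chern--Simons correction term in $C^s_6$: the fibre integration over $\Delta^1$ interacts both with the nonlinear $\tfrac{1}{2}\,\mathbf{G}_4 \wedge \mathbf{G}_4$ part of the $\mathfrak{l}S^4$-structure and with the (super-)degree of $C^s_3$, so that all Koszul and super-parity signs --- along with the orientation and normalisation conventions on the deformation-simplex --- must be tracked carefully in order to land exactly on~\eqref{GlobalGaugePotentials} rather than on a cohomologous variant.
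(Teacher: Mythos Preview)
Your approach is correct and essentially identical to the paper's: the paper's proof of this proposition cites Prop.~\ref{CoboundariesOfClosedLS4ValuedForms}, which identifies the homotopy as a null-concordance on $X\times[0,1]$ and extracts $(C_3,C_6)$ by fiber-integration exactly as you do, with your partial primitive $\gamma^{(t)}$ being the paper's $\widehat{C}_3(-,t):=\int_{[0,t]}\widehat{G}_4$. The only cosmetic differences are that the paper packages the Chern--Simons correction via a preparatory lemma (Lem.~\ref{TrivializationOfClosedFormOnCylinder}) and fiberwise Stokes rather than your integration-by-parts in $t$, and for the converse it uses the concordance $(\widehat{G}_4,\widehat{G}_7)=(t\,G_4+\mathrm{d}t\,C_3,\; t^2\,G_7+2t\,\mathrm{d}t\,C_6)$ rather than your $\mathrm{d}(t\,C_3)$-based one --- both are valid representatives.
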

\begin{proof}
  By Ex. \ref{InfinityGroupoidOfFluxDeformations} below, the homotopies in \eqref{DiagramForTrivialCharge} corresponds to coboundaries for $\big(G^s_4, G^s_7\big) \,\in\, \Omega^1_{\mathrm{dR}}(X;\, \mathfrak{l}S^4)$ in $\mathfrak{l}S^4$-valued de Rham cohomology (Def. \ref{NonabelianDeRhamCohomology}), and by Prop. \ref{CoboundariesOfClosedLS4ValuedForms} below (see there for details) these correspond to gauge potentials \eqref{GlobalGaugePotentials} as claimed.
\end{proof}
The statement of Prop. \ref{RecoveringTraditionalSuperCFieldGaugePotentials} motivates and justifies the notation $\big(\widehat{C}{}^s_3,\, \widehat{C}{}^s_6\big)$ in \eqref{FluxQuantizedSugraFields} for globally defined gauge potentials (following traditional such hat-notation for lifts of differential forms to cocycles in differential cohomology).

\smallskip 
Indeed, on topologically non-trivial (super-)spacetimes, diagram \eqref{FluxQuantizedSugraFields} gives 
{\it new global field content}, which enhances the chart-wise data \eqref{GlobalGaugePotentials} by topological structure reflecting individual solitonic (brane-)charges that may source the flux-quantized C-field --- in generalization of how individual Dirac monopoles and Abrikosov vortex strings are imprinted in the electromagnetic flux density (cf. \cite[\S 2]{SS24Flux}) once Dirac charge quantization is imposed. Some implications of such C-field flux quantization are surveyed in \cite[\S 4]{SS24Flux}; for more see  \cite{SS20Tadpole}\cite{FSS20TwistedString}\cite{SS21M5Anomaly} \cite{SS23MF}. 

\newpage 
In these previous discussions, however, it was left open whether the conclusions drawn are all subject to a pending imposition of a duality constraint \eqref{DualityConstraintOnCField}, lifted somehow from flux densities to global gauge potentials on spacetime. By Claim \ref{FluxQuantizedSuperFieldsOf11dSugra} this is not the case, and duality-symmetric flux-quantized super-fields as in \eqref{SuperFluxDensitiesInIntroduction} constitute the full global field content of 11d SuGra on (super-)spacetime, and thereby also on ordinary spacetime:

\begin{proposition}[\bf Bosonic spacetime flux quantization implied by super-flux quantization]
  \label{BosonicSpacetimFluxQuantizationImplied}
  $\,$
  
\noindent   Given $\mathcal{A}$-flux-quantized super-C-fields 
  $\big(\widehat{C}{}^s_3,\, \widehat{C}{}^s_6\big)$ on super-spacetime \eqref{SuperFluxDensitiesInIntroduction},
  their restriction 
  $
    \big(\eta^\rightsquigarrow_X\big)^{\! \ast} 
    \big(\widehat{C}{}^s_3,\, \widehat{C}{}^s_6\big)
  $
  to ordinary bosonic spacetime 
  (Ex. \ref{BosonicBodyOfSupermanifold}) is then itself flux-quantized in differential $\mathcal{A}$-cohomology.
\end{proposition}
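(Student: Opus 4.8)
The plan is to observe that this is essentially a pure naturality statement. By Claim \ref{FluxQuantizedSuperFieldsOf11dSugra}, an $\mathcal{A}$-flux-quantized super-C-field $\big(\widehat{C}{}^s_3,\widehat{C}{}^s_6\big)$ on $X$ is nothing but a diagram of the shape \eqref{FluxQuantizedSugraFields} in super-homotopy theory; and, by the general flux-quantization package recalled around \eqref{TheGaugePotentials} (see \cite{FSS23Char}, \cite{SS24Flux}), a cocycle in differential $\mathcal{A}$-cohomology on \emph{any} smooth (super-)manifold $Y$ is exactly such a diagram with $Y$ in place of $X$. So the strategy is simply to precompose the given diagram with the bosonic-body inclusion $\eta^\rightsquigarrow_X \colon \bos{X}\hookrightarrow X$ (Ex. \ref{BosonicBodyOfSupermanifold}) and to verify that the outcome is again a diagram of the required form, now over the ordinary manifold $\bos{X}$.

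First I would restrict each component separately: pull back the flux density $(G_4^s,G_7^s)$, the local charge $\rchi$ and the deforming homotopy $\big(\widehat{C}{}^s_3,\widehat{C}{}^s_6\big)$ along $\eta^\rightsquigarrow_X$, whiskering the homotopy. Since $\shape$, $\mathbf{ch}^{\mathcal{A}}$ and the moduli objects are functors (presheaves), this immediately yields a commuting square over $\bos{X}$ of exactly the shape \eqref{FluxQuantizedSugraFields}. Next I would identify the restricted corners with their ordinary-spacetime incarnations: because $\bos{X}$ is an ordinary manifold, $(\eta^\rightsquigarrow_X)^\ast$ carries super-differential forms to ordinary differential forms, and being a morphism of differential graded algebras it preserves $\wedge$ and $\mathrm{d}$, so $(\eta^\rightsquigarrow_X)^\ast(G_4^s,G_7^s)$ is a closed $\mathfrak{l}S^4$-valued ordinary form on $\bos{X}$ — in fact the bosonic C-field flux $(G_4,G_7)$ of \eqref{PremetricCFieldFlux}, the odd super-coframe contributions in \eqref{SuperFluxDensitiesInIntroduction} dropping out on the body — and the restricted deforming homotopy is again a deformation, which over a chart recovers the ordinary potentials \eqref{GlobalGaugePotentials} in the sense of Prop. \ref{RecoveringTraditionalSuperCFieldGaugePotentials}.

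The one point that needs genuine care — and the main (albeit mild) obstacle — is that the \emph{non-flux} corners of \eqref{FluxQuantizedSugraFields} restrict to their ordinary-spacetime versions rather than to something truly supergeometric. For this I would invoke that the classifying space $\mathcal{A}$, together with the target of its differential character $\mathbf{ch}^{\mathcal{A}}$, carries no odd structure, i.e. lies in the image of the fully faithful inclusion $\SmoothSets\hookrightarrow\sSmoothSets$, so that its restriction along $\CartesianSpaces\hookrightarrow\sCartesianSpaces$ is again $\mathcal{A}$; and that the moduli presheaves $\Omega^1_{\mathrm{dR}}(-;\mathfrak{l}S^4)_{\mathrm{clsd}}$ and $\shape\,\Omega^1_{\mathrm{dR}}(-;\mathfrak{l}S^4)_{\mathrm{clsd}}$ are, by construction (\S\ref{SuperCartanGeometry}, \S\ref{SuperDifferentialForms}), the extensions to the super-site of the smooth-set moduli appearing in \eqref{TheGaugePotentials}, hence restrict back to them. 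Granting these identifications, the restricted diagram is literally a cocycle of differential $\mathcal{A}$-cohomology on $\bos{X}$, which is the assertion. I expect no computation beyond these bookkeeping checks: the content is entirely that the $S^4$-flux-quantization package is natural in the base and that $\mathcal{A}$ is a bosonic coefficient object.
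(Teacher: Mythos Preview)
Your proposal is correct and takes essentially the same approach as the paper: the paper's proof is a single sentence observing that pullback along $\eta^{\rightsquigarrow}_X$ is just precomposition of the diagram \eqref{FluxQuantizedSugraFields} with $\bos{X}\xrightarrow{\eta^{\rightsquigarrow}_X}X$, as already indicated on the left of that diagram. Your additional bookkeeping about why each corner and the homotopy restrict correctly is more explicit than what the paper bothers to say, but it is entirely in the same spirit and not needed for the argument as the paper presents it.
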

\begin{proof}
  This is immediate from the diagrammatic definition \eqref{SuperFluxDensitiesInIntroduction} of the flux-quantized fields, since the pullback operation $\big(\eta^{\rightsquigarrow}_X\big)^\ast$ corresponds
  (Remark \ref{PullbackOfDifferentialFormsViaClassifyingSuperSets})
  just to the precomposition of the diagram with $\bos X \xrightarrow{ \eta^\rightsquigarrow_X } X$, as indicated on the left of \eqref{SuperFluxDensitiesInIntroduction}.
\end{proof}
  Hence Prop. \ref{BosonicSpacetimFluxQuantizationImplied} solves the problem of flux-quantizing C-field histories on all of spacetime (instead of just on a Cauchy surface as in \cite{SS23FQ}), by detour through super-spacetime.

\begin{remark}[\bf Super-fields restricted to ordinary spacetime]
$\,$

\noindent {\bf (i)} 
The pullback of super-fields to ordinary spacetime $\bos X$, as invoked in Prop. \ref{BosonicSpacetimFluxQuantizationImplied}, is the operation which on a super-coordinate chart $U$ looks as follows (cf. Rem. \ref{FrameAndCoordinateIndices} for our coordinate-index notation):
\vspace{-2mm} 
\begin{equation}
  \label{PurelyBosonicFluxDensitiesInIntroduction}
  \def\arraystretch{1.6}
  \begin{array}{l}
  \big(\eta^{\rightsquigarrow}_U\big)^\ast
  G^s_4
  \;
  =
  \;
  \tfrac{1}{4!}(G_4)_{\evencoordinateindex_1 \cdots \evencoordinateindex_4}\big\vert_{\theta^\rho=0} \cdot  \mathrm{d}x^{\evencoordinateindex_1} \cdots \mathrm{d}x^{\evencoordinateindex_4}
  ,
  \\
  \big(\eta^{\rightsquigarrow}_U\big)^\ast
  G^s_7
  \;
  =
  \;
  \tfrac{1}{7!}(G_7)_{\evencoordinateindex_1 \cdots \evencoordinateindex_7} \big\vert_{\theta^\rho=0} \cdot \mathrm{d}x^{\evencoordinateindex_1} \cdots \mathrm{d}x^{\evencoordinateindex_7},
  \end{array}
\end{equation}
hence which discards all fermionic contributions to the super-flux density.

\noindent {\bf (ii)}  More generally, there are such restrictions ``at non-trivial Grassmann stage''
(see \S\ref{SuperSmoothFieldSpaces} and specifically Ex.  \ref{FermionicGravitinoFieldSpace} below for how this works),
where the classical gravitino field $\psi$ is retained on ordinary spacetime, 
whose contribution to the flux density  
is then chart- and plot-wise of the following form (reproducing the usual formula in the literature, e.g. \cite[(4)]{CF80}):
\vspace{-1mm} 
\begin{equation}
  \label{OddFluxOnBosonicSupermanifold}
  \def\arraystretch{1.9}
  \begin{array}{l}
  \Big(
  \tfrac{1}{4!}(G_4)_{\evencoordinateindex_1 \cdots \evencoordinateindex_4} \big\vert_{\theta^\rho=0} 
  \;+\;
  \tfrac{1}{2}\big(\,
   \overline{\psi}_{\evencoordinateindex_1}
   \,\Gamma_{\evencoordinateindex_2 \evencoordinateindex_3}\,
   \psi_{\evencoordinateindex_4}
  \big) \big\vert_{\theta^\rho=0}
  \Big)
  \mathrm{d}x^{\evencoordinateindex_1} \cdots \mathrm{d}x^{\evencoordinateindex_4},
  \\
  \Big(
  \tfrac{1}{7!}(G_7)_{\evencoordinateindex_1 \cdots \evencoordinateindex_7} \big\vert_{\theta^\rho=0}
  \;+\;
  \tfrac{1}{5!}\big(\,
   \overline{\psi}_{\evencoordinateindex_1}
   \,\Gamma_{\evencoordinateindex_2 \cdots \evencoordinateindex_6}\,
   \psi_{\evencoordinateindex_7}
  \big) \big\vert_{\theta^\rho=0}
  \Big)
  \mathrm{d}x^{\evencoordinateindex_1} \cdots \mathrm{d}x^{\evencoordinateindex_7}.
  \end{array}
\end{equation}
\end{remark}

\noindent
{\bf The Role of Super-Flux in Supergravity.}
We may notice a curious principle, apparently underlying the form of the super-flux densities $(G_4^s, G_7^s)$ (see \eqref{SuperFluxDensitiesInIntroduction}):

\smallskip 
\begin{itemize}[leftmargin=.4cm]
\item On the one hand we have purely bosonic flux densities $(G_4, G_7)$ on topologically non-trivial manifolds subject to a Bianchi identity \eqref{PremetricCFieldFlux}.
\item 
On the other hand, we have purely super-geometric objects 
$\big( \tfrac{1}{2}\big(\,\overline{\psi}\,\Gamma_{a_1 a_2}\, \psi\big), \, \tfrac{1}{5!}\big(\,\overline{\psi}\ \Gamma_{a_1 \cdots a_5}\, \psi\big)  \big)$ on super-Minkowski spacetime, satisfying {\it the same} form Bianchi identity \eqref{TheSupercocycles}.
\end{itemize}

\vspace{1mm} 
\noindent Faced with this situation, a natural question is how these two algebraically similar structures from different sectors of mathematics relate. One way of reading Thm. \ref{11dSugraEoMFromSuperFluxBianchiIdentity} is as answering this by saying that on-shell 11d supergravity is precisely the result of {\it unifying} these two structures:
\smallskip 
$$
  \hspace{-1.5mm}
  \adjustbox{
    scale=.83
  }{
  \begin{tikzcd}[
    column sep=-7.5cm,
    row sep=.7cm
  ]
    \adjustbox{fbox}{
    \def\tabcolsep{-5pt}
    \begin{tabular}{l}
      Flux densities
      on bosonic but
      {\color{darkgreen} \bf curved} manifolds,
      \\
      satisfying:
    \\
    $
    \def\arraystretch{1.4}
    \begin{array}{l}
      \mathrm{d}
      \,
      \Big({
        \color{darkblue}
        \tfrac{1}{4!}
        (G_4)_{a_1 \cdots a_4}
        \,
        e^{a_1} \cdots e^{a_4}
      }\Big)
      \;=\; 0
      \\
      \mathrm{d}
      \Big({
        \color{darkblue}
        \tfrac{1}{7!}
        (G_7)_{a_1 \cdots a_7}
        \,
        e^{a_1} \cdots e^{a_7}
      }\Big)
      \;=\;
      \tfrac{1}{2}
      \Big({
        \color{darkblue}
        \tfrac{1}{4!}
        (G_4)_{a_1 \cdots a_4}
        \,
        e^{a_1} \cdots e^{a_4}
      }\Big)^2
    \end{array}
    $
    \end{tabular}
    }
    \ar[
      dr,
      hook
    ]
    &&
    \adjustbox{fbox}{
    \def\tabcolsep{-6pt}
    \begin{tabular}{l}
      Supersymmetric forms
      on flat but {\color{olive} \bf super} spacetime,
      \\
      satisfying: \,
    \\
    $
    \def\arraystretch{1.4}
    \begin{array}{l}
      \mathrm{d}
      \,
      \Big({
       \color{darkorange}
      \tfrac{1}{2}
      \big(\,
        \overline{\psi}
        \,\Gamma_{a_1 a_2}\,
      \big)
      e^{a_1} e^{a_2}
      }\Big)
      \;=\; 0
      \\
      \mathrm{d}
      \Big({
      \color{darkorange}
      \tfrac{1}{5!}
      \big(\,
        \overline{\psi}
        \,\Gamma_{a_1 \cdots a_5}\,
      \big)
      e^{a_1} \cdots e^{a_5}
      }\Big)
      \;=\;
      \tfrac{1}{2}
      \Big({
      \color{darkorange}
      \tfrac{1}{2}
      \big(\,
        \overline{\psi}
        \,\Gamma_{a_1 a_2}\,
      \big)
      e^{a_1} e^{a_2}
      }\Big)^2
    \end{array}
    $
    \end{tabular}
    }
    \ar[
     dl,
     hook'
    ]
    \\
    &
    \adjustbox{fbox}{
    \def\tabcolsep{-5pt}
    \begin{tabular}{l}
      Locally supersymmetric
      flux densities on
       {\color{darkgreen} \bf curved} 
        {\color{olive} \bf super}manifolds,
      \\
      satisfying:
    \\
    $
    \def\arraystretch{1.7}
    \begin{array}{l}
      \mathrm{d}
      \,
      \Big(
      {
      \color{darkblue}
      \tfrac{1}{4!}
      (G_4)_{a_1 \cdots a_4}
      e^{a_1} \cdots e^{a_4}
      }
      \;+\;
      {
      \color{darkorange}
      \tfrac{1}{2}
      \big(\,
        \overline{\psi}
        \,\Gamma_{a_1 a_2}\,
      \big)
      e^{a_1} e^{a_2}
      }
      \Big)
      \;=\; 0
      \\
      \mathrm{d}
      \,
      \Big(
      {
      \color{darkblue}
      \tfrac{1}{7!}
      (G_7)_{a_1 \cdots a_7}
      \,
      e^{a_1} \cdots e^{a_7}
      }
      \;+\;
      {
      \color{darkorange}
      \tfrac{1}{5!}
      \big(\,
        \overline{\psi}
        \,\Gamma_{a_1 \cdots a_5}\,
      \big)
      e^{a_1} \cdots e^{a_5}
      }
      \Big)
      \;=\;
      \tfrac{1}{2}
      \Big(
      {
      \color{darkblue}
      \tfrac{1}{4!}
      (G_4)_{a_1 \cdots a_4}
      e^{a_1} \cdots e^{a_4}
      }
      \;+\;
      {
      \color{darkorange}
      \tfrac{1}{2}
      \big(\,
        \overline{\psi}
        \,\Gamma_{a_1 a_2}\,
      \big)
      e^{a_1} e^{a_2}
      }
      \Big)^2
    \end{array}
    $
    \\
    \vspaceabove
    thereby enforcing the 
    equations of motion of 11d supergravity.
    \end{tabular}
    }
  \end{tikzcd}
  }
$$

\newpage 
In this unification, the two summands (indicated in blue and in orange) separately still satisfy their Bianchi identities, but in addition now a plethora of mixed terms potentially appear (from non-trivial curvature/torsion, but also from the non-linearity of the Bianchi identity) 
whose vanishing, remarkably, is equivalently the equations of motion of 11d SuGra.

\medskip 
It is amusing to consider this in the special case of vanishing $G_4$ flux: Here it says that demanding the residual superforms 
$\tfrac{1}{2}\big(\,\overline{\psi} \,\Gamma_{a_1 a_2}\, \psi\big)\, e^{a_1} \, e^{a_2}$ and 
$\tfrac{1}{5!}\big(\,\overline{\psi} \,\Gamma_{a_1 \cdots a_5}\, \psi\big)\, e^{a_1} \cdots e^{a_5}$ on a 
curved super-spacetime to still satisfy their Bianchi identity as on super-Minkowski spacetime is equivalent to the curved super-spacetime satisfying the source-free Einstein-Rarita-Schwinger equation.

\medskip

\noindent
{\bf Outlook: Super-Exceptional Geometric Supergravity.}
This suggests that exotic forms of supergravity may be discovered by similarly generalizing supersymmetric relations 
found on generalized super-Minkowski spacetimes to curved generalized super-spacetimes.
Notably there are ``super-exceptional geometric'' enhancements of 11d super-Minkowski spacetime (\cite[\S 3]{FSS20Exc},
the ``hidden supergroup'' of \cite[\S 6]{DF82}\cite{BDIPV04}\cite{ADR16}):
\begin{equation}
  \label{SuperExceptionalMinkowskiSpacetime}
  \begin{tikzcd}
    \mathbb{R}^{1,10\vert\mathbf{32}}_{\mathrm{ex}, s}
    \ar[
      r,
      ->>,
      "{
        \phi^0
      }"
    ]
    &
    \mathbb{R}^{1,10\vert\mathbf{32}}
  \end{tikzcd}
\end{equation}
(indexed by a parameter $s \in \mathbb{R} \setminus \{0\}$) whose bosonic body is (independent of $s$) the exceptional ``generalized tangent bundle'' expected in M-theory \cite{Hull07}
$$
\overset{\longsquiggly}{
    \mathbb{R}^{1,10\vert\mathbf{10}}_{\mathrm{ex}, s}
}
  \;\;\cong\;\;
  \mathbb{R}^{1,10}
  \times
  \Lambda^2\big(
    \mathbb{R}^{1,10\vert \mathbf{32}}
  \big)^\ast
  \times
  \Lambda^5\big(
    \mathbb{R}^{1,10\vert \mathbf{32}}
  \big)^\ast
  ,
$$
while its further fermionic structure has the curious property that it admits the construction of a supersymmetric form $H_3^{0} \in \Omega^3_{\mathrm{dR}}\big( \mathbb{R}^{1,10\vert \mathbf{32}}_{\mathrm{ex},s}\big)$ which trivializes the pullback of the above 
supersymmetric 4-form on super-Minkowski spacetime:
$$
  \mathrm{d}
  \,
  \Big(
  \underbrace{
    \color{darkorange}
    \alpha_0(s)
    \,
    e_{a_1 a_2}
    \, 
    e^{a_1}\, e^{a_2}
    +
    \cdots
    }_{H_3^0}
  \Big)
  \,=\,
  (\phi^0)^\ast
  \Big(
   \underbrace{
     \color{darkorange}
      \tfrac{1}{2}
      \big(\,
        \overline{\psi}
        \,\Gamma_{a_1 a_2}\,
        \psi
      \big)
      e^{a_1}\, e^{a_2}
    }_{
    \mathclap{
      G_4^0
    }
  }
  \Big)
  \,.
$$
But this Bianchi identity of supersymmetric forms on (generalized) super-Minkowski spacetimes is of just the same algebraic structure as the Bianchi identity
of ordinary flux densities on ordinary but curved spacetimes for the case of the B-field flux $H_3$ 
(cf. \cite[\S 4.3]{SS24Flux})
on the extended worldvolume $\phi \,:\, \Sigma^{6+1} \to X$ of M5-branes, which is, chartwise:
$$
  \mathrm{d}
  \, 
  \Big({
    \color{darkblue}
    (H_3)_{a_1 a_2 a_3}
    \,
    e^{a_1} \, e^{a_2}\, e^{a_3}
  }\Big)
  \;=\;
  \phi^\ast 
  \Big({
    \color{darkblue}
    (G_4)_{a_1 \cdots a_4}
    \,
    e^{a_1} \cdots e^{a_4}
  }\Big)
  \,.
$$
The evident analogy with the above situation for plain 11d SuGra suggests that its super-exceptional variant on curved superspacetimes $X_{\mathrm{ex},s}$ locally modelled not on ordinary but on the exceptional super-Minkowski spacetime \eqref{SuperExceptionalMinkowskiSpacetime} ought to be controlled (if not defined) by the following super-flux Bianchi identity:
\smallskip 
\begin{equation}
  \label{SuperH3FLuxBianchiIdentity}
  \hspace{-2mm}
  \mathrm{d}\Big(
    {\color{darkblue}
    \tfrac{1}{3!}
    (H_3)_{a_1 a_2 a_3}
    e^{a_1}\, e^{a_2}\, e^{a_3}
    }
    +
    {
      \color{darkorange}
      \alpha_0(s)
      \,
      e_{a_1 a_2}
      \, e^{a_1} \, e^{a_2}
      +
      \cdots
    }
  \Big)
  \!=
  \big(\phi^s\big)^\ast\Big(
    {
      \color{darkblue}
      \tfrac{1}{4!}
      \, 
      (G_4)_{a_1 \cdots a_4}
      \, 
      e^{a_1} \cdots e^{a_4}
    }
    +
    {
      \color{darkorange}
      \tfrac{1}{2!}
      \big(\,
        \overline{\psi}
        \,\Gamma_{a_1 a_2}\,
        \psi
      \big)
      \, e^{a_1} \, e^{a_2}
    }
  \Big)
\end{equation}

\smallskip 
\noindent where 
$
  \phi^s 
  :
  \Sigma_{\mathrm{ex},s}
  \xrightarrow{\;}
  X_{\mathrm{ex},s}
$
is the super-exceptional embedding 
of an extended super-exceptional M5-brane worldvolume into the super-exceptional spacetime (as considered in the flat and fluxless case in \cite{FSS20Exc}\cite{FSS21-SU2} and hereby generalized to the curved and fluxed case). 
Now as before, the structure of this super-Bianchi identity \eqref{SuperH3FLuxBianchiIdentity} allows to apply super $H_3$-flux quantization and hence impose (level-)quantization of the M5-branes Hopf-WZ/Page-charge term as previously considered on bosonic spacetimes \cite{FSSHopf}\cite[\S 4.3]{SS24Flux}. 

\smallskip 
We hope to discuss this flux-quantized super-exceptional geometric supergravity elsewhere \cite{GSS-Exceptional}, based on the results presented here and extending the computations in \S\ref{Supergravity}.

\vspace{1cm}


\smallskip

\smallskip

\noindent

\newpage

\noindent
{\bf Conventions.} Our conventions are standard in the differential geometry and (super-)gravity literature, but since the computations in \S\ref{Supergravity} depend delicately on a plethora of combinatorial signs and prefactors to conspire appropriately, we make them fully explicit, for the record:

\noindent
\begin{notation}[\bf Algebra conventions]
$\,$
\noindent
\begin{itemize}[leftmargin=.4cm]
\item Our ground field is the real numbers $\mathbb{R}$.

\item We write
$
  \ZTwo 
  \,:=\,
  \mathbb{Z}/2\mathbb{Z}
$
for the prime field of order two, thought of as consisting of the set of elements $\{0,1\}$ equipped with 
\begin{itemize}[leftmargin=.5cm]
\item 
the abelian group operation given by addition in $\mathbb{Z}$ modulo 2,
\item the commutative ring structure given by multiplication in $\mathbb{Z}$ modulo 2.
\end{itemize}
In the context of superalgebra, the elements $0, 1 \in \ZTwo$ indicate ``even'' and ``odd'' degrees, respectively.
\end{itemize}
\end{notation}

\begin{notation}[\bf Tensor conventions]
$\,$
\begin{itemize}[leftmargin=.4cm]
\item
  The Einstein summation convention applies throughout: Given a product of terms indexed by some $i \in I$, with the index of one factor in superscript and the other in subscript, then a sum over $I$ is implied:
  $
    x_i \, y^i
    :=
    \sum_{i \in I} 
    x_i \, y^i
  $.
\item We name super-coordinate/frame indices as follows

\begin{equation}
\adjustbox{scale=.95}{
\begin{tabular}{ll}
\def\arraystretch{1.3}
\def\tabcolsep{5pt}
\begin{tabular}{c|cc}
  \hline
  & 
  {\bf Even}
  &
  {\bf Odd}
  \\
  \hline
\rowcolor{lightgray}  
{\bf Frame}
  & 
  $a \in \{0, \cdots, 10\}$
  &
  $\alpha \in \{1, \cdots, 32\}$ 
  \\
  {\bf Coord}
  & 
  $\evencoordinateindex \in \{0, \cdots, 10\}$
  &
  $\oddcoordinateindex \in \{1, \cdots, 32\}$ 
  \\
  \hline
\end{tabular}
&
\quad so that\quad
$
  \def\arraystretch{1.3}
  \begin{array}{ccccc}
  \mbox{\bf frame-}
  &&
  &
  \mathclap{
    \mbox{\bf coord-differentials}
  }
  \\
  e^a 
  &=&
  e^a_{\evencoordinateindex}
  \,
  \mathrm{d} x^\evencoordinateindex
  &+&
  e^a_{\oddcoordinateindex}
  \,
  \mathrm{d} \theta^\oddcoordinateindex
  \,
  \\
  \psi^\alpha 
  &=&
  \psi^\alpha_{\evencoordinateindex}
  \,
  \mathrm{d} x^\evencoordinateindex
  &+&
  \psi^\alpha_{\oddcoordinateindex}
  \,
  \mathrm{d} \theta^\oddcoordinateindex
  \end{array}
$
\end{tabular}
}
\end{equation}
\item
Our Minkowski metric is the matrix
\begin{equation}
  \label{MinkowskiMetric}
  \big(\eta_{ab}\big)
    _{a,b = 0}
    ^{ D }
  \;\;
    =
  \;\;
  \big(\eta^{ab}\big)
    _{a,b = 0}
    ^{ D }
  \;\;
    :=
  \;\;
  \Big(
    \mathrm{diag}
      (-1, +1, +1, \cdots, +1)
  \Big)_{a,b = 0}^{D}
\end{equation}
\item
  Shifting position of frame indices always refers to contraction with the  Minkowski metric \eqref{MinkowskiMetric}:
  $$
    V^a 
      \;:=\;
    V_b \, \eta^{a b}
    \,,
    \;\;\;\;
    V_a \;=\; V^b \eta_{a b}
    \,.
  $$
\item Skew-symmetrization of indices is denoted by square brackets ($(-1)^{\vert\sigma\vert}$ is sign of the permutation $\sigma$):
$$
  V_{[a_1 \cdots a_p]}
  \;:=\;
  \tfrac{1}{p!}
  \;
  \sum_{
    \mathclap{
      \sigma \in \mathrm{Sym}(n)
    }
  }
  \;
  (-1)^{\vert \sigma \vert}
  V_{ a_{\sigma(1)} \cdots a_{\sigma(p)} }\,.
$$
\item
We normalize the Levi-Civita symbol to \begin{equation}
  \label{NormalizationOfLeviCivitaSymbol}
  \epsilon_{0 1 2 \cdots} 
    \,:=\, 
  +1
  \;\;\;\;\mbox{hence}\;\;\;\;
  \epsilon^{0 1 2 \cdots} 
    \,:=\, 
  -1
  \,.
\end{equation}
\item
We normalize the Kronecker symbol to
$$
  \delta
    ^{a_1 \cdots a_p}
    _{b_1 \cdots b_p}
  \;:=\;
  \delta^{[a_1}_{[b_1}
  \cdots
  \delta^{a_p]}_{b_p]}
  \;=\;
  \delta^{a_1}_{[b_1}
  \cdots
  \delta^{a_p}_{b_p]}
  \;=\;
  \delta^{[a_1}_{b_1}
  \cdots
  \delta^{a_p]}_{b_p}
$$
so that
\begin{equation}
  \label{ContractingKroneckerWithSkewSymmetricTensor}
  V_{
    \color{darkblue}
    a_1 \cdots a_p
  }
  \tensor*
   { \delta }
   {
    ^{ 
       \color{darkblue}
       a_1 \cdots a_p 
    }
    _{b_1 \cdots b_p}
   }
  \;\;
  =
  \;\;
  V_{[b_1 \cdots b_p]}  
  \;\;\;\;
  \mbox{and}
  \;\;\;\;
  \epsilon^{
    {\color{darkblue}  
      c_1 \cdots c_p
    }
    a_1 \cdots a_q
  }
  \,
  \epsilon_{
    {\color{darkblue}
    c_1 \cdots c_p 
    }
    b_1 \cdots b_q
  }
  \;\;
  =
  \;\;
  -
  \,
  p! \cdot q!
  \;
  \delta
    ^{a_1 \cdots a_q}
    _{b_1 \cdots b_q}
  \,.
\end{equation}
\end{itemize}
\end{notation}

\begin{notation}[\bf Clifford algebra conventions]
$\,$
\begin{itemize}[leftmargin=.4cm]
\item Clifford algebra generators $\big(\Gamma_a\big)_{a = 0}^{10}$ are taken to square to the Minkowski metric \eqref{MinkowskiMetric}:
\begin{equation}
  \label{TheCliffordAlgebra}
    \Gamma_a 
    \Gamma_b
    +
    \Gamma_b
    \Gamma_a
    \;=\;
    +
    2 \, \eta_{ab}
    \,.
\end{equation}
\item The linear basis spanning the Clifford algebra is denoted:
\begin{equation}
  \label{CliffordBasisElements}
  \Gamma_{a_1 \cdots a_p}
  \;:=\;
  \Gamma_{[a_1} 
    \cdots 
  \Gamma_{a_p]}
  \;:=\;
  \tfrac{1}{p!}
  \underset{
    \sigma 
  }{\sum}
  (-1)^{\vert\sigma\vert}
  \,
  \Gamma_{a_{\sigma(1)}}
    \cdots
  \Gamma_{a_{\sigma(p)}}
  \,,
\end{equation}
which just means that
$$
  \Gamma_{a_1 \cdots a_p}
  \;=\;
  \left\{\!\!
  \begin{array}{ll}
    \Gamma_{a_1}
    \cdots
    \Gamma_{a_p}
    &
    \mbox{\small if the $a_i$ are pairwise distinct},
    \\
    0 & \mbox{\small otherwise}.
  \end{array}
  \right.
$$
\end{itemize}
For more on spinor algebra see \S\ref{TheSpinRep}.
\end{notation}

Our Clifford conventions agree for instance with \cite[\S 2.5]{MiemiecSchnakenburg06}\cite[p. ii]{VF12}\cite[\S A]{Sezgin23}, but:

\begin{remark}[\bf Alternative conventions]
\label{CliffordConventionsInCDF}
Beware that the Clifford algebra conventions used by \cite[(II.7.1-2)]{CDF91} and other supergravity authors are related to our convention by changing the sign of the metric and multiplying, under the Majorana embedding $\mathbb{R}^{32} \hookrightarrow \mathbb{C}^{32}$, the Clifford generators with the imaginary unit (cf. \cite[\S A.1]{HSS19}):
\begin{equation}
  \eta \;\;=\;
  {\color{purple} - }
  \eta^{\mathrm{CDF}}
  \,,
  \hspace{.5cm}
  \Gamma_a
  \;=\;
  {\color{purple}\mathrm{i}}
  \,
  \Gamma^{\mathrm{CDF}}_a
  .
\end{equation}
Conversely, using this translation all factors of $\mathrm{i}$ appearing in formulas shown in \cite{CDF91} are absorbed and hence do not appear in our formulas. The form of the crucial Fierz identities \eqref{TheQuarticFierzIdentities} is invariant under this transformation.
\end{remark}

\newpage

\section{Super Cartan Geometry}
\label{SuperCartanGeometry}

\noindent
{\bf Gravity as Cartan Geometry.}
Due to quirks of history, what mathematicians call {\it Cartan geometry} (see \cite{Sharpe} \cite[ch. 1]{CapSlovak09}\cite{McKay}) is (see \cite{Catren15} for translation) what physicists refer to by words like ``vielbeins'', ``moving frames'', ``spin connection'', and ``first-order formulation of gravity'' (e.g. \cite[\S I.2-4]{CDF91}\footnote{Beware that the authors of \cite{CDF91} refer to Cartan geometries as ``soft group manifolds'', following \cite{NeemanRegge78}\cite{DFR79}. This terminology is non-standard but well in the spirit of Cartan geometry as the curved generalization of the Kleinian geometry of group coset spaces, cf. \cite{Sharpe}.}). The basic concepts, going back to \cite{Cartan23} (see \cite{Scholz19}), are simple, elegant, and powerful and yet arguably remain underappreciated,
\footnote{
\label{CartanConnection}
For example, there recur conflicting claims in the literature on whether gravity ``is a gauge theory'' or not. But both the similarity and the distinction are clearly brought out by the concept of {\it Cartan connection} for formulating the field content of gravity:
This is indeed like that of a gauge connection (for the Poincar{\'e} group in the case of ordinary gravity), {\it but} crucially subject to the soldering constraint \eqref{CartanProperty} not present for Yang-Mills- or Chern-Simons-type gauge theory (as highlighted for instance in \cite[\S 10]{Krasnov20}). Even though this has eventually been realized \cite[p. 3]{Witten07}, much (if not most) of the literature on 3d gravity still claims its equivalence to  Chern-Simons theory by identifying the connection data on both sides --- thereby ignoring the fact that the Cartan connection on the gravity side (but not the gauge connection on the Chern-Simons side) is constrained, in that its frame form field must be non-degenerate. For a more careful discussion of this point see for instance \cite{CGRS20}.} though this may be changing at the moment.

\smallskip

Cartan geometry had explicitly been introduced in \cite{Cartan23} as an alternative to Riemannian metric geometry for discussing general relativistic gravity,
and these days 
Cartan geometry is {\it de facto} what underlies the familiar ``first-order formulation of gravity'' (e.g. \cite[\S I.4]{CDF91}\cite[\S 4, 5]{Zanelli01}\cite[\S 5]{Fre13a}\cite[\S 3]{Krasnov20}) in terms of Cartan's {\it moving frames} (Def. \ref{SuperSpacetime}), {\it structural equations} (Def. \ref{SuperGravitationalFieldStrengths}) and their {\it Bianchi identities} \eqref{SuperGravitationalBianchiIdentities}, which had been used for this purpose in superspace supergravity since  \cite{Baaklini77-1}\cite{Baaklini77-2}\cite{WZ77}\cite{NeemanRegge78}\cite{GWZ79}\cite{DFR79}\cite{CF80}\cite{BrinkHowe80}\cite{Howe82}\cite{DF82}. Nevertheless the method has often been overlooked, for example \cite{Wise10} pointed out that the seminal work \cite{MM77} is (more) naturally re-cast in terms of Cartan geometry.

\medskip

\noindent
{\bf Super Gravity as Super Cartan Geometry.}
Since it is well-known that the ``first-order'' formulation of gravity via Cartan geometry is inevitable once one considers coupling to {\it fermionic} fields on spacetime (e.g. \cite[\S 5.4.1]{Fre13a}\cite[p. 6]{Krasnov20}), it should not be surprising that the natural conceptual home of {\it super-}gravity is a form of {\it super-Cartan geometry}. More recently, this notion, well-known to supergravity theorists since the 1970s (cf. again \cite{Baaklini77-1}\cite{Baaklini77-2} and the other references from the previous paragraph),
is being appreciated more widely, cf. \cite{Lott90}\cite{Lott01}\cite{EEC12}\cite[p. 7-8]{HuertaSchreiber18}\cite[p. 6-7]{HSS19}\cite[\S 3]{Ratcliffe22}\cite{Eder23}\cite{EHN23}\cite{FR24}.

\medskip 

\noindent
{\bf Fluxed Super Gravity as Higher Super Cartan Geometry.}
Most of these discussions have previously ignored the fact that the (higher-degree) flux densities intrinsic to (higher-dimensional) supergravity theories, hence their (higher) gauge fields -- even though these appear as further superpartners of the gravitino field and as such are intrinsically part of the super-geometry -- are {\it globally not} subject to ordinary (super-)geometry.
Indeed, their (higher) gauge symmetries instead make these be objects in {\it higher} geometry (exposition in \cite{FSS15-Stacky}\cite{Alfonsi24}\cite{Borsten24}\cite{Schreiber24}\cite[\S 3.3]{SS24Flux}), where (super-)manifolds are generalized to smooth (super-){\it $\infty$-groupoids} (``$\infty$-stacks'', cf. \cite[\S 1]{FSS23Char}\cite[\S 2]{SS20Orb} for details and further pointers).

\smallskip 
This issue can be ignored, to some extent, (only) if one focuses on the local description of supergravity on a single contractible (super-)coordinate chart, where global topological effects are invisible. This is the situation tacitly considered in most of the existing literature, but this is not sufficient for discussing the complete flux-quantized field content (as of \S\ref{SuperFluxQuantization}).

\smallskip 
Concretely, diagrams \eqref{TheGaugePotentials} and 
\eqref{SuperFluxDensitiesInIntroduction}
defining (the global gauge potentials) of flux-quantized higher gauge fields crucially involve (higher) homotopies (meaning: higher gauge transformations baked into the geometry) which are not available in ordinary geometry. (This concerns the manifest homotopy filling these diagrams and reflecting the gauge potentials, but it also concerns a plethora of implicit higher homotopies that enter the construction of the charge classifying map $\rchi : X \xrightarrow{\;} \mathcal{A}$ via ``cofibrant resolution'' of spacetime $X$ (Ex. \ref{CechResolutionOfSupermanifold}).

\medskip

Therefore, here we give a quick account of the {\it higher} super Cartan geometry \cite{SS20Orb} (advocated earlier in  \cite{Schreiber15}\cite{Schreiber16}\cite{Cherubini18}) that underlies higher-dimensional supergravity theories, in a way that allows to apply flux quantization of superspacetime (in \S\ref{SuperFluxQuantization}), which is not possible with machinery available elsewhere in the literature. More extensive discussion will appear in \cite{GSS24}\cite{GSS25}.

\medskip

\medskip

-- \S\ref{HigherSuperGeometry}: {Higher Super Geometry}.

-- \S\ref{SuperSpaceTime}: {Super Space Time Geometry}.

\newpage

\subsection{Higher Super Geometry}
\label{HigherSuperGeometry}

We give a quick account of higher supergeometry, along the lines previously indicated in \cite[\S 3.1.3]{SS20Orb} (also \cite{HSS19}\cite{Schreiber19}); for more details see the companion article \cite{GSS25}, for more exposition see \cite{Schreiber24}, for more technical details on the higher geometric aspect see \cite[\S 1]{FSS23Char}.
It is this higher version of supergeometry that we need for super-flux quantization in \S\ref{SuperFluxQuantization} (since the C-field is a higher gauge field) and which is not found elsewhere in the literature. 
But for related discussion in the literature of non-higher mathematical supergeometry in view of supersymmetric field theory see also: \cite{Manin88}\cite{Lott90}\cite{KS98}\cite[\S II.2]{CDF91}\cite{Schmitt97}\cite{DM99a}\cite{DF99a}\cite{Mirkovic04}\cite{CCF11}\cite{Eder21}.

\begin{remark}[{\bf Category theory in the background}]
\label{CategoryTheoryInTheBackground}
$\,$

\begin{itemize}[leftmargin=.65cm]

\item[\bf (i)]
Super-algebra (\S\ref{SuperAlgebra}) and its enhancement (\S \ref{HomologicalSuperAlgebra}) to homological algebra (cf. \cite{Weibel94}) is just (homological) algebra {\it internal} to (cf. \cite{Boardman95}) the symmetric braided monoidal (cf. \cite[\S III]{EilenbergKelly65}) tensor category (cf. \cite{EGNO15}) of super-vector spaces (Def. \ref{SuperVectorSpaces} below, cf. \cite[\S 3.1]{Varadarajan04}), where the heart of the subject -- the super sign-rule -- is encoded in the non-trivial braiding isomorphism \eqref{BraidingOfSuperVectorSpaces}, cf. also Rem. \ref{SignsInHomotopicalSuperAlgebra} below. 

\item[\bf (ii)] 
Super geometry (\S \ref{SuperGeometry}) and its enhancement to higher super geometry (\S \ref{HigherSuperGeometry}) is just (higher) topos theory (Def. \ref{SmoothSuperInfinityGroupoids}) over the site of Cartesian super-spaces (Def. \ref{SuperCartesianSpace}) \cite[\S 3.1.3]{SS20Orb} (in generalization of \cite{KS98}\cite{Sachse08} \cite{Schmitt97}), cf. \cite{Schreiber24} for exposition and pointers and \cite{GSS24}\cite{GSS25} for more details.
\end{itemize}

\smallskip

\noindent
However, for the record, we spell out the definitions explicitly and do not assume that the reader is familiar with category theory -- though for stating definitions we do assume that the reader knows at least what a category {\it is}. A basic introduction to categories aimed at mathematical physicists is in \cite{Geroch85}; for further introduction, we recommend \cite{Awodey06}.
\end{remark}

\noindent We proceed as follows:

\S\ref{SuperAlgebra} -- Super Algebra

\S\ref{SuperGeometry} -- Super Geometry

\S\ref{HomologicalSuperAlgebra} -- Homological Super Algebra

\S\ref{SuperDifferentialForms} -- Super Differential Forms

\S\ref{SuperSmoothFieldSpaces} -- Super Field Spaces

\S\ref{SuperModuliStacks} -- Super Moduli Stacks

\S\ref{BackgroundOnFluxQuantization} -- Super Flux Quantization

\subsubsection{Super Algebra}
\label{SuperAlgebra}

\begin{definition}[{\bf Super vector spaces}]
\label{SuperVectorSpaces}
We write $\mathrm{sMod}$ for the symmetric monoidal category of {\it super vector spaces} whose 
\begin{itemize}
\item objects are $\ZTwo$-graded vector space $V \,:=\, V_0 \oplus V_1$, 
\item morphisms are linear maps preserving the grading,
\item tensor product is that of the underlying vector spaces with grading given by
$$
  \big(
    V \otimes V'
  \big)_\sigma
  \;\;:=\;\;
  V_0 \otimes V'_{0 + \sigma}
  \;
  \oplus
  \;
  V_1 \otimes V'_{1 + \sigma}
$$
\item braiding 
\begin{equation}
  \label{BraidingOfSuperVectorSpaces}
  \begin{tikzcd}
    V
    \otimes
    V'
    \ar[
      rr,
      "{
        \mathrm{br}_{V, V'}
      }",
      "{ \sim }"{swap}
    ]
    &&
    V' \otimes V
  \end{tikzcd}
\end{equation}
is that of the underlying vector spaces times a sign when two odd-graded factors are swapped:
$$
  v \in V_\sigma
  ,\;
  v' \in V'_{\sigma'}
  \;\;\;\;\;
  \Rightarrow
  \;\;\;\;\;
  \mathrm{brd}_{V, V'}
  \big(
    v \otimes v'
  \big)
  \;:=\;
  (-1)^{\sigma \cdot \sigma'}
  \,
  v' \otimes v
$$
\end{itemize}
\end{definition}
\begin{example}[{\bf Purely odd vector spaces}]
\label{PurelyOddVectorSpaces}
  For $V$ an  ordinary vector space, we write $V_{\mathrm{odd}}$ for the super-vector space which is concentrated on $V$ in odd degree:
  $$
    V \in
    \mathrm{Mod}
    \;\;\;\;\;\;\;
    \yields
    \;\;\;\;\;\;\;
    \left\{\!\!
    \def\arraystretch{1.5}
    \begin{array}{l}
    V_{\mathrm{odd}}
    \;\in\;
    \mathrm{sMod}
    ,
    \\
    (V_{\mathrm{odd}})_0
    \;=\;
    0
    ,
    \\
    (V_{\mathrm{odd}})_1
    \;=\;
    V
    \,.
    \end{array}
    \right.
  $$
\end{example}
\begin{remark}[{\bf Dual super vector spaces}]
\label{DualSuperVectorSpace}
Given $V \in \mathrm{sMod}$ its dual object is degreewise the ordinary dual vector space
$$
  (V^\ast)_\sigma
  \;\cong\;
  (V_\sigma)^\ast
  \,.
$$
\end{remark}

\smallskip

The following category $\mathrm{sCAlg}$ is that of commutative monoid objects {\it internal} to $\mathrm{sMod}$, but we spell 
out explicitly what this means:
\begin{definition}[{\bf Super-commutative algebras}]
\label{SupercommutativeAlgebras}
By $\mathrm{sCAlg}$ we denote the category of {\it super-commutative $\mathbb{R}$-algebras}, whose objects are $\ZTwo$-graded $\mathbb{R}$-vector spaces $A \defneq A_{0} \oplus \mathrm{A}_{1}$ equipped with unital associative algebra structure on the underlying vector space
$$
  (-)\cdot (-)
  \;:\;
  A \otimes A
  \longrightarrow
  A
$$
such that this respects the grading
and
is graded-commutative:
$$
  a \in A_\sigma
  ,\,
  \;
  a' \in A_{\sigma'}
  \;\;\;\;
  \Rightarrow
  \;\;\;\;
  \left\{\!\!
  \def\arraystretch{1.4}
  \begin{array}{l}
    a \cdot a' 
    \;\in\;
    A_{\sigma + \sigma'}
    \,,
    \\
    a \cdot a'
    \;=\;
    (-1)^{ \sigma \cdot \sigma' }
    a' \cdot a
    \,.
  \end{array}
  \right.
$$
A morphism of supercommutative algebras $A \longrightarrow A'$ is a linear map on the underlying vector spaces which is a homomorphism of underlying associative algebras and respects the $\ZTwo$-grading. 

The tensor product on this category
$$
  (-)\otimes (-)
  \;:\;
  \mathrm{sCAlg}
  \times 
  \mathrm{sCAlg}
  \longrightarrow
  \mathrm{sCAlg}
$$
is the usual tensor product on the underlying $\mathbb{R}$-algebras with grading given by
$$
  \big(
    A \otimes A'
  \big)_{\sigma}
  \;:=\;
    A_0 \otimes A'_{\sigma}
    \;\oplus\;
    A_1 \otimes A'_{ 1 + \sigma }
  \,.
$$
\end{definition}

\begin{example}[{\bf Smooth manifolds as duals of super-commutative algebras}]
\label{SmoothManifoldsAsDualyOfSCAlgs}
  Every commutative $\mathbb{R}$-algebra $A$ becomes a super-commutative $\mathbb{R}$-algebra by setting $A_{0} \,:=\, A$
  and $A_1 := 0$.
  Here we are particularly interested in ordinary algebras of smooth functions $C^\infty(X)$ on a smooth manifold $X$. A fundamental (if maybe underappreciated) theorem of differential geometry implies that the assignment $C^\infty(-)$ is a {\it fully faithful} embedding of smooth manifolds into (the opposite of the category of commutative algebras $\mathrm{CAlg}^{\mathrm{op}}$, and hence into) the opposite of the category of super-commutative algebras:
  \vspace{-2mm} 
  $$
    \begin{tikzcd}[sep=0pt]
      \mathrm{SmthMfd}
      \ar[
        rr,
        hook
      ]
      &&
      \mathrm{sCAlg}^{\mathrm{op}}
      \\[-3pt]
      X &\longmapsto&
      C^\infty(X)
      \,.
    \end{tikzcd}
  $$
\end{example}

In the spirit of algebraic geometry, this example allows us to regard objects of $\mathrm{sCAlg}^{\mathrm{op}}$ as generalized smooth manifolds, namely as affine ``super-schemes'',  of sorts. 
In fact,  we just need (in Def. \ref{SuperCartesianSpace} below) rather mild such generality, namely such as to locally include the following Ex. \ref{GrassmannAlgebra}:

\begin{example}[{\bf Grassmann algebra}]
  \label{GrassmannAlgebra}
  For $q \in \mathbb{N}$, the  {\it Grassmann algebra} $\Lambda^\bullet (\mathbb{R}^q)^\ast$ is the super-commutative algebra freely generated by $q$ elements $\theta^1, \cdots \theta^q$ of odd degree. Hence
  $$
    \theta^{\oddcoordinateindex_1}
    \,
    \theta^{\oddcoordinateindex_2}
    \;=\;
    -
    \theta^{\oddcoordinateindex_2}
    \,
    \theta^{\oddcoordinateindex_1},
  \quad 
  \mbox{\small  in particular}
  \;\;
    \theta^\oddcoordinateindex 
    \theta^\oddcoordinateindex 
    \;=\; 
    0
    \,.
  $$
  Hence a general element
  $$
    a
    +
    \sum_{\oddcoordinateindex = 1}^q
    a_\oddcoordinateindex 
    \,\theta^\oddcoordinateindex
    \; +
    \sum_{\oddcoordinateindex_1,\oddcoordinateindex_2 = 1}^q
    \tfrac{1}{2}
    a_{\oddcoordinateindex_1 \oddcoordinateindex_2 }
    \,
    \theta^{\oddcoordinateindex_1} \theta^{\oddcoordinateindex_2}
    +
    \cdots
    +
    a_{1 \cdots q}
    \,
    \theta^1 
    \cdots
    \theta^q
    \,,
    \;\;\;\;
    a_{\cdots} \in \mathbb{R}
  $$
  may be thought of as a kind of polynomial function on a space that is in some sense like a $q$-dimensional Cartesian space, but 
   {\bf (i)} of such tiny (infinitesimal) extension that the square of any of its canonical coordinate functions identically vanishes, 
   {\bf (ii)} in fact which is ``odd'' in that the coordinate functions anti-commute with each other. 
  While such a space does not exist ``classically'', we may think of it as dually {\it defined} as whatever it is that has $\Lambda^\bullet (\mathbb{R}^{q})^\ast$ as its algebra of functions. As such we denote this space as $\mathbb{R}^{0\vert q}$ in the following Def. \ref{SuperCartesianSpace}.
\end{example}

\subsubsection{Super Geometry}
\label{SuperGeometry}

We may now speak of differential supergeometry embodied by {\it smooth super sets} in direct analogy with the {\it smooth sets} discussed in \cite{GS23} (to which we refer the reader for more motivation), just with the role of plain Cartesian spaces replaced by Cartesian super-spaces:

\begin{definition}[{\bf Cartesian super spaces}]
  \label{SuperCartesianSpace}
  The category $\mathrm{sCartSp}$ of {\it super Cartesian spaces} is the full subcategory of the opposite of
  super-commutative $\mathbb{R}$-algebras (Def. \ref{SupercommutativeAlgebras}) on those which are tensor 
  products of the $\mathbb{R}$-algebra of smooth functions on a Cartesian space $\mathbb{R}^n$ (Ex. \ref{SmoothManifoldsAsDualyOfSCAlgs})
  with the Grassmann algebra on finitely many generators (Ex. \ref{GrassmannAlgebra}):
  \begin{equation}
    \label{AlgebraOfFunctionsOnSuperCartesianSpace}
    \begin{tikzcd}[sep=0pt]
      \mathrm{sCartSp}
      \ar[
        rr,
        hook,
        "{
          C^\infty(-)
        }"
      ]
      &&
      \mathrm{sCAlg}^{\mathrm{op}}
      \\[-2pt]
      \mathbb{R}^{n\vert q}
      &\longmapsto&
      C^\infty(\mathbb{R}^n)
      \otimes_{{}}
      \wedge^\bullet
      (\mathbb{R}^q)^\ast.
    \end{tikzcd}
  \end{equation}
\end{definition}

\begin{examples}[\bf Purely bosonic/fermionic]
  By construction, it follows that ordinary cartesian spaces $\mathbb{R}^n$ (with smooth maps between them) 
  are fully faithfully embedded inside super Cartesian spaces, as are the {\it superpoints} $\mathbb{R}^{0\vert q}$.
\end{examples}

\begin{lemma}[{\bf Site of Cartesian super spaces}]
\label{SiteOfCartesianSuperSpaces}
The category $\mathrm{sCartSp}$ of Cartesian superspaces (Def. \ref{SuperCartesianSpace}) carries a coverage
(Grothendieck pre-topology) where the coverings of any $\mathbb{R}^{n \vert q}$ are of the form
$$
  \big\{
    U_i 
      \cong 
  \mathbb{R}^{n\vert q}
  \xrightarrow{ \iota_i }
   \mathbb{R}^{n \vert q}
  \big\}_{i \in I}
$$
such that 
\begin{itemize}
\item[\bf (i)]
each $\iota_i$ is the product 
$\iota_i \,\cong\, \bosonic{\iota}{}_i \times \mathrm{id}_{\mathbb{R}^{0\vert q}}$
of its  bosonic body with the identity on super point factor

\item[\bf (ii)] the bosonic maps
constitute a differentiably good open cover of smooth manifolds
$$
  \Big\{
    \bos{U}_i 
  \xrightarrow{ \bosonic{\iota}_i }
   \mathbb{R}^{n}
  \Big\}_{i \in I}
$$
meaning that every finite intersection $\bos{U}_{i_1} \cap \cdots \cap \bos{U}_{i_n}$ is either empty or diffeomorphic to $\mathbb{R}^n$.
\end{itemize}
\end{lemma}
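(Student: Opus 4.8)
The plan is to verify that the declared families constitute a coverage -- the substance being stability under pullback, in the up-to-refinement sense -- transcribing the familiar argument that differentiably good open covers generate a coverage on ordinary $\mathrm{CartSp}$ while keeping track of the fixed superpoint factor. Two structural inputs from supergeometry do all the work. First, every morphism $f : \mathbb{R}^{m \vert p} \to \mathbb{R}^{n \vert q}$ in $\mathrm{sCartSp}$ has a \emph{bosonic body} $\bosonic{f} : \mathbb{R}^m \to \mathbb{R}^n$ (cf.\ Ex.~\ref{BosonicBodyOfSupermanifold}), and this is exactly the action of $f$ on underlying topological spaces, that of $\mathbb{R}^{n \vert q}$ being $\mathbb{R}^n$. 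Second, for an open $\bosonic{U} \subseteq \mathbb{R}^n$ with $\bosonic{U} \cong \mathbb{R}^n$, the inclusion $\bosonic{\iota} \times \mathrm{id}_{\mathbb{R}^{0 \vert q}}$ of the corresponding open sub-super-space is \emph{local on the target}: a morphism $g : Z \to \mathbb{R}^{n \vert q}$ factors through it, then necessarily uniquely, if and only if the image of $\bosonic{g}$ lies in $\bosonic{U}$. Together these say that the super-geometric covering families are entirely governed by their bosonic bodies, up to the spectator factor $\mathbb{R}^{0 \vert q}$.

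The easy clauses: identity coverings have the stated form (the trivial good cover $\{\mathrm{id}_{\mathbb{R}^n}\}$, taken as a product with $\mathrm{id}_{\mathbb{R}^{0 \vert q}}$), and for the transitivity axiom needed if one wants a strict pre-topology it is harmless to restrict bosonic good covers to open \emph{convex} subsets of $\mathbb{R}^n$: these are cofinal among open covers and closed under finite intersections (a nonempty finite intersection of convex opens is convex, hence $\cong \mathbb{R}^n$), so a composite of such covers is again of the required form after passing to bodies and re-attaching the $\mathbb{R}^{0 \vert q}$ factor; the remaining pre-topology axioms then hold for the generated Grothendieck topology.

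The crux is pullback-stability. Note first that $\mathrm{sCartSp}$ does not contain the relevant fibre products -- the preimage $\bosonic{f}^{-1}(\bosonic{U})$ of a cube under a smooth map need not be a cube -- which is exactly why a coverage (rather than a literal pre-topology) is the honest statement: given a covering $\{\bosonic{\iota}_i \times \mathrm{id} : \mathbb{R}^{n \vert q} \to \mathbb{R}^{n \vert q}\}_{i \in I}$, with $\{\bosonic{U}_i\}_{i \in I}$ a good open cover of $\mathbb{R}^n$, and an arbitrary $f : \mathbb{R}^{m \vert p} \to \mathbb{R}^{n \vert q}$, one must exhibit a covering of $\mathbb{R}^{m \vert p}$ each member of which composes with $f$ through some $\bosonic{\iota}_i \times \mathrm{id}$. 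Take the open cover $\{\bosonic{f}^{-1}(\bosonic{U}_i)\}_{i \in I}$ of $\mathbb{R}^m$, refine it by a convex good open cover $\{\bosonic{V}_j \xrightarrow{\bosonic{\kappa}_j} \mathbb{R}^m\}_{j \in J}$, and put $\kappa_j := \bosonic{\kappa}_j \times \mathrm{id}_{\mathbb{R}^{0 \vert p}}$; this is a covering of $\mathbb{R}^{m \vert p}$ of the declared type. Choosing for each $j$ an index $i(j)$ with $\bosonic{V}_j \subseteq \bosonic{f}^{-1}(\bosonic{U}_{i(j)})$, the body of $f \circ \kappa_j$ is $\bosonic{f} \circ \bosonic{\kappa}_j$, whose image lies in $\bosonic{U}_{i(j)}$; by the target-locality of $\bosonic{\iota}_{i(j)} \times \mathrm{id}$, the morphism $f \circ \kappa_j$ factors (uniquely) through it. This establishes the coverage axiom and, with the clauses above, the lemma.

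The step I expect to be the main obstacle is the target-locality statement: one must be certain that the nilpotent and odd components of a general super-morphism $f$ cannot obstruct its restriction over an open subset of the body, i.e.\ that open sub-super-space inclusions are detected purely topologically. This reduces to the local-ringed-superspace fact that $C^\infty(\bosonic{f}^{-1}\bosonic{U}) \otimes \Lambda^\bullet(\mathbb{R}^p)^\ast$ is the pushout in $\mathrm{sCAlg}$ of $C^\infty(\bosonic{U}) \otimes \Lambda^\bullet(\mathbb{R}^q)^\ast \to C^\infty(\mathbb{R}^n) \otimes \Lambda^\bullet(\mathbb{R}^q)^\ast$ along $f^\ast$, which in turn rests on the analogous localization property of $C^\infty$ of ordinary manifolds; once this is in hand, everything else is a routine transcription of the bosonic case.
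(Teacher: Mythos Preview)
The paper states this lemma without proof (it is immediately followed by the definition of smooth super sets), so there is no argument to compare against. Your proposal is correct and is the standard verification: reduce the coverage axiom to the bosonic case via the body functor, using that open sub-super-space inclusions $\bosonic{\iota}\times\mathrm{id}_{\mathbb{R}^{0\vert q}}$ are detected purely on underlying topological spaces, then invoke the existence of differentiably good refinements of arbitrary open covers of $\mathbb{R}^m$. Your identification of the one nontrivial step---target-locality, i.e.\ that the odd/nilpotent directions of a general $f$ cannot obstruct restriction over an open subset of the body---is exactly right, and your reduction of it to the sheaf/localization property of $C^\infty(-)\otimes\Lambda^\bullet(\mathbb{R}^p)^\ast$ is the clean way to see it. The only cosmetic point: the lemma's parenthetical ``(Grothendieck pre-topology)'' is slightly loose, and you are right to work with the coverage axiom since the relevant fibre products do not live in $\mathrm{sCartSp}$; you might say this in one sentence rather than weaving it through the argument.
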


\begin{definition}[{\bf Smooth super sets}]
\label{SuperSmoothSets}
The category of {\it smooth super sets} is the {\it sheaf topos} over the site of super Cartesian spaces (Lem. \ref{SiteOfCartesianSuperSpaces}), which (for the purpose of higher generalization in \S\ref{SuperModuliStacks}) we think of as the localization of the presheaves at the local isomorphisms (liso)
$$
  \mathrm{sSmthSet}
  \;:=\;
  L^{\mathrm{liso}}
  \,
  \mathrm{Func}\big(
    \mathrm{sCartSp}^{\mathrm{op}}
    ,\,
    \mathrm{Set}
  \big)
  \,.
$$
This concretely means:
\begin{itemize}[leftmargin=.7cm]
\item[\bf (i)] smooth super sets $X$ are (represented by) functors
$$
  \begin{tikzcd}[row sep=-3pt, column sep=0]
    \mathrm{sCartSp}^{\mathrm{op}}
    \ar[
      rr
    ]
    &&
    \mathrm{Set}
    \\
    \mathbb{R}^{n\vert q}
    &\longmapsto&
    \mathrm{Plt}(
      \mathbb{R}^{n\vert q}
      ,\,
      X
    )
  \end{tikzcd}
$$
which we think of as assigning to a Cartesian super space $\mathbb{R}^{n \vert q}$ the set of ways of mapping it into the would-be smooth super-set $X$, hence of {\it plotting out} Cartesian super-spaces inside $X$;
\item[\bf (ii)] maps $X \to Y$ between smooth super-sets are natural transformations between these plot-assigning functors of the form
$$
  \begin{tikzcd}
    X
    \ar[
      from=r,
      "{
        \mathrm{liso}
      }"{},
      "{
        p
      }"{swap}
    ]
    &
    \widehat X
    \ar[
      r,
      "{ f }"
    ]
    &
    Y
  \end{tikzcd}
$$
where the left one is a {\it local isomorphism} in that for all $n,q \in \mathbb{N}$ it restricts to a bijection 
\begin{equation}
  \label{LocalIsomorphisms}
  \begin{tikzcd}
    \widehat{X}
    \ar[
      r,
      "{ \mathrm{liso} }"
    ]
    &
    X
  \end{tikzcd}
  \hspace{.7cm}
    \Leftrightarrow
  \hspace{.7cm}
  \underset{
    n,q \in \mathbb{N}
  }{\forall}
  \hspace{.4cm}
  \begin{tikzcd}
    \mathrm{PltGrm}\big(
      \mathbb{R}^{n \vert q}
      ,\,
      \widehat X
    \big)
    \ar[
      r,
      "{ \sim }"
    ]
    &
    \mathrm{PltGrm}(
      \mathbb{R}^{n \vert q}
      ,\,
      X
    )
  \end{tikzcd}
\end{equation}
on the {\it stalks} of {\it germs of plots}
\begin{equation}
  \label{GermsOfPlots}
  \mathrm{PltGrm}(
    \mathbb{R}^{n \vert q}
    \,,
    X
  )
  \;:=\;
  \mathrm{PltGrm}\big(
    \mathbb{R}^{n \vert q}
    \,,
    X
  \big)\big/\sim
  \,,
\end{equation}
where plots $\phi \,\sim\, \phi'$ iff they agree on some open super-neighborhood of the origin.
\end{itemize}
\end{definition}
\begin{remark}[\bf Super vs. super smooth]
In differential geometry, it is tradition to understand by default that the underlying manifolds of {\it supermanifolds} are smooth. 
In this tradition, it may make sense to refer to {\it super smooth sets},  {\it super smooth $\infty$-groupoids} and their {\it super smooth homotopy theory} for short as just {\it super set}, {\it super $\infty$-groupoids} and their {\it super-homotopy theory}, respectively,  at least when the differential-geometric context is understood. 
However, beware that this is ambiguous, as there are other notions of geometry (such as algebraic and derived geometry) that have super-versions. In particular, there is a super version already of {\it discrete} geometry, embodied by the presheaf topos on super-points.
\end{remark}

\begin{example}[{\bf Supermanifolds as smooth super sets}]\label{SmoothManifoldsAsSmoothSuperSets}
A {\it smooth super-manifold} (e.g. \cite[\S 4.1]{Manin88}\cite[\S 2]{DM99a}) becomes a smooth super set (Def. \ref{SuperSmoothSets}) by declaring its plots to be the ordinary maps of supermanifolds:
$$
  X \in \mathrm{sSmthMfd}
  \;\;\;\;\;\;\;\;
  \yields
  \;\;\;\;\;\;\;\;
  \mathrm{Plt}\big(
    \mathbb{R}^{n\vert q}
    ,\,
    X
  \big)
  \;\;
  :=
  \;\;
  \mathrm{Hom}_{\mathrm{sSmthMfd}}
  \big(
    \mathbb{R}^{n\vert q}
    ,\,
    X
  \big)
  \,.
$$
This construction constitutes a fully faithful embedding of smooth supermanifolds into smooth supersets.
\begin{equation}
  \label{SupermanifoldsAmongSmoothSuperSets}
  \begin{tikzcd}
    \mathrm{sSmthMfd}
    \ar[
      r,
      hook
    ]
    &
    \mathrm{sSmthSet}
    \,.
  \end{tikzcd}
\end{equation}
Without even recalling any definition of supermanifolds, we can make this fully concrete by appeal to Batchelor's theorem \cite{Batchelor79}\cite[\S 1.1.3]{Batchelor84}:
For $V \!\xrightarrow{\;}\! \bosonic{X}$ a smooth real vector bundle of finite rank over an ordinary smooth manifold $\bosonic{X}$, consider the super-commutative algebra (Def. \ref{SupercommutativeAlgebras}) which is the Grassmann algebra {\it over $C^\infty\big(\bosonic{X}\big)$}
\begin{equation}
  \label{AlgebraOfFunctionsOfOddVectorBundle}
  C^\infty\big(
    \bosonic{X} \vert V_{\mathrm{odd}}
  \big)
  \;\;
  :=
  \;\;
  \wedge^\bullet_{
    \scalebox{.5}{
      $C^{{}^\infty}\!\!\big(\bosonic{X}\big)$
    }
  }
  \Gamma_X\big(
    V^\ast
  \big)
  \;\;
  =
  \Gamma_X\big(
    \wedge^\bullet
    V^\ast
  \big)
  \,.
\end{equation}
From this we obtain a smooth super-set
by declaring its plots to be given by the evident dual super-algebra homomorphisms out of \eqref{AlgebraOfFunctionsOfOddVectorBundle} into the algebra of function \eqref{AlgebraOfFunctionsOnSuperCartesianSpace} on the given probe space: 
\begin{equation}
  \label{PlotsOfOddVctorBundle}
  \bosonic{X}\vert V_{\mathrm{odd}}
  \,\in\,
  \mathrm{sSmthSet}
  \,,
  \;\;\;\;
  \mbox{with}
  \;\;\;\;
  \mathrm{Plt}\big(
    \mathbb{R}^{n\vert q}
    ,\,
    \bosonic{X}\vert V_{\mathrm{odd}}
  \big)
  \;\;
  :=
  \;\;
  \mathrm{Hom}_{\mathrm{sCAlg}}
  \Big(
    C^\infty\big(
      \bosonic{X}\vert V_{\mathrm{odd}}
    \big)
    ,\,
    C^\infty\big(
      \mathbb{R}^{n\vert q}
    \big)
  \Big)
  \,.
\end{equation}
By Batchelor's theorem \cite{Batchelor79}, a smooth super-set is a supermanifold seen under the embedding \eqref{SupermanifoldsAmongSmoothSuperSets} iff it is isomorphic to one of the form \eqref{PlotsOfOddVctorBundle}.

\end{example}
\begin{example}[\bf Open covers of supermanifolds as local resolutions]
\label{OperCoverOfSupermanifoldAsLocalResolution}
Given an open cover $\big\{ U_i \xhookrightarrow{\iota_i} X \big\}_{i \in I}$  of a super-manifold, consider the smooth super-set whose plots are only those maps into $X$ that land in one of the charts $U_i$, hence whose plot-assigning functor is
$$
  \begin{tikzcd}[row sep=-5pt]
    \mathllap{
      \mathrm{Plt}\big(
        -;\,
        \widehat X
      \big)
      \;\;
      :
      \;
    }
    \mathrm{sCartSp}^{\mathrm{op}}
    \ar[
      rr
    ]
    &&
    \mathrm{Set}
    \\
    \mathbb{R}^{n \vert q}
    &\longmapsto&
    \mathrm{Hom}_{\mathrm{sSmthMfd}}
    \Big(
      \mathbb{R}^{n \vert q}
      ,\,
      \underset{i \in I}{\coprod}
      U_i
    \Big)
    \Big/\sim
  \end{tikzcd}
$$

\vspace{-2mm} 
\noindent where on the right $\big(\phi_i : \mathbb{R}^{n \vert q} \xrightarrow{\;} U_i\big) \sim \big(\phi'_{i'} : \mathbb{R}^{n \vert q} \xrightarrow{\;} U_{i'}\big)$ iff they agree as maps to $X$, hence iff $\iota_i \circ \phi_i \,=\, \iota_{i'} \circ \phi'_{i'}$.
Then the evident natural transformation
$$
  \begin{tikzcd}[row sep=-4pt, column sep=0pt]
    \mathrm{Plt}\big(
      \mathbb{R}^{n \vert q}
      ,\,
      \widehat X
    \big)
    \ar[
      rr
    ]
    &&
    \mathrm{Plt}\big(
      \mathbb{R}^{n \vert q}
      ,\,
      X
    \big)
    \\
    \phi_i &\longmapsto&
    \iota_i \circ \phi_i
  \end{tikzcd}
$$
is a local isomorpism of smooth super-sets 
$\begin{tikzcd}\widehat X \ar[r, "{\mathrm{liso}}"] & X \end{tikzcd}$.
\end{example}

\begin{example}[\bf{Bosonic body of super manifold}]
\label{BosonicBodyOfSupermanifold}
By definition, a smooth super-manifold $X\in \sManifolds$ has an underlying ordinary manifold $\bos{X} \in \SmoothManifolds\hookrightarrow \sManifolds$, viewed canonically as an (even) super-manifold accompanied with a canonical embedding
$$
\eta_{X}: \bos{X} \longhookrightarrow X \, .
$$
This is given dually, in any local chart $\mathbb{R}^{n|q}$, by the projection of function super-algebras\footnote{In terms of globally defined function algebras this is the canonical projection $C^\infty(X) \longrightarrow C^\infty(X) / J \cong C^\infty\big(\!\bos{X}\big)$, where $J$ is the ideal generated by \textit{odd} elements.}
\begin{align*} 
f(x)
    +
    \sum_{\oddcoordinateindex = 1}^q
    f_\oddcoordinateindex(x) 
    \,\theta^\oddcoordinateindex
    +
    \sum_{\oddcoordinateindex_1,\oddcoordinateindex_2 = 1}^q
    \tfrac{1}{2}
    f(x)_{\oddcoordinateindex_1 \oddcoordinateindex_2 }
    \,
    \theta^{\oddcoordinateindex_1} \theta^{\oddcoordinateindex_2}
    +
    \cdots
    +
    f(x)_{1 \cdots q}
    \,
    \theta^1 
    \cdots
    \theta^q  
    \qquad \longmapsto \;\; f(x) \, .
\end{align*}
The embeddings $\eta_X:\bos{X}\hookrightarrow X$ define an endofunctor 
$$\eta: \sManifolds \longrightarrow \sManifolds$$
which `forgets the odd structure' of any supermanifold. We shall use the same symbol $\bos{X}$ for the bosonic body of $X$ considered as a smooth manifold, an (even) smooth super manifold, or a smooth super set (Ex. \ref{SmoothManifoldsAsSmoothSuperSets}).
\end{example}

\subsubsection{Homological Super Algebra}
\label{HomologicalSuperAlgebra}

\begin{definition}[{\bf $\mathbb{Z}$-Graded super vector spaces}]
\label{GradedSuperVectorSpaces}
We write $\mathrm{sgMod}$ for the symmetric monoidal category of {\it graded super vector spaces} whose 
\begin{itemize}[leftmargin=.5cm]
\item 
objects are $\mathbb{Z}$-graded super vector spaces, hence $(\mathbb{Z} \times \ZTwo)$-bigraded vector spaces $V = \bigoplus_{ {n \in \mathbb{Z}} \atop { \sigma \in \ZTwo }  }\, V_{n, \sigma} $,
\item morphisms are linear maps preserving the bigrading,
\item
tensor product is that of the underlying vector spaces with bi-grading given by
$$
  \big(
    V
    \,\otimes\,
    V'
  \big)_{n, \sigma}
  \;:=\;
  \underset{
    { k \in \mathbb{Z} },
    \;
    { \rho \in \ZTwo }
  }{\bigoplus}
  V_{k, \sigma}
  \otimes
  V_{n - k, \sigma - \rho}
  \,,
$$
\item braiding is that of the underlying super-vector spaces \eqref{BraidingOfSuperVectorSpaces} times an {\it additional} sign  (cf. Rem. \ref{SignsInHomotopicalSuperAlgebra}) when a pair of $\mathbb{Z}$-graded factors is swapped
$$
  v 
    \,\in\,
  V_{n, \sigma}
  ,\;
  v'
    \,\in\,
  V_{n', \sigma'}
  \qquad 
    \Rightarrow
  \qquad 
  \mathrm{brd}_{V,V'}(v \otimes v')
  \;=\;
  (-1)^{ n \cdot n' }
  (-1)^{ 
    \sigma \cdot \sigma' 
  }
  v' \otimes v
  \,.
$$

Moreover, we write
$$
  \begin{tikzcd}
    \mathrm{sgMod}^{\mathrm{ft}}
    \ar[
      r,
      hook
    ]
    &
    \mathrm{sgMod}
  \end{tikzcd}
$$
for the full subcategory of graded super vector spaces of {\it finite type}, i.e., those that are degree-wise finite-dimensional.
\end{itemize}

\end{definition}
\begin{notation}[{\bf Shifted and dual graded super-vector spaces}]
\label{ShiftedAndDual}
For $V \in \mathrm{sgMod}$ (Def. \ref{GradedSuperVectorSpaces})
we write
\begin{itemize}[leftmargin=.5cm]
\item $V^\ast$ for the dual object, which is bi-degreewise the dual vector space but with the $\mathbb{Z}$-grading {\it reversed} \footnote{ One may say that also the super-grading is reversed under dualization, but this is not visible since $- \sigma = \sigma \in \ZTwo$; cf. Rem. \ref{DualSuperVectorSpace}. }:
\begin{equation}
  \label{DualgsVect}
  \big(
    V^\ast
  \big)_{n, \sigma}
  \;=\;
  \big(
    V_{-n, \sigma}
  \big)^\ast
  \,.
\end{equation}
\item $V^\vee$ for the degree-wise dual vector spaces:
\begin{equation}
  \label{DegreewiseDualgsVect}
  \big(V^\vee\big)_{n, \sigma}
  \;:=\;
  \big(
    V_{n, \sigma}
  \big)^\ast
  \,.
\end{equation}
\item
$b V$ for the result of shifting up in $\mathbb{Z}$-degree:
\begin{equation}
  \label{DegreeShift}
    \big(
      b V
    \big)_{n, \sigma}
    \;\;
    :=
    \;\;
    V_{n-1, \sigma}
    \,.
\end{equation}
\end{itemize}
\end{notation}

The following category $\mathrm{sgcAlg}$ is just that of $\mathbb{Z}$-graded-commutative algebras {\it internal} to $\mathrm{sMod}$ (Def. \ref{SuperVectorSpaces})
and equivalently just that of commutative algebra {\it internal} to $\mathrm{sgMod}$ (Def. \ref{GradedSuperVectorSpaces}), but we spell it out explicitly:

\begin{definition}[{\bf Super graded-commutative algebras}]
\label{SupergCAlgebras}
The category $\mathrm{sgCAlg}$ of {\it super graded-commutative algebras}, has as objects $(\mathbb{Z} \times \ZTwo)$-(bi)graded vector spaces
$$
  A 
    \;\defneq\;
  \underset{
    n \in \mathbb{Z}
  }{\oplus}
  \big(
    A_{n, 0}
    \,\oplus\,
    A_{n, 1}
  \big)
$$
equipped with 
an associative and unital multiplication
$$
  (-)
  \cdot
  (-)
  \;:\;
  A \otimes A \longrightarrow A
$$
which respects the bigrading and is bigraded-commutative, in the following sense:
\begin{equation}
  \label{RespectForBigrading}
  a \in A_{n,\sigma}
  ,\,
  a' \in A'_{n', \sigma'}
  \,,
  \;\;\;\;\;
  \Rightarrow
  \;\;\;\;\;
  \left\{\!\!
  \def\arraystretch{1.4}
  \begin{array}{l}
    a \cdot a' 
    \;\in\;
    A_{n + n',\, \sigma + \sigma' }
    \\
    a \cdot a'
    \;=\;
    (-1)^{
      n \cdot n'
      \,+\,
      \sigma \cdot \sigma'
    }
    \,
    a' \cdot a
    \,.
  \end{array}
  \right.
\end{equation}
A homomorphism of such SGC-algebras is a homomorphism of the underlying associative algebras which respects the bigrading.
\end{definition}

\begin{example}[{\bf Free super graded-commutative algebras}]
\label{FreeSuperGradedCommutativeAlgebras}
  For $V \in \mathrm{sgMod}$,
  its {\it free} super graded-commutative algebra
  $$
    \mathbb{R}[V]
    \;:=\;
    \mathrm{Sym}\big(
      V
    \big)
    \;\;
    \in
    \;
    \mathrm{sgCAlg}
  $$
  is the symmetric tensor algebra on $V$ {\it internal} to $\mathrm{sgMod}$. This means that if $(v_i)_{i \in I}$ 
  is a linear basis of $V$ with homogeneous basis elements $v_i \in V_{n_i, \sigma_i}$ then $\mathbb{R}[V]$ 
  is the associative algebra freely generated by this basis subject to the relation
  $$
    v_i \cdot v'_{i'}
    \;=\;
    (-1)^{
      n_i \cdot n'_{i'}
      +
      \sigma_i \cdot \sigma'_{i'}
    }
    \;
    v'_{i'} \cdot v_i\;.
  $$
\end{example}

The following category $\mathrm{sdgcAlg}$ is just that of dg-algebras {\it internal} to $\mathrm{sMod}$ (Def. \ref{SuperVectorSpaces}), but we spell it out:

\begin{definition}[{\bf Super differential-graded-commutative algebras}]
\label{SuperdgCAlgebras}
The category $\mathrm{sdgcAlg}$ of {\it super differential-graded-commutative algebras} has as objects super graded-commutative algebras $A$ (Def. \ref{SupergCAlgebras})
equipped with a linear map (the {\it differential})
$$
  \mathrm{d}
  \;:\; A \longrightarrow A
$$
which is a graded derivation of bidegree $(+1,0)$ squaring to zero:
\begin{equation}
  \label{BigradedDerivation}
  a \in A_{n,\sigma}
  ,\,
  a' \in A'_{n', \sigma'}
  \;\;\;\;\;
  \Rightarrow
  \;\;\;\;\;
  \left\{\!\!
  \def\arraystretch{1.2}
  \begin{array}{l}
    \mathrm{d}a \,\in\, A_{n+1, \sigma}
    \,,
    \\
    \mathrm{d}(a \cdot a')
    \;=\;
    (\mathrm{d} a) \cdot a'
    \;+\;
    (-1)^{n} 
    \,
    a \cdot \mathrm{d} a'
    \,,
    \\
    \mathrm{d} \mathrm{d} a \;=\; 0
    \,.
  \end{array}
  \right.
\end{equation}

A morphism of such SDGC-algebras is a homomorphism of the underlying super graded-commutative algebras which respects the differential.

The tensor product on this category is the usual tensor product on the underlying dg-algebras, with bigrading given by
$$
  (A \otimes A')_{n, \sigma}
  \;
  :=
  \;
  \bigoplus_{
    { k \in \mathbb{Z}, \, \rho \in \ZTwo }    
  }
  A_{k, \sigma}
  \otimes
  A_{n-k, \sigma - \rho}
  \,.
$$
\end{definition}

\begin{remark}[{\bf Signs in homological super-algebra}]
\label{SignsInHomotopicalSuperAlgebra}
Note the sign rule in \eqref{RespectForBigrading}: 
\begin{itemize} 
\item[{\bf (i)}] This is evidently the rule obtained by internalizing the notion of dg-algebras into
the symmetric monoidal category of super-vector spaces, and it is the sign rule used in the supergravity 
literature \cite[p. 880]{BBLPT88}\cite[(II.2.109)]{CDF91}. Further physics-oriented discussion indicating the mathematical motivation via internalization is in \cite[\S 1.2]{DM99a}\cite[\S 1]{DM99b}\cite[\S A.6]{DF99b}.

\item[{\bf (ii)}]  Nevertheless, a sizeable part of the mathematical physics literature (mostly authors 
who say ``Q-manifold'' for certain dg-algebras) use a {\it different} sign rule, with sign 
$(-1)^{ (n + \sigma) \cdot (n' + \sigma') \,\mathrm{mod}\, 2 }$. This defines a nominally different 
but equivalent symmetric braiding on the monoidal category $\mathrm{Ch}_\bullet(\mathrm{sMod})$ 
(comparison of the two rules is in \cite[pp. 62-64]{DM99a}\cite[p. 8]{DM99b}).

\item[{\bf (iii)}]  While one needs to carefully stick to one of the two rules for global consistency 
(we use the natural sign rule \eqref{RespectForBigrading} throughout), notice that the crucial commutativity 
of gravitino fields among themselves \eqref{GravitinoFormsCommutingAmongEachOther} holds with both sign
rules (cf. Rem. \ref{CommutingSpinors}).
\end{itemize} 
\end{remark}

The following identification follows Ref. \cite[\S 3]{FSS19}, in evident super-generalization of \cite[Def. 13]{SSS09}\cite[\S 4]{FSS23Char}\cite[\S 3.1]{SS24Flux}. 

\begin{definition}[{\bf Super $L_\infty$-algebras}]
\label{SuperLInfinityAlgebras}
We may identify  the category $\mathrm{shLAlg}^{\mathrm{ft}}$
of degreewise finite-dimensional {\it super $L_\infty$-algebras} as the full subcategory of the opposite of SDGC-algebras (Def. \ref{SuperdgCAlgebras})
on those whose underlying SGC-algebra (Def. \ref{SupergCAlgebras}) is free (Ex. \ref{FreeSuperGradedCommutativeAlgebras}) on a degreewise finite-dimensional super vector space (Def. \ref{SuperVectorSpaces}), namely on the shifted \eqref{DegreeShift}  degreewise dual \eqref{DegreewiseDualgsVect} of the super $L_\infty$-algebra space $\mathfrak{\mathfrak{a}}$. This embedding assigns to a super $L_\infty$-algebra $\mathfrak{a}$ its {\it Chevalley-Eilenberg algebra}
\begin{equation}
  \label{shLAgAsSubcatOfsgcdAlgop}
  \begin{tikzcd}[row sep=0pt, column sep=small]
    \mathrm{shLAlg}^{\mathrm{ft}}
    \ar[
      rr,
      hook,
      "{
        \mathrm{CE}(-)
      }"
    ]
    &&
    \mathrm{sdgcAlg}^{\mathrm{op}}
    \\
  \scalebox{0.85}{$  \big(
      \mathfrak{a}
      ,\,
      [-],
      [-,-],
      [-,-,-], 
      \cdots
    \big)
    $}
    &\longmapsto&
  \scalebox{0.85}{$  \big(
      \mathbb{R}\big[
        b\mathfrak{a}^{\vee}
      \big]
      ,\,
      \mathrm{d}_{\vert b \mathfrak{a}^\vee}
      \;=\;
      [-]^\ast
      +
      [-,-]^\ast
      +
      [-,-,-]^\ast
      + 
      \cdots
    \big)
    $}.
  \end{tikzcd}
\end{equation}
\end{definition}
\begin{remark}[{\bf ``FDA'' terminology in supergravity}]
$\,$

  \noindent {\bf (i)} The SDGC-algebras arising as Chevalley-Eilenberg-algebras of super $L_\infty$-algebras in \eqref{shLAgAsSubcatOfsgcdAlgop} are 
  (this was first pointed out in \cite[\S 6.5.1]{SSS09}\cite{FSS15-WZW}, reviewed in \cite{FSS19})
  what in \cite[\S III.6]{CDF91}\cite[\S 6.3]{Fre13b}\cite[\S 6]{Castellani18} are called ``free differential algebras'' or ``FDA''s, for short, following \cite{vN83}. 

\noindent {\bf (ii)} Note that this is a bit of a misnomer: It is only their {\it underlying} super graded-commutative algebras which are free (Ex. \ref{FreeSuperGradedCommutativeAlgebras}), while as super {\it differential} graded-commutative algebras these CE-algebras are crucially {\it not free} in general. The free {\it differential} algebra on $b \mathfrak{a}^\vee$ is contractible and isomorphic to the {\it Weil algebra} $\mathrm{W}(\mathfrak{a})$.
\end{remark}
\begin{example}[{\bf Ordinary super Lie algebras}]
  \label{OrdinarySuperLieUnderCE}
    Consider a finite-dimensional super Lie algebra $\big(\mathfrak{a},\, [-,-]\big)$ with linear basis $\big\{ v^i \big\}_{i \in I}$ of homogeneous super-degree $v^i \in \mathfrak{a}_{\sigma_i}$ and with structure constants
    $$
      [v^i, v^j]
      \;=\;
      f^{i j}_k
      \;
      v^k
      \,.
    $$
    Then $\mathrm{CE}\big(\mathfrak{a}\big)$ (Def. \ref{SuperLInfinityAlgebras}) is the associative algebra freely generated from elements $\omega_i$ in bidegree $(1, \sigma_i)$ subject to the relations
    $$
      \omega_i \, \omega_j
      \;=\;
      - (-1)^{  \sigma_i \cdot \sigma_j }
      \omega_j \, \omega_i
    $$
    and equipped with differential $\mathrm{d}$ satisfying
    $$ 
      \mathrm{d}
      \,
      \omega_k
      \;=\;
      \tfrac{1}{2}
      \,
      f^{i j}_k
      \,
      \omega_i \, \omega_j
      \,.
    $$
    Using the graded derivation property of $\mathrm{d}$ one checks that the condition $\mathrm{d} \circ \mathrm{d} \,=\, 0$ is equivalently the Jacobi identity condition on $[-,-]$.  
\end{example}
\begin{example}[\bf Line Lie $n$-algebras]
  \label{LineLieNAlgebras}
  For $k \in \mathbb{N}$, the {\it line Lie $(1+k)$-algebra} $b^k \mathbb{R}$ has $\mathrm{CE}\big(b^k \mathbb{K}\big)$ (Def. \ref{SuperLInfinityAlgebras}) being the graded-commutative algebra on a single closed generators $\omega$ in degree $(k+1, 0)$, $\mathrm{d} \, \omega = 0$.
\end{example}

\begin{example}[\bf Super-Poincar{\'e} and super-Minkowski Lie algebra]
\label{SuperMinkowskiLieAlgebra}
The {\it super-Poincar{\'e} Lie algebra} (or {\it supersymmetry algebra}, for short) in $11\vert\mathbf{32}$-dimensions
$$
  \mathfrak{iso}\big(\mathbb{R}^{1,10\vert\mathbf{32}}\big)
  \;\in\;
  \mathrm{shLAlg}^{\mathrm{fr}}
$$ 
has (and is defined thereby) CE-algebra \eqref{shLAgAsSubcatOfsgcdAlgop} of this form:
\medskip 
\begin{equation}
  \label{CEOfSuperPoincareAlgebra}
  \mathrm{CE}\Big(
    \mathfrak{iso}\big(\mathbb{R}^{1,D-1\vert\mathbf{32}}\big)
  \Big)
  \;\;
  =
  \;\;
  \mathbb{R}
  \left.
  \left[
  \def\arraystretch{1.2}
  \def\arraycolsep{2pt}
  \begin{array}{lrl}
    \big(
      e^a
    \big)_{a = 0}^{10}\,,
    &
   \scalebox{0.85}{$ \mathrm{deg}(e^a)
    \;=\;
    (1,0)$}
    \\
    \big(
      \omega^{ab} = - \omega^{b a}
    \big)_{a, b = 0}^{10}\,,
    &
    \scalebox{0.85}{$  \mathrm{deg}\big(
      \omega^{ab}
    \big)
    \;=\;
    (1,0)
    $}
    \\
    \big(
      \psi
    \big)_{\alpha=1}^{32}\,,
    &
\;\; \scalebox{0.85}{$     \mathrm{deg}\big(\psi^\alpha\big)
    \;=\;
    (1,1)
    $}
  \end{array}
  \right]
  \right/
  \!  \left(\!\!
  \def\arraystretch{1}
  \begin{array}{l}
    \mathrm{d}
    \, e^a
    \;=\;
    -\, \omega^a{}_b \, 
    e^b
    +
    \big(\,
      \overline{\psi}
      \,\Gamma^a\,
      \psi
    \big)
    \\
    \mathrm{d}
    \,
    \omega^{a b}
    \;=\; -\, 
    \omega^{a}{}_c
    \, 
    \omega^{c b}
    \\
    \mathrm{d}
    \,
    \psi^\alpha 
      \;=\; 
    0
    \,.
  \end{array}
 \!\!\! \right)
  \!,
\end{equation}
  where the pairing 
  $\big(\,\overline{\psi}\,\Gamma^a\,\psi\big)$ is from Lem. \ref{TheSpinorPairing} and Lem. \ref{SpinorQuadraticForms} (using Rem. \ref{CommutingSpinors}).
This contains the ordinary Lorentz Lie algebra as a subalgebra
\begin{equation}
  \label{LorentzInsideSuperPoincareAlgebra}
  \begin{tikzcd}[row sep=-3, column sep=small]
    \mathllap{
      \mathfrak{so}(1,10)
      \,=\;\;
    }
    \mathfrak{so}
    \big(
      \mathbb{R}^{1,10}
    \big)
    \ar[
      rr,
      hook
    ]
    &&
    \mathfrak{iso}\big(
      \mathbb{R}^{1,10\vert \mathbf{32}}
    \big)
    \\
    \mathrm{CE}\big(
      \mathfrak{so}(1,10)
    \big)
    \ar[
      rr,
      <<-
    ]
    &&
    \mathrm{CE}\big(
      \mathfrak{iso}\big(
        \mathbb{R}^{1,10\vert \mathbf{32}}
      \big)
    \big)
    \\
    0 &\longmapsfrom& e^a
    \\
    0 &\longmapsfrom& \psi^\alpha
    \\
    \omega^{a}{}_b
    &\longmapsfrom&
    \omega^{a}{}_b
    \,,
  \end{tikzcd}
\end{equation}
  whose quotient is the {\it super-Minkowski Lie algebra} (the {\it super-translation} part of the supersymmetry algebra):
\begin{equation}
  \label{TheSuperMinkowskiLieAlgebra}
    \def\arraystretch{1.7}
    \begin{array}{l}
    \mathbb{R}^{1,10\vert \mathbf{32}}
    \;\;
    :=
    \;\;
    \mathfrak{iso}\big(
      \mathbb{R}^{1,10\vert \mathbf{32}}
    \big)
    \big/
    \mathfrak{so}\big(
      \mathbb{R}^{1,10\vert \mathbf{32}}
    \big)
    \\
    \mathrm{CE}\big(
      \mathbb{R}^{1,10\vert\mathbf{32}}
    \big)
    \;=\;
  \mathbb{R}
  \left.
  \left[
  \def\arraystretch{1.8}
  \def\arraycolsep{2pt}
  \begin{array}{lrl}
    \big(
      e^a
    \big)_{a = 0}^{10}\,,
    &
    \;\;  
    \scalebox{0.85}{$    \mathrm{deg}(e^a)
    \;=\;
    (1,0)
    $}
    \\
    \big(
      \psi
    \big)_{\alpha=1}^{32}\,,
    &
 \;\; \scalebox{0.85}{$     \mathrm{deg}\big(\psi^\alpha\big)
    \;=\;
    (1,1)
    $}
  \end{array}
  \right]
  \right/
  \!  
  \left(\!\!
  \def\arraystretch{1}
  \begin{array}{l}
    \mathrm{d}
    \, 
    e^a
    \;=\;
    +
    \big(\,
      \overline{\psi}
      \,\Gamma^a\,
      \psi
    \big)
    \\
    \mathrm{d}
    \,
    \psi^\alpha 
      \;=\; 
    0
  \end{array}
  \!\!\! 
  \right)
  .
  \end{array}
\end{equation}
This is the local model geometry for $11\vert\mathbf{32}$-dimensional super-spacetime; 
see Def. \ref{SuperSpacetime} below.
\end{example}

\begin{remark}[{\bf Supersymmetry}]
\label{TheSupersymmetryBracket}
The crucial term in \eqref{CEOfSuperPoincareAlgebra} is the summand $\mathrm{d}\, e^a \,=\, \cdots + \big(\,\overline{\psi}\,\Gamma^a\,\psi\big)$. This is the linear dual to the super Lie bracket of the form
\begin{equation}
  \label{TheSupersymmetryBracketInComponents}
  \big[
    \,
    \overline{Q}_\alpha
    ,\,
    Q_\beta
    \,
  \big]
  \;=\;
  \Gamma^a_{\alpha\beta}
  P_a
  \,,
\end{equation}
which is the hallmark of supersymmetry (the supersymmetry generators $Q$ ``square'' to translation generators $P$).
In some sense, this term controls all of 11d supergravity; see also Rem. \ref{FormOfTheSuperTorsionConstraint}.
\end{remark}

\begin{example}[{\bf Whitehead $L_\infty$-algebras}]
\label{WhiteheadLInfinityAlgebras}
For $X$ a simply-connected topological space with $\mathrm{dim}\big(H^n(X; \mathbb{Q})\big) < \infty$ for all $n \in \mathbb{N}$, it has a {\it minimal Sullivan model} dgc-algebra $\mathrm{CE}\big(\mathfrak{l}X\big)$, which encodes its $\FR$-rational homotopy type (reviewed in \cite[Prop. 4.23]{FSS23Char}). This is the CE-algebra of the $\FR$-rational {\it Whitehead $L_\infty$-algebra} $\mathfrak{l}X$ (\cite[Rem. 5.4]{FSS23Char}, essentially the ``Quillen model'' of $X$).
\end{example}
Specifically:
\begin{example}[{\bf Rational Whitehead $L_\infty$-algebra of 4-sphere is M-theory gauge algebra}]
\label{Rational4Sphere}
    The minimal Sullivan model of the 4-sphere $X \defneq S^4$ is (a standard fact of rational homotopy theory, for review in our context  \cite[p. 21]{SS24Flux}\cite[Ex. 5.3]{FSS23Char}):
    \begin{equation}
      \label{SullivanModelOf4Sphere}
      \mathrm{CE}\big(
        \mathfrak{l}S^4
      \big)
      \;\;
      \cong
      \;\;
      \FR\big[
        G_4
        ,\,
        G_7
      \big]
      \Big/
      \! \left(
      \def\arraycolsep{0pt}
      \def\arraystretch{1}
      \begin{array}{l}
        \mathrm{d}\, G_4 \;=\; 0
        \\
        \mathrm{d}\, G_7 
          \;=\;
        \tfrac{1}{2}
        \,
        G_4 \, G_4
      \end{array}
      \right)
      .
    \end{equation}
    Curiously, the corresponding Whitehead $L_\infty$-algebra (via Ex. \ref{WhiteheadLInfinityAlgebras}) 
    coincides (as highlighted in \cite[\S 4]{Sati10}\cite[p. 20]{SS24Flux}\cite[(12)]{SatiVoronov22}) with the {\it M-theory gauge algebra} (first identified in \cite[(2.6)]{CJLP98}, see also \cite[(3.4)]{LLPS99}\cite[(75)]{KS03}\cite[(86)]{BNS04}):
    \begin{equation}
      \label{WhiteheadAlgebraOf4Sphere}
      \mathfrak{l}S^4
      \;\cong\;
      \FR\langle
        v_3, \, v_6
      \rangle
      \quad 
      \mbox{\small with only non-vanishing bracket 
      of generators being}
      \quad 
      {[v_3, v_3]} \;=\; v_6
      \,.
    \end{equation}
    Notice how the identification works, in direct analogy to the case of ordinary Lie algebras (Ex. \ref{OrdinarySuperLieUnderCE}). The structure constants of the differential of the Sullivan model are identified with those of the Whitehead $L_\infty$-algebra (the Quillen model):
    $$
      \def\arraystretch{1.2}
      \begin{array}{rcr}
        \mathrm{d}
        \,
        G_7 
          &=& 
        \tfrac{1}{2}
        G_4\, G_4\;,
        \\
        v_6 &=& {[v_3,\, v_3]}\;.
      \end{array}
    $$
\end{example}
\begin{example}[\bf The $\mathfrak{l}S^4$-valued super-cocycle on $11\vert\mathbf{32}$-dim super-Minkowski spacetime]
\label{4SphereValuedSuperCocycle}
  There is a non-trivial (even homotopically) morphism of super $L_\infty$-algebras from the super-Minkowski Lie algebra \eqref{TheSuperMinkowskiLieAlgebra}
  to the Whitehead $L_\infty$-algebra of the 4-sphere \eqref{WhiteheadAlgebraOf4Sphere}, as follows:
  \begin{equation}
    \label{TheMinkowskiSupercocycle}
    \begin{tikzcd}[row sep=0 pt, column sep=small]
      \mathbb{R}^{1,10\vert\mathbf{32}}
      \ar[
        rr,
        "{
          (
            G_4^0
            ,\,
            G_7^0
          )
        }"
      ]
      &&
      \mathfrak{l}S^4
      \\
      \mathrm{CE}\big(
        \mathbb{R}^{1,10\vert\mathbf{32}}
      \big)
      \ar[
        rr,
        <-
      ]
      &&
      \mathrm{CE}\big(
        \mathfrak{l}S^4
      \big)
      \\
      \tfrac{1}{2}
      \big(\,
        \overline{\psi}
        \,\Gamma_{a_1 a_2}\,
        \psi
      \big)
      e^{a_1}\, e^{a_2}
      &\longmapsfrom&
      G_4
      \\
      \tfrac{1}{5!}
      \big(\,
        \overline{\psi}
        \,\Gamma_{
          a_1 \cdots a_5
        }\,
        \psi
      \big)
      e^{a_1} \cdots e^{a_5}
      &\longmapsfrom&
      G_7
      \mathrlap{\,.}
    \end{tikzcd}
  \end{equation}
  (The expressions on the left constitute the WZW-terms of the $\kappa$-symmetric Green-Schwarz-type sigma-models 
  for the M2-brane and the M5-brane on super-Minkowski spacetime, cf. \cite[\S 2.1]{HSS19}\cite[\S 5]{BPS}).

  That \eqref{TheMinkowskiSupercocycle} is indeed a homomorphism of super $L_\infty$-algebras, in that 
  its dual map on CE-algebras respects the differential relation \eqref{SullivanModelOf4Sphere}, is equivalent 
  to the fundamental Fierz identities that govern 11d supergravity (Prop. \ref{TheFierzIdentitiesOf11dSupergravity}):
  \vspace{-2mm} 
  \begin{equation}
    \label{TheSupercocycles}
    \left.
    \def\arraystretch{1.8}
    \begin{array}{l}
      \differential
      \,
      \Big(
      \tfrac{1}{2}
      \big(\,
      \overline{\psi}
      \,\Gamma_{a_1 a_2}\,
      \psi
      \big)
      e^{a_1} \, e^{a_2}
      \Big)
      \;=\;
      0
      \,,
      \\
      \differential
      \Big(
        \tfrac{1}{5!}
        \big(\,
        \overline{\psi}
        \,\Gamma_{a_1 \cdots a_5}\,
        \psi
        \big)
        e^{a_1} \cdots e^{a_5}
      \Big)
      \;=\;
      \tfrac{1}{2}
      \Big(
        \tfrac{1}{2}
        \big(\,
        \overline{\psi}
        \,\Gamma_{a_1 a_2}\,
        \psi
        \big)
        e^{a_1} \, e^{a_2}
      \Big)
      \Big(
        \tfrac{1}{2}
        \big(\,
        \overline{\psi}
        \,\Gamma_{a_1 a_2}\,
        \psi
        \big)
        e^{a_1} \, e^{a_2}
      \Big)
    \end{array}
   \!\! \right\}
    \;\;
    \in
    \mathrm{CE}\big(
      \mathbb{R}^{1,10\vert \mathbf{32}}
    \big)
    \,.
  \end{equation}
  This observation is due to
  \cite[\S 3,4]{FSS15-M5WZW}\cite[\S 2]{FSS17}\cite[Prop. 3.43]{HSS19}; it suggests that the {\it Hypothesis H} 
  (cf. p. \pageref{HypothesisH}) -- that C-field flux is quantized in Cohomotopy theory -- lifts to super-space, which is the main claim in \S\ref{SuperFluxQuantization}.

  Our Thm. \ref{11dSugraEoMFromSuperFluxBianchiIdentity} below may be understood as saying that the above $\mathfrak{l}S^4$-valued cocycle relation governs all of 11d supergravity. Namely, just requiring that this homomorphism generalizes from super-Minkowski spacetime to non-flat $11\vert\mathbf{32}$-dimensional super-spacetimes $\big(X, (e,\psi,\omega)\big)$, as a map of $L_\infty$-algebroids over $X$ (see Def. \ref{ClosedLInfinityValuedDifferentialForms} and \eqref{ClosedFormsAsMapsOutOfTangentLieAlgebroid}), in the form
  \begin{equation}
    \label{SuperFluxAsSuperCocycle}
    \begin{tikzcd}[row sep=0 pt, column sep=small]
      T X 
      \ar[
        rr,
        "{
          (
            G_4^s
            ,\,
            G_7^s
          )
        }"
      ]
      &&
      \mathfrak{l}S^4
      \\
      \Omega^\bullet_{\mathrm{dR}}\big(
        X
      \big)
      \ar[
        rr,
        <-
      ]
      &&
      \mathrm{CE}\big(
        \mathfrak{l}S^4
      \big)
      \\
      (G_4)_{a_1 \cdots a_4}
      \,
      e^{a_1} \cdots e^{a_4}
      \,+\,
      \tfrac{1}{2}
      \big(\,
        \overline{\psi}
        \,\Gamma_{a_1 a_2}\,
        \psi
      \big)
      e^{a_1}\, e^{a_2}
      &\longmapsfrom&
      G_4
      \\
      (G_7)_{a_1 \cdots a_7}
      \,
      e^{a_1} \cdots e^{a_7}
      \,+\,
      \tfrac{1}{5!}
      \big(\,
        \overline{\psi}
        \,\Gamma_{a_1 \cdots a_7}\,
        \psi
      \big)
      e^{a_1} \cdots e^{a_7}
      &\longmapsfrom&
      G_7
    \end{tikzcd}
  \end{equation}
  turns out to be equivalent to the super-spacetime $\big(X, (e,\psi,\omega)\big)$ satisfying the 11d SuGra equations of motion with flux source $G_4$ (Thm. \ref{11dSugraEoMFromSuperFluxBianchiIdentity}). 
\end{example}

\begin{remark}[{\bf Relative factors in flux Bianchi identity}]
$\,$

\noindent {\bf (i)} Since minimal Whitehead $L_\infty$-algebras/Sullivan models are unique only {\it up to isomorphism}
of dgc-algebras, the relative factor of $\tfrac{1}{2}$ shown in \eqref{SullivanModelOf4Sphere} is not a characteristic 
of the {\it rational} homotopy type of $S^4$, as it can be scaled away by an algebra isomorphism:
$$
      \FR\big[
        G_4
        ,\,
        G_7
      \big]
      \Big/
      \! \left(
      \def\arraycolsep{0pt}
      \def\arraystretch{1}
      \begin{array}{l}
        \mathrm{d}\, G_4 \;=\; 0
        \\
        \mathrm{d}\, G_7 
          \;=\;
        \tfrac{1}{2}
        \,
        G_4 \, G_4
      \end{array}
      \right)
      \qquad 
  \begin{tikzcd}[row sep=-1 pt, column sep=small]
    \mathfrak{l}S^4
    \ar[
      rr,
      "{ \sim }"
    ]
    &&
    \mathfrak{l}'S^4
    \\
    \mathrm{CE}\big(
      \mathfrak{l}S^4
    \big)
    \ar[
      rr,
      <-,
      "{
        \sim
      }"
    ]
    &&
    \mathrm{CE}\big(
      \mathfrak{l}'S^4
    \big)
    \\
    G_4 &\longmapsfrom& G_4
    \\
    2\, G_7 &\longmapsfrom& G_7
  \end{tikzcd}
$$
this prefactor disappears (and under similar rescalings it can take any non-zero value).
\footnote{Such rescalings have been utilized crucially in the context 
of Mysterious Triality \cite{SatiVoronov21}\cite{SatiVoronov22}.}

\smallskip 
\noindent {\bf (ii)} Note that this also means that the prefactor {\it is} fixed once the scale of the generators
$G_4$, $G_7$ is fixed by some further condition.
For example, in super spacetime geometry it is suggestive to normalize the bifermionic forms by their natural combinatorial prefactors as $\tfrac{1}{p!} \big(\, \overline{\psi}\,\Gamma_{a_1 \cdots a_p}\, \psi \big) e^{a_1} \cdots e^{a_p}$
and with Ex. \ref{4SphereValuedSuperCocycle} this fixes the factor to be $1/2$, as shown and is usual in much of the string theory literature (though not universally).

\smallskip 
\noindent {\bf (iii)} However, a more intrinsic normalization of the generators is given by first imposing a flux-quantization condition law (as discussed in \S\ref{SuperFluxQuantization}) and then asking the generators to be rational images of {\it integral} cohomology classes. For the case of flux quantization in Cohomotopy ({\it Hypothesis H}, p. \pageref{HypothesisH}), this does yield the factor of $1/2$ \cite[Thm. 4.8]{FSSHopf} (in fact it is $2 G_7 + C_3 G_4$ that becomes integral, see further discussion in
\cite[(3)]{FSSHopf}\cite[p. 3]{FSS20TwistedString} and the exposition in \cite[\S 4.3]{SS24Flux}).
\end{remark}

\subsubsection{Super Differential Forms}
\label{SuperDifferentialForms}

\begin{example}[{\bf Ordinary differential forms}]
\label{sdgcAlgebraOfordinaryDifferentialForms}
For $X$ a smooth manifold, the ordinary de Rham algebra $\Omega^\bullet_{\mathrm{dR}}(X)$ 
of differential forms on $X$
is an SDGC-algebra (Ex. \ref{SuperdgCAlgebras}) when regarded as concentrated in bidegree $(\mathbb{N} \times \{0\}) \hookrightarrow (\mathbb{Z} \times \ZTwo)$.
\end{example}
\begin{example}[{\bf Differential forms on a super-point}]
\label{DifferentialFormsOnSuperpoint}
  For $q \in \mathbb{N}$, the de Rham algebra of super-differential forms on a super-point $\FR^{0\vert q}$ is the SDGC-algebra (Ex. \ref{SuperdgCAlgebras}) 
  $$
    \Omega^\bullet_{\mathrm{dR}}\big(
      \FR^{0\vert q}    
    \big)
    \;\in\;
    \mathrm{sdgCAlg}
  $$
  which is freely generated by $C^\infty(\FR^{0\vert q}) = \Lambda^\bullet (\FR^q)^\ast$ (Ex. \ref{GrassmannAlgebra}) in bidegree $(\{0\} \times \ZTwo)$, hence whose underlying bigraded vector space is spanned over $C^\infty(\FR^{0\vert q})$ by new generators $\mathrm{d}\theta^{\rho_1} \cdots \mathrm{d}\theta^{\rho_p}$ in bidegree $(p,\, p \,\mathrm{mod}\, 2)$
  \begin{equation}
    \label{SpaceOfDifferentialFormsOnSuperpoint}
    \Omega^\bullet_{\mathrm{dR}}\big(
      \FR^{0\vert q}
    \big)
    \;\cong\;
    \bigoplus_{p = 0}^q
    \;
    \bigoplus_{1 \leq  \rho_1 < \cdots < \rho_p \leq q}
    \;
    C^\infty(\FR^{0\vert q})
    \big\langle
      \mathrm{d}\theta^{\rho_1}
      \cdots
      \mathrm{d}\theta^{\rho_p}
    \big\rangle
    \,.
  \end{equation}
  The corresponding product is given by
  multiplication of $C^\infty(\FR^{0\vert q})$-coefficients followed by
  ``shuffle'' composition of the generator symbols, and with differential given on generators by the evident
  $
    \mathrm{d}
    :
    \theta^\rho \;\mapsto\;
    \mathrm{d}\theta^\rho
  $.
\end{example}

\newpage 
\begin{definition}[{\bf Sets of Differential forms on super-Cartesian spaces}]
\label{DifferentialFormsOnSuperCartesianSpaces}
  For $n, q \in \mathbb{N}$, the SDGC-algebra (Def. \ref{SuperdgCAlgebras}) 
  $$
    \Omega^\bullet_{\mathrm{dR}}
    \big(
      \FR^{n\vert q}
    \big)
    \;\in\;
    \mathrm{sdgCAlg}
  $$
  of {\it super-differential forms} on the super-Cartesian space $\FR^{n\vert q}$ (Def. \ref{SuperCartesianSpace})
  is the tensor product of the de Rham algebra on $\FR^n$ (Ex. \ref{sdgcAlgebraOfordinaryDifferentialForms}) with the de Rham algebra on $\FR^{0\vert q}$ (Ex. \ref{DifferentialFormsOnSuperpoint}):
  \smallskip 
  $$
    \Omega^\bullet_{\mathrm{dR}}\big(
      \FR^{n\vert q}
    \big)
    \;\;:=\;\;
    \Omega^\bullet_{\mathrm{dR}}\big(\FR^n\big)
    \otimes
    \Omega^\bullet_{\mathrm{dR}}\big(\FR^{0\vert q}\big)
    \;\;
    \in
    \;\;
    \mathrm{sdgCAlg}
    \,.
  $$
\end{definition}

\smallskip 
Equivalently, and perhaps more geometrically, super-differential 1-forms may be identified \cite{GSS24} as the fiber-wise linear maps of supermanifolds
\begin{align}\label{FormsAsBundleMaps}
\Omega^1_{\mathrm{dR}}
\big( \FR^{n|q}\big) \cong \mathrm{Hom}_{\sManifolds}^{\mathrm{fib.lin.}}\big(T(\FR^{n|q}), \, \FR\times \FR_\mathrm{odd} \big) \, ,
\end{align}
where $T(\FR^{n|q})$ is the super-tangent bundle defined dually by its algebra of functions $C^\infty(x^\evencoordinateindex, \dot{x}^\evencoordinateindex)[\dot{\theta}^\rho, \theta^\rho]$. This follows immediately by the suggestive identification of coordinates as $\dot{x}^r \equiv \differential x^r$ and $\dot{\theta}^\rho \equiv \differential \theta^\rho$. 

\smallskip 
Similarly, super-differential $k$-forms are identified as fiber-wise antisymmetric multilinear maps 
$$
T^{\times k}(\FR^{n|q}) \longrightarrow \FR\times \FR_{\mathrm{odd}}\;.
$$ 
%

\begin{definition}[{\bf Sets of Super differential forms with coefficients}]
\label{SuperDifferentialFormsWithCoefficients}
$\,$

\noindent {\bf (i) } 
  For $n, q \in \mathbb{N}$ and 
  $V \,\in\, \mathrm{sgMod}^{\mathrm{ft}}$, the {\it $V$-valued super-differential forms} on $\FR^{n \vert q}$ 
  $$
    \Omega_{\mathrm{dR}}^1
    \big(
      \FR^{n\vert q}
      ;\,
      V
    \big)
    \;\;:=\;\;
    \mathrm{Hom}_{\mathrm{sgCAlg}}
    \big(
      \mathrm{Sym}(
        b V^\vee
      )
      ,\,      \Omega^\bullet_{\mathrm{dR}}(
        \FR^{n\vert q}
      )
    \big)
  $$
  are the homomorphisms of super-graded commutative algebras
  (Def. \ref{SupergCAlgebras})
  from the free super graded-commutative algebra 
  (Ex. \ref{FreeSuperGradedCommutativeAlgebras})
  of the shifted degreewise dual of $V$ 
  \eqref{ShiftedAndDual}
  to the underlying de Rham SGC-algebra of Def. \ref{DifferentialFormsOnSuperCartesianSpaces}.

\noindent {\bf (ii)}  Equivalently, these are the elements of bidegree $(1, 0)$ in the tensor product with the $\mathbb{Z}$-degree reversal of $V$:
  $$
    \Omega^1_{\mathrm{dR}}\big(
      \FR^{n \vert q};
      V
    \big)
    \;\cong\;
    \big(
      \Omega^\bullet_{\mathrm{dR}}
      (
        \FR^{n \vert q}
      )
      \otimes 
      (V^\vee)^\ast
    \big)_{(1,0)}\, .
  $$
For $V\in \mathrm{sMod}^{\mathrm{ft}}$ a super-vector space concentrated in degree 0, these may be also expressed as in $\eqref{FormsAsBundleMaps}$ via 
\begin{align}\label{VectorValuedFormsAsBundleMaps}
\Omega^1_{\mathrm{dR}}
\big( \FR^{n|q}; \, V\big) \cong \mathrm{Hom}_{\sManifolds}^{\mathrm{fib.lin.}}\big(T(\FR^{n|q}), \, V \big) \, .
\end{align}
\end{definition}

\begin{examples}[\bf Super 1-forms with coefficients]
\label{Super1FormsWithCoefficients}
For all $n,q, k \in \mathbb{N}$, we have the following identifications of $V$-valued super-differential forms (Def. \ref{SuperDifferentialFormsWithCoefficients}):
\begin{equation}
  \label{ExamplesOfDiffFormsOnSuperCartesian}
  \def\arraystretch{1.3}
  \def\arraycolsep{4pt}
  \begin{array}{lcl}
    \Omega^1_{\mathrm{dR}}\big(
      \FR^{n \vert q}
      ;\,
      \FR
    \big)
    &\cong&
    \Omega^1_{\mathrm{dR}}\big(
      \FR^{n \vert q}
    \big)_{\mathrm{0}}
    \\
    \Omega^1_{\mathrm{dR}}\big(
      \FR^{n \vert q}
      ;\,
      b^k\FR
    \big)
    &\cong&
    \Omega^{1+k}_{\mathrm{dR}}\big(
      \FR^{n \vert q}
    \big)_0
    \\
    \Omega^1_{\mathrm{dR}}\big(
      \FR^{n\vert 0}
      ;\,
      b^k \FR
    \big)
    &\cong&
    \Omega^{1+k}_{\mathrm{dR}}\big(
      \FR
    \big)
    \\
    \Omega^1_{\mathrm{dR}}\big(
      \FR^{n \vert 0}
      ;\,
      b^k
      \FR_{\mathrm{odd}}
    \big)
    &\cong&
    0
    \\
    \Omega^1_{\mathrm{dR}}\big(
      \FR^{0 \vert 1}
      ;\,
      \FR
    \big)
    &\cong&
    \FR\big\langle
      \theta 
      \, \mathrm{d}\theta
    \big\rangle
    \\
    \Omega^1_{\mathrm{dR}}\big(
      \FR^{0 \vert 1}
      ;\,
      \FR_{\mathrm{odd}}
    \big)
    &\cong&
    \FR\big\langle
      \mathrm{d}\theta
    \big\rangle
    \\
    \Omega^1_{\mathrm{dR}}\big(
      \FR^{1 \vert 1}
      ;\,
      \FR
    \big)
    &\cong&
    C^\infty\big(
      \FR
    \big)
    \big\langle
      \mathrm{d}x
    \big\rangle
    \,\oplus\,
    C^\infty\big(
      \FR
    \big)
    \big\langle
      \theta
      \mathrm{d}\theta
    \big\rangle
    \\
    \Omega^1_{\mathrm{dR}}\big(
      \FR^{1 \vert 1}
      ;\,
      \FR_{\mathrm{odd}}
    \big)
    &\cong&
    C^\infty(\FR)
    \big\langle
      \theta \mathrm{d}x
    \big\rangle
    \,\oplus\,
    C^\infty(\FR)
    \big\langle
      \mathrm{d}\theta
    \big\rangle
    \\
    \Omega^1_{\mathrm{dR}}\big(
      \FR^{1 \vert 2}
      ;\,
      \FR
    \big)
    &\cong&
    C^\infty\big(
      \FR
    \big)
    \langle
      \mathrm{d}x
      ,\,
      \theta^1 \theta^2 \mathrm{dx}
    \rangle
    \,\oplus\,
    C^\infty\big(
      \FR
    \big)
    \langle
      \theta^1 \mathrm{d}\theta^1
      ,\,
      \theta^2 \mathrm{d}\theta^1
      ,\,
      \theta^1 \mathrm{d}\theta^2
      ,\,
      \theta^2 \mathrm{d}\theta^2
    \rangle
    \\
    \Omega^1_{\mathrm{dR}}\big(
      \FR^{1\vert 2}
      ;\,
      \FR_{\mathrm{odd}}
    \big)
    &\cong&
    C^\infty\big(
      \FR
    \big)
    \big\langle
      \theta^1 \mathrm{d}x
      ,\,
      \theta^2 \mathrm{d}x
    \big\rangle
    \,\oplus\,
    C^\infty\big(
      \FR
    \big)
    \big\langle
      \mathrm{d}\theta^1
      ,\,
      \mathrm{d}\theta^2
    \big\rangle \;.
  \end{array}
\end{equation}
\end{examples}

\begin{example}[{\bf Classifying super set of differential forms}]
\label{ClassifyingSuperSetOfDifferentialForms}
For $V \in \mathrm{sgMod}^{\mathrm{ft}}$, the system of $V$-valued differential forms (Def. \ref{SuperDifferentialFormsWithCoefficients}) 
is clearly a sheaf on the site of super Cartesian spaces, and as such constitutes a smooth super-set (Def. \ref{SuperSmoothSets}):
\begin{equation}
  \label{SheafOfVValuedDifferentialForms}
  \begin{tikzcd}[row sep=-5pt, column sep=small]
  \Omega^1_{\mathrm{dR}}\big(
    -;\,
    V
  \big)
  \ar[
    r,
    phantom,
    "{ : }"
  ]
  &[+4pt]
  \mathrm{sCartSp}^{\mathrm{op}}
  \ar[
    rr
  ]
  &&
  \mathrm{Set}
  \\
  &
  \FR^{n\vert q}
  &\longmapsto&
  \Omega^1_{\mathrm{dR}}\big(
    \FR^{n\vert q}
    ;\,
    V
  \big)
  \,.
  \end{tikzcd}
\end{equation}
\end{example}

\begin{definition}[{\bf Differential forms on smooth super sets}]
\label{DifferentialFormsOnSmoothSuperSets}
Given $X \in \mathrm{sSmthSet}$ (Def. \ref{SuperSmoothSets}) and $V \in \mathrm{sgMod}^{\mathrm{ft}}$, a {\it $V$-valued differential 1-form} on $X$ is a morphism (of smooth super sets) from $X$ to the classifying super set of such forms (Ex. \ref{ClassifyingSuperSetOfDifferentialForms}):
$$
  F
  \;:\;
  X 
    \longrightarrow
  \Omega^1_{\mathrm{dR}}\big(
    -;\, V
  \big)
  \,.
$$
Hence the set of all such $V$-valued differential forms on $X$ is the hom-set
\begin{equation}
  \label{VValuedFormsOnSmoothSuperSet}
  \Omega^1_{\mathrm{dR}}\big(
    X;\, V
  \big)
  \;:=\;
  \mathrm{Hom}_{\mathrm{sSmthSet}}\left(
    X
    ,\, 
    \Omega^1_{\mathrm{dR}}
    \big(
      -;
      \, 
      V
    \big)
  \right)
  \,.
\end{equation}
\end{definition}
\begin{remark}[\bf Pullback of differential forms via classifying super-sets]
\label{PullbackOfDifferentialFormsViaClassifyingSuperSets}
With the native characterization of differential forms in Def. \ref{DifferentialFormsOnSmoothSuperSets} as maps to their classifying super-set, the operation of pullback of differential forms along a map $f \,:\, X \xrightarrow{\;} Y$ of smooth super-sets corresponds just to the operation of precomposing the classifying maps with $f$:
\begin{equation}
  \begin{tikzcd}[sep=0pt]
    \Omega^1_{\mathrm{dR}}\big(
      Y
      ;\,
      V
    \big)
    \ar[
      rr,
      "{
        f^\ast
      }"
    ]
    \ar[
      d,
      shorten=-2pt,
      equals
    ]
    &&
    \Omega^1_{\mathrm{dR}}\big(
      X
      ;\,
      V
    \big)
    \ar[
      d,
      shorten=-2pt,
      equals
    ]
    \\[+4pt]
    \mathrm{Hom}\big(
      Y
      ;\,
      \Omega^1_{\mathrm{dR}}(
        -
        ;\,
        V
      )
    \big)
    \ar[rr]
    &&
    \mathrm{Hom}\big(
      X
      ;\,
      \Omega^1_{\mathrm{dR}}(
        -
        ;\,
        V
      )
    \big)
    \\[-2pt]
    \phi 
      &\longmapsto&
    \phi \circ f
    \mathrlap{\,.}
  \end{tikzcd}
\end{equation}
\end{remark}

\begin{example}[{\bf Differential forms on smooth supermanifolds}]
  \label{DifferentialFormsOnSmoothSupermanifolds}
  By the Yoneda Lemma, Def. \ref{DifferentialFormsOnSmoothSuperSets} reduces on Cartesian super-spaces $X \defneq \FR^{n \vert q}$ to the defining construction (Def. \ref{SuperDifferentialFormsWithCoefficients}), so that the notation $\Omega^1_{\mathrm{dR}}\big(\FR^{n \vert q}; V\big)$ is unambiguous.
    In particular, given a super-manifold $X$ and a super-chart $\iota : \FR^{n\vert q} \hookrightarrow X$, then pullback along this inclusion restricts the abstractly defined differential forms of Def. \ref{DifferentialFormsOnSmoothSuperSets} to the concrete differential forms on this chart.
  This way, Examples \ref{ExamplesOfDiffFormsOnSuperCartesian} apply chart-wise to any super-manifold.
\end{example}

\begin{example}[{\bf Odd forms on an even manifold}]
\label{OddFormsOnEvenManifold}
 For $X \in \mathrm{sSmthSet}$, we have
 $$
   \Omega^1_{\mathrm{dR}}\big(\!
     \bos{X}
     ,\,
     \FR_{\mathrm{odd}}
   \big)
   \;\cong\;
   0
   \,.
 $$
The same holds for forms valued in any \textit{odd} vector space $V_\mathrm{odd}$, i.e., the set of odd-vector valued differential $1$-forms, and in turn $k$-forms, is trivial on any bosonic manifold. We come back to this odd state of affairs in \S\ref{SuperSmoothFieldSpaces}.
\end{example}

\begin{definition}[{\bf Closed $L_\infty$-valued differential forms}]
\label{ClosedLInfinityValuedDifferentialForms} Given $X \in \mathrm{sSmthSet}$ (Def. \ref{SuperSmoothSets}) and
$\mathfrak{a} \,\in\, \mathrm{shLAlg}^{\mathrm{ft}}$ (Def. \ref{SuperLInfinityAlgebras}),

\begin{itemize}[leftmargin=.75cm] 
\item[{\bf (i)}] We say (\cite[\S 6.5]{SSS09}\cite[\S 4.1]{FSS12}\cite[Def. 6.1]{FSS23Char})
that the {\it closed} (or {\it flat})  $\mathfrak{a}$-valued differential forms on $X$ are differential forms with coefficients in $\mathfrak{a}$ (Def. \ref{SuperDifferentialFormsWithCoefficients}) which are not just homomorphism of super graded-algebras out of $\FR[b \mathfrak{a}^\vee]$
but of super graded-{\it differential} algebras out of the Chevalley-Eilenberg algebra $\mathrm{CE}(\mathfrak{a})$:
$$
  \Omega^1_{\mathrm{dR}}\big(
    \FR^{n \vert q}
    ;\,
    \mathfrak{a}
  \big)_{\mathrm{clsd}}
  \;:=\;
  \mathrm{Hom}_{\mathrm{s{\color{purple}d}gcAlg}}\big(
    \mathrm{CE}(\mathfrak{a})
    ,\,
    \Omega^\bullet_{\mathrm{dR}}(
      \FR^{n \vert q}
    )
  \big)
  \,.
$$
\item[{\bf (ii)}] Equivalently, these are the (even) {\it Maurer-Cartan elements} in the tensor product of $\Omega^\bullet_{\mathrm{dR}}\big(\FR^{n \vert q}\big)$ with the $\mathbb{Z}$-degree-reversed $\mathfrak{a}$. 
 $$
    \Omega^1_{\mathrm{dR}}\big(
      \FR^{n \vert q};
      \mathfrak{a}
    \big)_{\mathrm{clsd}}
    \;\cong\;
    \mathrm{MC}\Big(
      \Omega^\bullet_{\mathrm{dR}}
      \big(
        \FR^{n \vert q}
      \big)
      \otimes 
      (\mathfrak{a}^\vee)^\ast
    \Big)_{0}\, .
  $$
\item[{\bf (iii)}]  Yet equivalently, along the lines of \eqref{VectorValuedFormsAsBundleMaps}, these are morphisms of (super smooth) 
$L_\infty$-algebroids 
\begin{equation}
  \label{ClosedFormsAsMapsOutOfTangentLieAlgebroid}
  T(\FR^{n|q}) 
  \; \longrightarrow \;  \mathfrak{a}
\end{equation}
out of its tangent Lie algebroid, with the target $\mathfrak{a}$ considered as a trivial $L_\infty$-algebroid over the point (cf. \cite[Ex. A.3]{SSS12}).
\end{itemize}
\end{definition}

These evidently form a sub-sheaf on $\mathrm{sCartSp}$, of the sheaf of all $\mathfrak{a}$-valued forms \eqref{SheafOfVValuedDifferentialForms},
hence a smooth super sub-set
\begin{equation}
  \label{SmoothSuperSetOfClosedForms}
  \begin{tikzcd}
    \Omega^1_{\mathrm{dR}}(
      -
      ;\,
      \mathfrak{a}
    )_{\mathrm{clsd}}  
    \ar[
      r,
      hook
    ]
    &
    \Omega^1_{\mathrm{dR}}(
      -
      ;\,
      \mathfrak{a}
    )
    \;\;
    \in
    \;
    \mathrm{sSmthSet}
    \,.
  \end{tikzcd}
\end{equation}
This way, in generalization of \eqref{VValuedFormsOnSmoothSuperSet}, the closed $\mathfrak{a}$-valued differential forms on $X \in \mathrm{sSmthSet}$ are the maps to this classifying super set:
\begin{equation}
  \label{ClosedVValuedDifferentialFormsOnSmoothSuperSet}
  \Omega^1_{\mathrm{dR}}(
    X
    ;\,
    \mathfrak{a}
  )_{\mathrm{clsd}}
  \;:=\;
  \mathrm{Hom}_{\mathrm{sSmthSet}}
  \big(
    X
    ;\,
    \Omega^1_{\mathrm{dR}}(
      -
      ;\,
      \mathfrak{a}
    )_{\mathrm{clsd}}
  \big).
\end{equation}

\begin{examples}[{\bf Ordinary closed differential forms}]
  The closed differential 1-forms 
  (Def. \ref{ClosedLInfinityValuedDifferentialForms}) with coefficients in $b^k \FR$ (Ex. \ref{LineLieNAlgebras})
  are equivalently (cf. \cite[Ex. 6.2]{FSS23Char}) the ordinary closed $k+1$-forms:
  $$
    \Omega^1_{\mathrm{dR}}\big(
      X
      ;\,
      b^k \FR
    \big)_{\mathrm{clsd}}
    \;\;\;
      \cong
    \;\;\;
    \big\{
      F
      \;\in\;
      \Omega^{1+k}_{\mathrm{dR}}(X)
      \;\big\vert\;
      \mathrm{d}
      \, 
      F \;=\; 0
    \big\}
    \,.
  $$
\end{examples}

\begin{example}[{\bf Closed $\mathfrak{l}S^4$-valued forms are the solutions to super-C-field flux Bianchi identity}]
\label{ClosedlS4ValuedDifferentialForms}
Closed $L_\infty$-valued differential forms (Def. \ref{ClosedLInfinityValuedDifferentialForms}) with coefficients in the Whitehead $L_\infty$-algebra $\mathfrak{l}S^4$ of the 4-sphere
(Ex. \ref{Rational4Sphere}) are precisely (cf. \cite[\S 6.4]{FSS23Char}) those pairs of forms which satisfy the Bianchi identity of the duality-symmetric C-field flux densities in 11d supergravity \cite[\S 2.5]{Sati13}\cite[Rem. 3.9]{FSS17}:
\begin{equation}
  \label{ClosedlS4ValuedForms}
  \Omega^1_{\mathrm{dR}}\big(
    X
    ;\,
    \mathfrak{l}S^4
  \big)_{\mathrm{clsd}}
  \;\;
  \cong
  \;\;
  \left\{\!\!
  \def\arraystretch{1.3}
  \begin{array}{l}
    G_4 \;\in\; \Omega^1_{\mathrm{dR}}(X;\, b^3 \FR)
    \\
    G_7 \;\in\; \Omega^1_{\mathrm{dR}}(X;\, b^6 \FR)
  \end{array}
  \Bigg\vert
  \def\arraystretch{1.3}
  \begin{array}{l}
    \mathrm{d}\, G_4 \;=\; 0
    \\
    \mathrm{d}\, G_7 \;=\;
    \tfrac{1}{2}
    G_4 \wedge G_4
  \end{array}
 \!\!\! \right\}
  .
\end{equation}
This means, with \eqref{ClosedVValuedDifferentialFormsOnSmoothSuperSet}, that the smooth super-set $\Omega^1_{\mathrm{dR}}\big(-; \mathfrak{l}S^4\big)_{\mathrm{clsd}}$ \eqref{SmoothSuperSetOfClosedForms} plays the role of the {\it moduli space} of duality-symmetric super-C-field flux densities, in that
$$
  \mathrm{Hom}_{\mathrm{sSmthSet}}
  \big(
    X
    ,\;
    \Omega^1_{\mathrm{dR}}(
      -
      ;\,
      \mathfrak{l}S^4
    )
  \big)
  \;\;
  \cong
  \;\;
    \Omega^1_{\mathrm{dR}}\big(
      X
      ;\,
      \mathfrak{l}S^4
    \big)
  \,.
$$
\end{example}

\begin{definition}[\bf Nonabelian de Rham cohomology]
\label{NonabelianDeRhamCohomology}
In the situation of Def. \ref{ClosedLInfinityValuedDifferentialForms}, given a pair of closed $\mathfrak{a}$-valued differential forms
$$
  F^{(0)}, F^{(1)}
  \,\in\,
  \Omega^1_{\mathrm{dR}}(
    X
    ;\,
    \mathfrak{a}
  )_{\mathrm{clsd}}
$$
we say \cite[Def. 6.2]{FSS23Char} that a {\it coboundary} between them is a {\it concordance} (deformation) between them, namely
a closed $\mathfrak{a}$-valued differential form on the cylinder $X \times [0,1]$ which restricts to $F^{(i)}$ on $X \times \{i\}$: 
$\iota_i^\ast \widehat F \,=\, F^{(i)}$.
\vspace{1mm} 
\begin{equation}
  \label{CoboundaryInNonabelianDeRhamCohomology}
  F^{(0)}
  \;\sim\;
  F^{(1)}
  \hspace{.9cm}
    \Leftrightarrow
  \hspace{.9cm}
  \begin{tikzcd}[
    row sep=14pt
  ]
    X \times \{0\}
    \ar[
      d, 
      hook,
      "{
        \iota_0
      }"{swap}
    ]
    \ar[
      drr,
      "{
        F^{(0)}
      }"
    ]
    \\
    X \times [0,1]
    \ar[
      rr,
      dashed,
      "{
        \exists
        \,
        \widehat{F}
      }"
    ]
    &&
    \Omega^1_{\mathrm{dR}}(
      -;
      \mathfrak{a}
  )_{\mathrm{clsd}}
    \\
    X \times \{1\}
    \ar[
      u, 
      hook',
      "{
        \iota_1
      }"
    ]
    \ar[
      urr,
      "{
        F^{(1)}
      }"{swap}
    ]
  \end{tikzcd}
\end{equation}
This is an equivalence relation \cite[Prop. 5.10]{FSS23Char}
whose equivalence classes we may call
\cite[Def. 6.3]{FSS23Char}\cite[Def. 3.3]{SS24Flux} the {\it $\mathfrak{a}$-valued nonabelian de Rham cohomology} of $X$:
\begin{equation}
  \label{NonabelianDeRhamCohomologyAsQuotient}
  H^1_{\mathrm{dR}}\big(
    X
    ;\,
    \mathfrak{a}
  \big)
  \;
    :=
  \;
  \Omega^1_{\mathrm{dR}}(
    X;\, \mathfrak{a}
  )_{\mathrm{clsd}}
  \big/{\sim}
  \,.
\end{equation}
\end{definition}
In the case when $\mathfrak{a} = b^n \mathbb{R}$ is a line Lie $(n+1)$-algebra (e.g. \cite[Ex. 4.12]{FSS23Char}), the nonabelian de Rham cohomology \eqref{NonabelianDeRhamCohomologyAsQuotient} reduces to ordinary (abelian) de Rham cohomology \cite[Prop. 6.4]{FSS23Char}:
$$
  H^1_{\mathrm{dR}}(X;\, b^n \mathbb{R})
 \;\cong\;
 H^{n+1}_{\mathrm{dR}}(X)
 \,,
$$
which reflects the charges imprinted on abelian higher gauge fields, but the nonabelian generalization reflects (more in \cite[\S 3.1]{SS24Flux}) also total charges 
\eqref{TotalChargeQuantization}
of nonabelian higher gauge fields, such as the 11d SuGra C-field.

\medskip 
Moreover, just as an ordinary gauge potential is locally a null-coboundary for its flux density, the local null-coboundaries \eqref{CoboundaryInNonabelianDeRhamCohomology} in nonabelian de Rham cohomology play the role of gauge potentials for nonabelian higher gauge fields (to be exemplified in Prop. \ref{CoboundariesOfClosedLS4ValuedForms} below). Accordingly, the gauge transformations between such gauge potentials correspond to concordances-of-concordances of closed $\mathfrak{a}$-valued differential forms:

\begin{definition}[\bf Coboundary-of-coboundaries in nonabelian de Rham cohomology]
\label{CoboundariesOfCoboundaries}
$\,$

\noindent Given a pair of coboundaries \eqref{CoboundaryInNonabelianDeRhamCohomology},
$
{\color{darkgreen}   \widehat{F} }
  ,\,
{\color{darkgreen}   \widehat{F}'}
  \;:\;
  X \times [0,1]
  \xrightarrow{\quad}
  \Omega^1_{\mathrm{dR}}(-;\, \mathfrak{a}
  )_{\mathrm{clsd}}
$,
between the {\it same} pair of closed $\mathfrak{a}$-valued diffrential forms,
$
{\color{olive}   F^{(0)} }
  ,\,
 {\color{olive}  F^{(1)}}
  \;:\;
  X \xrightarrow{\quad}
  \Omega^1_{\mathrm{dR}}(-;\mathfrak{a})_{\mathrm{clsd}}
  \,,
$
we say that a {\it coboundary-of-coboundaries} between them is a closed form on the 2-dimensional cylinder over $X$,
$
 {\color{orangeii} \doublehat{F}}
  \;:\;
  X \times [0,1] \times [0,1]
  \xrightarrow{\quad}
  \Omega^1_{\mathrm{dR}}(-;\mathfrak{a})_{\mathrm{clsd}}
$,

\noindent {\bf (i)} such that 
\vspace{-1mm} 
$$
  \def\arraystretch{1.3}
  \begin{array}{c}
  \big(
    \mathrm{id} \times \{0\}
  \big)^\ast
 {\color{orangeii}  \doublehat{F}}
  \;\;=\;\;
{\color{darkgreen}   \widehat{F}}
  \\
  \big(
    \mathrm{id} \times \{1\}
  \big)^\ast
   {\color{orangeii}  \doublehat{F}}
  \;\;=\;\;
  {\color{darkgreen}  \widehat{F}'}
  \end{array}
  \hspace{1cm}
  \begin{tikzcd}[
    column sep=21pt, 
    row sep=40pt
  ]
    &&&
    X \times \{0\}
    \ar[
      d, 
      hook,
      "{ 
        \mathrm{id}
        \times \{0\}
      }"{swap}
    ]
    \\
    &&& X \times [0,1]
    \ar[
      ddddddr,
      "{
      \color{darkgreen}   \widehat{F}'
      }"
    ]
    \\
    &&&
    X \times \{1\}
    \ar[
      u,
      hook',
      "{
        \mathrm{id} \times \{1\}
      }"{description}
    ]
    \\[-70pt]
    &&
    X \times [0,1] \times [0,1]
    \ar[
      ddddrr,
      crossing over,
      "{
         {\color{orangeii}  \doublehat{F}}
      }"{swap, pos=.2}
    ]
    \ar[
      from=uur,
      "{
        \mathrm{id}
        \times \{1\}
      }"{sloped}
    ]
    \ar[
      from=ddll,
      "{
        \mathrm{id}
        \times \{0\}
      }"{sloped}
    ]
    \\[-70pt]
    X \times \{0\}
    \ar[
      d,
      hook,
      "{
        \mathrm{id} \times \{0\}
      }"{swap}
    ]
    \\
    X \times [0,1]
    \ar[
      ddrrrr,
      crossing over,
      end anchor={[yshift=-4pt]},
      "{
      \color{darkgreen}   \widehat{F}
      }"{swap}
    ]
    \\
    X \times \{1\}
    \ar[
      u,
      hook',
      "{ 
        \mathrm{id}
        \times
        \{1\}
      }"
    ]
    \\[-65pt]
    &&&&
    \Omega^1_{\mathrm{dR}}(-;\mathfrak{a})_{\mathrm{clsd}}
  \end{tikzcd}
$$
\vspace{.3cm}

\noindent
\noindent {\bf (ii)} and such that the deformation is constant on the original form data\footnote{ The condition \eqref{2ndConcordanceConstantAtBoundary} ``removes the corners'' in a concordance-of-concordances, so as to yield the usual ``homotopy relative endpoints''. On the other hand, for purposes other than modeling higher 
gauge-transformations the information supported on corners is relevant, see
\cite{Sati11}\cite{Sati13}\cite{Sati14} for further discussion.}
\begin{equation}
  \label{2ndConcordanceConstantAtBoundary}
  \hspace{-1cm} 
  \begin{tikzcd}[row sep=15pt]
    &
    X \times \{0\} \times [0,1]
    \ar[
      d,
      hook
    ]
    \ar[
      r,
      ->>
    ]
    &
    X \times \{0\}
    \ar[
      dddr,
      "{
      \color{olive}   F^{(0)}
      }"
    ]
    \\
    &
    X \times [0,1] \times [0,1]
    \ar[
      ddrr,
      "{
        {\color{orangeii}  \doublehat{F}}
      }"
    ]
    \\
    X \times \{1\}
    \ar[
      from=r, 
      ->>
    ]
    \ar[
      drrr,
      shift right=2pt,
      "{
      \color{olive}   F^{(1)}
      }"{swap}
    ]
    &
    X \times \{1\} \times [0,1]
    \ar[
      u, hook'
    ]
    \\
    &
    &&
    \Omega^1_{\mathrm{dR}}(
      -;\,
      \mathfrak{a})_{\mathrm{clsd}}
  \end{tikzcd}
  \hspace{0cm}
  \mbox{\footnotesize
    \begin{tabular}{l}
      hence 
      \\
      schematically
      \\
      of this form:
    \end{tabular}
  }
  \quad 
  \begin{tikzcd}
 {\color{olive}    F^{(0)} }
    \ar[
      dd,
      bend left=35,
      "{ \color{darkgreen} \widehat{F}' }",
      "{\ }"{name=t, swap}
    ]
    \ar[
      dd,
      bend right=35,
      "{\color{darkgreen}  \widehat{F} }"{swap},
      "{\ }"{name=s}
    ]
    \ar[
      from=s,
      to=t,
      Rightarrow,
      "{\color{orangeii} 
        \doublehat{F}_{\phantom{.}}
      }"
    ]
    \\
    \\
  {\color{olive}   F^{(1)} }
  \end{tikzcd}
\end{equation}
\end{definition}

\smallskip

The following Lem. \ref{TrivializationOfClosedFormOnCylinder} is an elementary fact (a version of the Poincar{\'e} Lemma) 
but we make it fully explicit since it plays such a crucial role in 
Prop. \ref{CoboundariesOfClosedLS4ValuedForms}, where it governs the existence of (deformations of) potentials for the supergravity C-field
(cf. also discussion in the context of 11d supergravity on manifolds with boundary \cite{Sati12}):

\begin{lemma}[\bf Trivialization of closed forms on a cylinder]
  \label{TrivializationOfClosedFormOnCylinder}
$\,$
  
 \noindent  A closed differential form 
  on a cylinder,
  $\widehat G \,\in\, \Omega^4_{\mathrm{dR}}\big(X \times [0,1]\big)_{\mathrm{clsd}}$,
  which vanishes on one end 
  \begin{equation}
    \label{DifferentialFormVanishingOnOneEndOfCyliner}
    \iota_0^\ast 
    \,
    \widehat{G}_4
    \;=\;
    0
    \,,
  \end{equation}
  is trivialized by the differential form $\widehat{C}_3$ 
  characterized by
  \vspace{-1mm} 
  $$
    \iota_{\partial_t} \widehat{C}_3
    \;=\;
    0
    \;\;\;\;\;\;\;
    \mbox{and}
    \;\;\;\;\;\;\;
    \underset{t\in [0,1]}{\forall}
    \;\;\;
    \widehat{C}_3(-,t)
    \;=\;
    {\int_{[0,t]}} 
    \widehat{G}_4
    \,.
  $$
\end{lemma}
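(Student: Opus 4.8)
The plan is to reduce to a chart-local computation and then apply the standard fibre-integration (homotopy-operator) argument underlying the Poincar\'e Lemma. First, by Example~\ref{DifferentialFormsOnSmoothSupermanifolds} together with Remark~\ref{PullbackOfDifferentialFormsViaClassifyingSuperSets}, a differential form on the super-set $X \times [0,1]$ is equivalently the compatible system of its restrictions to super-Cartesian charts $\mathbb{R}^{n\vert q}\times[0,1]$; since both the hypothesis \eqref{DifferentialFormVanishingOnOneEndOfCyliner} and the asserted formula for $\widehat{C}_3$ are defined by pullback and hence stable under restriction to charts, it suffices to prove the statement for $X = \mathbb{R}^{n\vert q}$.

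On such a chart the interval factor is purely bosonic, so every form on the cylinder decomposes uniquely as $\widehat{G}_4 = \alpha(t) + \mathrm{d}t\wedge\beta(t)$ with $\iota_{\partial_t}\alpha(t)=0$ and $\beta(t) = \iota_{\partial_t}\widehat{G}_4$, where $t\mapsto\alpha(t)$ and $t\mapsto\beta(t)$ are smooth families of a $4$-form and a $3$-form on $\mathbb{R}^{n\vert q}$. Splitting the cylinder differential as $\mathrm{d} = \mathrm{d}_X + \mathrm{d}t\wedge\partial_t$, the condition $\mathrm{d}\widehat{G}_4 = 0$ becomes equivalent to the two relations $\mathrm{d}_X\alpha(t)=0$ and $\partial_t\alpha(t) = \mathrm{d}_X\beta(t)$. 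One then sets $\widehat{C}_3(-,t) := \int_{[0,t]}\widehat{G}_4 = \int_0^t\beta(s)\,\mathrm{d}s$, which manifestly has no $\mathrm{d}t$-leg, i.e. $\iota_{\partial_t}\widehat{C}_3 = 0$; differentiating and using the second relation together with the fundamental theorem of calculus gives $\mathrm{d}\widehat{C}_3 = \int_0^t\partial_s\alpha(s)\,\mathrm{d}s + \mathrm{d}t\wedge\beta(t) = \alpha(t) - \alpha(0) + \mathrm{d}t\wedge\beta(t)$, and the hypothesis $\iota_0^\ast\widehat{G}_4=0$ says exactly $\alpha(0)=0$, whence $\mathrm{d}\widehat{C}_3 = \widehat{G}_4$.

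There is no substantive obstacle here; the only point that deserves an explicit line is the first reduction step, namely that the plot-wise notion of forms on $X\times[0,1]$ (Def.~\ref{DifferentialFormsOnSmoothSuperSets}) is compatible with the chart-local $\mathrm{d}t$-decomposition and with fibre integration along the bosonic interval --- which holds because the interval factor carries no odd directions, so no superalgebra signs intervene and the classical computation applies verbatim on each chart.
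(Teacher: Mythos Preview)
Your proof is correct and follows essentially the same route as the paper: decompose $\widehat{G}_4$ into its $\mathrm{d}t$-leg and non-$\mathrm{d}t$-leg, read off the two relations equivalent to closedness, and then differentiate the fibre integral using the fundamental theorem of calculus together with the vanishing boundary condition. The paper omits your preliminary chart-reduction step and works directly with the global decomposition, but otherwise the computations match line for line.
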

\begin{proof}
Uniquely decomposing the form as
\begin{equation}
  \label{DecomposingAFormOnACylinder}
  \widehat {G}_4
  \;=\;
  A_4 \,+\, \mathrm{d}t \, B_3
  \,,
  \hspace{.4cm}
  \mbox{with}
  \hspace{.3cm}
  \iota_{\partial_t} A_4 = 0
  \hspace{.3cm}
  \mbox{and}
  \hspace{.3cm}
  \iota_{\partial_t} B_3 = 0
  \,,
\end{equation}
its closure, $\mathrm{d}\widehat{G}_4 = 0$, means equivalently that
\begin{equation}
  \label{ClosureConditionOfClosedFormOnCylinder}
  \def\arraystretch{1.5}
    \mathrm{d}_X A_4 \,=\, 0
    \hspace{.5cm}
    \mbox{and}
    \hspace{.5cm}
    \mathrm{d}_{[0,1]} A_4
    \;=\;
    \mathrm{d}t
    \, 
    \mathrm{d}_X B_3
    \,.
\end{equation}

\vspace{-1mm} 
\noindent With this, we compute as follows:
$$
  \def\arraystretch{1.7}
  \begin{array}{lll}
    \mathrm{d}
 {\displaystyle \int}_{\!\!\![0,-]}
    \widehat{G}_4
    &
    \;=\;
    \big(
      \mathrm{d}_{[0,1]}
      +
      \mathrm{d}_X
    \big)
    {\displaystyle \int}_{\!\!\![0,-]}
    \mathrm{d}t'\,
    B_3(t')
    &
    \proofstep{
      by \eqref{DecomposingAFormOnACylinder}
    }
    \\
   & \;=\;
    \mathrm{d}t\, B_3
    +
   {\displaystyle \int}_{\!\!\![0,-]}
    \mathrm{d}t'
    \,
    \mathrm{d}_X B_3(t')
    &
    \proofstep{
      basic prop of integrals
    }
    \\[10pt]
 &   \;=\;
    \mathrm{d}t \, B_3
    +
   {\displaystyle \int}_{\!\!\![0,-]}
    \mathrm{d}_{[0,1]} A_4
    &
    \proofstep{
      by \eqref{ClosureConditionOfClosedFormOnCylinder}
    }
    \\[-2pt]
  &  \;=\;
    \mathrm{d}t \, B_3
    +
    A_4
    &
    \proofstep{
      Stokes
      with \eqref{DifferentialFormVanishingOnOneEndOfCyliner}
    }
    \\[-3pt]
  &  \;=\;
    \widehat{G}_4
    &
    \proofstep{
      by
      \eqref{DecomposingAFormOnACylinder}.
    }
  \end{array}
$$

\vspace{-5mm}
\end{proof}

\begin{proposition}[\bf Coboundaries for closed $\mathfrak{l}S^4$-valued forms give local C-field gauge potentials]
\label{CoboundariesOfClosedLS4ValuedForms}
$\,$

\noindent Given $(G_4, G_7) \,\in\, \Omega^1_{\mathrm{dR}}(X;\, \mathfrak{l}S^4)_{\mathrm{clsd}}$ as in \eqref{ClosedlS4ValuedForms},

\noindent {\bf (i)} there is a natural surjection 

\begin{itemize}[leftmargin=.9cm]
\item
from null-coboundaries \eqref{CoboundaryInNonabelianDeRhamCohomology} for $(G_4, G_7)$ in $\mathfrak{l}S^4$-valued de Rham cohomology, 
\item
to pairs of ordinary differential
forms
\begin{equation}
  \label{PairsOfPotentialForms}
  \left.
  \def\arraystretch{1.3}
  \begin{array}{l}
    C_3 \,\in\,
    \Omega^3_{\mathrm{dR}}(X)
    \\
    C_6 \,\in\,
    \Omega^6_{\mathrm{dR}}(X)
  \end{array}
\!\!  \right\}
  \hspace{.4cm}
  \mbox{\rm \small such that}
  \hspace{.4cm}
  \left\{\!\!
  \def\arraystretch{1.2}
  \begin{array}{l}
    \mathrm{d}
    \, 
    C_3 
      \;=\; 
    G_4
    \,,
    \\
    \mathrm{d}\, 
    C_6 
      \;=\;
    G_7 
    -
    \tfrac{1}{2} C_3 \, G_4
    \,.
  \end{array}
  \right.
\end{equation}
\end{itemize}

\noindent {\bf (ii)} This surjection respects equivalence classes, where
\begin{itemize}[leftmargin=.9cm]
\item equivalence of $\mathfrak{l}S^4$-coboundaries is by coboundaries of coboundaries (Def. \ref{CoboundariesOfCoboundaries}),
\item (gauge) equivalence of the pairs \eqref{PairsOfPotentialForms} is defined as follows: \footnote{Notice that \eqref{EquivalenceOfTraditionalCFieldGaugePotentials} is slightly more restrictive than what has been called ``gauge transformations'' of the C-field in \cite[(2.4)]{CJLP98}\cite[(3.3)]{LLPS99}\cite[(14)]{KS03}\cite[(4.9)]{Sati10}: 
Indeed, the transformations considered there are more general {\it symmetries} of the C-field, analogous to general shifts of 1-form connections by closed but possibly non-exact forms. Among these, the actual gauge transformations must satisfy an exactness condition, which for the C-field had previously been left unspecified. Our Proposition \ref{CoboundariesOfClosedLS4ValuedForms} shows that the correct C-field gauge transformations are as in \eqref{EquivalenceOfTraditionalCFieldGaugePotentials}. Notice that this coincides with \cite[(21)]{BNS04} up to an exact term.}
\begin{equation}
  \label{EquivalenceOfTraditionalCFieldGaugePotentials}
  (C_3,\, C_6)
  \;\sim\;
  (C'_3,\, C'_6)
  \hspace{.5cm}
  \Leftrightarrow
  \hspace{.5cm}
  \exists
  \,
  \left.
  \def\arraystretch{1.2}
  \begin{array}{l}
  B_2 \,\in\,\Omega^2_{\mathrm{dR}}(X)
  \\
  B_5 \,\in\,\Omega^5_{\mathrm{dR}}(X)
  \end{array}
  \!\!\! \right\}
  \;\;
  \mbox{\rm \small such that}
  \;\;
  \left\{\!\!\!
  \def\arraystretch{1.1}
  \begin{array}{l}
    \mathrm{d}\, B_2 \;=\;
    C'_3 - C_3
    \\
    \mathrm{d}\, B_5 
      \;=\; 
    C'_6 - C_6
    -
    \tfrac{1}{2}
    C'_3 \, C_3
    \,.
  \end{array}
  \right.
\end{equation}
\end{itemize}
{\bf (iii)} A section of the induced surjection on equivalence classes is given by 
\vspace{-2mm} 
\begin{equation}
  \label{LiftOfGaugeTransformations}
  \begin{tikzcd}[
    row sep=17pt
  ]
    \big(
      C_3, C_6
    \big)
    \ar[
      dd,
      "{\color{darkgreen} 
        (B_2, B_5)
      }"{description}
    ]
    &&
    \Big(
      \widehat{G}_4
      \,:=\,
      t \, G_4 + \mathrm{d}t\, C_3
      ,\;\;\;\;
      \widehat{G}_7
      \,:=\,
      t^2 \, G_7 \,+\,
      2\, t \mathrm{d}t \, C_6
    \Big)
    \ar[
      dd,
      "{\color{darkgreen} 
        \left(
          \def\arraystretch{1.1}
          \def\arraycolsep{2pt}
          \begin{array}{l}
          \doublehat{G}_4
          \,:=\,
          t \, G_4 + \mathrm{d}t\, C_3
          +
          s\, \mathrm{d}t
          (
            C'_3 - C_3
          )
          -
          \mathrm{d}s\, \mathrm{d}t
          \, 
          B_2
          \\
          \doublehat{G}_7
          \,:=\,
          t^2 \, G_7 
          + 2 \, t\mathrm{d}t
          \, C_6
          \,+\,
          2 \, s \, t \mathrm{d}t
          (
            C'_6 - C_6
          )
          \,-\,
          2
          \, \mathrm{d}s
          \, t \mathrm{d}t
          (
            B_5 + \frac{1}{2}
            B_2 \, C_3
          )
          \end{array}
       \!\! \right)
      }"{description}
    ]
    \\
    & \longmapsto
    \\
    \big(
      C'_3, C'_6
    \big)
    &&
    \Big(
      \widehat{G}'_4
      \,:=\,
      t \, G_4 + \mathrm{d}t\, C'_3
      ,\;\;\;\;
      \widehat{G}'_7
      \,:=\,
      t^2 \, G_7 \,+\,
      2\, t \mathrm{d}t \, C'_6
    \Big)
    \mathrlap{\,,}
  \end{tikzcd}
\end{equation}

\vspace{-2mm} 
\noindent
where all un-hatted differential forms denote their pullback along 
$
  \begin{tikzcd}[column sep=20pt]
  X \times [0,1]_t \times [0,1]_s 
  \ar[rr, ->>, "p_{X \times [0,1]_t}\;\;"] 
  &&  
  X \times [0,1]_t 
  \ar[r, ->>, "p_X"] 
  & 
  X\,,
  \end{tikzcd}
$ 
and where $(t,s) \colon [0,1]_t \times [0,1]_s \to \mathbb{R}^2$ denote the canonical coordinate functions.
\end{proposition}

\begin{proof}
  {\bf (i)}
  First, to describe the map itself we use the fiberwise Stokes Theorem (e.g. \cite[Lem. 6.1]{FSS23Char}) for differential forms $\widehat{F}$ on the product manifold 
  $$
    \begin{tikzcd}
      X
      \ar[
        r, 
        hook,
        shift left=3pt,
        "{\iota_0}"
      ]
      \ar[
        r, 
        hook,
        shift right=3pt,
        "{\iota_1}"{swap}
      ]
      &
      X \times [0,1]
      \,,
    \end{tikzcd}
  $$ 
  where it says that:
  \begin{equation}
    \label{FiberwiseStokes}
    \mathrm{d}
    \int_{[0,1]}
    \widehat F
    \;\;=\;\;
    \iota_1^\ast \widehat F
    \,-\,
    \iota_0^\ast \widehat F
    \,-\,
    \int_{[0,1]}
    \mathrm{d}
    \widehat{F}
    \,.
  \end{equation}
Now given a concordance 
\begin{equation}
\label{ANullConcordanceForClosedlS4ValuedForms}
\big(\widehat{G}_4\,, \widehat{G}_7\big) \,\in\, \Omega^1_{\mathrm{dR}}\big(X \times [0,1]; \, \mathfrak{l}S^4\big)_{\mathrm{clsd}}
\hspace{.4cm}
\mbox{\footnotesize with}
\hspace{.4cm}
\left\{\!\!\!
\def\arraystretch{1.5}
\begin{array}{l}
  \iota_1^\ast
  \big(
    \widehat{G}_4
    ,\,
    \widehat{G}_7
  \big)
  \;=\;
  (G_4,\, G_7)
  \,,
  \\
  \iota_0^\ast
  \big(
    \widehat{G}_4
    ,\,
    \widehat{G}_7
  \big)
  \;=\;
  0
\end{array}
\right.
\end{equation}
take its image to be
\begin{equation}
  \label{CFieldPotentialsFromlConcordances}
  \left.
  \def\arraystretch{1.9}
  \begin{array}{l}
    C_3 \,:=\,
     {\displaystyle \int}_{\!\!\![0,1]}
    \widehat{G}_4
    \\
    C_6 
    \,:=\,
    {\displaystyle \int}_{\!\!\![0,1]}
    \Bigg(
      \widehat{G}_7
      -
      \tfrac{1}{2}
      \,
      \underbrace{
      \bigg(
        \int_{[0,-]
      }
      \widehat{G}_4
      \bigg)
      }_{\color{gray}
        \widehat{C}_3
      }
      \,
      \widehat{G}_4
    \Bigg)
  \end{array}
 \!\!\!\! \right\}
  \hspace{.4cm}
  \mbox{\footnotesize which indeed satisfies}
  \hspace{.4cm}
  \left\{\!\!\!
  \def\arraystretch{1.5}
  \begin{array}{l}
    \mathrm{d}
    \,
    C_3
    \;=\;
    G_4
    \\
    \mathrm{d}
    \,
    C_6
    \;=\;
    G_7 -
    \tfrac{1}{2}
    C_3 \, G_4
    \,.
  \end{array}
  \right.
\end{equation}
Here over the brace on the left we have
$$
  \widehat{C}_3
  \;\in\;
  \Omega^3_{\mathrm{dR}}
  \big(
    X \times [0,1]
  \big)
  \,,
  \hspace{.6cm}
  \widehat{C}_3(-,t)
  \;:=\;
    \int_{[0,t]}
  \widehat{G}_4
  \,,
$$
which satisfies (by Lem. \ref{TrivializationOfClosedFormOnCylinder})
\begin{equation}
  \label{TrivializationOfCOncordance4Form}
  \mathrm{d} \widehat{C}_3
  \;=\;
  \widehat{G}_4
  \;\;\;\;\;\;
  \mbox{and}
  \;\;\;\;\;\;
  \iota_0^\ast \widehat{C}_3
  \;=\;
    \int_{[0,0]}
  \widehat{G}_4
  \;=\;
  0
  \,,
  \;\;\;\;\;
  \iota_1^\ast \widehat{C}_3
  \;=\;
    \int_{[0,1]}
  \widehat{G}_4
  \;=\;
  C_3
  \,,
\end{equation}
and on the right of \eqref{CFieldPotentialsFromlConcordances}
we computed as follows:
$$
  \def\arraystretch{1.7}
  \begin{array}{lll}
    \mathrm{d}
 {\displaystyle \int}_{\!\!\![0,1]}    
    \Bigg(
      \widehat{G}_7
      -
      \tfrac{1}{2}
      \bigg(
         {\displaystyle \int}_{\!\!\![0,-]}
        \widehat{G}_4
      \bigg)
      \widehat{G}_4
    \Bigg)
    &
    \!\!=\;
    \iota_1^\ast
    \Bigg(
      \widehat{G}_7
      -
      \tfrac{1}{2}
      \bigg(
         {\displaystyle \int}_{\!\!\![0,-]}
        \widehat{G}_4
      \bigg)
      \widehat{G}_4
    \Bigg)
    -
 {\displaystyle \int}_{\!\!\![0,1]}
    \underbrace{
    \mathrm{d}
    \Bigg(
      \widehat{G}_7
      -
      \tfrac{1}{2}
      \bigg(
          \int_{[0,-]}
        \widehat{G}_4
      \bigg)
      \widehat{G}_4
    \Bigg)
    }_{ \color{gray} = 0 }
    &
    \proofstep{
      by 
      \eqref{FiberwiseStokes}
    }
    \\[-6pt]
  & \!\! =\;
    G_7 - \tfrac{1}{2} C_3 \, G_4
    &
    \proofstep{
      by \eqref{TrivializationOfCOncordance4Form}.
    }
  \end{array}
$$
To see that this construction \eqref{CFieldPotentialsFromlConcordances} is a surjection as claimed, we demonstrate the explicit pre-images \eqref{LiftOfGaugeTransformations}:
For $\big(C_3, C_6\big)$ as in \eqref{PairsOfPotentialForms}, consider the concordance 
\begin{equation}
  \label{ConcordanceOflS4ValuedFormsFromTraditionalFormula}
  \left.
  \def\arraystretch{1.5}
  \begin{array}{l}
    \widehat{G}_4
    \;:=\;
    t\, G_4 + \mathrm{dt}\, C_3
    \\
    \widehat{G}_7
    \;:=\;
    t^2 \, G_7
    +
    2 t \mathrm{d}t
    \,
    C_6
  \end{array}
  \!\! 
  \right\}
  \hspace{0cm}
  \mbox{\footnotesize
    \def\arraystretch{.9}
    \begin{tabular}{c}
    which,  using \eqref{PairsOfPotentialForms},
    \\
indeed    satisfies
   :
    \end{tabular}
  }
  \hspace{0cm}
  \left\{\!\!
  \def\arraystretch{1.5}
  \begin{array}{l}
    \mathrm{d}
    \big(
      t \, G_4
      +
      \mathrm{d}t
      \, 
      C_3 
    \big)
    \;=\;
    0
    \\
    \mathrm{d}
    \big(
    t^2 \, 
    G_7
    +
    2 t
    \mathrm{d}t
    \,
    C_6
    \big)
    \;=\;
    \tfrac{1}{2}
    \big(
      t \, G_4
      +
      \mathrm{d}t
      \,
      C_3
    \big)
    \big(
      t \,  G_4
      +
      \mathrm{d}t
      \,
      C_3
    \big) \,,
  \end{array}
  \right. 
\end{equation}
This is indeed a preimage:
$$
  \def\arraystretch{2}
  \begin{array}{l}
    {\displaystyle \int}_{\!\!\![0,t']}
    \big(
      t \, G_4
      +
      \mathrm{d}t
      \,
      C_3
    \big)
    \;=\; 
    t' \, C_3
    \\
     {\displaystyle \int}_{\!\!\![0,1]}
    \Big(
      \underbrace{
      t^2 \, G_7
      +
      2 t \mathrm{d}t \, C_6
      }_{\color{gray}  \widehat{G}_7 }
      -
      \tfrac{1}{2}
      \underbrace{
        \mathclap{\phantom{\big(}}
        t 
        \,
        C_3
      }_{ \color{gray} 
        \widehat{C}_3 
      }
      \underbrace{
      \big(
        t\, G_4
        +
        \mathrm{d}t\, C_3
      \big)
      }_{\color{gray}  \widehat{G}_4}
    \Big)
    \;=\;
    2
    \,
    C_6
     {\displaystyle \int}_{\!\!\![0,1]}
    t\mathrm{d}t
    \;=\;
    C_6
    \,.
  \end{array}
$$

\noindent
{\bf (ii)}
To see that the construction \eqref{CFieldPotentialsFromlConcordances} respects equivalences, consider a pair of concordances
$\big(\widehat{G}_4, \widehat{G}_7\big), \big(\widehat{G}'_4, \widehat{G}'_7\big)$ as 
in \eqref{ANullConcordanceForClosedlS4ValuedForms}, with a 
concordance-of-concordances between them:
$$
  \big(\doublehat{G}_4
    ,\,
    \doublehat{G}_7
  \big)
  \;\;
  \in
  \;\;
  \Omega^1_{\mathrm{dR}}\big(
    X \times [0,1]_t \times [0,1]_s
    ;\,
    \mathfrak{l}S^4
  \big)_{\mathrm{clsd}}
  \,,
  \;\;\;\;\;\;
  \mbox{\footnotesize such that:}
  \;\;\;
  \left\{\!\!
  \def\arraystretch{1.5}
  \begin{array}{l}
    \iota_{s=1}^\ast
    \big(
      \doublehat{G}_4
      ,\,
      \doublehat{G}_7
    \big)
    \;=\;
    \big(
      \widehat{G}'_4
      ,\,
      \widehat{G}'_7
    \big)
    \,,
    \\
    \iota_{s=0}^\ast
    \big(
      \doublehat{G}_4
      ,\,
      \doublehat{G}_7
    \big)
    \;=\;
    \big(
      \widehat{G}_4
      ,\,
      \widehat{G}_7
    \big)
    \,.
  \end{array}
  \right.
$$
Then we obtain an equivalence \eqref{EquivalenceOfTraditionalCFieldGaugePotentials} between the corresponding images \eqref{CFieldPotentialsFromlConcordances} by setting
\begin{equation}
  \label{2ndCoboundariesFrom2ndConcordances}
  \def\arraystretch{1.8}
  \begin{array}{l}
  B_2 
    \;:=\;
  {\displaystyle \int}_{\!\!\!\! s \in [0,1]} \;
  {\displaystyle \int}_{\!\!\!\! t \in [0,1]}
  \doublehat{G}_4
  \\
  B_5 
    \;:=\;
 {\displaystyle \int}_{\!\!\!\! s \in [0,1]}\;
 {\displaystyle \int}_{\!\!\!\! t \in [0,1]}
  \Bigg(
    \doublehat{G}_7
    -
    \tfrac{1}{2}
    \bigg(
    {\displaystyle \int}_{\!\!\!\! t' \in [0,-]}
    \doublehat{G}_4
    \bigg)
    \doublehat{G}_4
  \Bigg)
  -
  \tfrac{1}{2}
  B_2 \, C_3
  \,.
  \end{array}
\end{equation}
To see that this pair satisfies the condition \eqref{EquivalenceOfTraditionalCFieldGaugePotentials}
we repeatedly use the fiberwise Stokes theorem \eqref{FiberwiseStokes} to find, first:
\begin{equation}
  \label{DifferentialOfDoublyIntegrated2nd4FormConcordance}
  \def\arraystretch{1.9}
  \begin{array}{ll}
  &
  \mathrm{d}
  \,
{\displaystyle \int}_{\!\!\!\! s \in [0,1]}\;
 {\displaystyle \int}_{\!\!\!\! t \in [0,1]}
  \,
  \doublehat{G}_4
  \\
  &
  \!\!=\;
  \iota_{s=1}^\ast
  {\displaystyle \int}_{\!\!\!\! t \in [0,1]}
  \,
  \doublehat{G}_4
  \;-\;
  \iota_{s=0}^\ast
{\displaystyle \int}_{\!\!\!\! t \in [0,1]}
  \,
  \doublehat{G}_4
  \;-\;
 {\displaystyle \int}_{\!\!\!\! s \in [0,1]}
  \mathrm{d}
  {\displaystyle \int}_{\!\!\!\! t \in [0,1]}
  \,
  \doublehat{G}_4
  \\[6pt]
  & \!\!=\;
 {\displaystyle \int}_{\!\!\!\! t \in [0,1]}
  \iota_{s=1}^\ast
  \doublehat{G}_4
  \;-\;
  {\displaystyle \int}_{\!\!\!\! t \in [0,1]}
  \iota_{s=0}^\ast
  \doublehat{G}_4
  \;-\;
  {\displaystyle \int}_{s \in [0,1]}
  \underbrace{
  \Big(
  \iota_{t=1}^\ast
  \doublehat{G}_4
  -
  \iota_{t=0}^\ast
  \doublehat{G}_4
  \Big)
  }_{ \color{gray} 
    = \, 0
    \mathrlap{
    \;\; 
    \scalebox{.7}{\eqref{2ndConcordanceConstantAtBoundary}}
    }
  }
  +
 {\displaystyle \int}_{\!\!\!\! s \in [0,1]}\;
 {\displaystyle \int}_{\!\!\!\! t \in [0,1]}
  \underbrace{
    \mathrm{d} 
    \,
    \doublehat{G}_4
  }_{=\, 0}
  \\
  & \!\! =\;
 {\displaystyle \int}_{\!\!\!\! t \in [0,1]}
  \widehat{G}'_4
  -
  {\displaystyle \int}_{\!\!\!\! t \in [0,1]}
  \widehat{G}_4
  \\
 & \!\!=\;
  C'_3 \,-\, C_3
  \,.
  \end{array}
\end{equation}
The computation for $B_5$ is similar, only that here the term $\int_{s \in [0,1]} \int_{t' \in [0,-]} \doublehat{G}_4$ survives the evaluation at $t = 1$:
\begin{equation}
  \label{DifferentialOfDoublyIntegrated2nd7FormConcordance}
  \def\arraystretch{2}
  \begin{array}{ll}
  &
  \mathrm{d}
  {\displaystyle \int}_{\!\!\!\! s \in [0,1]}\;
  {\displaystyle \int}_{\!\!\!\! t \in [0,1]}
  \Bigg(
    \doublehat{G}_7
    -
    \tfrac{1}{2}
    \bigg(
    {\displaystyle \int}_{\!\!\!\! t' \in [0,-]}
    \doublehat{G}_4
    \bigg)
    \doublehat{G}_4
  \Bigg)
  \\
  &
  \!\!=\;
  {\displaystyle \int}_{\!\!\!\! t \in [0,1]}
  \Big(
    \widehat{G}'_7
    -
    \tfrac{1}{2}
    \widehat{C}'_3
    \, 
    \widehat{G}'_4
  \Big)
  \;-\;
 {\displaystyle \int}_{\!\!\!\! t \in [0,1]}
  \Big(
    \widehat{G}_7
    -
    \tfrac{1}{2}
    \widehat{C}_3
    \, 
    \widehat{G}_4
  \Big)
  -
  \displaystyle{\int}_{
    \!\!\!\! s \in [0,1]}
  \mathrm{d}
  \displaystyle{\int}_{
    \!\!\!\! t \in [0,1]}
  \Bigg(
    \doublehat{G}_7
    -
    \tfrac{1}{2}
    \bigg(
    {\displaystyle \int}_{\!\!\!\! t' \in [0,-]}
    \doublehat{G}_4
    \bigg)
    \doublehat{G}_4
  \Bigg)
  \\
 & \!\!=\;
  C'_6 
    \,-\, 
  C_6
    \,+\,
  \tfrac{1}{2}
  \left(
  \displaystyle{\int}_{
    \!\!\!\! s \in [0,1] 
  }
  \displaystyle{\int}_{
   \!\!\!\! t \in [0,1]} 
  \doublehat{G}_4
  \right)
  \,
  G_4
  \\
  &
  \;=\;
  C'_6 
    \,-\, 
  C_6
    \,+\,
  \tfrac{1}{2} B_2 \, G_4
  \,.
  \end{array}
\end{equation}
Hence, in total, we have
$$
  \def\arraystretch{1.8}
  \begin{array}{lll}
    \mathrm{d}B_5
    &=\;
    \mathrm{d}
 {\displaystyle \int}_{\!\!\!\! s \in [0,1]}\;
 {\displaystyle \int}_{\!\!\!\! t \in [0,1]}
  \Bigg(
    \doublehat{G}_7
    -
    \tfrac{1}{2}
    \bigg(
    {\displaystyle \int}_{\!\!\!\! t' \in [0,-]}
    \doublehat{G}_4
    \bigg)
    \doublehat{G}_4
  \Bigg)
  -
  \mathrm{d}
  \tfrac{1}{2}
  B_2 \, C_3
    &
    \proofstep{
      by
      \eqref{2ndCoboundariesFrom2ndConcordances}
    }
    \\
 &   \;=\;
    C'_6 
      \,-\, 
    C_6 
      \,+\, 
    \tfrac{1}{2} B_2 \, G_4
    \,
    \underbrace{
    -
    \tfrac{1}{2}
    \big(
      C'_3 - C_3
    \big)
    C_3
    -
    \tfrac{1}{2}
    B_2 \, G_4
    }_{ \color{gray} 
      \mathrm{d}(
        - \frac{1}{2}
        B_2 \, C_3
      )
    }
    &
    \proofstep{
      by 
      \eqref{DifferentialOfDoublyIntegrated2nd7FormConcordance}
      \& 
      \eqref{DifferentialOfDoublyIntegrated2nd4FormConcordance}
    }
    \\[-5pt]
   & \;=\;
    C'_6 - C_6
    -
    \tfrac{1}{2}
    C'_3 \, C_3
    \,,
  \end{array}
$$
as required \eqref{EquivalenceOfTraditionalCFieldGaugePotentials}.

\smallskip

\noindent
{\bf (iii)}
Finally, for $(B_2, B_5)$ a gauge transformation \eqref{EquivalenceOfTraditionalCFieldGaugePotentials} we show that the pair $\big(\doublehat{G}_4,\, \doublehat{G}_7\big)$ \eqref{LiftOfGaugeTransformations} is a concordance-of-concordances between the concordances \eqref{ConcordanceOflS4ValuedFormsFromTraditionalFormula}: It is clear that the pullbacks to $s,t \in \{0,1\}$ are as required, and checking the Bianchi identities is straightforward: First we have

$
\hspace{1cm} 
  \def\arraystretch{1.6}
  \begin{array}{lclll}
    \mathrm{d}
    \big(
      t\, G_4
      +
      \mathrm{d}t\, 
      C_3
    \big)
    &=&
    0
    &
    \proofstep{
      by
      \eqref{ConcordanceOflS4ValuedFormsFromTraditionalFormula}
    }
    \\
    \mathrm{d}\left(
      s
      \,
      \mathrm{d}t
      (C'_3 - C_3)
    \right)
    &=&
    \phantom{+}
    \mathrm{d}s\, \mathrm{d}t
    \big(
      C'_3 - C_3
    \big)
    &
    \proofstep{
      by \eqref{PairsOfPotentialForms}.
    }
    \\
    \mathrm{d}\big( 
      - \mathrm{d}s\, \mathrm{d}t
      \, B_2
    \big)
    &=&
    - \mathrm{d}s\, \mathrm{d}t
    \big(
     C'_3 - C_3
    \big)
    &
    \proofstep{
      by
      \eqref{EquivalenceOfTraditionalCFieldGaugePotentials}
    }
    \\
    \cline{1-3}
    {\color{darkgreen}
    \mathrm{d}\big(
      \doublehat{G}_4
    \big)}
    & 
    {\color{darkgreen} =}
    &
    {\color{darkgreen}  0}
    &
    \proofstep{
      by 
      \eqref{LiftOfGaugeTransformations},
    }
  \end{array}
$

\vspace{.2cm}
\noindent
and, using from \eqref{PairsOfPotentialForms}
that
\begin{equation}
  \label{DifferentialOfC6Difference}
  \mathrm{d}(C'_6 - C_6)
  \;=\;
  -
  \tfrac{1}{2}\big(
     C'_3
    -
    C_3
  \big)
  G_4
  \,,
\end{equation}
we have:
$$
\hspace{1cm} 
  \def\arraystretch{1.8}
  \begin{array}{lcll}
    \mathrm{d}
    \big(
      t^2
      G_7
      +
      2t \mathrm{d}t\, C_6
    \big)
    &=&
    \phantom{+}
    \tfrac{1}{2}
    \big(
     t\, G_4
     +
     \mathrm{d}t\,
     C_3
    \big)^2
    &
    \proofstep{
      by 
      \eqref{ConcordanceOflS4ValuedFormsFromTraditionalFormula}
    }
    \\
    \mathrm{d}
    \big(
      2 s \, t \mathrm{d}t
      (C'_6 - C_6)
    \big)
    &=&
    \phantom{+}
    s\, t \mathrm{d}t
    \big(
      C'_3 - C_3
    \big)
    G_4
    \,+\,
    2 
    \, \mathrm{d}s
    \, t\mathrm{d}t
    (C'_6 - C_6)
    &
    \proofstep{
      by
      \eqref{DifferentialOfC6Difference}
    }
    \\
    \mathrm{d}
    \big(
      - 2 
      \, \mathrm{d}s
      \, t\mathrm{d}t
        B_5
    \big)
    &=&
    \hspace{3.2cm}
    -
    \,
    2 
    \, \mathrm{d}s 
    \, t \mathrm{d}t
    \big(
      C'_6 - C_6
    \big)
    \,+\, 
    \mathrm{d}s 
    \, t\mathrm{d}t
    \,
    C'_3 \, C_3
    &
    \proofstep{
      by 
      \eqref{EquivalenceOfTraditionalCFieldGaugePotentials}
    }
    \\
    \mathrm{d}\big(
      -
      \mathrm{d}s 
      \, t\mathrm{d}t
      \,
      B_2 \, C_3
    \big)
    &=&
    -
    \mathrm{d}s 
    \, t\mathrm{d}t
    \,
    B_2 \, G_4
    \hspace{4.2cm}
    -\,
    \mathrm{d}s 
    \, t\mathrm{d}t
    \,
    C'_3 \, C_3
    &
    \proofstep{
      by
      \eqref{EquivalenceOfTraditionalCFieldGaugePotentials}
      \&
      \eqref{PairsOfPotentialForms}
    }
    \\
    \cline{1-3}
    {\color{darkgreen}   
    \mathrm{d}
    \big(
    \doublehat{G}_7
    \big)
    }
    &
    {\color{darkgreen} =}
    &
    {\color{darkgreen}   
    \tfrac{1}{2}
    \doublehat{G}_4
    \,
    \doublehat{G}_4}
    &
    \proofstep{
      by
      \eqref{LiftOfGaugeTransformations} 
      \hspace{3.3cm}
      \qedhere
    }
  \end{array}
$$
\end{proof}

\medskip 
 
\subsubsection{Super Field Spaces}
\label{SuperSmoothFieldSpaces}

We discuss a notorious subtle issue of supergeometry, which is key both to the conceptual foundations of the subject as well as to its relation with observable physical reality --- such as to the question of what it actually means to {\it observe} a gravitino (or any classical fermion, for that matter). Nevertheless, since the point is somewhat tangential to our main results in \S\ref{SuperFluxQuantization} and \S\ref{Supergravity}, the reader who does not feel like bothering with the following slightly more topos-theoretic discussion can safely skip it, while we refer
the reader looking for more elaboration to \cite{GSS24}.

\medskip

The archetypical practical example of differential forms with coefficients (Def. \ref{SuperDifferentialFormsWithCoefficients}) in an odd vector space (Ex. \ref{PurelyOddVectorSpaces}) are classical fermionic fields with values in a Spin-representation $\mathbf{N}_{\mathrm{odd}}$ (regarded in odd degree). Specifically the {\it gravitino} field \eqref{GravitonAndGravitino}
in supergravity (considered in \S\ref{SuperSpaceTime} and brought to life in \S\ref{Supergravity}) is a differential 1-form with coefficients in some $\mathbf{N}_{\mathrm{odd}}$.
While a key move in \S\ref{SuperFluxQuantization} and \S\ref{Supergravity} is to discuss supergravity not on ordinary spacetime manifolds $\bos X$, but on super-manifold enhancements $\bos{X} \hookrightarrow X$  thereof, where such odd-valued 1-forms may exist (Ex. \ref{Super1FormsWithCoefficients}) as ordinary elements 
$$
  \psi 
    \,\in\, 
  \Omega^1_{\mathrm{dR}}
  \big(
    X;\, 
    \mathbf{N}_{\mathrm{odd}}
  \big)
  \,,
$$
one also does want to speak of (fermions in general and particularly) gravitinos  on an ordinary spacetime $\bos{X}$, cf. \eqref{OddFluxOnBosonicSupermanifold}. However, by Ex. \ref{OddFormsOnEvenManifold}, these do {\it not exist} as ordinary such elements, since the only element of $\Omega^1_{\mathrm{dR}}\big(\!\bos{X};
\, \mathbf{N}_{\mathrm{odd}}\big)$ is the zero 1-form.

\smallskip 
This notorious issue, which (in some guise or other) has occupied authors of texts on classical fermionic field theory in general and of supersymmetric field theory in particular, is naturally solved by our passage from plain sets to super sets (Def. \ref{SuperSmoothSets}). Namely the issue with Def. \ref{DifferentialFormsOnSuperCartesianSpaces} is simply that it defines only the {\it ordinary set} of (specifically) odd 1-forms, while these should clearly form a whole {\it super-set} instead: The odd forms on a bosonic manifold should not be the ordinary but the ``odd elements'' of the super-set that they form. 

\smallskip 
The mathematics that makes this idea a reality is known in category-theory (Rem. \ref{CategoryTheoryInTheBackground}) as the {\it internal hom}-construction (for exposition and pointers see \cite{Schreiber24}, for more on the bosonic analog see \cite[\S 2.2]{GS23}). We briefly spell this out in simple terms:
 
\begin{definition}[\bf{Smooth super mapping set}]\label{InternalHom}Let $X,Y \, \in\,  \mathrm{sSmthSet}$ be super smooth sets. The \textit{smooth super mapping set} $[X,Y]\, \in  \, \mathrm{sSmthSet}$ is defined by the assignment of plots
$$
[X,Y](\FR^{n|q}) \, \, := \, \, \mathrm{Hom}_{\mathrm{sSmthSet}}(X\times \FR^{n|q}, \, F )  \, ,
$$
where $\FR^{n|q}$ is viewed as a super smooth set via Ex. \ref{SmoothManifoldsAsSmoothSuperSets}.
\end{definition}

That is, the object $[X,Y]$ encodes not only the bare set of morphisms from $X$ to $Y$ via $[X,Y](*)$, but further \textit{defines} the smooth and super structure of the corresponding \textit{space} of morphisms. In particular, for two super-manifolds $X,Y \, \in \, \sManifolds$, this construction yields
$$
[X,Y](\FR^{n|q}) \cong  \mathrm{Hom}_{\sManifolds}(X\times \FR^{n|q}, \, Y )  \, ,
$$

\smallskip 
\noindent by the fully faithfulness of the embedding $\sManifolds\hookrightarrow \mathrm{sSmthSet}$ (Ex. \ref{SmoothManifoldsAsSmoothSuperSets}). This smooth super set may be interpreted, for instance, as the correct model for the smooth super field space of $\sigma$-models on a super-manifold $X$ with target a supermanifold $Y$. The $\FR^{n|q}$-plots given by $ \mathrm{Hom}_{\sManifolds}(X\times \FR^{n|q}, \, Y ) $ have the natural intepretation of (smoothly) $\FR^{n|q}$-parametrized maps from $X$ to $Y$.

\begin{example}[\bf{Fermionic scalar field space}]\label{FermionicScalarFields}
Consider the case of a bosonic manifold $\bos{X}\in \sManifolds$. Similar to Ex. \ref{OddFormsOnEvenManifold}, the set of maps from $\bos{X}$ to an $\textit{odd}$ vector space $V_\mathrm{odd}$ is trivial
$$
\mathrm{Hom}_{\sManifolds}\big(\bos{X}, \, V_{\mathrm{odd}}\big) \cong \mathrm{Hom}_{\mathrm{sCAlg}}\big(C^\infty(V_\mathrm{odd}), \, C^\infty(\bos{X})\big)\cong 0 \, .
$$
It follows that this bare 
\textit{set} is not quite the correct model for odd scalar (vector) fields on a bosonic spacetime, which however do appear non-trivially in the theoretical physics literature. The resolution is that the smooth super mapping set $[\bos{X}, V_\mathrm{odd}]$ is non-trivial. In particular, by computing the morphisms dually in the algebra picture, the $\FR^{0|1}$-plots of the field space are
\smallskip 
\begin{align}\label{1AuxFermCoord}
[\bos{X},V_\mathrm{odd}] \big(\FR^{0|1}\big) \cong \mathrm{Hom}_{\sManifolds}\big(\bos{X} \times \FR^{0|1}, \, V_\mathrm{odd}\big) \cong \big(C^\infty(\bos{X})\otimes \FR[\theta] \otimes V_\mathrm{odd}\big)_0 \, ,
\end{align}
which may be further identified with a copy of the usual bosonic $V$-valued smooth maps 
$$
C^\infty(\!\bos{X})\otimes V \cong \mathrm{Hom}_{\sManifolds}\big(\bos{X} , \, V \big)\, .
$$ 
General $\FR^{n|q}$-plots may be computed similarly to give
$$
\big[
  \bos{X}
  ,\,
  V_\mathrm{odd}
\big] 
(\FR^{n|q}) 
\;\cong\; 
\Big(
  C^\infty (\!\bos{X})
  \,\hat{\otimes}\, 
  C^\infty(\FR^{n|0}) 
  \,\otimes\, 
  C^\infty(\FR^{0|q}) 
  \,\otimes\, 
  V_\mathrm{odd}\Big)_0 
  \, , 
$$
where $\hat{\otimes}$ denotes the (completed) projective tensor product, so in particular $C^\infty\big(\!\bos{X}\times \FR^{n|0}\big)\cong C^\infty\big(\!\bos{X}\big) \,\hat{\otimes}\, C^\infty(\FR^{n|0})$.
\end{example}
\begin{remark}[\bf{Auxilliary fermionic coordinates}]\label{AuxilliaryFermionicCoordinates}
The extra odd `$\theta$-coordinates' appearing in the $\FR^{0|q}$-plots of super-field spaces, as in Eq. \eqref{1AuxFermCoord}, are often referred to as ``auxilliary fermionic coordinates" (see \cite{DeWitt}\cite{Rogers} \cite{Sachse08}). These are traditionally invoked in an ad-hoc manner to make certain polynomial formulas in a fermionic field $\psi$ non-trivial, that would otherwise vanish from the point-set perspective, as it happens for instance in $\big[\bos{X}, V_\odd\big](*)$ from Ex. \ref{FermionicScalarFields}. But our sheaf-topos of smooth super sets provides a natural interpretation for their appearance: they are nothing but the content of plots of the corresponding smooth super field space. It follows that the symbol $\psi$ used for fermionic fields
implicitly refers to an arbitrary $\FR^{0|q}$-plot of the smooth super field space (at the ``$q^{\mathrm{th}}$ Grassmann-stage''), or more generally, an arbitrary $\FR^{p|q}$-plot. Formulas made out of these are implicitly functorial under maps in the probe site, that is, they may naturally be interpreted as natural transformations -- maps within the category $\sSmoothSets$.
\end{remark}

In the present article, the relevant smooth super field spaces are those corresponding to forms on a super-manifold $X$, valued in super (graded) vector spaces (Def. \ref{SuperDifferentialFormsWithCoefficients}), and more generally to (closed) forms valued in super $L_\infty$-algebras (Def. \ref{ClosedLInfinityValuedDifferentialForms}). To employ the internal hom construction, one must first identify the bare set of form fields with an appropriate hom-set. One such option is as a hom-set into a classifying space from Eq. \eqref{ClassifyingSuperSetOfDifferentialForms}. However, it is easy to see that the internal hom-object
\begin{align}\label{WrongInternalHom}
\big[
  X
  ,\, 
  \Omega^1_{\mathrm{dR}}
    (
      -;
      \, 
      V
    ) 
  \big] 
  \quad \in \quad \sSmoothSets
\end{align}
does not yield the correct notion $\FR^{n|q}$-parametrized $V$-valued forms on $X$. Indeed, by the Yoneda 
Lemma the $\FR^{n|q}$-plots of this smooth super set are 
$$ \mathrm{Hom}_{\sSmoothSets}\big(
    X\times \FR^{n|q}
    ,\, 
    \Omega^1_{\mathrm{dR}}
    (
      -;
      \, 
      V
    )
  \big) \cong \Omega^1_{\mathrm{dR}}( X\times \FR^{n|q};\, V) \, ,
$$
elements of which have `form-legs along the probe space' $\FR^{n|q}$. The same issue arises, verbatim, even in the purely smooth setting (\cite{dcct}\cite{GS23}) with $V$ an even (ungraded) vector space. 

\medskip 
For the case of $V$ being a super vector space concentrated in degree 0, there are at least two (equivalent) ways to obtain the correct super smooth set structure on such form-field spaces: 
\smallskip
\begin{itemize}
\item[\bf{(i)}] By applying a certain ``concretification'' functor \cite{dcct} on the internal hom set \eqref{WrongInternalHom}, which essentially removes the form-legs along the probe space $\FR^{n|q}$. 

\item[\bf{(ii)}] By identifying the bare set of  forms with a different hom-set, that is, as fiber-wise maps out of the tangent bundle (Eq. \eqref{VectorValuedFormsAsBundleMaps}), and then considering the corresponding (fiber-wise linear) internal hom subobject (\cite{GSS24}) 
$$ [TX,V]^\mathrm{fib.lin.}\longhookrightarrow [TX,V] $$
defined by, under the Yoneda Lemma,
$$ [TX,V]^\mathrm{fib.lin.}(\FR^{n|q}):=  \mathrm{Hom}_{\sManifolds}^{\mathrm{fib.lin.}}\big(T X\times \FR^{n|q}, \, V \big) \, .$$
Spelling this out explicitly, dually in terms of function super-algebras, yields 
$$
[TX,V]^\mathrm{fib.lin.}(\FR^{n|q})   \cong
    \Big(
      \Omega^1_{\mathrm{dR}}
      (X
      )
      \,\hat{\otimes}\,
      C^\infty(  \FR^{n \vert q})\otimes 
      (V^\vee)^\ast
    \Big)_{(1,0)}\, ,
$$
\end{itemize}
which we take as the definition. Arguing along similar lines \cite{GSS24}, this naturally extends to a definition of the field space corresponding to $\mathbb{Z}$-graded super vector space valued 1-forms. 
\begin{definition}[\bf{Vector valued form field space}]
Given $X\in \sManifolds$ and $V\in \mathrm{sgMod}$, the smooth super field space  
 $\bfOmega^1_{\mathrm{dR}}\big(
    X
    ;\,
    V
  \big)\, \in \, \sSmoothSets$ of forms on $X$ valued in $V$ is defined by 
  $$
    \bfOmega^1_{\mathrm{dR}}
    (
    X
    ;\,
    V 
  )(\FR^{n|q}) 
  \, \, := \, \, 
  \Big(
      \Omega^\bullet_{\mathrm{dR}}
      (X
      )
      \,\hat{\otimes}\,
      C^\infty(  \FR^{n \vert q})\otimes 
      (V)^\ast
    \Big)_{(1,0)}\, . $$
\end{definition}

\begin{example}[\bf{Fermionic gravitino field space}]\label{FermionicGravitinoFieldSpace}
Part of the (off-shell) field space of supergravity consists of the  fermionic gravitino $\psi$, being a $1$-form valued in the odd (ungraded) vector space $\mathbf{32}_\odd$. When supergravity is formulated on a \textit{bosonic} spacetime $\bos{X}$, the bare \textit{set} such 1-forms vanishes (Ex.  \ref{OddFormsOnEvenManifold}). Nevertheless, the corresponding smooth super field space of the gravitino is non-trivial with 
\smallskip 
 $$
 \bfOmega^1_{\mathrm{dR}}\big(\!
    \bos{X}
    ;\,
    \mathbf{32}_\odd  
  \big)(\FR^{n|q}) \, \, = \, \, 
      \Big( \Omega^1_{\mathrm{dR}}
      (\!\bos{X}
      )
      \,\hat{\otimes}\,
      C^\infty(  \FR^{n \vert q})\otimes 
      \mathbf{32}_\odd 
    \Big)_{0}\, . 
$$
Thus, in the canonical (odd) basis for $\mathbf{32}_\odd$, we may write
$$
\psi^\alpha = \psi^\alpha{}_r \cdot \dd x^r
$$ for the odd $1$-form component of an arbitrary $\FR^{n|q}$-plot $\psi$ of the gravitino field space, and hence with $\psi^\alpha{}_r \in (C^\infty(\bos{X})\,\hat{\otimes}\, C^{\infty}(\FR^{n|q})\big)_1$ being implicitly an odd $\FR^{n|q}$-parametrized function on $\bos{X}$.

Along the same lines, it follows that the pullback of (super) gravitino 1-forms  along 
$\eta_X :  \bos{X}\hookrightarrow X$ is necessarily the trivial 0-map at the point-set level, but is nevertheless non-trivial 
as a map of super smooth field spaces 
\smallskip 
\begin{align*}
\eta_X^* \; : \; \bfOmega^1_{\mathrm{dR}}\big(
    X
    ;\,
    \mathbf{32}_\odd  
  \big) &
  \; \longrightarrow \; \bfOmega^1_{\mathrm{dR}}\big(\!
    \bos{X}
    ;\,
    \mathbf{32}_\odd  
  \big)\,  
  \\[3pt] 
   \psi^\alpha{}_\evencoordinateindex 
     \cdot 
   \dd x^\evencoordinateindex 
     + 
   \psi^\alpha{}_{\oddcoordinateindex} 
     \cdot 
   \dd \theta^\oddcoordinateindex 
   & 
   \; \longmapsto \;
   \psi^\alpha{}_\evencoordinateindex|_{\theta^\oddcoordinateindex=0}
     \cdot 
   \dd x^\evencoordinateindex
   \,,
\end{align*}
with the understanding that the coefficient functions are implicitly $\FR^{n|q}$-parametrized. Despite the above delicate details,
we shall conform with the standard theoretical physics literature and denote the super smooth field spaces by the corresponding
set-theoretic symbols.
\end{example}

This argument further and naturally generalizes \cite{GSS25} to a definition of the field space corresponding to (closed) $L_\infty$-algebra valued forms (Def. \ref{ClosedLInfinityValuedDifferentialForms}).
\begin{definition}[\bf{$L_\infty$-algebra valued form field space}]
Given $X\in \sManifolds$ and $\mathfrak{a}\,\in\, \mathrm{shLAlg}^{\mathrm{ft}}$, the smooth super field space  
 $\bfOmega^1_{\mathrm{dR}}(
    X
    ;\,
    \mathfrak{a}
  )_{\mathrm{clsd}}\, \in \, \sSmoothSets$
  of (closed) forms on $X$ valued in $\mathfrak{a}$ is defined by 
  \smallskip 
  $$
    \bfOmega^1_{\mathrm{dR}}
    \big(
    X
    ;\,
    \mathfrak{a}  
  \big)_{\mathrm{clsd}}(\FR^{n|q}) 
  \, \, := \, \, 
  \mathrm{MC} \Big(
      \Omega^\bullet_{\mathrm{dR}}
      (X
      )
      \,\hat{\otimes}\,
      C^\infty(  \FR^{n \vert q})\otimes 
      (\mathfrak{a}^\vee)^\ast
    \Big)_{0}\, . $$
\end{definition}
For the particular choice of a bosonic 11-dimensional spacetime $\!\bos{X}$ and $\mathfrak{a}= \mathfrak{l}S^4$,
this field space accommodates the spacetime bosonic fluxes from \eqref{OddFluxOnBosonicSupermanifold}, 
including the possible spacetime gravitino contributions at higher Grassmann-stage (Ex. \ref{FermionicScalarFields}).

\medskip

\subsubsection{Super Moduli Stacks}
\label{SuperModuliStacks}

We briefly explain  higher structures in supergeometry that allow for the discussion of flux quantization on super-spacetimes (\S\ref{SuperFluxQuantization}).
With the above ``functorial'' formulation (in the terminology of \cite{Grothendieck73}) of super-geometry in hand, it is now fairly immediate to generalize further to {\it higher} supergeometry. The basic facts from homotopy theory that we need here are all surveyed in \cite[\S 1]{FSS23Char}.
Not to overburden this paper with abstract machinery, we here briefly {\it introduce} and motivate
the key concepts of $\infty$-groupoids and $\infty$-stacks directly by way of the relevant example of moduli of super-flux densities.

\begin{example}[\bf Moduli of super-flux deformations]
\label{InfinityGroupoidOfFluxDeformations}
Given a super-manifold $X$ and a super $L_\infty$-algebra $\mathfrak{a}$,
in \eqref{CoboundaryInNonabelianDeRhamCohomology} we considered deformation paths of closed $\mathfrak{a}$-valued differential forms on $X$ looking as follows:
\begin{equation}
\label{DeformationPathInSimplicialForm}
\adjustbox{}{
$
  \begin{tikzcd}[row sep=13pt]
    \overset{
      \mathclap{
        \raisebox{10pt}{
          \scalebox{.7}{
            \color{darkblue}
            \bf
            \def\arraystretch{.9}
            \begin{tabular}{c}
              Deformation paths
              \\
              of flux densities
            \end{tabular}
          }
        }
      }
    }{
    \Omega^1_{\mathrm{dR}}
    \big(
      X
      \times
      [0,1]
      ;\, 
      \mathfrak{a}
    \big)_{\mathrm{clsd}}
    }
    \ar[
      dd,
      shift left=22pt,
      "{
        (-)_0
      }"{description},
      "{
        \scalebox{.6}{
          \color{darkgreen}
          \bf
          \def\arraystretch{.9}
          \begin{tabular}{c}
            take starting point
            \\
            of deformation path
          \end{tabular}
        }
      }"
    ]
    \ar[
      dd,
      shift right=22pt,
      "{
        (-)_1
      }"{description},
      "{
        \scalebox{.6}{
          \color{darkgreen}
          \bf
          \def\arraystretch{.9}
          \begin{tabular}{c}
            take endpoint of
            \\
            deformation path
          \end{tabular}
        }
      }"{swap, xshift=-4pt}
    ]
    \ar[
      from=dd,
      "{
        \mathrm{pr}_X^\ast
      }"{description}
    ]
    \\
    \\
    \underset{
      \mathclap{
        \raisebox{-10pt}{
          \scalebox{.7}{
            \color{darkblue}
            \bf
            \def\arraystretch{.9}
            \begin{tabular}{c}
              Flux densities satisfying
              \\
              their Bianchi identities
            \end{tabular}
          }
        }
      }
    }{
    \Omega^1_{\mathrm{dR}}
      (
        X
        ;\, 
        \mathfrak{a}
      )_\mathrm{clsd}
    }
  \end{tikzcd}
$
}
\end{equation}
But in higher gauge theory it is clearly relevant to consider not just deformations of fluxes, but also  deformations-of-defomations, and so on. An immediate idea may be to model these as closed forms parametrized over higher cubes $[0,1]^n$, but for subtle technical reasons it turns out to be equivalent but more tractable to parametrize over the higher-dimensional analogs of triangles and tetrahedra, instead, called higher ``simplices'', denoted:
\begin{equation}
  \label{GeometricSimplices}
  \Delta^n_{\mathrm{geo}}
  \;:=\;
  \Big\{
    (x^0, x^1, \cdots, x^n)
    \in 
    \big(
      \mathbb{R}_{\geq 0}
    \big)^n
    \;\Big\vert\;
    \textstyle{\sum_{i = 0}^n} 
    \,
    x^i \,=\, 1
  \Big\}
  \hspace{.6cm}
\raisebox{-45pt}{
\begin{tikzpicture}[scale=0.7]
\begin{scope}[shift={(-7,-.1)}]

\draw[
  |-Latex
] 
(0,0)-- (2.6,0);

\draw[fill=black] 
  (1.8,0) circle (0.055);

\node
  at (2.3,+.2)
  {
    \scalebox{.7}{
      $x^0$
    }
  };

\node
  at (1.8,-.35)
  {
    \scalebox{.7}{
      \color{darkblue}
      $\Delta^0_{\mathrlap{\mathrm{geo}}}$
    }
  };
\end{scope}

\begin{scope}[shift={(-3.4,-.8)}]

\draw[
  -Latex
]
  (0,0) -- (2.6,0);

\draw[
  -Latex
]
  (0,0) -- (0,2.6);

\draw[
  line width=1.3
]
  (1.7,0) -- (0,1.7);

\draw[fill=black]
  (1.7,0) circle (.055);
\draw[fill=black]
  (0,1.7) circle (.055);

\node
  at (2.3,+.2)
  {
    \scalebox{.7}{$x^0$}
  };

\node
  at (+.24, 2.3)
  {
    \scalebox{.7}{$x^1$}
  };

\node
 at (1.1,1.1)
 {
   \scalebox{.7}{
     \color{darkblue}
     $\Delta^1_{\mathrlap{\mathrm{geo}}}$
   }
 };

\end{scope}

\begin{scope}

 \draw[
   -Latex
  ]
   (0,0)
   --
   (90:2.6);

\draw[
  -Latex
]
   (0,0)
    --
   (20-10:3);

\draw[
  -Latex
]
   (0,0)
    --
   (-40-10:2.5);

\draw[
  draw=black,
  line width=1.3,
  fill=gray,
  fill opacity=.4
]
  (90:1.8)
  --
  (20-10:2)
  --
  (-40-10:1.7)
  --
  cycle;

\draw[fill=black]
  (90:1.8) circle (.055);
\draw[fill=black]
  (10:2) circle (.055);
\draw[fill=black]
  (-50:1.7) circle (.055);

\node
  at (-50+6:2.2)
  {
    \scalebox{.7}{$x^0$}
  };
\node
  at (10+4:2.7)
  {
    \scalebox{.7}{$x^2$}
  };
\node
  at (90-5:2.35)
  {
    \scalebox{.7}{$x^1$}
  };

\node
  at (1.85,-.5)
  {
    \scalebox{.7}{
      \color{darkblue}
      $\Delta^2_{\mathrlap{\mathrm{geo}}}$
    }
  };

\end{scope}
\end{tikzpicture}
}
\end{equation}

In terms of these higher simplices, the system of deformations-of-deformations of closed $\mathfrak{a}$-valued differential forms, extending \eqref{DeformationPathInSimplicialForm} to higher order, looks as follows, where we now leave the parameter super-space unspecified, denoted just by a blank:

\begin{equation}
  \label{SystemOfHigherDeformations}
  \shape
  \,
  \Omega^1_{\mathrm{dR}}\big(
    -;
    \mathfrak{a}
  \big)_{\mathrm{clsd}}
  \;
  =
  \;
  \left(
  \hspace{-8pt}
  \adjustbox{}{
  \begin{tikzcd}[row sep=12pt, column sep=large]
    &[50pt]
    {}
    \ar[
      d,
      -,
      dotted,
      shift right=24
    ]
    \ar[
      d,
      -,
      dotted,
      shift right=16
    ]
    \ar[
      d,
      -,
      dotted,
      shift right=8
    ]
    \ar[
      d,
      -,
      dotted,
      shift right=0
    ]
    \ar[
      d,
      -,
      dotted,
      shift left=8
    ]
    \ar[
      d,
      -,
      dotted,
      shift left=16
    ]
    \ar[
      d,
      -,
      dotted,
      shift left=24
    ]
    \\
    \adjustbox{}{
       \begin{tikzpicture}
         \draw
           (.02, .03) node {
             \scalebox{.5}{$3$}
           };
         \draw[gray] 
           (90:.7) to (.02, .2);
         \draw[gray] 
           (120+90:.7) to (-.06, .03);
         \draw[gray] 
           (240+90:.7) to (+.1, .04);
         \draw[
          fill=gray, 
          draw opacity=.6,
          fill opacity=.6
         ]
           (90:.7) -- (120+90:.7) -- (240+90:.7) -- cycle;
          \draw (90:.81) node {\scalebox{.7}{$1$}};
          \draw (240+90-4:.9) node {\scalebox{.7}{$2$}};
          \draw (120+90+4:.9) node {\scalebox{.7}{$0$}};
       \end{tikzpicture}
    }
    &[70pt]
    \mathllap{
          \scalebox{.7}{
            \color{darkblue}
            \bf
            \def\arraystretch{.9}
            \begin{tabular}{c}
              Deformation paths
              \\
              of deformation paths
              \\
              of deformation paths
              \\
              of flux densities
            \end{tabular}
      }    
    }
    \Omega^1_{\mathrm{dR}}\big(
      -
      \!\times \Delta^3_{\mathrm{geo}}
      ;\,
      \mathfrak{a}
    \big)_\mathrm{clsd}
    \ar[
      dd,
      shift right=51pt,
      "{
        (-)_{[1,2,3]}
      }"{description}
    ]
    \ar[
      dd,
      shift right=17.5pt,
      "{
        (-)_{[0,2,3]}
      }"{description}
    ]
    \ar[
      dd,
      shift left=17.5pt,
      "{
        (-)_{[0,1,3]}
      }"{description}
    ]
    \ar[
      dd,
      shift left=51pt,
      "{
        (-)_{[0,1,2]}
      }"{description}
    ]
    \\
    \\
    \adjustbox{}{
       \begin{tikzpicture}
         \draw[
          fill=gray,
          draw opacity=.6,
          fill opacity=.6
         ]
           (90:.7) -- (120+90:.7) -- (240+90:.7) -- cycle;
          \draw (90:.81) node {\scalebox{.7}{$1$}};
          \draw (240+90-4:.9) node {\scalebox{.7}{$2$}};
          \draw (120+90+4:.9) node {\scalebox{.7}{$0$}};
       \end{tikzpicture}
    }
    &[70pt]
    \mathllap{
          \scalebox{.7}{
            \color{darkblue}
            \bf
            \def\arraystretch{.9}
            \begin{tabular}{c}
              Deformation paths
              \\
              of deformation paths
              \\
              of flux densities
            \end{tabular}
      }    
    }
    \Omega^1_{\mathrm{dR}}\big(
      -
      \!\times \Delta^2_{\mathrm{geo}}
      ;\,
      \mathfrak{a}
    \big)_\mathrm{clsd}
    \ar[
      dd,
      shift left=15,
      "{
        (-)_{[1,2]}
      }"{description}
    ]
    \ar[
      from=dd,
      shift left=7.5
    ]
    \ar[
      dd,
      shift left=0,
      "{
        (-)_{[0,2]}
      }"{description}
    ]
    \ar[
      from=dd,
      shift right=7.5
    ]
    \ar[
      dd,
      shift right=15,
      "{
        (-)_{[0,1]}
      }"{description}
    ]
    \\
    \\
    \adjustbox{}{
       \begin{tikzpicture}
         \draw[line width=1.2, gray]
           (-.6, .07) -- (+.6,.07); 
         \draw 
           (-.75,0) node {
            \scalebox{.7}{$0$}
           };
         \draw 
           (+.75,0) node {
            \scalebox{.7}{$1$}
           };
       \end{tikzpicture}
    }
    &
    \mathllap{
          \scalebox{.7}{
            \color{darkblue}
            \bf
            \def\arraystretch{.9}
            \begin{tabular}{c}
              Deformation paths
              \\
              of flux densities
            \end{tabular}
      }    
    }
    \Omega^1_{\mathrm{dR}}
    \big(
      - 
      \!\times
      \Delta^1_{\mathrm{geo}}
      ;\, 
      \mathfrak{a}
    \big)_{\mathrm{clsd}}
    \ar[
      dd,
      shift left=18pt,
      "{
        (-)_0
      }"{description},
      "{
        \scalebox{.6}{
          \color{darkgreen}
          \bf
          \def\arraystretch{.9}
          \begin{tabular}{c}
            take starting point
            \\
            of deformation path
          \end{tabular}
        }
      }"
    ]
    \ar[
      dd,
      shift right=18pt,
      "{
        (-)_1
      }"{description},
      "{
        \scalebox{.6}{
          \color{darkgreen}
          \bf
          \def\arraystretch{.9}
          \begin{tabular}{c}
            take endpoint of
            \\
            deformation path
          \end{tabular}
        }
      }"{swap, xshift=-4pt}
    ]
    \ar[
      from=dd
    ]
    \\
    \\
    &
    \mathllap{
          \scalebox{.7}{
            \color{darkblue}
            \bf
            \def\arraystretch{.9}
            \begin{tabular}{c}
              Flux densities satisfying
              \\
              their Bianchi identities
            \end{tabular}
          }
    }
    \Omega^1_{\mathrm{dR}}
      (
        -
        ;\, 
        \mathfrak{a}
      )_\mathrm{clsd}
  \end{tikzcd}
  }
  \hspace{-8pt}
  \right)
\end{equation}
For example, the coboundaries-of-coboundaries from Def. \ref{CoboundariesOfCoboundaries} are captured this way as deformations over triangles one of whose sides is degenerate (cf. \eqref{2ndConcordanceConstantAtBoundary}):
$$
  \begin{tikzcd}[row sep=small]
    F^{(0)}
    \ar[
      rr,
      bend left=30,
      "{ \widehat{F} }",
      "{\ }"{swap, name=s}
    ]
    \ar[
      rr,
      bend right=30,
      "{ \widehat{F}' }"{swap},
      "{\ }"{name=t}
    ]
    \ar[
      from=s,
      to=t,
      Rightarrow,
      "{
        {\color{orangeii} \doublehat{F}}
      }"
    ]
    &&
    F^{(1)}
  \end{tikzcd}
  \hspace{.5cm}
  \Leftrightarrow
  \hspace{.5cm}
  \begin{tikzcd}
    & 
    F^{(0)}
    \ar[
      dr,
      "{
        \widehat{F}
      }"{pos=.4},
      "{\ }"{swap, pos=.2, name=s}
    ]
    \\
    F^{(0)}
    \ar[
      ur, 
      equals
    ]
    \ar[
      rr,
      "{
        \widehat{F}'
      }"{swap},
      "{\ }"{name=t}
    ]
    &&
    F^{(1)}
    \ar[
      from=s,
      to=t,
      Rightarrow,
      "{
       {\color{orangeii} \doublehat{F}}
      }"{swap}
    ]
  \end{tikzcd}
$$

Now for any Cartesian super-space $\mathbb{R}^{n\vert q}$ (in fact for any super-manifold $X$) the above diagram \eqref{SystemOfHigherDeformations} is a system of sets (of higher-order deformations) indexed by simplices \eqref{GeometricSimplices}, hence called a {\it simplicial set} (exposition in \cite{Friedman12} \cite[\S 2]{Jardine15}).
$$
  \begin{tikzcd}[
    column sep=50pt,
    row sep=15pt
  ]
    &
    \hspace{-2.8cm}
    \mathrlap{
      \mbox{
        \includegraphics[width=12cm]{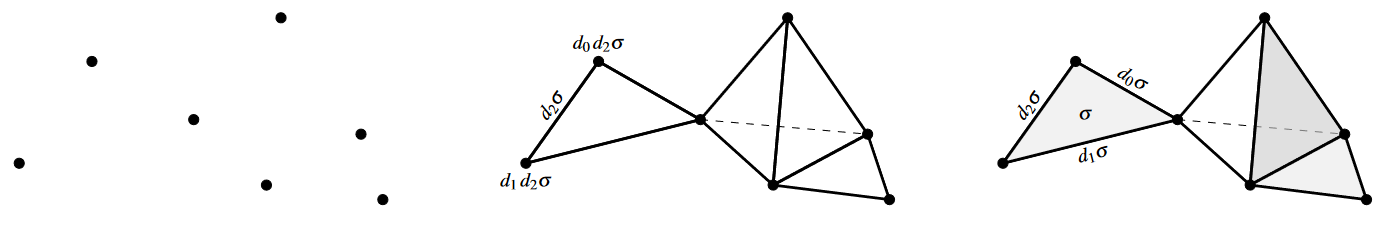}
      }
    }
    &&
    &&
    &[-25pt]
    \\[-14pt]
    \mathcal{
      \scalebox{.7}{
        \bf
        simplicial set
      }
    }
    &
    \mathclap{
      \scalebox{.7}{
        \bf
        \def\arraystretch{.9}
        \begin{tabular}{c}
          set of 
          \\
          points
        \end{tabular}
      }
    }
    \ar[
      from=rr,
      shorten=20pt,
      shift right=4pt,
      "{
        \scalebox{.6}{
          endpoints
        }
      }"{description}
    ]
    \ar[
      rr,
      shorten=20pt,
      shift right=4pt,
      "{
        \scalebox{.6}{
          constant edges
        }
      }"{description}
    ]
    &&
    \mathclap{
      \scalebox{.7}{
        \bf
        \def\arraystretch{.9}
        \begin{tabular}{c}
          set of 
          \\
          edges
        \end{tabular}
      }
    }
    \ar[
      from=rr,
      shorten=20pt,
      shift right=4pt,
      "{
        \scalebox{.6}{
          boundary edges
        }
      }"{description}
    ]
    \ar[
      rr,
      shorten=20pt,
      shift right=4pt,
      "{
        \scalebox{.6}{
          thin surfaces
        }
      }"{description}
    ]
    &&
    \mathclap{
      \scalebox{.7}{
        \bf
        \def\arraystretch{.9}
        \begin{tabular}{c}
          set of 
          \\
          triangles
        \end{tabular}
      }
    }
    \ar[
      from=r,
      shorten <=7pt,
      shorten >=22pt,
      shift right=4pt,
    ]
    \ar[
      r,
      shorten >=7pt,
      shorten <=22pt,
      shift right=4pt,
    ]
    &
    \mathclap{
      \scalebox{1.4}{$\cdots$}
    }
    \\
  \mathcal{X}
  \ar[
    r,
    phantom,
    "{ = }"
  ]
  &
  \Bigg(
    X_0
    \ar[from=rr, shift left=9pt, "{ d_0 }"{description}]
    \ar[rr, "{s_0}"{description}]
    \ar[from=rr, shift right=9pt, "{ d_1 }"{description}]
    &&
    X_1
    \ar[from=rr, shift left=18pt, "{ d_0 }"{description}]
    \ar[rr, shift left=9pt, "{ s_0 }"{description}]
    \ar[from=rr, "{d_1}"{description}]
    \ar[rr, shift right=9pt, "{ s_1 }"{description}]
    \ar[from=rr, shift right=18pt, "{ d_2 }"{description}]
    &&
    X_2
    \ar[r, shift left=21pt, -, dotted]
    \ar[r, shift left=14pt, -, dotted]
    \ar[r, shift left=7pt, -, dotted]
    \ar[r, -, dotted]
    \ar[r, shift right=7pt, -, dotted]
    \ar[r, shift right=14pt, -, dotted]
    \ar[r, shift right=21pt, -, dotted]
    &
    \scalebox{1.4}{$\cdots$}
  \Bigg)
  \end{tikzcd}
$$

\smallskip 
The particular simplicial set \eqref{SystemOfHigherDeformations} is \cite[Prop. 5.10 with Def. 9.1]{FSS23Char} {\it Kan fibrant}, which means that if one finds deformations along all faces of an $n$-simplex but the $i$th  -- called the $i$th {\it $n$-horn} $\Lambda^n_i$ -- then there exists a deformation along the full $n$-simplex.
One readily sees in dimension $n = 2$ that this condition means that deformation paths have composites if they coincide, and have inverses:
\vspace{-1mm} 
\begin{equation}
\adjustbox{raise=-1.3cm}{
\begin{tikzpicture}[scale=0.68]

\begin{scope}  
  \draw[
    draw=lightgray,
    fill=lightgray
  ]
    (0-30:1.5)
    --
    (120-30:1.5)
    --
    (240-30:1.5)
    --
    cycle;

\node
 at (-150:1.5) 
 {
   \scalebox{.7}{0}
 };
\node
 at (-150-120:1.5) 
 {
   \scalebox{.7}{1}
 };
\node
 at (-150+120:1.5) 
 {
   \scalebox{.7}{2}
 };

\begin{scope}[shift={(-150:1.5)}]
  \draw[
    line width=.8,
    -Latex 
  ]
    (60:.15) -- (60:2.45);
  \draw[
    line width=.8,
    -Latex 
  ]
    (0:.15) -- (0:2.45);
\end{scope}

\begin{scope}[shift={(+90:1.5)}]
  \draw[
    line width=.75,
    -Latex,
    gray
  ]
    (-60:.15) -- (-60:2.45);
\end{scope}

\node
 at (0,-.05)
 {
  \scalebox{.65}{$
    \Lambda^2_{\color{purple}0} 
    \subset
    \Delta^2
  $}
 };
\end{scope}

\begin{scope}[shift={(3.5,0)}]
  \draw[
    draw=lightgray,
    fill=lightgray
  ]
    (0-30:1.5)
    --
    (120-30:1.5)
    --
    (240-30:1.5)
    --
    cycle;

\node
 at (-150:1.5) 
 {
   \scalebox{.7}{0}
 };
\node
 at (-150-120:1.5) 
 {
   \scalebox{.7}{1}
 };
\node
 at (-150+120:1.5) 
 {
   \scalebox{.7}{2}
 };

\begin{scope}[shift={(-150:1.5)}]
  \draw[
    line width=.8,
    -Latex 
  ]
    (60:.15) -- (60:2.45);
  \draw[
    line width=.75,
    -Latex,
    gray
  ]
    (0:.15) -- (0:2.45);
\end{scope}

\begin{scope}[shift={(+90:1.5)}]
  \draw[
    line width=.8,
    -Latex,
  ]
    (-60:.15) -- (-60:2.45);
\end{scope}

\node
 at (0,-.05)
 {
  \scalebox{.65}{$
    \Lambda^2_{\color{purple}1} 
    \subset
    \Delta^2
  $}
 };
\end{scope}

\begin{scope}[shift={(7,0)}]
  \draw[
    draw=lightgray,
    fill=lightgray
  ]
    (0-30:1.5)
    --
    (120-30:1.5)
    --
    (240-30:1.5)
    --
    cycle;

\node
 at (-150:1.5) 
 {
   \scalebox{.7}{0}
 };
\node
 at (-150-120:1.5) 
 {
   \scalebox{.7}{1}
 };
\node
 at (-150+120:1.5) 
 {
   \scalebox{.7}{2}
 };

\begin{scope}[shift={(-150:1.5)}]
  \draw[
    line width=.75,
    -Latex,
    gray
  ]
    (60:.15) -- (60:2.45);
  \draw[
    line width=.8,
    -Latex
  ]
    (0:.15) -- (0:2.45);
\end{scope}

\begin{scope}[shift={(+90:1.5)}]
  \draw[
    line width=.8,
    -Latex,
  ]
    (-60:.15) -- (-60:2.45);
\end{scope}

\node
 at (0,-.05)
 {
  \scalebox{.65}{$
    \Lambda^2_{\color{purple}2} 
    \subset
    \Delta^2
  $}
 };
\end{scope}

\end{tikzpicture}
}
\end{equation}
A similar inspection of the 3-horns shows that these composites are associative and unital up to yet higher deformations, hence that deformations of flux densities behave much like a symmetry group, only that they may act between distinct configurations -- which makes this group a group{\it oid} -- and that the usual group laws hold only up to higher deformations -- which makes such {\it Kan simplicial sets} be models of {\it higher} groupoids or {\it $\infty$-groupoids}, for short.

\medskip 
Finally, all this structure is manifestly compatible with the pullback of differential forms along maps of Cartesian super-spaces. This means that the construction \eqref{SystemOfHigherDeformations} of higher deformations of flux densities constitutes a contravariant functor from Cartesian super-spaces to Kan-simplicial sets:
\begin{equation}
  \label{DeformationGroupoidAsFunctor}
  \shape
  \,
  \Omega^1_{\mathrm{dR}}\big(
    -;\,
    \mathfrak{a}
  \big)_{\mathrm{clsd}}
  \;:\;
  \mathrm{sCartSp}^{\mathrm{op}}
  \xlongrightarrow{\;\;}
  \mathrm{SimpSet}_{\mathrm{Kan}}
  \,.
\end{equation}
Since every ordinary set may naturally be regarded as a simplicial set all whose simplices are degenerate, this is a simplicial extension of $\Omega^1_{\mathrm{dR}}(-; \mathfrak{a})_{\mathrm{clsd}}$ \eqref{ClosedVValuedDifferentialFormsOnSmoothSuperSet}; in fact we have a natural inclusion (a natural transformation of plot-assigning functors)
\begin{equation}
  \label{ShapeUnitForDeformationModuli}
  \begin{tikzcd}[sep=0pt]
    \Omega^1_{\mathrm{dR}}(
      -
      ;\,
      \mathfrak{a}
    )_{\mathrm{clsd}}
    \ar[
      rr,
      hook,
      "{
        \eta^{\scalebox{.7}{\shape}}
      }"
    ]
    &&
    \shape
    \,
    \Omega^1_{\mathrm{dR}}(
      -
      ;\,
      \mathfrak{a}
    )_{\mathrm{clsd}}
    \\[-2pt]
    \vec F
    &\longmapsto&
    \big(p^\bullet\big)^\ast 
    \vec F
  \end{tikzcd}
\end{equation}
which for each probe super-space $\mathbb{R}^{n\vert q}$ and in each simplicial degree $n$ pulls back closed $\mathfrak{a}$-valued differential forms on $\mathbb{R}^{n \vert q}$ along the projection map $p^n : \mathbb{R}^{n \vert q} \times \Delta^n_{\mathrm{geo}} \xrightarrow{\;} \mathbb{R}^{n \vert q}$.
\end{example}

\smallskip

\noindent
{\bf Higher smooth super sets.}
Therefore we want to think of functors such as \eqref{DeformationGroupoidAsFunctor} as being just like the plot-assigning functors of 
smooth super-sets from Def. \ref{SuperSmoothSets}, but now such that these plots may have gauge transformations (homotopies) 
between them with higher-gauge-of-gauge transformations between these, etc.;  hence forming not just sets but Kan-simplicial sets, 
hence $\infty$-groupoids. With \eqref{DeformationGroupoidAsFunctor} regarded as a {\it smooth super $\infty$-groupoid} this way, 
it may serve as a ``classifying space'', or {\it moduli stack}, for deformations of super-flux densities.

\smallskip
From this perspective, a natural transformation between higher-plot assigning functors $\mathcal{X}, \mathcal{Y}
: \mathrm{sCartSp}^{\mathrm{op}} \to \mathrm{SimpSet}_{\mathrm{Kan}}$ should count as {\it identifying} $\mathcal{X}$ 
with $\mathcal{Y}$ if it does so 

{\bf (i)} {\it locally}, namely for arbitrarily ``small'' plots (germs of plots) and 

{\bf (ii)} {\it up to homotopy}, namely up to gauge transformations. 

\smallskip 
\noindent This is made precise by the following notion of {\it local homotopy equivalences} (lhe), which are the higher generalization
of the local isomorphism \eqref{LocalIsomorphisms} and constitute in more generality the basis of {\it local homotopy theory} \cite{Jardine15}.

 \newpage 
\begin{definition}[{\bf Local homotopy equivalences of higher super-plots}]
\noindent
\begin{itemize}[leftmargin=.75cm]
\item[{\bf (i)}] 
The {\it simplicial 1-simplex} is
$\Delta^1 := \big\{ 0 \to 1 \big\}
\in \mathrm{SimpSet}$.

\item[{\bf (ii)}] 
Given $\mathcal{X}, \mathcal{Y} \,\in\, \mathrm{SimpSet}_{\mathrm{Kan}}$ 
then 
\vspace{0mm} 
\begin{itemize}[
  leftmargin=.4cm
]

\item 
A {\it homotopy}
$
  \begin{tikzcd}
    \mathcal{X}
    \ar[
      rr,
      bend left=25pt,
      "{ f }",
      "{\ }"{swap, name=s}
    ]
    \ar[
      rr,
      bend right=25pt,
      "{ g }"{swap},
      "{\ }"{name=t}
    ]
    \ar[
      from=s,
      to=t,
      Rightarrow,
      "{ \eta }"
    ]
    &&
    \mathcal{Y}
  \end{tikzcd}
$
is a diagram of the form
$
  \begin{tikzcd}[
    row sep=7pt
  ]
    \mathcal{X}
    \ar[
      d,
      shorten=-3pt, 
      "{ (\mathrm{id}, 0) }"{swap}
    ]{pos=.4}
    \ar[
      drr,
      end anchor={[yshift=+3pt]},
      "{ f }"{pos=.4}
    ]
    \\
    \mathcal{X} \times \Delta^1
    \ar[
      rr,
      dashed,
      "{ 
        \eta 
      }"{pos=.3, description}
    ]
    &&
    \mathcal{Y}\;.
    \\
    \mathcal{X}
    \ar[
      u,
      shorten=-3pt,
      "{ (\mathrm{id}, 1) }"{pos=.3}
    ]
    \ar[
      urr,
      end anchor={[yshift=-3pt]},
      "{ g }"{swap, pos=.4}
    ]
  \end{tikzcd}
$
\vspace{0mm} 
\item 
A {\it homotopy equivalence} is maps
$
  \begin{tikzcd}
    \mathcal{X}
    \ar[
      r,
      bend left=20,
      "f"
    ]
    \ar[
      from=r,
      bend left=20,
      "\overline{f}"
    ]
    &
    \mathcal{Y}
  \end{tikzcd}
$
with homotopies
$
  \begin{tikzcd}
    \mathcal{X}
    \ar[
      r,
      bend left=30,
      "{ 
        \overline{f} \circ f 
      }",
      "{\ }"{swap, name=s}
    ]
    \ar[
      from=r,
      bend left=30,
      "{ 
        \mathrm{id}
      }",
      "{\ }"{swap, name=t}
    ]
    \ar[
      from=s,
      to=t,
      Rightarrow
    ]
    &
    \mathcal{X}
  \end{tikzcd}
$
and
$
  \begin{tikzcd}
    \mathcal{Y}
    \ar[
      r,
      bend left=30,
      "{ 
        f \circ \overline{f} 
      }",
      "{\ }"{swap, name=s}
    ]
    \ar[
      from=r,
      bend left=30,
      "{ 
        \mathrm{id}
      }",
      "{\ }"{swap, name=t}
    ]
    \ar[
      from=s,
      to=t,
      Rightarrow
    ]
    &
    \mathcal{Y} 
  \end{tikzcd}
$\!.
\end{itemize}
\item[{\bf (iii)}]  
Given $\mathcal{X} \in \mathrm{Func}\big(\mathrm{sCartSp}^{\mathrm{op}}, \mathrm{SimpSet}_{\mathrm{Kan}}\big)$, 
\begin{itemize}[leftmargin=.4cm]
\item
its {\it $(n\vert q)$-stalks} is the simplicial set of equivalence classes $\mathrm{Plt}\big(\mathbb{R}^{n\vert q},\, \mathcal{X}\big)\big/\sim$, 

where plots $\phi \sim \phi'$ iff they agree on an open neighborhood of the origin,
\end{itemize}

  \smallskip  
\item[{\bf (iv)}]  
Given $\mathcal{X}, \mathcal{Y} \in \mathrm{Func}\big(\mathrm{sCartSp}^{\mathrm{op}}, \mathrm{SimpSet}_{\mathrm{Kan}}\big)$ then:
\vspace{1mm} 
\begin{itemize}[leftmargin=.4cm]
\item
The {\it $(n\vert q)$-germ} of a map
$f : \mathcal{X} \xrightarrow{} \mathcal{Y}$ is 
its (co)restriction to the $(n \vert q)$-stalks of $\mathcal{X}$ and $\mathcal{Y}$,
\item
A {\it local homotopy equivalence} is a map 
$\!
  \begin{tikzcd}[column sep=12pt]
    \mathcal{X}\!
    \ar[
      r,
      "f"
    ]
    &\!
    \mathcal{Y}
  \end{tikzcd}
\!$
which on $(n\vert q)$-stalks is part of a homotopy equivalence.
\end{itemize}
\end{itemize}
\end{definition}

With this we may introduce higher super-spaces in the guise of {\it smooth super $\infty$-groupoids} in direct analogy with the definition of smooth super-sets in Def. \ref{SuperSmoothSets}, just with sets of plots replaced by Kan-simplicial sets, and with local isomorphism replaced by local homotopy equivalences (see \cite{Jardine15} and specifically \cite[pp. 41]{FSS23Char} for details):

\begin{definition}[{\bf Smooth super $\infty$-groupoids} {\cite[\S 3.1.3]{SS20Orb}}]
\label{SmoothSuperInfinityGroupoids}
The simplicial category of {\it smooth super $\infty$-groupoids} is the simplicial localization of the higher super-plot assigning functors at the local homotopy equivalences (Def. \ref{SmoothSuperInfinityGroupoids}) between them:
\begin{equation}
  \label{SimplicialCategoryOfSmoothSuperInfinityGroupoids}
  \mathrm{sSmthGrpd}_\infty
  \;:=\;
  L^{\mathrm{lhe}}
  \,
  \mathrm{Func}\big(
    \mathrm{sCartSp}^{\mathrm{op}}
    \,,\,
    \mathrm{SimpSet}_{\mathrm{Kan}}
  \big)
  \,,
\end{equation} 
which means, in particular, that:
\begin{itemize}[leftmargin=.65cm]
\item[\bf (i)] Smooth super $\infty$-groupoids $\mathcal{X}$ are (represented by) functors
\begin{equation}
  \label{PlotsOfSmoothSuperInfinityGroupoids}
  \begin{tikzcd}[row sep=-2pt, column sep=small]
    \mathrm{sCartSp}^{\mathrm{op}}
    \ar[
      rr
    ]
    &&
    \mathrm{SimpSet}_{\mathrm{Kan}}
    \\
    \mathbb{R}^{n \vert q}
    &\longmapsto&
    \mathrm{Plt}\big(
      \mathbb{R}^{n \vert q}
      ,\,
      \mathcal{X}
    \big)
  \end{tikzcd}
\end{equation}

\vspace{-2mm} 
\noindent 
which we think of as assigning to a Cartesian super-space $\mathbb{R}^{n \vert q}$ the Kan-simplicial set ($\infty$-groupoid) of ways-and-their-higher-equivalences of mapping it into the would-be smooth super $\infty$-groupoid $\mathcal{X}$.
\item[\bf (ii)] If $\mathcal{Y}$ is {\it projectively fibrant} (which we do not further explain here, 
see \cite[Ex. 1.20]{FSS23Char}, but which is the case for all examples considered here) then maps $\mathcal{X} \xrightarrow{\;} \mathcal{Y}$ of smooth super $\infty$-groupoids are natural transformations between these plot-assigning functors of the form
\vspace{-1mm} 
\begin{equation}
  \label{ZigZagOfSimplicialPresheaves}
  \begin{tikzcd}
    \mathcal{X}
    \ar[
      from=r,
      "{
        \mathrm{lheq}
      }"{},
      "{
        p
      }"{swap}
    ]
    &
    \widehat{\mathcal{X}}
    \ar[
      r,
      "{ f }"
    ]
    &
    \mathcal{Y}
  \end{tikzcd}
\end{equation}
where the left one is a local homotopy equivalence. 
\end{itemize}
\end{definition}
The situation \eqref{ZigZagOfSimplicialPresheaves} means that for representing all maps between smooth super $\infty$-groupoids, the domain $\mathcal{X}$ may first need to be ``puffed up'' by a locally homotopy equivalent ``resolution'' $\widehat{\mathcal{X}}$ which supports more ``homotopical freedom'' for the map to act. Among all resolutions, there are universal ones, called {\it projectively cofibrant}, which are guaranteed to support all maps. Without going into their theory here (for details see \cite[pp. 43]{FSS23Char}) we state the one example needed here (cf. \cite[Ex. 3.3.44]{SS21EBund}), being the higher generalization of Ex. \ref{OperCoverOfSupermanifoldAsLocalResolution}:

\begin{example}[\bf {\v C}ech resolution of supermanifold]
  \label{CechResolutionOfSupermanifold}
  Let $X$ be a supermanifold equipped with an open cover $\big\{ U_i \xhookrightarrow{ \iota_i } X \big\}_{i \in I}$.
  We obtain a smooth super $\infty$-groupoid $\widehat X$ \eqref{PlotsOfSmoothSuperInfinityGroupoids} whose $\mathbb{R}^{n\vert q}$-plots form the simplicial set of ways of mapping $\mathbb{R}^{n \vert q}$ into any one of the charts $U_i$, with gauge transformations being the transitions to overlapping charts:
  
  \smallskip 
  \begin{equation}
    \label{PlotsOfCechResolution}
    \mathrm{Plt}\big(
      \mathbb{R}^{n \vert q}
      ,\,
      \widehat X
    \big)
    \;\;
    :=
    \;\;
    \left(\!\!\!
    \begin{tikzcd}[
      row sep=10pt
    ]
      {}
      \ar[
        d,
        dotted,
        shift left=60pt
      ]
      \ar[
        from=d,
        dotted,
        shift right=40pt
      ]
      \ar[
        d,
        dotted,
        shift left=20pt
      ]
      \ar[
        from=d,
        dotted,
        shift right=0pt
      ]
      \ar[
        d,
        dotted,
        shift left=-20pt
      ]
      \\
      \mathrm{Hom}_{\mathrm{sSmthMfd}}\Big(
        \mathbb{R}^{n \vert q}
        ,\,
        \underset{i_1, i_2 \in I}{\coprod}
        U_{i_1}
        \cap 
        U_{i_2}
      \Big)
      \ar[
        d, 
        shift left=40pt
      ]
      \ar[
        d, 
        shift left=0pt
      ]
      \ar[
        from=d,
        shift right=20pt
      ]
      \\
      \mathrm{Hom}_{\mathrm{sSmthMfd}}\Big(
        \mathbb{R}^{n \vert q}
        ,\,
        \underset{i \in I}{\coprod}
        U_i
      \Big)
    \end{tikzcd}
   \!\!\! \right)
    \,
  \end{equation}
  If the open cover is chosen to be {\it super-differentiably good} \cite[\S A]{FSS12}\cite[Def. 5.3.1]{GuilleminHaine73} in that all finite intersections of charts $U_{i_1} \cap \cdots \cap U_{i_n}$ are either empty or super-diffeomorphic to a Cartesian super space, then this constitutes a cofibrant resolution of $X$, in that, particularly, all maps out of $X$ of the form appearing in \eqref{FluxQuantizedSugraFields} are represented by natural transformations (of plot-assigning functors) out of $\widehat{X}$.
\end{example}

Where the previous example imports domain spacetimes into higher super-geometry; the following example does the same for coefficients of ``ordinary'' cohomology:

\begin{example}[{\bf Dold-Kan construction}, e.g. {\cite[Ex. 1.30]{FSS23Char}}]
  \label{DoldKanConstruction}
  For $n \in \mathbb{N}$, write 
   $$
     N_\bullet(\Delta^n)
     \;\in\;
     \mathrm{SimpAb}
   $$
   for the {\it normalized chain complex} of the $n$-simplex, which in degree $k$ is the free abelian group on the non-degenerate $k$-simplices in $\Delta^n$ with differential given by the alternating sum of the face maps.

  Then for 
  $$
    A_\bullet \,\in\, \mathrm{Ch}_{\geq 0}\Big(\mathrm{Ab}\big(\mathrm{Sh}(\mathrm{sCartSp})\big)\Big)
  $$
  a chain complex (in non-negative degrees, with differential of degree -1) of sheaves of abelian groups, we obtain a smooth super $\infty$-groupoid
  $$
    H A_\bullet
    \;\in\;
    \mathrm{sSmthGrpd}_\infty
  $$
  whose $k$-simplices of plots are the images of $N_\bullet(\Delta^k)$ in $A_\bullet$:
  $$
    \mathrm{Plt}\big(
      \mathbb{R}^{n\vert q}
      ,\,
      H A_\bullet
    \big)_k
    \;\;
      :=
    \;\;
    \mathrm{Hom}_{\mathrm{sAb}}
    \big(
      N_\bullet(\Delta^k)
      ,\,
      A_\bullet(\mathbb{R}^{n \vert q})
    \big)
    \,.
  $$
\end{example}

The final class of examples of smooth super $\infty$-groupoids relevant to our purpose is the following:
\begin{example}[{\bf Path $\infty$-groupoids} {(cf. \cite[p. 144]{SS21EBund})}]
  \label{PathInfinityGroupoids}
  For $A$ a topological space, its {\it path $\infty$-groupoid} is the smooth $\infty$-groupoid --- here to be denoted $\mathcal{A}$ --- whose plots, independently of the probe space, form the traditional {\it singular simplicial complex} of $A$, hence the simplicial set of continuous images of geometric $n$-simplicies \eqref{GeometricSimplices} (i.e. order-$n$ paths of continuous paths) in $A$:
  \begin{equation}
    \mathrm{Plt}\big(
      \mathbb{R}^{n \vert q}
      ;\,
      \mathcal{A}
    \big)
    \;\;
    :=
    \;\;
    \left(
    \begin{tikzcd}[row sep=14pt]
      {}
      \ar[
        d,
        dotted,
        shift left=+30pt
      ]
      \ar[
        d,
        dotted,
        <-,
        shift left=+20pt
      ]
      \ar[
        d,
        dotted,
        shift left=+10pt
      ]
      \ar[
        d,
        dotted,
        <-,
        shift left=0
      ]
      \ar[
        d,
        dotted,
        shift left=-10pt
      ]
      \ar[
        d,
        dotted,
        <-,
        shift left=-20pt
      ]
      \ar[
        d,
        dotted,
        shift left=-30pt
      ]
      \\
      \mathrm{Hom}_{\mathrm{TopSp}}
      \big(
        \Delta^2_{\mathrm{geo}}
        ,\,
        \mathcal{A}
      \big)
      \ar[
        d,
        shift left=30pt
      ]
      \ar[
        from=d,
        shift right=15pt
      ]
      \ar[
        d,
        shift left=0pt
      ]
      \ar[
        from=d,
        shift right=-15pt
      ]
      \ar[
        d,
        shift left=-30pt
      ]
      \\
      \mathrm{Hom}_{\mathrm{TopSp}}\big(
        \Delta^1_{\mathrm{geo}}
        ,\,
        A
      \big)
      \ar[
        d,
        shift left=20pt
      ]
      \ar[
        from=d,
        shift right=0pt
      ]
      \ar[
        d,
        shift left=-20pt
      ]
      \\
      \mathrm{Hom}_{\mathrm{TopSp}}\big(
        \Delta^0_{\mathrm{geo}}
        ,\,
        A
      \big)
    \end{tikzcd}
    \right).
  \end{equation}
\end{example}

\medskip

\subsubsection{Super Flux Quantization}
\label{BackgroundOnFluxQuantization}

With the super-moduli constructions of \S\ref{SuperModuliStacks} in hand, homotopical flux quantization on super-manifolds --- as shown in \eqref{TheGaugePotentials} ---  follows verbatim by the same rules \cite{FSS23Char} as on 
ordinary manifolds. Here we briefly review a couple of illustrating examples and then the 
specialization to the half-integral flux quantization of the C-field (\cite{FSS20-H}\cite{FSSHopf}\cite{FSS22Twistorial}, surveyed in \cite[\S 12]{FSS23Char});
for more expository survey see \cite{SS24Flux}.

\medskip

\noindent
{\bf Dirac charge quantization in homotopical language.}
To begin with, it helps to understand ordinary Dirac charge quantization in this language. For this purpose, first recall the formulation of abelian gauge fields via {\v C}ech cohomology (first highlighted in \cite{Alvarez85a}\cite{Alvarez85b}):

The ordinary integral cohomology of $X$ in degree 2 (classifying usual Dirac monopole charges) is computed by the {\it {\v C}ech cohomology} with respect to a differentiably good open cover $\big\{ U_i \xhookrightarrow{\iota_i} X \big\}_{i \in I}$ (cf. Ex. \ref{CechResolutionOfSupermanifold}), namely by assignments of integers $c_{i j k} \in \mathbb{Z}$ to the non-empty triple intersections $U_i \cap U_j \cap U_k$ such that on all non-trivial quadruple intersections $U_i \cap U_j \cap U_k \cap U_l$ the cocycle conditon $c_{i j l} + c_{j k l} = c_{i j k} +  c_{i k l}$ holds, and subject to the equivalence relation $\{c_{i j k}\}_{i, j, k} \sim \{c'_{i j k}\}_{i, j, k}$ iff there exist integers $h_{i j}$ for all non-empty double intersections $U_i \cap U_k$ such that $c'_{i j k} = c_{i j k} + h_{i j} + h_{j k} - h_{i k}$.

Direct inspection shows that this data may neatly be re-packaged by 
writing $\widehat X$ for the {\v C}ech groupoid on the open cover (Ex. \ref{CechResolutionOfSupermanifold}) and $H \mathbb{Z}[2]$ for the Dold-Kan construction (Ex. \ref{DoldKanConstruction}) of the chain complex concentrated on $\mathbb{Z}$ in degree 2
$$
  \mathbb{Z}[2]
  \;:=\;
  \big[
    \begin{tikzcd}
      \ar[r, dotted]
      &
      0
      \ar[r]
      &
      0
      \ar[r]
      &
      \mathbb{Z}
      \ar[r]
      &
      0
      \ar[r]
      &
      0
    \end{tikzcd}
  \big]
  \,,
$$
in terms of which the above {\v C}ech cocycles are just maps of smooth $\infty$-groupoids modeled as simplicial sheaves (Def. \ref{NonabelianCohomology}) from $\widehat X$ to $H \mathbb{Z}[2]$, and coboundaries are simplicial homotopies between these (see \cite[\S 2]{FSS13CupCS}\cite[Rem. 3.3.45]{SS21EBund} and \cite{Schreiber22} for more exposition of this translation):

\vspace{-5pt}
$$
  \begin{tikzcd}[
    column sep=40pt
  ]
  \\
    \widehat{X}
    \ar[
      rr,
      bend left=30,
      "{ c }"{description,pos=.63},
      "{\ }"{swap, name=s, pos=.67},
      "{
        \mathclap{
        \scalebox{.7}{
          \color{darkblue}
          \bf {\v C}ech cocycle
        }
        }
      }"{pos=.6}
    ]
    \ar[
      rr,
      bend right=30,
      "{
        c'
      }"{description, pos=.65},
      "{\ }"{name=t, pos=.67},
      "{
        \mathclap{
        \scalebox{.7}{
          \color{darkblue}
          \bf {\v C}ech cocycle
        }
        }
      }"{swap,pos=.6}
    ]
    &&
    H \mathbb{Z}[2]
    &[-43pt]
    \;\simeq\;
    B^2 \mathbb{Z}
    \ar[
      from=s, to=t,
      shift left=2pt,
      Rightarrow,
      "{ h }",
      "{
        \scalebox{.65}{
          \color{orangeii}
          \bf
          \def\arraystretch{.85}
          \def\tabcolsep{-2pt}
          \begin{tabular}{c}
            {\v C}ech
            \\
            coboundary
          \end{tabular}
        }
      }"{swap, pos=.45}
    ]
  \end{tikzcd}
$$
But this means that the {\it Eilenberg-MacLane space} which is alternatively denoted as
$$
  B^2 \mathbb{Z}
  \;:=\;
  K(\mathbb{Z}, 2)
  \;=\;
  H \mathbb{Z}[2]
$$ 
serves as the classifying space for ordinary integral cohomology in degree 2 (cf. \cite[Ex. 2.1]{FSS23Char}) 
$$
  H^2\big(
    X
    ;\,
    \mathbb{Z}
  \big)
  \;:=\;
  H^1\big(
    X
    ;
    B\mathbb{Z}
  \big)
  \;\simeq\;
  \pi_0
  \,
  \mathrm{Map}
  \big(
    \widehat{X}
    ,\,
    H \mathbb{Z}[2]
  \big)
  \,.
$$

In the same manner, there is the classifying space $H \mathbb{R}[2]$ for ordinary real cohomology. However, in this case, its defining chain complex is actually quasi-isomorphic to the 2-shifted de Rham complex (due to the Poincar{\'e} lemma) whose Dold-Kan construction is, in turn, equivalent to the moduli of 2-flux deformations from Ex. \ref{InfinityGroupoidOfFluxDeformations}, as indicated in the following diagram (cf.\cite[Lem. 9.2]{FSS23Char}):
$$
\hspace{-2mm}
  \begin{tikzcd}[
    column sep=20pt
  ]
    &[-21pt]
    \mathrm{Plt}
    \big(
      \mathbb{R}^{n\vert q}
      ,\,
      H \mathbb{R}[2]
    \big)
    \ar[
      d,
      "{
        \sim
      }"{swap, sloped}
    ]
    &[-22pt] 
      =
    &[-22pt]
    H 
    \big[
    &[-30pt]
      \mathbb{R}
    \ar[r]
    \ar[
      d,
      hook
    ]
    &[-4pt]
      0
    \ar[r]
    \ar[d]
    &[-4pt]
      0
    \ar[d]
    &[-38pt]
    \big]
    \\
    \mathrm{Plt}
    \big(
      \mathbb{R}^{n\vert q}
      ,\,
      B^2 \mathbb{R}
    \big)
    \ar[
      ur,
      "{ \sim }"{sloped}
    ]
    \ar[
      from=r,
      shorten=-2pt,
      "{ \sim }"{sloped}
    ]
    \ar[
      dr,
      "{ \sim }"{sloped}
    ]
    &
    \mathrm{Plt}
    \big(
      \mathbb{R}^{n\vert q}
      ,\,
      H \Omega^\bullet_{\mathrm{dR}}[2]
    \big)
    &=&
    H
    \big[
    &
      \Omega^0_{\mathrm{dR}}
      \big(
        \mathbb{R}^{n \vert q}
      \big)
      \ar[
        r,
        "{
          \mathrm{d}
        }"
      ]
      &
      \Omega^1_{\mathrm{dR}}
      \big(
        \mathbb{R}^{n \vert q}
      \big)
      \ar[
        r,
        "{ \mathrm{d} }"
      ]
      &
      \Omega^2_{\mathrm{dR}}
      \big(
        \mathbb{R}^{n \vert q}
      \big)_{\mathrm{clsd}}
      &
    \big]
    \\
    &
    \hspace{-24pt}
    \mathrm{Plt}
    \big(
      \mathbb{R}^{n\vert q}
      ,\,
      \shape
      \,
      \Omega^1_{\mathrm{dR}}
      (-;b \mathbb{R})_{\mathrm{clsd}}
    \big)  
    \ar[
      u,
      "{ \sim }"{sloped}
    ]
    &=&
    &
    \hspace{-11pt}
    \Omega^2_{\mathrm{dR}}
    \big(
      \mathbb{R}^{n\vert q}
      \!\times\! 
      \Delta^2_{\mathrm{geo}}
    \big)_{\mathrm{clsd}}
    \ar[
      u,
      "{
        \int_{\Delta^2_{\mathrm{geo}}}
      }"{swap}
    ]
    \ar[
      r,
      shift left=7pt
    ]
    \ar[
      r,
      shift left=0pt
    ]
    \ar[
      r,
      shift left=-7pt
    ]
    &
    \Omega^1_{\mathrm{dR}}
    \big(
      \mathbb{R}^{n\vert q}
      \!\times\! 
      \Delta^1_{\mathrm{geo}}
    \big)_{\mathrm{clsd}}
    \ar[
      u,
      "{
        \int_{\Delta^1_{\mathrm{geo}}}
      }"{swap}
    ]
    \ar[
      r,
      shift left=3.5pt
    ]
    \ar[
      r,
      shift left=-3.5pt
    ]
    &
    \Omega^0_{\mathrm{dR}}
    \big(
      \mathbb{R}^{n\vert q}
      \!\times\! 
      \Delta^0_{\mathrm{geo}}
    \big)_{\mathrm{clsd}}
    \ar[
      u,
      equals
    ]
  \end{tikzcd}
$$
A compatible model for $B^2 \mathbb{Z}$ is obtained via
$$
  \begin{tikzcd}[
    row sep=1pt,
    column sep=45pt
  ]
    \mathrm{Plt}\big(
      \mathbb{R}^{n \vert q}
      ,\,
      H \mathbb{Z}[2]
    \big)
    \ar[
      dd,
      "{ \sim }"{sloped}
    ]
   \ar[
     dddd,
     rounded corners,
     to path={
           ([xshift=-00pt]\tikztostart.west)  
        -- ([xshift=-12pt]\tikztostart.west)  
        -- node[xshift=-9pt] {
          \scalebox{.7}{\colorbox{white}{
            $
              \mathrm{Plt}\big(
                \mathbb{R}^{n \vert q}
                ,\,
                \mathbf{ch}
              \big)
            $
            }
          }
        }
           ([xshift=-07pt]\tikztotarget.west)  
        -- ([xshift=-00pt]\tikztotarget.west)  
     }
   ]
   &[-30pt]
   =
   &[-30pt]
   H\big[
   &[-33pt]
   \mathbb{Z}
   \ar[
     d,
     "{
       \scalebox{.7}{$
       \def\arraystretch{.9}
       \begin{array}{c}
         n
         \\
         \rotatebox{-90}{$
           \mapsto
        $}
         \\
         (n,n)
       \end{array}
       $}
     }"
   ]
   \ar[
     r
   ]
   &
   0
   \ar[d]
   \ar[r]
   &
   0
   \ar[d]
   &[-35pt]
   \big]
    \\[20pt]
    &
    &&
    \mathbb{Z}
    \ar[
      r,
      hook
    ]
    &
    \Omega^0_{\mathrm{dR}}\big(
      \mathbb{R}^{n \vert q}
    \big)
    \ar[
      r,
      "{
        \mathrm{d}
      }"
    ]
    &
    \Omega^1_{\mathrm{dR}}\big(
      \mathbb{R}^{n \vert q}
    \big)
    \ar[
      ddd,
      "{
        \mathrm{d}
      }"
    ]
    \\[-15pt]
    \mathrm{Plt}\big(
      \mathbb{R}^{n \vert q}
      ,\,
      H \widehat{\mathbb{Z}[2]}
    \big)
    \ar[
      dd,
      "{
        \mathrm{fib}
      }"{description}
    ]
    &=&
    H
    \Bigg[
    &
    \oplus
    &
    \oplus
    &
    &
    \Bigg]
    \\[-15pt]
    &
    &&
    \Omega^0_{\mathrm{dR}}\big(
      \mathbb{R}^{n \vert q}
    \big)
    \ar[
      r,
      "{
        \mathrm{d}
      }"
    ]
    \ar[
      uur,
      "{
        \mathllap{-}\mathrm{id}
      }"{description}
    ]
    \ar[
      d,
      "{
        \mathrm{pr}_2
      }"
    ]
    &
    \Omega^1_{\mathrm{dR}}\big(
      \mathbb{R}^{n \vert q}
    \big)
    \ar[
      uur,
      "{
        \mathrm{id}
      }"{description}
    ]
    \ar[
      d,
      "{
        \mathrm{pr}_2
      }"
    ]
    \\[20pt]
    \mathrm{Plt}\big(
      \mathbb{R}^{n \vert q}
      ,\,
      H \Omega^\bullet_{\mathrm{dR}}[2]
    \big)
    &=&
    H\big[
    &
    \Omega^0_{\mathrm{dR}}\big(
      \mathbb{R}^{n \vert q}
    \big)
    \ar[
      r,
      "{ \mathrm{d} }"
    ]
    &
    \Omega^1_{\mathrm{dR}}\big(
      \mathbb{R}^{n \vert q}
    \big)
    \ar[
      r,
      "{ \mathrm{d} }"
    ]
    &
    \Omega^2_{\mathrm{dR}}\big(
      \mathbb{R}^{n \vert q}
    \big)
    &
    \big]
    \,,
  \end{tikzcd}
$$
where $H \widehat{\mathbb{Z}[2]}$ serves as a {\it fibrant resolution} (a degreewise surjective chain map) of the character map from (coefficient spaces for) integral to de Rham cohomology. This has the effect that the homotopy pullback along the character map may be computed as an ordinary fiber product with this resolution (by the model-category theoretic arguments reviewed in \cite[\S 1]{FSS23Char}).

\medskip
Therefore, with these equivalent models, the defining diagram \eqref{TheGaugePotentials} for gauge potentials induced by the 2-flux quantization given by the space $\mathcal{A} \defneq B^2 \mathbb{Z}$ 
$$
  \begin{tikzcd}[column sep=large]
    &
    \big(B^2 \mathbb{Z}\big)_{\mathrm{diff}}
    \ar[r]
    \ar[
      d,
      "{
        \mathbf{ch}
      }"
    ]
    \ar[
      dr,
      phantom,
      "{
        \scalebox{.6}{
          (pb)
        }
      }"
    ]
    &[-10pt]
    B^2 \mathbb{Z}
    \ar[d]
    \\
    X
    \ar[
      r,
      "{
        F_2
      }"{swap}
    ]
    \ar[
      ur,
      dashed,
      "{
        \widehat{A}
      }"
    ]
    &
    \Omega^1_{\mathrm{dR}}
    \big(
      -;
      b\mathbb{R}
    \big)_{\mathrm{clsd}}
    \, \ar[
      r,
      shorten <=-5pt,
      "{
        \eta^{\scalebox{.6}{$\shape$}}
      }"{swap}
    ]
    &
    \shape
    \,
    \Omega^1_{\mathrm{dR}}
    \big(
      -;
      b\mathbb{R}
    \big)_{\mathrm{clsd}}
  \end{tikzcd}
$$
is now modeled by the Dold-Kan construction applied to the fiber product of the above chain complexes. But this is manifestly the Deligne complex in degree 2 (\cite[Ex. 9.4, Prop. 9.5]{FSS23Char} ):
\begin{equation}
  \label{ObtainingTheDeligneComplexInDegree2}
  (B^2 \mathbb{Z})_{\mathrm{diff}}
  \;\;\;\simeq\;\;\;
  \Omega^2_{\mathrm{dR}}(-)_{\mathrm{clsd}}
  \quad 
  \underset{
    \;
    \mathclap{
      H \Omega^\bullet_{\mathrm{dR}}[2]
    }
    \;
  }{\times}
  \quad 
  H \widehat{\mathbb{Z}[2]}  
  \;\;\;=\;\;\;
    H\big[
      \underbrace{
      \mathbb{Z}
      \xhookrightarrow{\;}
      \Omega^0_{\mathrm{dR}}(-)
      \xrightarrow{\mathrm{d}}
      \Omega^1_{\mathrm{dR}}(-)
      }_{
        \scalebox{.7}{
          \color{darkblue}
          \bf
          Deligne complex
        }
      }
    \big]
    \,.
\end{equation}
This means  (cf. \cite[\S 2]{FSS13CupCS}\cite{FSS15-Stacky}) that the gauge potentials $\widehat{A}$ are locally 1-forms $A$ which globally glue to constitute connections on $\mathrm{U}(1)$-principal bundles, as it should be for the electromagnetic field subject to Dirac charge quantization.

\medskip

\noindent
{\bf RR-Flux quantization in homotopical language.}
In variation of this situation, consider now the duality-symmetric RR-field flux densities in type IIB supergravity, first for vanishing B-field. These are closed differential forms in every odd degree, hence characterized by the direct sum $L_\infty$-algebra $\underset{k \in \mathbb{N}}{\oplus} b^{2k} \mathbb{R}$ of higher line $L_\infty$-algebras, in that
$$
  \underset{k \in \mathbb{N}}{\oplus}
  \;
  \Omega^{2k+1}_{\mathrm{dR}}\big(
    -
  \big)_{\mathrm{clsd}}
  \;\;=\;\;
  \Omega^1_{\mathrm{dR}}\big(
    -
    ;\,
    \underset{k \in \mathbb{N}}{\oplus}
    b^{2k} \mathbb{R}
  \big)_{\mathrm{clsd}}
  \,.
$$
Evident choices of  topological spaces with this algebra as their Whitehead $L_\infty$-algebra are 
\begin{itemize}[
  leftmargin=.6cm,
  topsep=1pt,
  itemsep=2pt
]
\item[\bf (i)] the product $\mathcal{A} \,\defneq\, \underset{k \in \mathbb{N}}{\prod} B^{2k+1} \mathbb{Z}$ of integral Eilenberg-MacLane spaces in every positive odd degree;

\item[\bf (ii)] the classifying space $\mathcal{A} \,\defneq\, \mathrm{ku}_1 \,\simeq\, \mathrm{U}$ of complex topological K-theory in degree=1.
\end{itemize}

\smallskip

\noindent
The first choice leads, in direct generalization of the previous example \eqref{ObtainingTheDeligneComplexInDegree2} to flux quantization in even-periodic ordinary differential cohomology (cf. \cite[Prop. 9.5]{FSS23Char}), while the second choice leads to flux quantization in differential K-theory (cf. \cite[Ex. 9.2]{FSS23Char}).

However, if the B-field flux $H_3$ is not assumed to vanish, then the duality-symmetric RR-flux densities instead satisfy the Bianchi identity
$$
  \mathrm{d}H_3
  \;=\;
  0
  \,,
  \;\;\;\;\;
  \mathrm{d}
  \, 
  F_{2k+1}
  \;=\;
  H_3 \, F_{2 k-1}
$$
whose characteristic $L_\infty$-algebra is the Whitehead $L_\infty$-algebra of the classifying space $\mathcal{A} \,\defneq\, \mathrm{k u}_1 \!\sslash\! \mathrm{PU}$ for  {\it 3-twisted} K-theory \cite[Lem. 2.31]{BMSS19}\cite[Ex. 6.6]{FSS23Char}\cite[Ex. 4.5.4]{SS21EBund} (with no direct analog for ordinary integral cohomology):
\begin{equation}
  \label{TwistedKTheory}
  \mathllap{
    \scalebox{.7}{
      \color{darkblue}
      \bf
      \def\arraystretch{.9}
      \begin{tabular}{c}
        twisted
        \\
        K-theory
      \end{tabular}
    }
  }
  \mathrm{KU}^{1 + b_2}(X)
  \;=\;
  \left\{\!\!\!\!
  \adjustbox{raise=8pt}{
  \begin{tikzcd}[column sep=huge]
    &&
    \mathrm{ku}_1 \!\sslash\! \mathrm{PU}
    \ar[d]
    \\
    X 
    \ar[
      rr,
      "{
        b_2
      }",
      "{
        \scalebox{.7}{
          \color{darkgreen}
          \bf
          \def\arraystretch{.9}
          \begin{tabular}{c}
            background 
            \\
            B-field charge
          \end{tabular}
        }
      }"{swap}
    ]
    \ar[
      urr,
      dashed,
      "{
        \scalebox{.7}{
          \color{orangeii}
          \bf
          RR-field charge
        }
      }"{sloped}
    ]
    &&
    B \mathrm{PU}
  \end{tikzcd}
  }
  \!\!\!\!\right\}_{\!\!\!\Big/\mathrm{rel.hmtp.}}
\end{equation}
For this choice the induced gauge potentials according to diagram \eqref{TheGaugePotentials} are cocycles in twisted differential K-theory \cite[Ex. 11.2-3]{FSS23Char}, as assumed by the widely discussed {\it Hypothesis K} that D-brane charge is quantized in K-theory (\cite{GS-RR}, for further pointers and references see \cite[\S 4.1]{SS24Flux}).

\smallskip
On the backdrop of these examples, we turn to the case of interest here:
\smallskip

\noindent
{\bf Shifted C-field flux quantization.}
The plain Bianchi identities for the duality-symmetric C-field flux densities happen to be characterized by the Whitehead $L_\infty$-algebra of the 4-sphere \cite[\S 2.5]{Sati13}\cite[Ex. 5.3]{FSS23Char}
\smallskip 
$$
  \Omega^1_{\mathrm{dR}}
  \big(
    -;\,
    \mathfrak{l}S^4
  \big)
  \;\;
  \simeq
  \;\;
  \left\{
    \def\arraystretch{1}
    \begin{array}{l}
      G_4 \,\in\, \Omega^4_{\mathrm{dR}}(-)
      \\
      G_7 \,\in\, \Omega^7_{\mathrm{dR}}(-)
    \end{array}
    \middle\vert
    \def\arraystretch{1}
    \begin{array}{l}
    \mathrm{d}
    \,
    G_4 \;=\; 0
    \,
    \\
    \mathrm{d}
    \, G_7
    \;=\;
    \frac{1}{2} G_4 \, G_4
    \end{array}
  \right\}
  \,,
$$

\smallskip 
\noindent whence a compatible choice of flux quantization law for the C-field is in {\it 4-CoHomotopy} (''Hypothesis H'' \cite{FSS20-H} \cite{FSSHopf}), whose classifying space $\mathcal{A} \,\defneq\, S^4$ is the homotopy type of the 4-sphere.

Similar to the above case of twisted K-theory \eqref{TwistedKTheory}, there is an evident twisting of Cohomotopy cohomology theory via group actions on the classifying space $S^4$. Evident group actions on $S^4$ are induced by its various coset space realizations, such as $S^4 \,\simeq\, \mathrm{O}(5)/\mathrm{O}(4) \,\simeq\, \mathrm{Spin}(5)/\mathrm{Spin}(4)$. Since the corresponding twists are classified by $B \mathrm{Spin}(4)$ they ought to be regarded as tangential twists expressing a coupling of gravity to the C-field, quantized in 4-Cohomotopy twisted by a $\mathrm{Spin}(5)$-structure $\tau$ on spacetime (\cite[\S 2]{FSS20-H}):

\smallskip 
\begin{equation}
  \mathllap{
    \scalebox{.7}{
      \color{darkblue}
      \bf
      \def\arraystretch{.9}
      \begin{tabular}{c}
        twisted 
        \\
        Cohomotopy
      \end{tabular}
    }
  }
  \pi^\tau(X)
  \;\;:=\;\;
  \left\{\!\!
  \begin{tikzcd}[
    column sep=20pt
  ]
    &&
    { S^4 }
    \!\sslash\!
    { \mathrm{Spin}(5) }
    \ar[
      d
    ]
    \\
    X
    \ar[
      rr,
      "{ \tau }",
      "{
        \scalebox{.7}{
          \color{darkgreen}
          \bf
          \def\arraystretch{.9}
          \begin{tabular}{c}
            background 
            \\
            grav. charge
          \end{tabular}
        }
      }"{swap}
    ]
    \ar[
      dr,
      "{
        \vdash \mathrm{Fr}_X
      }"{swap, sloped}
    ]
    \ar[
      urr,
      dashed,
      "{
        \scalebox{.7}{
          \color{orangeii}
          \bf
          C-field charge
        }
      }"{sloped}
    ]
    &&
    B \mathrm{Spin}(5)
    \ar[
      dl
    ]
    \\
    &
    B \mathrm{Spin}(1,2)
    \times B\mathrm{Spin}(8)
    \,.
  \end{tikzcd}
  \!\!\right\}_{\!\!\!\Big/{\mathrm{rel.hmtp.}}}
\end{equation}
Indeed, one finds \cite[\S 3.4, Prop. 3.13]{FSS20-H} that the character map on $\mathcal{A} \,\defneq\,  S^4 \!\sslash\! \mathrm{Spin}(5)$ lands in differential 4-forms which satisfy the notorious half-shifted flux quantization expected \cite[(1.2)]{Witten97a}\cite[(1.2)]{Witten97b} for the M-theory C-field, shifted by a quarter of the Pontrjagin form $p_1$ of the tangent bundle of spacetime:
\begin{equation}
  \label{HalfIntegrallyShifted4Form}
  {[G_4, G_7]}
  \;\in\;
  \mathrm{im}
  \Big(
    \pi^\tau(X)
    \xrightarrow{\mathrm{ch}}
    H^\tau_{\mathrm{dR}}
    \big(
      X;
      \,
      \mathfrak{l}S^4
    \big)
  \Big)
  \;\;\;\;
    \Rightarrow
  \;\;\;\;
  [G_4 + \tfrac{1}{4}p_1]
  \;\in\;
  H^4(X;\mathbb{Z})
  \xrightarrow{\;}
  H^4_{\mathrm{dR}}(X)
  \,.
\end{equation}
This is one strong indication among several others (cf. review in \cite[\S 12]{FSS23Char}\cite[\S 4.2]{SS24Flux}) that the assumption of C-field flux quantization in (twisted) Cohomotopy (``Hypothesis H'') captures the non-perturbative aspects of C-field flux in M-theory.

\smallskip

\noindent
{\bf Shifted flux quantization as higher curvature correction.}
However, for the present purpose of comparing strictly to 11d supergravity, it must be noted that the half-integral shift \eqref{HalfIntegrallyShifted4Form}
is a first {\it higher curvature correction} from the point of view of supergravity \cite{Tsimpis04b}. Indeed, one finds that the assumption on the left of \eqref{HalfIntegrallyShifted4Form} also implies that the Bianchi identity for $G_7$ receives a correction by an 8-form proportional to the ``1-loop term'' $I_8$ \cite[Prop. 3.8]{FSS20-H}\cite[\S 5.3]{FSS23Char}, which is expected to be the next higher curvature correction in 11d SuGra \cite[(56)]{HT03}\cite[(4.11)]{SoueresTsimpis17}, cf. \cite[Rem. 7]{SS21M5Anomaly}.
However, the supersymmetric form of these higher curvature corrections to 11d SuGra remains incompletely understood to date.
It is expected \cite{CGNT05} that to realize them requires relaxing the torsion constraint which otherwise drives the theory (cf. Rem. \ref{FormOfTheSuperTorsionConstraint} below).

\medskip 
Therefore, in \S\ref{Supergravity} below we consider 11d SuGra in the absence of higher curvature corrections. It would be interesting to generalize this discussion to higher curvature-corrected superspace supergravity, but that is beyond the scope of the present article. Key examples, beyond flat spacetime, of supergravity solutions whose Pontrjagin forms vanish, so that the difference becomes insubstantial, are (cf. \cite[Prop. 22]{SS21M5Anomaly})  the Freund-Rubin compactications $\mathrm{AdS}_{p+2} \times S^{D-p+2}$ \cite{FreundRubin80}, here for $p = 2$ or $p = 5$. 
 
\medskip 
In conclusion:

\begin{example}[\bf Nonabelian cohomology of smooth super $\infty$-groupoids]
\label{NonabelianCohomology}
 For $X$ a supermanifold, the homotopy classes of maps of smooth super $\infty$-groupoids 
 \eqref{SimplicialCategoryOfSmoothSuperInfinityGroupoids}
 into classifying objects reflect  the generalized nonabelian cohomology of $X$ \cite[\S 2]{FSS23Char}\cite[p. 6]{SS20Orb}:
 \begin{center}
 \small 
 \def\arraystretch{1.5}
 \begin{tabular}{ccc}
   \hline
   $\mathcal{A}$ 
   &
   $\pi_0 \, \mathrm{sSmthGrpd_\infty}(X,\mathcal{A})$
   & {\bf Cohomology theory}
   \\
   \hline
   \hline
   \rowcolor{lightgray}
   $B G$ 
   & 
   $H^1(X;G)$
   &
   Ordinary nonabelian cohomology
   \\
   $B^n \mathbb{Z}$ 
   &
   $H^n(X; \mathbb{Z})$
   & 
   Ordinary integral cohomology
   \\
   \rowcolor{lightgray}
   $B \mathrm{U}
   \times \mathbb{Z}$
   &
   $\mathrm{K}(X)$
   &
   Complex K-theory
   \\
   $\mathrm{M}\mathrm{U}_n$
   &
   $\mathrm{MU}^n(X)$
   &
   Complex Cobordism cohomology
   \\
   \rowcolor{lightgray}
   $S^n$ & $\pi^n(X)$ & Cohomotopy
   \\
   $\mathcal{A}_0 \cong 
   B \big(\Omega \mathcal{A}\big)$
   &
   $H^1\big(X; \Omega \mathcal{A}\big)$
   &
   Generalized nonabelian cohomology
   \\
   \rowcolor{lightgray}
   $\shape\,\Omega^1_{\mathrm{dR}}(\mbox{-};\mathfrak{a})_{\mathrm{clsd}}$
   &
   $H^1_{\mathrm{dR}}(X;\mathfrak{a})$
   &
   Nonabelian de Rham cohomology
   \\
   \hline
 \end{tabular}
 \end{center}
\end{example}

\medskip 
This is the basis for flux-quantization on superspacetime as discussed in \S\ref{SuperFluxQuantization}.

\medskip

Here we do not further dwell on the question of which {\it choice} of flux quantization to make for the C-field in 11d SuGra and what the consequences of these choices on the global field content are
(this is surveyed in \cite[\S 4.3]{SS24Flux}). 

\smallskip

\noindent
Instead, the upshot here is that {\it every} choice of flux quantization $\mathcal{A}$ (subject to $\mathfrak{l}\mathcal{A} \simeq \mathfrak{l}S^4$) lifts to super-space, since 

\smallskip 
\begin{itemize}[leftmargin=.7cm]
\item[\bf (i)] the super-flux densities $(G_4^s, G_7^s)$ \eqref{SuperFluxDensitiesInIntroduction} satisfy the same kind of duality-symmetric Bianchi identity \eqref{SuperCFieldBianchiInIntro} as their ordinary bosonic components -- iff the super-spacetime is a solution of 11d SuGra (Thm. \ref{11dSugraEoMFromSuperFluxBianchiIdentity}),
hence they still constitute flat differential forms with coefficients in the ``M-theory gauge $L_\infty$-algebra'' $\mathfrak{l}S^4$ (Ex. \ref{Rational4Sphere}, \ref{ClosedlS4ValuedDifferentialForms}), now on super-spacetime.

\item[\bf (ii)] The differential homotopy-theory of quantization (\cite{FSS23Char}) of such $L_\infty$-algebra valued flux densities lifts to higher supergeometry as just indicated (more details in \cite{GSS25}).

\item[\bf (iii)] At the same time, irrespective of the choice of compatible flux quantization law, homotopy-theoretically defined globally-defined C-field gauge potentials \eqref{FluxQuantizedSugraFields} {\it locally} still look as expected (Prop. \ref{RecoveringTraditionalSuperCFieldGaugePotentials}, \ref{CoboundariesOfClosedLS4ValuedForms}).
\end{itemize}

\medskip
\noindent
In short, this means that the higher super Cartan geometry discussed here is a proper context for discussing the (UV-)completion of 11d supergravity.

\subsection{Super Spacetime Geometry}
\label{SuperSpaceTime}

Here we specialize the general super geometry from \S\ref{HigherSuperGeometry} to the super-Poincar{\'e} Cartan geometry of super-spacetimes. Much of the discussion applies to all dimensions $D$ and spinor representations (``number of supersymmetries'') $\mathbf{N}$, but for definiteness we specialize to the case of present interest, where $D = 11$ and $\mathbf{N} = \mathbf{32}$ (from which most other supergravity theories are obtained by dimensional reduction, anyway).
We will not shy away from recalling basics; our aim is to record all the details that make the delicate proof of Thm. \ref{11dSugraEoMFromSuperFluxBianchiIdentity} below self-contained and thus readily verifiable.

\medskip

\S\ref{TheSpinRep} -- Majorana Spinors in $D = 11$.

\S\ref{SuperFrameAndSupergravityFields} -- Super-Frame and Supergravity Fields.

\subsubsection{Majorana Spinors in $D=11$}
\label{TheSpinRep}

For reference, we spell out basic definitions and relations concerning the irreducible real (``Majorana'') spinor representation $\mathbf{32}$ of $\mathrm{Spin}(1,10)$. 
Everything here is standard, but in totality not easily referenced; we spell out some of the arguments for completeness. Similar reviews may in parts be found in \cite[\S 2.5]{MiemiecSchnakenburg06} \cite[\S A.1]{HSS19}, whose Clifford algebra conventions agree with the one used here \eqref{TheCliffordAlgebra}. Beware that a different (but easily related) convention is used in \cite{CDF91} and related literature (Rem. \ref{CliffordConventionsInCDF}).

\begin{remark}[\bf Commuting spinors]
  \label{CommutingSpinors}
  Throughout this section, the symbol ``$\psi$'' denotes a generic element in the ordinary vector space (in even super-degree, cf. Def. \ref{SuperVectorSpaces}) underlying the $\mathrm{Spin}(1,10)$-representation $\mathbf{32}$ (which we recall below). 

  \smallskip 
  \noindent {\bf (i)}  This is in contrast to the corresponding elements in the super-vector space $\FR^{1,10\vert \mathbf{32}}$, where the copy of $\mathbf{32}$ is in odd super-degree, $\mathbf{32}_{\mathrm{odd}}$.

\smallskip
\noindent {\bf (ii)} 
On the other hand, the (component) gravitino 1-forms \eqref{GravitonAndGravitino} in $\Omega^1_{\mathrm{dR}}\big(X^D; \mathbf{32}_{\mathrm{odd}}\big)$ (see Def. \ref{SuperDifferentialFormsWithCoefficients}, Def. \ref{FermionicGravitinoFieldSpace})
  again commute {\it among each other}, because their commutator picks up one sign from $\mathbf{32}_{\mathrm{odd}}$ being in odd degree and {\it another} sign from the form degree 1 being odd:
  \smallskip 
  \begin{equation}
    \label{GravitinoFormsCommutingAmongEachOther}
    \psi^\alpha, \psi^\beta
    \;\in\;
    \Omega^1_{\mathrm{dR}}\big(
      -
      ;\,
      \mathbf{32}_{\mathrm{odd}}
    \big)
    \;\;\;\;\;\;
      \Rightarrow
    \;\;\;\;\;\;
    \psi^\alpha
    \,
    \psi^\beta
    \;=\;
    \psi^\beta
    \,
    \psi^\alpha
    \;\;\;\;
    \in
    \;
    \Omega^2_{\mathrm{dR}}\big(
      -
      ;\,
      \mathbf{32}_{\mathrm{odd}}
      \otimes
      \mathbf{32}_{\mathrm{odd}}
    \big)
    \,.
  \end{equation}
  (This is the case independently of the super-homological sign rule being used, cf. Rem. \ref{SignsInHomotopicalSuperAlgebra}.)

\smallskip     
\noindent {\bf (iii)}  Therefore, all the statements about multilinear expressions on $\mathbf{32}$ in the following hold verbatim whether the symbol ``$\psi$'' that enters them is regarded as an element of $\mathbf{32}$ or as an element of $\Omega^1_{\mathrm{dR}}\big(-; \mathbf{32}_{\mathrm{odd}}\big)$, and for this reason it is not only harmless but in fact suggestive to use the same symbol ``$\psi$'' in both cases --- as is usual in the supergravity literature.
\end{remark}

\noindent
{\bf Octonionic spinors.}
In analogy to how spinors in $D=4$ are controlled by $4 \times 4$ Dirac matrices with coefficients in the  complex numbers $\mathbb{C}$, so spinors in $D = 11$ are controlled by Dirac-like $4 \times 4$ matrices with coefficients in the algebra of octonions $\mathbb{O}$ 
(this is due to \cite{KT83}, expanded on in \cite{BH11}, we follow \cite[Ex. A.12]{HSS19}\cite[\S 3.2]{FSS21-SU2}).

We do not need octonion algebra anywhere else in the article, but here it serves to neatly establish the all-important existence of the real $\mathrm{Spin}(1,10)$-representation $\mathbf{32}$ \eqref{TheSpinRepresentation} with its bilinear form \eqref{SpinorPairing}.

\smallskip

\noindent
\hspace{-.1cm}
\def\tabcolsep{0pt}
\begin{tabular}{p{11cm}l}
The $\FR$-algebra $\mathbb{O} \cong_{{}_\FR} \FR^8$ of {\it octonions} is generated by seven elements $\mathrm{e}_1, \cdots, \mathrm{e}_7$ subject to the relations  $\mathrm{e}_i \cdot \mathrm{e}_i = -1$, for all $i \in \{1, \cdots, 7\}$,
and 
$$
  \begin{array}{l}
  a \cdot b = c,
  \;\;
  c \cdot a = b
  ,\;\;
  b \cdot c = a
  \,,
  \;\;\;\;
  b \cdot a = - c
  \end{array}
$$
for every consecutive pair of arrows $a \to b \to c$ in the diagram on the right.
This becomes a real star-algebra under $1^\ast = 1$, $\mathrm{e}_i^\ast = - \mathrm{e}_i$, and it becomes a real inner product space with $\langle v,w\rangle = \mathrm{Re}\big( v^\ast \cdot w \big)$.

For any imaginary octonion $v\in \mathrm{Im}(\mathbb{O}) := \FR\langle \mathrm{e}_1, \cdots \mathrm{e}_7 \rangle \subset \mathbb{O}$ (i.e., excluding a scalar summand), 
we write
$$
  L_{v}
  \;:\;
  \mathbb{O} \to \mathbb{O}
$$
for its left multiplication action. From the above relations, one finds that these operators represent the Clifford algebra $\mathrm{C}\ell\big(\mathrm{Im}(\mathbb{O}), -\vert\mbox{--}\vert^2\big)$:
$$
  L_{v} \circ L_v
  \;=\;
  -\vert v\vert^2 \,
  \mathrm{id}_{\mathbb{O}}
$$
and that 
\begin{equation}
  \label{LeftProductOfAllUnitQuaternions}
  L_{\mathrm{e}_7}
  L_{\mathrm{e}_6}
  L_{\mathrm{e}_5}
  L_{\mathrm{e}_4}
  L_{\mathrm{e}_3}
  L_{\mathrm{e}_2}
  L_{\mathrm{e}_1}
  \;=\;
  \mathrm{id}_{\mathbb{O}}
  \,.
\end{equation}
&
\hspace{25pt}
\adjustbox{
  scale=.65,
  raise=-4.7cm
}{
\begin{tikzpicture}

 \begin{scope}[rotate={(-20)}]
 \draw[->, -{>[scale=1.24]}, >=Latex, line width=1.2pt]
   (0+4:2) arc (0+4:-120:2);
 \draw[->, -{>[scale=1.24]}, >=Latex, line width=1.2pt]
   (-120+4:2) arc (-120+4:-240:2);
 \draw[->, -{>[scale=1.24]}, >=Latex, line width=1.2pt]
   (-240+4:2) arc (-240+4:-360:2);
 \end{scope}

 \begin{scope}[rotate={(+0)}]
 \draw[white, line width=2.8pt]
   (90+60:-4) to (90+60:-1);
 \draw[->, -{>[scale=1.24]}, >=Latex, line width=1.2pt]
   (90+60:2) to (90+60:.7);
 \draw[->, -{>[scale=1.24]}, >=Latex, line width=1.2pt]
   (90+60:1.24) to (90+60:-2.8);
 \draw[->, -{>[scale=1.24]}, >=Latex, line width=1.2pt]
   (90+60:-2.4) to (90+60:-4);
 \end{scope}

 \begin{scope}[rotate={(+120)}]
 \draw[white, line width=2.8pt]
   (90+60:-4) to (90+60:-1);
 \draw[->, -{>[scale=1.24]}, >=Latex, line width=1.2pt]
   (90+60:2) to (90+60:.7);
 \draw[->, -{>[scale=1.24]}, >=Latex, line width=1.2pt]
   (90+60:1.24) to (90+60:-2.8);
 \draw[->, -{>[scale=1.24]}, >=Latex, line width=1.2pt]
   (90+60:-2.4) to (90+60:-4);
 \end{scope}

 \begin{scope}[rotate={(-120)}]
 \draw[white, line width=2.8pt]
   (90+60:-4) to (90+60:-1);
 \draw[->, -{>[scale=1.24]}, >=Latex, line width=1.2pt]
   (90+60:2) to (90+60:.7);
 \draw[->, -{>[scale=1.24]}, >=Latex, line width=1.2pt]
   (90+60:1.24) to (90+60:-2.8);
 \draw[->, -{>[scale=1.24]}, >=Latex, line width=1.2pt]
   (90+60:-2.4) to (90+60:-4);
 \end{scope}

 \draw[->, -{>[scale=1.24]}, >=Latex, line width=1.1pt]
   (3.45,-2) to ({3.45/2},-2);
 \draw[->, -{>[scale=1.24]}, >=Latex, line width=1.1pt]
   ({3.45/2+.4},-2) to ({-3.45/2-.2},-2);
 \draw[line width=1.1pt]
   ({-3.45/2+.4},-2) to (-3.45,-2);

 \begin{scope}[rotate=(+120)]
 \draw[->, -{>[scale=1.24]}, >=Latex, line width=1.1pt]
   (3.45,-2) to ({3.45/2},-2);
 \draw[->, -{>[scale=1.24]}, >=Latex, line width=1.1pt]
   ({3.45/2+.4},-2) to ({-3.45/2-.2},-2);
 \draw[line width=1.1pt]
   ({-3.45/2+.4},-2) to (-3.45,-2);
 \end{scope}

 \begin{scope}[rotate=(-120)]
 \draw[->, -{>[scale=1.24]}, >=Latex, line width=1.1pt]
   (3.45,-2) to ({3.45/2},-2);
 \draw[->, -{>[scale=1.24]}, >=Latex, line width=1.1pt]
   ({3.45/2+.4},-2) to ({-3.45/2-.2},-2);
 \draw[line width=1.1pt]
   ({-3.45/2+.4},-2) to (-3.45,-2);
 \end{scope}

 \draw[fill=white] (90+60:2) circle (.6);
 \draw[fill=white] (90-60:2) circle (.6);
 \draw[fill=white] (90-60-120:2) circle (.6);

 \draw[fill=white] (90-60-120:0) circle (.6);

 \draw[fill=white] (90+60:-4) circle (.6);
 \draw[fill=white] (90-60:-4) circle (.6);
 \draw[fill=white] (90-60-120:-4) circle (.6);

 \draw (90+180:2) node
   {
     \scalebox{1.2}{
     $
       {\rm e}_1
     $
     }
   };
 \begin{scope}[shift={(0,-.36)}]
 \draw (90+180:2) node
   {
     \scalebox{.75}{
     $
       = {\rm i}
     $
     }
   };
 \end{scope}

 \draw (90+60:2) node
   {
     \scalebox{1.2}{
     $
       {\rm e}_2
     $
     }
   };
 \begin{scope}[shift={(0,-.36)}]
 \draw (90+60:2) node
   {
     \scalebox{.75}{
     $
       = {\rm j}
     $
     }
   };
 \end{scope}

 \draw (90-60:2) node
   {
     \scalebox{1.2}{
     $
       {\rm e}_3
     $
     }
   };
 \begin{scope}[shift={(0,-.36)}]
 \draw (90-60:2) node
   {
     \scalebox{.75}{
     $
       = {\rm k}
     $
     }
   };
 \end{scope}

 \draw (90-60:0) node
   {
     \scalebox{1.2}{
     $
       {\rm e}_4
     $
     }
   };
 \begin{scope}[shift={(0,-.36)}]
 \draw (90-60:0) node
   {
     \scalebox{.75}{
     $
       = \ell
     $
     }
   };
 \end{scope}

 \draw (90+60:-4) node
   {
     \scalebox{1.2}{
     $
       {\rm e}_6
     $
     }
   };
 \begin{scope}[shift={(0,-.36)}]
 \draw (90+60:-4) node
   {
     \scalebox{.75}{
     $
       = {\rm j} \ell
     $
     }
   };
 \end{scope}

 \draw (90-60:-4) node
   {
     \scalebox{1.2}{
     $
       {\rm e}_7
     $
     }
   };
 \begin{scope}[shift={(0,-.36)}]
 \draw (90-60:-4) node
   {
     \scalebox{.75}{
     $
       = {\rm k} \ell
     $
     }
   };
 \end{scope}

 \draw (90+180:-4) node
   {
     \scalebox{1.2}{
     $
       \mathrm{e}_5
     $
     }
   };
 \begin{scope}[shift={(0,-.36)}]
 \draw (90+180:-4) node
   {
     \scalebox{.75}{
     $
       = {\rm i} \ell
     $
     }
   };
 \end{scope}

 \draw[fill=green, draw opacity=0, fill opacity=.15]
   (90+60:2) circle (.6);
 \draw[fill=green, draw opacity=0, fill opacity=.15]
   (90-60:2) circle (.6);
 \draw[fill=lightgray, draw opacity=0, fill opacity=.2]
   (90-60-120:2) circle (.6);
 \draw[fill=green, draw opacity=0, fill opacity=.15]
   (90-60-120:2) circle (.6);

 \draw[fill=cyan, draw opacity=0, fill opacity=.15]
   (90-60-120:0) circle (.6);

 \draw[fill=cyan, draw opacity=0, fill opacity=.15]
   (90+60:-4) circle (.6);
 \draw[fill=cyan, draw opacity=0, fill opacity=.15]
   (90-60:-4) circle (.6);
 \draw[fill=cyan, draw opacity=0, fill opacity=.15]
   (90-60-120:-4) circle (.6);

\end{tikzpicture}
}
\end{tabular}

In this form, the $\mathrm{Pin}^+(1,10)$-Clifford algebra 
\eqref{TheCliffordAlgebra}
is naturally realized by ``octonionic Dirac matrices'', namely by the following
$4 \times 4$ octonionic matrices:
\begin{equation}
  \label{OctonionicCliffordReprsentation}
  \def\arraystretch{1.1}
  \def\arraycolsep{00pt}
  \begin{array}{ccccccc}
    \Gamma_a
    &\in&
    \mathrm{End}(\FR^2)
    &\otimes&
    \mathrm{End}(\FR^2)
    &\otimes&
    \mathrm{End}(\mathbb{O})
    \\
    \Gamma_0
    &=&
    J 
    &\otimes&
    1
    &\otimes&
    1
    \\
    \Gamma_1
    &=&
    \epsilon
    &\otimes&
    \tau
    &\otimes&
    1
    \\
    \Gamma_2
    &=&
    \epsilon 
    &\otimes&
    \epsilon
    &\otimes&
    1
    \\
    \Gamma_{2+i}
    &=&
    \epsilon
    &\otimes&
    J
    &\otimes&
    L_{\mathrm{e}_{i}}
    \\
    \Gamma_{10}
    &=&
    \tau
    &\otimes&
    1
    &\otimes&
    1
    \mathrlap{\,,}
  \end{array}
\end{equation}
where
$$
 \tau
 \;:=\;
 \bigg(
 \def\arraycolsep{3pt}
 \def\arraystretch{1.2}
 \begin{array}{cc}
   1 & 0 
   \\
   0 & -1
 \end{array}
 \bigg)
 \,,
 \hspace{.4cm}
 \epsilon
 \;:=\;
 \bigg(
 \def\arraycolsep{3pt}
 \def\arraystretch{1.2}
 \begin{array}{cc}
   0 & 1 
   \\
   1 & 0
 \end{array}
 \bigg)
 \,,
 \hspace{.4cm}
 J
 \;:=\;
 \tau
 \cdot
 \epsilon
 \;:=\;
 \bigg(
 \def\arraycolsep{3pt}
 \def\arraystretch{1.2}
 \begin{array}{cc}
   0 & 1 
   \\
   -1 & 0
 \end{array}
 \bigg)
 \,.
$$
and thus canonically represented on 
\begin{equation}
  \label{TheSpinRepresentation}
  \mathbf{32}
  \;:=\;
  \mathbb{O}^4
  \cong_{{}_{\mathbb{R}}}
  \FR^{32}
  .
\end{equation}

\medskip

\begin{lemma}[{\bf Hodge duality for Clifford basis elements}, {e.g. \cite[Prop. 6]{MiemiecSchnakenburg06}}]
\label{HodgeDualityForCliffordBasisElements}
For $p \in \{1, 2, \cdots, 11\}$, we have
\begin{equation}
  \label{HodgeDualityOnCliffordAlgebra}
  \Gamma^{a_1 \cdots a_p}
  \;=\;
  \tfrac{
    (-1)^{
      (p+1)(p-2)/2
    }
  }{
    (11-p)!
  }
  \,
  \epsilon^{ 
    a_1 \cdots a_p
    \,
    b_1 \cdots a_{11-p}
  }
  \,
  \Gamma_{b_1 \cdots b_{11-p}}
  \,.
\end{equation}
\end{lemma}
\noindent
For instance:
\begin{equation}
  \label{ExamplesOfHodgeDualCliffordElements}
  \def\arraystretch{1.6}
  \def\arraycolsep{10pt}
  \begin{array}{l}
    \Gamma^{
      a_1 \cdots a_{11}
    }
    =
    \epsilon^{a_1 \cdots a_{11}}
    \mathrm{Id}_{\mathbf{32}}
    \,,
    \;\;\;\;\;\;\;
    \Gamma^{a_1 \cdots a_6}
    \;=\;
    +
    \tfrac{
      1
    }{
      5!
    }
    \,
    \epsilon^{
      a_1 \cdots a_6
      \,
      \color{darkblue}
      b_1 \cdots b_5
    }
    \,
    \Gamma_{
      \color{darkblue}
      b_1 \cdots b_5
    }
    \,,
    \\
    \Gamma^{a_1 \cdots a_{10}}
    =
    \epsilon^{
      a_1 \cdots a_{10} 
      \color{darkblue}
      b
    }
    \,
    \Gamma_{
      \color{darkblue}
      b
    }
    \,,
    \;\;\;\;\;\;\;\;
    \Gamma^{a_1 \cdots a_5}
    \;=\;
    -
    \tfrac{
      1
    }{
      6!
    }
    \,
    \epsilon^{
      a_1 \cdots a_5
      \,
      \color{darkblue}
      b_1 \cdots b_6
    }
    \,
    \Gamma_{
      \color{darkblue}
      b_1 \cdots b_6
    }
    \,.
  \end{array}
\end{equation}
\begin{proof}
Using \eqref{OctonionicCliffordReprsentation} with \eqref{LeftProductOfAllUnitQuaternions}, 
we find
  \begin{align*}
  \Gamma_{10}
  \cdot
  \Gamma_{9}
  \cdot
  \Gamma_{8}
  \cdot
  \Gamma_{7}
  \cdot
  \Gamma_{6}
  \cdot
  \Gamma_{5}
  \cdot
  \Gamma_{4}
  \cdot
  \Gamma_{3}
  \cdot
  \Gamma_{2}
  \cdot
  \Gamma_{1}
  &
  \;=\;
  \tau \epsilon^3 J
  \,\otimes\,
  J \epsilon \tau
  \,\otimes\,
  1
 \;=\;
  -
  \mathrm{Id}_{\mathbf{32}}
  \,.
  \end{align*}
Switching the order of the factors produces another sign $(-1)^{10\cdot 11/2} = -1$, so that
\begin{equation}
  \label{OrderedVolumeGamma}
  \Gamma_0 
  \cdot 
  \Gamma_1 
  \cdot
  \Gamma_2
  \cdots
  \Gamma_{10}
  \;=\;
  +1
  \,,
\end{equation}
and hence
\begin{equation}
  \label{VolumeGamma}
  \Gamma_{
    a_1 \cdots a_{11}
  }
  \;=\;
  \epsilon_{a_1 \cdots a_{11}}
  \,
  \mathrm{Id}_{\mathbf{32}}
  \,.
\end{equation}
With this we compute as follows:
$$
  \def\arraystretch{1.7}
  \begin{array}{lll}
    \Gamma^{a_1 \cdots a_p}
    &    
    \;=\;
    \tfrac
      {-1}
      { (11-p)! }
    \epsilon^{
      {\color{darkblue}
      b_1 \cdots b_{11-p}
      }
      \,
      {
      \color{darkgreen}
      a_p a_{p-1} \cdots a_{1}
      }
    }
    \Gamma_{
      \color{darkblue}
      b_1 \cdots b_{11-p}
    }
    \underbrace{
    \Gamma_{
      \color{darkgreen}
      a_p a_{p-1} \cdots a_1
    }
    }_{
      \;\;\;
      \scalebox{.6}{
        no sum
      }
    }
    \Gamma^{a_1 a_{2} \cdots a_{p}}
    &
    \proofstep{
      by 
      \eqref{VolumeGamma}
      \& \eqref{NormalizationOfLeviCivitaSymbol}
    }
    \\[-10pt]
    &    
    \;=\;
    \,
    \tfrac{
      -1
    }{
      (11-p)!
    }
    \,
    \epsilon^{
      {
        \color{darkblue}
        b_1 
          \cdots 
        b_{11-p}
      }
      \,
      a_{ p } 
        \cdots 
      a_{ 1 }
    }
    \,
    \Gamma_{
      \color{darkblue}
      b_1 \cdots b_{11-p}
    }
    &
    \proofstep{
      by
      $\Gamma_{a_{\sigma(i)}} \Gamma^{a_{\sigma(i)}} = 1$
    }
    \\
    &
    \;=\;
    \tfrac{
      -(-1)^{p(p-1)/2}
    }{(11-p)!}
    \,
    \epsilon^{
      {
        \color{darkblue}
        b_1 \cdots b_{11-p} 
      }
      a_1 \cdots a_p
    }
    \,\Gamma_{
      \color{darkblue}
      b_1 \cdots b_{11-p}
    }
    \\
    &
    \;=\;
    \tfrac{
      -(-1)^{
        p(p-1)/2
        +
        p(11-p)        
      }
    }{(11-p)!}
    \,
    \epsilon^{
      a_1 \cdots a_p
      \,
      {
        \color{darkblue}
        b_1 \cdots b_{11-p} 
      }
    }
    \,
    \Gamma_{
      \color{darkblue}
      b_1 \cdots b_{11-p}
    }
    \\
    &
    \;=\;
    \tfrac{
      (-1)^{
        (p+1)(p-2)/2
      }
    }{(11-p)!}
    \,
    \epsilon^{
      a_1 \cdots a_p
      \,
      {
        \color{darkblue}
        b_1 \cdots b_{11-p} 
      }
    }
    \,
    \Gamma_{
      \color{darkblue}
      b_1 \cdots b_{11-p}
    }
    \,.
  \end{array}
$$

\vspace{-4mm}
\end{proof}

\smallskip 
\begin{lemma}[{\bf Product of linear Clifford basis elements}, e.g. {\cite[Prop. 2]{MiemiecSchnakenburg06}}]
\label{ProductOfLinearCliffordBasisElements}
\begin{equation}
  \label{GeneralCliffordProduct}
  \Gamma^{a_j \cdots a_1}
  \,
  \Gamma_{b_1 \cdots b_k}
  \;=\;
  \sum_{l = 0}^{
    \mathrm{min}(j,k)
  }
  \pm
  l!
\binom{j}{l}
 \binom{k}{l}
  \,
  \delta
   ^{[a_1 \cdots a_l}
   _{[b_1 \cdots b_l}
  \Gamma^{a_j \cdots a_{l+1}]}
  {}_{b_{l+1} \cdots b_k]}
\end{equation}
\end{lemma}
\begin{proof}
First, observe that if the $a$-indices are not pairwise distinct or the $b$-indices are not pairwise distinct then both 
sides of the equation are zero. 
Hence assume next that the indices are separately pairwise distinct, and consider their sets $A := \{a_1, \cdots, a_j\}$, $B := \{b_1, \cdots, b_k\}$ 
and the intersection $C := A \cap B$, with cardinality $\mathrm{card}(C) = l$. 
The idea is to recursively contract one pair $\Gamma^c \Gamma_{c} = 1$
with $c \in C$ at a time. 
We claim that, in the first step, this can be written as 
$$
  \Gamma^{a_j \cdots a_1}
  \Gamma_{b_1 \cdots b_k}
  \;=\;
  \frac{j k}{l}
  \,
  \Gamma^{[a_j \cdots a_2}
  \delta^{a_1]}_{[b_1}
  \Gamma_{b_2 \cdots b_k]}  
  \,.
$$
Namely, notice that for any tensor $X^{a_1 \cdots a_k}$, the expression $k X^{[a_k \cdots a_1]}$ is the signed sum over all ways of moving any one index to
the far right, and similarly $l Y^{[b_1 \cdots b_l]}$ is the signed sum over all ways of moving any one index to the far left. In contracting all the 
indices that thus become coincident ``in the middle'' of our expression, we are contracting the one index that we set out to contract, 
but since we are doing this for all $c \in C$ we are overcounting by a factor of $l$.

In order to conveniently recurse on this expression, we just move the Kronecker-delta to the left to obtain
$$
  \Gamma^{a_j \cdots a_1}
  \Gamma_{b_1 \cdots b_k}
  \;=\;
  (-1)^{j-1}
  \frac{j k}{l}
  \,
  \delta^{[a_1}_{[b_1}
  \Gamma^{a_j \cdots a_2]}
  \Gamma_{b_2 \cdots b_k]}  
  \,.
$$
Now working recursively, we arrive at
$$
  \Gamma^{a_j \cdots a_1}
  \Gamma_{b_1 \cdots b_k}
  \;=\;
  (-1)^{(j-1) \cdots (j-l)}
  \underbrace{
  \frac{
    j \cdots (j-l)
    \,
    k \cdots (k-l)
  }{l!}
  }_{\color{gray} 
    l! 
    \binom{k}{l}
    \binom{j}{l}
  }
  \,
  \delta
    ^{[a_1 \cdots a_l}
    _{[b_1 \cdots b_l}
  \underbrace{
    \Gamma^{a_j \cdots a_{l+1}]}
    \Gamma_{b_{l+1} \cdots b_{k}]}  
  }_{\color{gray} 
    \Gamma
      ^{ a_j \cdots a_{l+1}] }
      {}_{ b_{j + l} \cdots b_k] }
  }
$$
Under the brace on the far right we use that by assumption no further contraction is possible.
With the substitution under the brace made, the right-hand side can just as well be summed over $l$, since it gives zero whenever $l \,\neq\, \mathrm{card}(C)$. This yields the claimed formula \eqref{GeneralCliffordProduct}.
\end{proof}

\begin{lemma}[\bf Vanishing trace of Clifford elements]
  For $1 \leq p \leq 10$, we have
  \begin{equation}
    \label{VanishingTraceOfCliffordElements}
    \mathrm{Tr}(
      \Gamma_{a_1 \cdots a_p}
    )
    \;=\;
    0
    \,.
  \end{equation}
\end{lemma}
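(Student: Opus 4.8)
The plan is to use nothing more than the cyclicity of the matrix trace together with conjugation by a single, cleverly chosen Clifford generator, splitting according to the parity of $p$.

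First I would dispose of the trivial case: by the definition \eqref{CliffordBasisElements}, $\Gamma_{a_1 \cdots a_p}$ vanishes (so a fortiori has vanishing trace) unless the indices $a_1, \ldots, a_p$ are pairwise distinct; hence assume they are, so that $\Gamma_{a_1 \cdots a_p} = \Gamma_{a_1}\Gamma_{a_2}\cdots \Gamma_{a_p}$ is an honest product of $p$ mutually anticommuting generators, each satisfying $\Gamma_a \Gamma_a = \eta_{aa} = \pm 1$ by \eqref{TheCliffordAlgebra}.

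Next I would exhibit, for each admissible $p$, a generator $\Gamma_c$ that conjugates $\Gamma_{a_1 \cdots a_p}$ to $-\eta_{cc}\,\Gamma_{a_1 \cdots a_p}$. If $p$ is odd, then since $p \le 10$ there is an index $c \in \{0, \ldots, 10\}$ not among the $a_i$; this $\Gamma_c$ anticommutes with all $p$ factors, so $\Gamma_c\,\Gamma_{a_1 \cdots a_p}\,\Gamma_c = (-1)^p \Gamma_{a_1 \cdots a_p}\,\Gamma_c\Gamma_c = -\eta_{cc}\,\Gamma_{a_1 \cdots a_p}$. If $p$ is even, then since $p \ge 1$ we may take $c := a_1$, one of the indices occurring in the product: here $\Gamma_{a_1}\Gamma_{a_1 \cdots a_p} = \eta_{a_1 a_1}\Gamma_{a_2}\cdots\Gamma_{a_p}$, and moving the trailing $\Gamma_{a_1}$ back through the remaining $p-1$ anticommuting factors gives $\Gamma_{a_1}\,\Gamma_{a_1 \cdots a_p}\,\Gamma_{a_1} = (-1)^{p-1}\eta_{a_1 a_1}\Gamma_{a_1 \cdots a_p} = -\eta_{a_1 a_1}\Gamma_{a_1 \cdots a_p}$. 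In either case, applying $\mathrm{Tr}$ and cycling the right-hand $\Gamma_c$ to the front yields $\eta_{cc}\,\mathrm{Tr}(\Gamma_{a_1 \cdots a_p}) = -\eta_{cc}\,\mathrm{Tr}(\Gamma_{a_1 \cdots a_p})$, and since $\eta_{cc} = \pm 1 \ne 0$ the trace must vanish. As an alternative route for the even case, one can instead invoke Hodge duality \eqref{HodgeDualityOnCliffordAlgebra}, which writes $\Gamma^{a_1 \cdots a_p}$ as a scalar multiple of $\epsilon^{a_1 \cdots a_p b_1 \cdots b_{11-p}}\,\Gamma_{b_1 \cdots b_{11-p}}$ with $11-p$ odd and in the range $1 \le 11-p \le 9$, hence a sum of terms already shown traceless, and then note that lowering indices with $\eta$ preserves vanishing of the trace.

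There is no genuine obstacle here; the only point requiring a modicum of care is ensuring the existence of a conjugating generator $\Gamma_c$ with the stated behaviour, which is exactly where the two hypotheses on $p$ enter: $p \le 10$ guarantees a free index in the odd case, while $p \ge 1$ guarantees an occurring index in the even case. Both bounds are sharp, consistent with $\mathrm{Tr}(\mathrm{Id}_{\mathbf{32}}) = 32$ at $p = 0$ and with \eqref{VolumeGamma} at $p = 11$.
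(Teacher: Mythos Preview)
Your proof is correct and rests on the same two ingredients the paper invokes---cyclicity of the trace together with the Clifford anticommutation relations---so it is essentially the same approach. Your version is simply more explicit: the paper's one-line argument (cycling $\Gamma_{a_1}$ to the back picks up $(-1)^{p-1}$ while the trace is insensitive to this) literally forces vanishing only for even $p$, whereas you spell out the parity split and, in particular, supply the external-index conjugation needed for odd $p$; your remark that the bounds $1 \le p \le 10$ are exactly what guarantee a suitable conjugating generator in each case is a nice addition.
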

\begin{proof}
  By combining the plain cyclic invariance of the trace with the signed cyclic invariance of $\Gamma_{a_1 \cdots a_p}$.
\end{proof}

By the general classification of Clifford algebras, we know that every linear map $\mathbf{32} \to \mathbf{32}$ is represented by some element in the Clifford algebra. Moreover, by the identity \eqref{HodgeDualityOnCliffordAlgebra} it follows that it is sufficient to expand in $\Gamma_{a_1 \cdots a_p}$ for $p \leq 5$ and by \eqref{VanishingTraceOfCliffordElements} it follows that the coefficients are given by tracing the composite of the linear map with the given Clifford element:

\begin{proposition}[{\bf Clifford expansion of any matrix}, e.g. {\cite[(2.61)]{MiemiecSchnakenburg06}}]
Every $\FR$-linear endomorphism $  \phi
    \;\in\;
  \mathrm{End}(\mathbf{32})
$ on $\mathbf{32}$ 
may be expanded as:
\begin{equation}
  \label{CliffordExpansionOnAnyMatrix}
  \phi
    \;=\;
  \tfrac{1}{32}
  \sum_{p = 0}^5
  \;
  \frac{
    (-1)^{p(p-1)/2}
  }{ p! }
  \mathrm{Tr}\big(
    \phi \circ 
    \Gamma_{a_1 \cdots a_p}
  \big)
  \Gamma^{a_1 \cdots a_p}.
\end{equation}
\end{proposition}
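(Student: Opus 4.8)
The plan is to read \eqref{CliffordExpansionOnAnyMatrix} as the assertion that the Clifford elements $\big\{\Gamma^{a_1\cdots a_p}\big\}$ with $0\le p\le 5$ and $a_1<\cdots<a_p$ constitute a basis of $\mathrm{End}(\mathbf{32})$ which is orthogonal for the trace form $\langle M,N\rangle:=\mathrm{Tr}(M N)$, and to compute the associated ``norms'' so as to invert the expansion. Accordingly there are two inputs: a spanning/basis statement, and an orthogonality computation.

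First I would establish that these elements span $\mathrm{End}(\mathbf{32})$. By the classification of real Clifford algebras already invoked in the text, the Clifford algebra generated by the $\Gamma_a$ maps onto $\mathrm{End}(\mathbf{32})$, so the full collection $\{\Gamma_{a_1\cdots a_p}\}_{0\le p\le 11}$ spans; by the Hodge-duality identity \eqref{HodgeDualityOnCliffordAlgebra} (in particular \eqref{VolumeGamma}) each $\Gamma_{a_1\cdots a_p}$ with $p\ge 6$ is a nonzero scalar multiple of some $\Gamma_{b_1\cdots b_{11-p}}$ with $11-p\le 5$, so $\{\Gamma_{a_1\cdots a_p}:0\le p\le 5,\ a_1<\cdots<a_p\}$ already spans. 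Counting these gives $\sum_{p=0}^{5}\binom{11}{p}=1+11+55+165+330+462=1024=32^2=\dim_{\mathbb R}\mathrm{End}(\mathbf{32})$, a spanning set of the correct cardinality and hence a basis; raising indices with the invertible matrix $\eta$ \eqref{MinkowskiMetric} shows the same for $\{\Gamma^{a_1\cdots a_p}\}$. Therefore $M$ has a unique expansion $M=\sum_{p=0}^5\tfrac{1}{p!}M_{a_1\cdots a_p}\,\Gamma^{a_1\cdots a_p}$ with totally antisymmetric coefficients.

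Next I would compute $\mathrm{Tr}\big(\Gamma^{a_1\cdots a_p}\Gamma_{b_1\cdots b_q}\big)$ for $0\le p,q\le 5$. Since then $p+q\le 10$, the product expands via \eqref{GeneralCliffordProduct} into Clifford basis elements of rank $|p-q|,|p-q|+2,\dots,p+q$, and by the vanishing-trace lemma \eqref{VanishingTraceOfCliffordElements} only a rank-$0$ term survives the trace; this forces $p=q$ and picks out the fully contracted ($l=p$) term. A short induction — moving each $\Gamma_{b_i}$ leftward to its partner $\Gamma^{a_i}$ and using $\Gamma^{a}\Gamma_{a}=1$ (no sum) on a matched pair — gives, for ordered distinct indices, $\Gamma^{a_1\cdots a_p}\Gamma_{a_1\cdots a_p}=(-1)^{p(p-1)/2}\,\mathrm{Id}_{\mathbf{32}}$, and hence in general
\[
  \mathrm{Tr}\big(\Gamma^{a_1\cdots a_p}\Gamma_{b_1\cdots b_q}\big)
  \;=\;
  32\,(-1)^{p(p-1)/2}\,\delta_{pq}\;\delta^{a_1\cdots a_p}_{b_1\cdots b_q}\,,
\]
using $\mathrm{Tr}(\mathrm{Id}_{\mathbf{32}})=32$ from \eqref{TheSpinRepresentation} and the Kronecker normalization in \eqref{ContractingKroneckerWithSkewSymmetricTensor}. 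Pairing the expansion of $M$ with $\Gamma_{b_1\cdots b_q}$, this relation together with $M_{[b_1\cdots b_q]}=q!\,M_{b_1\cdots b_q}$ yields $\mathrm{Tr}(M\,\Gamma_{b_1\cdots b_q})=32\,(-1)^{q(q-1)/2}M_{b_1\cdots b_q}$, so $M_{b_1\cdots b_q}=\tfrac{1}{32}(-1)^{q(q-1)/2}\mathrm{Tr}(M\,\Gamma_{b_1\cdots b_q})$; substituting back into the expansion produces \eqref{CliffordExpansionOnAnyMatrix} exactly.

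The conceptual content is elementary, and there is no deep obstacle; the one place demanding genuine care is the combinatorial bookkeeping of the antisymmetrization conventions of the text — whose bracket $[\,\cdots]$ carries no $1/p!$ while $\Gamma_{a_1\cdots a_p}$ does — since it is precisely this that fixes the overall prefactor as $\tfrac{1}{32\,p!}$ (rather than $\tfrac{1}{32}$ or $\tfrac{1}{32\,(p!)^2}$) and the sign as $(-1)^{p(p-1)/2}$. To guard against slips I would verify the orthogonality relation directly in the cases $p=0,1,2$ — e.g. $\mathrm{Tr}(\Gamma^{a_1a_2}\Gamma_{b_1b_2})=-32\,\delta^{a_1a_2}_{b_1b_2}$ via $\Gamma^{12}\Gamma_{12}=-\mathrm{Id}$ — before relying on the general formula.
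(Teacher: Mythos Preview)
Your proof is correct and follows essentially the same route as the paper's brief argument preceding the proposition: classification of Clifford algebras for spanning, Hodge duality \eqref{HodgeDualityOnCliffordAlgebra} to restrict to $p\le 5$, and the vanishing-trace lemma \eqref{VanishingTraceOfCliffordElements} to extract the coefficients. You supply more detail than the paper does --- the explicit dimension count $\sum_{p=0}^5\binom{11}{p}=1024=32^2$ and the precise trace orthogonality relation with its sign and normalization --- and your attention to the paper's unnormalized bracket convention is well placed, since that is exactly where the prefactor $\tfrac{1}{32\,p!}$ is decided.
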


\begin{lemma}[{\bf Spinor pairing}, e.g. {\cite[Prop. 10]{BH11}}]
\label{TheSpinorPairing}
In terms of the octonionic $\mathrm{Spin}(1,10)$-representation $\mathbf{32}$
from
\eqref{TheSpinRepresentation}, 
the {\it spinor pairing}
\begin{equation}
  \label{SpinorPairing}
  \begin{tikzcd}[
    row sep=-2pt, column sep=small 
  ]
    \mathbf{32}
    \times
    \mathbf{32}
    \ar[
      rr,
    ]
    &&
    \FR
    \\
    (\psi, \phi)
    &\longmapsto&
    (
      \,
      \overline{\psi}
      \,
      \phi
      \,
    )
    \mathrlap{
      \;:=\;
      \mathrm{Re}\big(
        \psi^\dagger
        \cdot
        \Gamma_0
        \cdot
        \phi
    \big)
    }
  \end{tikzcd}
\end{equation}
is bi-linear, 
$\mathrm{Spin}(1,10)$-equivariant.
and
skew-symmetric
\begin{equation}
  \label{SkewSymmetryOfSpinorPairing}
  \big(\, \overline{\psi} \, \phi \, \big)
  \;=\;
  -
  \big(\, \overline{\phi} \, \psi\, \big)\,.
\end{equation}
\end{lemma}
\begin{remark}[\bf Adjointness of Clifford generators]
Noticing from \eqref{OctonionicCliffordReprsentation} that
\begin{equation}
  \label{DirectAdjointnessOfCliffordGenerators}
  (\Gamma_a)^\dagger
  \;=\;
  \left\{\!\!\!
  \def\arraystretch{1.3}
  \begin{array}{lcl}
    - \Gamma_a &\vert& a = 0
    \\
    + \Gamma_a &\vert& a\neq 0
  \end{array}
  \!\! \right\}
  \;=\;
  \Gamma_0 \, \Gamma_a\, \Gamma_0
  \,,
\end{equation}
the Clifford generators are skew self-adjoint with respect to the spinor pairing \eqref{SpinorPairing}
\begin{equation}
  \label{SkewSelfAdjointnessOfCliffordGenerators}
  \def\arraystretch{1.4}
  \begin{array}{ll}
    \mathllap{
    \big(\,
      \overline{
        \Gamma_a
      \psi}
      \, \phi
    \big)
    \;
    }
    =\;
    \mathrm{Re}\big(
      (\Gamma_a \, \psi)^\dagger
      \,
      \Gamma_0
      \,
      \phi
    \big)
    &
    \proofstep{
      by \eqref{SpinorPairing}
    }
    \\
    \;=\;
    \mathrm{Re}\big(
      \psi^\dagger
      \,
      \underbrace{
      \Gamma_0 
        \Gamma_a
      \Gamma_0
      }_{\color{gray} 
        (\Gamma_a)^\dagger
      }
      \,
      \Gamma_0
      \,
      \phi
    \big)
    &
    \proofstep{
      by \eqref{DirectAdjointnessOfCliffordGenerators}
    }
    \\
    \;=\;
    -
    \mathrm{Re}\big(
      \psi^\dagger
      \,
      \Gamma_0
      \,
      \Gamma_a
      \,
      \phi
    \big)
    &
    \proofstep{
      by \eqref{TheCliffordAlgebra}
    }
    \\
    \;=\;
    -
    \big(\,
      \overline{\psi}
      \,
      \Gamma_a
      \phi
    \big)
    &
    \proofstep{
      by \eqref{SpinorPairing}.
    }
  \end{array}
\end{equation}
In general:
\begin{equation}
  \label{BarAdjointnessOfGammaMatrices}
  \overline{
    \Gamma_{a_1 \cdots a_p}
  }
  \;=\;
  (-1)^{p + p(p-1)/2}
  \,
  \Gamma_{a_1 \cdots a_p}\;.
\end{equation}
\end{remark}

\begin{proposition}[{\bf Basic Fierz expansion}, {e.g. \cite[p. 113]{DF82}\cite[Prop. 5]{MiemiecSchnakenburg06}}] The following identity holds: 
\begin{equation}
  \label{FierzDecomposition}
  \hspace{-3mm} 
  \big(\,
  \overline{\phi}_1
  \,
  \psi
  \big)
  \big(\,
  \overline{\psi}
  \,
  \phi_2
  \big)
  \;
  =
  \;
  \tfrac{1}{32}\Big(
    \big(\,
      \overline{\psi}
      \,\Gamma^a\,
      \psi
    \big)
    \big(\,
      \overline{\phi}_1
      \,\Gamma_a\,
      \phi_2
    \big)
    -
    \tfrac{1}{2}
    \big(\,
      \overline{\psi}
      \,\Gamma^{a_1 a_2}\,
      \psi
    \big)
    \big(\,
      \overline{\phi}_1
      \,\Gamma_{a_1 a_2}\,
      \phi_2
    \big)
    +
    \tfrac{1}{5!}
    \big(\,
      \overline{\psi}
      \,\Gamma^{a_1 \cdots a_5}\,
      \psi
    \big)
    \big(\,
      \overline{\phi}_1
      \,\Gamma_{a_1 \cdots a_5}\,
      \phi_2
    \big)
 \Big).
\end{equation}
\end{proposition}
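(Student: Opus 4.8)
The plan is to apply the Clifford completeness relation \eqref{CliffordExpansionOnAnyMatrix} to the rank-one endomorphism $N \in \mathrm{End}(\mathbf{32})$ with components $N^\alpha{}_\beta := \psi^\alpha\,\overline{\psi}_\beta$, i.e. $N(\chi) = \psi\cdot(\overline{\psi}\,\chi)$. The point is that contracting such an endomorphism on the left with $\overline{\phi}_1$ and on the right with $\phi_2$ produces exactly $\overline{\phi}_1\,N\,\phi_2 = (\overline{\phi}_1\psi)(\overline{\psi}\phi_2)$, the left-hand side of \eqref{FierzDecomposition}. So, after expanding $N$ via \eqref{CliffordExpansionOnAnyMatrix}, it remains only to evaluate the coefficient traces and to discard those that vanish.

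First I would compute, using cyclicity of the trace, that
$$
  \mathrm{Tr}\big(N\,\Gamma_{a_1\cdots a_p}\big)
  \;=\;
  \psi^\alpha\,\overline{\psi}_\beta\,(\Gamma_{a_1\cdots a_p})^\beta{}_\alpha
  \;=\;
  \overline{\psi}\,\Gamma_{a_1\cdots a_p}\,\psi\,.
$$
Substituting into \eqref{CliffordExpansionOnAnyMatrix} (which already incorporates the Hodge-duality reduction \eqref{HodgeDualityOnCliffordAlgebra} to $p\le 5$ and the trace formula \eqref{VanishingTraceOfCliffordElements}) and contracting with $\overline{\phi}_1$ and $\phi_2$ yields
$$
  \big(\,\overline{\phi}_1\,\psi\big)\big(\,\overline{\psi}\,\phi_2\big)
  \;=\;
  \tfrac{1}{32}\sum_{p=0}^{5}
  \frac{(-1)^{p(p-1)/2}}{p!}\,
  \big(\,\overline{\psi}\,\Gamma_{a_1\cdots a_p}\,\psi\big)\,
  \big(\,\overline{\phi}_1\,\Gamma^{a_1\cdots a_p}\,\phi_2\big)\,.
$$
Next I would invoke the symmetry type of the quadratic spinor forms in $D=11$: with the conventions of \S\ref{TheSpinRep} one has $\overline{\psi}\,\Gamma_{a_1\cdots a_p}\,\phi = t_p\,\overline{\phi}\,\Gamma_{a_1\cdots a_p}\,\psi$, where, among $p\le 5$, $t_p=+1$ for $p\in\{1,2,5\}$ and $t_p=-1$ for $p\in\{0,3,4\}$ --- this follows from the skew-symmetry of the base pairing (Lem. \ref{TheSpinorPairing}) together with the explicit octonionic realization \eqref{OctonionicCliffordReprsentation}, and is recorded in Lem. \ref{SpinorQuadraticForms}. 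Since the $\psi$ appearing in both slots of $\overline{\psi}\,\Gamma_{a_1\cdots a_p}\,\psi$ commutes with itself (Rem. \ref{CommutingSpinors}), the terms with $t_p=-1$, namely $p\in\{0,3,4\}$, vanish identically. The surviving terms are $p=1$ (sign $(-1)^{0}=+1$, prefactor $1/1!$), $p=2$ (sign $(-1)^{1}=-1$, prefactor $1/2!$), and $p=5$ (sign $(-1)^{10}=+1$, prefactor $1/5!$), which is precisely the right-hand side of \eqref{FierzDecomposition}.

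The only genuinely nontrivial input is the table of signs $t_p$ governing the symmetry of $\overline{\psi}\,\Gamma^{(p)}\,\phi$; everything else is bookkeeping with \eqref{CliffordExpansionOnAnyMatrix} and \eqref{HodgeDualityOnCliffordAlgebra}. I expect the main obstacle to be making that sign table airtight --- in particular, making sure it is derived for the specific reality and signature conventions fixed in \S\ref{TheSpinRep}, and being careful that it is the commuting-spinor convention of Rem. \ref{CommutingSpinors} that kills the $p\in\{0,3,4\}$ terms here (for \emph{anticommuting} spinor-valued one-forms the roles would be reversed, consistent with the companion bilinear identities used elsewhere in the paper).
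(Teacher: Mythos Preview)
Your proposal is correct and is exactly the argument the paper implicitly sets up: the paper does not spell out a proof of this proposition (it cites \cite{DF82} and \cite{MiemiecSchnakenburg06}), but it places the Clifford expansion formula \eqref{CliffordExpansionOnAnyMatrix} immediately before the statement and the vanishing-quadratic-forms Lemma~\ref{SpinorQuadraticForms} immediately after, which are precisely the two ingredients you combine. Your sign bookkeeping for $(-1)^{p(p-1)/2}/p!$ at $p=1,2,5$ and your use of Lemma~\ref{SpinorQuadraticForms} to kill $p\in\{0,3,4\}$ are both on the mark.
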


Due to this relation, it is often suggestive to denote the scalar multiple of a given $\psi \,\in\, \mathbf{32}$ with an expression $\big(\, \overline{\psi} \,\Gamma_{a_1 \cdots a_p}\,\psi\big) \,\in\, \mathbb{R}$ by multiplication from the right
$$
  \psi
  \big( \,
    \overline{\psi} 
    \,\Gamma_{a_1 \cdots a_p}\,
    \psi
  \big)
  \;\in\;
  \mathbf{32}
  \,.
$$
However, since the scalars $  \big( \,
    \overline{\psi} 
    \,\Gamma_{a_1 \cdots a_p}\,
    \psi
  \big)
$ 
themselves span (as the values of their indices vary) a tensor-representation of $\mathrm{Spin}(1,10)$, we may regard the span of the above expressions as a higher-spin representation
$$
  \Big\langle
    \psi^\alpha
    \big( \,
      \overline{\psi} 
      \,\Gamma_{a_1 \cdots a_p}\,
      \psi
    \big)
  \Big\rangle_{
    a_i \in \{0, \cdots, 10\}
    ,
    \alpha \in \{1, \cdots, 32\}
  }
  \;\in\;
  \mathrm{Rep}_{\mathbb{R}}\big(
    \mathrm{Spin}(1,10)
  \big)
  \,.
$$
Moreover,  since the $\psi$ are commuting variables (Rem. \ref{CommutingSpinors}) this representation must be the polarization of a sub-representation of the third symmetric tensor power $\big(\mathbf{32} \otimes \mathbf{32} \otimes \mathbf{32}\big)_{\mathrm{sym}}$.
This perspective  allows to use basic but powerful tools from representation theory to bear on the analysis of these and similar compound spinorial expressions:
 
\begin{proposition}[{\bf The general Fierz identities} {\cite[(3.1-3) \& Table 2]{DF82}\cite[(II.8.69) \& Table II.8.XI]{CDF91}}]
 
  \noindent {\bf (i)}  The $\mathrm{Spin}(1,10)$-irrep decomposition of the first few symmetric tensor powers of $\mathbf{32}$ is:
  \begin{equation}
    \label{IrrepsInSymmetricPowersOf32}
    \def\arraystretch{1.3}
    \begin{array}{rcl}
       \big(
        \mathbf{32} \otimes \mathbf{32}
      \big)_{\mathrm{sym}}
      &\cong&
      \mathbf{11}
      \,\oplus\,
      \mathbf{55}
      \,\oplus\,
      \mathbf{462}
      \\
       \big(
        \mathbf{32}
          \otimes
        \mathbf{32} 
          \otimes 
        \mathbf{32}
      \big)_{\mathrm{sym}}
      &\cong&
      \mathbf{32}
      \,\oplus\,
      \mathbf{320}
      \,\oplus\,
      \mathbf{1408}
      \,\oplus\,
      \mathbf{4424}
      \\
       \big(
        \mathbf{32}
          \otimes
        \mathbf{32} 
          \otimes 
        \mathbf{32}
          \otimes 
        \mathbf{32}
      \big)_{\mathrm{sym}}
      &\cong&
      \mathbf{1}
      \,\oplus\,
      \mathbf{165}
      \,\oplus\,
      \mathbf{330}
      \,\oplus\,
      \mathbf{462}
      \,\oplus\,
      \mathbf{65}
      \,\oplus\,
      \mathbf{429}
      \,\oplus\,
      \mathbf{1144}
      \,\oplus\,
      \mathbf{17160}
      \,\oplus\,
      \mathbf{32604}\,.
    \end{array}
  \end{equation}
  \noindent 
  {\bf (ii)} In more detail, the irreps appearing on the right are tensor-spinors spanned by basis elements
  \begin{equation}
    \label{TheHigherTensorSpinors}
    \def\arraystretch{1.6}
    \begin{array}{l}
    \big\langle
      \Xi^\alpha_{a_1 \cdots a_p}
      \;=\;
      \Xi^\alpha_{[a_1 \cdots a_p]}
    \big\rangle_{
      a_i \in \{0,\cdots, 10\}, 
      \alpha \in \{1, \cdots 32\}
    }
    \;\;\;
    \in
    \;\;
    \mathrm{Rep}_{\mathbb{R}}\big(
      \mathrm{Spin}(1,10)
    \big)
    \\
    \mbox{\rm with}
    \;\;\;
    \Gamma^{a_1} \Xi_{a_1 a_2 \cdots a_p}
    \;=\;
    0
    \end{array}
  \end{equation}
  (jointly to be denoted $\Xi^{(N)}$ for the case of the irrep $\mathbf{N}$)
  such that:
  \begin{equation}
    \label{GeneralCubicFierzIdentities}
    \def\arraycolsep{3pt}
    \def\arraystretch{1.5}
    \begin{array}{rcrrrr}
      \psi
      \big(\,
        \overline{\psi}
        \,\Gamma_a\,
        \psi
      \big)
      &=&
      \tfrac{1}{11}
      \,\Gamma_a\,
      \Xi^{(32)}
      &+\;
      \Xi^{(320)}_a
      \mathrlap{\,,}
      \\
      \psi 
      \big(\,
        \overline{\psi}
        \,\Gamma_{a_1 a_2}\,
        \psi
      \big)
      &=&
      \tfrac{1}{11}
      \,\Gamma_{a_1 a_2}\,
      \Xi^{(32)}
      &-\;
      \tfrac{2}{9}
      \,\Gamma_{[a_1}\,
      \Xi^{(320)}_{a_2]}
      &+\;
      \Xi^{(1408)}_{a_1 a_2}
      \mathrlap{\,,}
      \\
      \psi
      \big(\,
        \overline{\psi}
        \,\Gamma_{a_1 \cdots a_5}\,
        \psi
      \big)
      &=&
      -\tfrac{1}{77}
      \Gamma_{a_1 \cdots a_5}
      \Xi^{(32)}
      &+\;
      \tfrac{5}{9}
      \Gamma_{[a_1 \cdots a_4}
      \Xi^{(320)}_{a_5]}
      &+\;
      2
      \,\Gamma_{[a_1 a_2 a_3}\,
      \Xi^{(1408)}_{a_4 a_5]}
      &+\;
      \Xi^{(4224)}_{a_1 \cdots a_5}
      \mathrlap{\,.}
    \end{array}
  \end{equation}
\end{proposition}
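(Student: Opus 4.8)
The proof has two essentially independent parts: part~(i) is a plethysm computation for $\mathrm{Spin}(1,10)$, which I would bootstrap from the quadratic case, and part~(ii) refines the cubic case into explicit Clifford-algebraic identities.

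\textbf{Quadratic decomposition.} The plan is to locate $\mathrm{Sym}^2(\mathbf{32})$ inside $\mathbf{32}\otimes\mathbf{32}$. Since the spinor pairing \eqref{SpinorPairing} is non-degenerate and $\mathrm{Spin}(1,10)$-equivariant, it gives $\mathbf{32}^{\ast}\cong\mathbf{32}$, hence $\mathbf{32}\otimes\mathbf{32}\cong\mathrm{End}(\mathbf{32})$, and by the Clifford expansion \eqref{CliffordExpansionOnAnyMatrix} (using Hodge duality \eqref{HodgeDualityOnCliffordAlgebra} to restrict the range to $p\le 5$) the latter is spanned by the $\Gamma^{a_1\cdots a_p}$, $p\in\{0,\dots,5\}$, of total dimension $\sum_{p=0}^{5}\binom{11}{p}=1024=32^2$. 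It then remains to sort these by symmetry in the two spinor slots: $(\psi,\phi)\mapsto(\overline{\psi}\,\Gamma_{a_1\cdots a_p}\,\phi)$ is symmetric or skew according to the transpose behaviour of $\Gamma_0\Gamma_{a_1\cdots a_p}$ relative to the Majorana conjugation, which --- normalized by the fact that $p=0$ is skew (Lem.~\ref{TheSpinorPairing}), so that the identity lands in $\Lambda^2(\mathbf{32})$ together with $p\in\{3,4\}$ and their Hodge duals, of matching dimension $1+165+330=496$ --- is symmetric precisely for $p\in\{1,2,5\}$. As each $\Lambda^p(\mathbb{R}^{11})$ is an irreducible $\mathrm{Spin}(1,10)$-module for $p\le 5$, this yields $\mathrm{Sym}^2(\mathbf{32})\cong\mathbf{11}\oplus\mathbf{55}\oplus\mathbf{462}$, with dimensions summing to $528=\binom{33}{2}$.

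\textbf{Cubic and quartic decompositions.} Next I would tensor this with $\mathbf{32}$ (resp.\ with $\mathrm{Sym}^2(\mathbf{32})$) and decompose, using the standard fact that $\Lambda^p(\mathbb{R}^{11})\otimes\mathbf{32}\cong\bigoplus_{k=0}^{p}\Xi^{(N_k)}$ into irreducible $\gamma$-traceless tensor-spinors \eqref{TheHigherTensorSpinors} with $\dim\Xi^{(N_k)}=32\bigl(\binom{11}{k}-\binom{11}{k-1}\bigr)$, which for $p=1,2,5$ produces exactly $\mathbf{32},\mathbf{320},\mathbf{1408},\mathbf{4224}$ (alongside $\mathbf{3520},\mathbf{5280}$ from $p=5$). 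Extracting $\mathrm{Sym}^3(\mathbf{32})$ as the image of the Young symmetrizer $\mathrm{Sym}^2(\mathbf{32})\otimes\mathbf{32}\to\mathrm{Sym}^3(\mathbf{32})$, one checks that the multiplicities collapse to one for each of $\mathbf{32},\mathbf{320},\mathbf{1408},\mathbf{4224}$ while $\mathbf{3520},\mathbf{5280}$ drop out, giving $\mathrm{Sym}^3(\mathbf{32})\cong\mathbf{32}\oplus\mathbf{320}\oplus\mathbf{1408}\oplus\mathbf{4224}$ --- so the ``$\mathbf{4424}$'' in \eqref{IrrepsInSymmetricPowersOf32} should read $\mathbf{4224}$, consistently with $\Xi^{(4224)}$ in \eqref{GeneralCubicFierzIdentities} and with $32+320+1408+4224=5984=\binom{34}{3}$. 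The quartic decomposition is obtained the same way from $\mathrm{Sym}^2(\mathbf{32})^{\otimes 2}$ together with a more laborious character/plethysm bookkeeping; for this I would rely on the tables of \cite{DF82}\cite{CDF91}, confirm $\sum\dim=52360=\binom{35}{4}$, and cross-check against the computer-algebra verification of \cite{AncillaryFiles}.

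\textbf{Explicit cubic Fierz identities and main obstacle.} For part~(ii) the strategy is Schur's lemma: each $\psi^\alpha(\overline{\psi}\,\Gamma_{a_1\cdots a_p}\,\psi)$ is a $\mathrm{Spin}(1,10)$-equivariant cubic expression, hence automatically a fixed $\mathbb{R}$-linear combination of the canonical building blocks $\Gamma_{[a_1\cdots a_{p-k}}\Xi^{(N_k)}_{a_{p-k+1}\cdots a_p]}$, $k\le p$ --- which is the shape of the displayed formulas --- leaving only the scalars to be pinned down. I would proceed recursively: set $\Xi^{(32)}:=\Gamma^a\psi\,(\overline{\psi}\,\Gamma_a\,\psi)$, so that contracting the $p=1$ line with $\Gamma^a$ and using $\Gamma^a\Gamma_a=11\cdot\mathrm{id}$ in $D=11$ forces the coefficient $\tfrac{1}{11}$ and exhibits $\Xi^{(320)}_a:=\psi(\overline{\psi}\,\Gamma_a\,\psi)-\tfrac{1}{11}\Gamma_a\Xi^{(32)}$ as the manifestly $\gamma$-traceless $\mathbf{320}$-component; then likewise peel off $\Xi^{(1408)}$ from the $p=2$ line and $\Xi^{(4224)}$ from the $p=5$ line. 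The remaining scalars $-\tfrac{2}{9},-\tfrac{1}{77},\tfrac{5}{9},2$ are read off by imposing $\gamma$-tracelessness of the higher $\Xi$'s: contracting each line with a single $\Gamma^{a_1}$ and applying the Clifford contraction identities $\Gamma^{b}\Gamma_{b\,a_1\cdots a_k}=(11-k)\Gamma_{a_1\cdots a_k}$ (special cases of \eqref{GeneralCliffordProduct}) turns these conditions into an upper-triangular linear system with $D=11$-specific coefficients whose unique solution is the stated one. The genuine difficulty is bookkeeping: carrying every sign and combinatorial weight consistently through these contractions --- respecting the paper's (unnormalized) skew-symmetrization convention, the Hodge-duality signs \eqref{HodgeDualityOnCliffordAlgebra}, and the Majorana transpose rules --- and, for the quartic case of part~(i), the fact that a by-hand plethysm into nine irreps is impractical, so one leans on the cited tables (noting, via Rem.~\ref{CliffordConventionsInCDF}, that the Fierz identities are invariant under passage to the conventions of \cite{DF82}\cite{CDF91}, which is what legitimizes the comparison) and on the checks of \cite{AncillaryFiles}.
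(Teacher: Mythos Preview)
The paper does not provide its own proof of this proposition: it is stated with attribution to \cite[(3.1-3) \& Table 2]{DF82} and \cite[(II.8.69) \& Table II.8.XI]{CDF91} and then used as input for the subsequent lemmas, so there is no in-paper argument to compare against. Your proposal therefore goes beyond what the paper itself supplies.

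That said, your approach is the natural one and matches what the cited references do in spirit: identifying $\mathrm{Sym}^2(\mathbf{32})$ via the Clifford expansion and the symmetry of the pairings (which the paper does work out, as Lem.~\ref{SpinorQuadraticForms}, immediately after this proposition), then bootstrapping to higher symmetric powers by tensoring and projecting, and finally fixing the scalar coefficients in \eqref{GeneralCubicFierzIdentities} by repeated $\Gamma$-contraction and imposition of $\gamma$-tracelessness. Your observation that ``$\mathbf{4424}$'' in \eqref{IrrepsInSymmetricPowersOf32} should read $\mathbf{4224}$ is correct and is confirmed both by the dimension count $32+320+1408+4224=\binom{34}{3}$ and by the notation $\Xi^{(4224)}$ used consistently in \eqref{GeneralCubicFierzIdentities} and later in the paper. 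The only caveat is that for the quartic plethysm you are, as you acknowledge, ultimately deferring to the same tables the paper cites; this is honest but means that part of (i) remains a citation rather than an independent verification.
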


\begin{lemma}[\bf Quadratic forms on spinors]
\label{SpinorQuadraticForms}
$\,$

\noindent {\bf (i)} The following quadratic forms on $\psi \in \mathbf{32}$ vanish:
\begin{equation}
\label{VanishingQuadraticForms}
\hspace{-5mm} 
\def\arraystretch{1.4}
\begin{array}{r}
  \overline{\psi}
  \psi 
  \;=0
  \,, \;\; 
  \overline{\psi}
  \Gamma_{[a_1 a_2 a_3]}
  \psi
  \;= 
  0
    \,, \quad 
  \overline{\psi}
  \Gamma_{[a_1 \cdots a_4]}
  \psi
  \;=
  0
    \,, \;\;
  \overline{\psi}
  \Gamma_{[a_1 \cdots a_7]}
  \psi
  \;=
  0
    \,, \;\;
  \overline{\psi}
  \Gamma_{[a_1 \cdots a_8]}
  \psi
  \;=
  0
    \,, \;\;
  \overline{\psi}
  \Gamma_{[a_1 \cdots a_{11}]}
  \psi
  \;=
  0
  \mathrlap{\,,}
\end{array}
\end{equation}
and so on. 

\noindent {\bf (ii)}  Conversely, all non-trivial quadratic forms on $\mathbf{32}$ are unique linear combinations of the following ones: 
\begin{equation}
  \label{TheNontrivialQuadraticFormsOn32}
  \def\arraystretch{1.5}
  \begin{array}{ll}
    \big(\,
    \overline{\psi}
    \Gamma_a
    \psi
    \big)
    &
    \left(
      11 \atop 1
    \right)
    \;=\;
    11
    \\
    \big(\,
    \overline{\psi}
    \Gamma_{a b}
    \psi
    \big)
    &
    \left(
      11 \atop 2
    \right)
    \;=\;
    55
    \\
    \big(\,
    \overline{\psi}
    \Gamma_{a_1 \cdots a_5}
    \psi
    \big)
    &
    \left(
      11 \atop 5
    \right)
    \;=\;
    462
    \,.
  \end{array}
\end{equation}
\end{lemma}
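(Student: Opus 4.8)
Both parts will follow from bookkeeping with the Clifford basis, organized around two dimension identities: $\sum_{p=0}^{5}\binom{11}{p} = 1024 = 32^2$ and $\binom{11}{1}+\binom{11}{2}+\binom{11}{5} = 11+55+462 = 528 = \binom{33}{2}$. The first suggests that the elements $\Gamma^{(p)} := \Gamma_{a_1\cdots a_p}$ with $0\le p\le 5$ form a basis of $\mathrm{End}(\mathbf{32})$, and the second that the quadratic forms coming from $p\in\{1,2,5\}$ exhaust $\mathrm{Sym}^2(\mathbf{32}^\ast)$.

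\textbf{Step 1 (reduction to $\mathrm{End}(\mathbf{32})$).} By Lem.~\ref{TheSpinorPairing} the pairing $(\overline{(-)}\,(-))$ of \eqref{SpinorPairing} is bilinear and $\mathrm{Spin}(1,10)$-equivariant; it is also nondegenerate — directly from the octonionic model \eqref{TheSpinRepresentation}, \eqref{OctonionicCliffordReprsentation} (where $\Gamma_0$ is invertible and $\mathrm{Re}(\psi^\dagger\Gamma_0\phi)$ is the Euclidean inner product precomposed with $\Gamma_0$), or simply because its radical is an invariant subspace, hence trivial since $\mathbf{32}$ is irreducible and the pairing is nonzero. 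Thus $\psi\mapsto(\overline{\psi}\,(-))$ is an isomorphism $\mathbf{32}\xrightarrow{\sim}\mathbf{32}^\ast$, so every bilinear form on $\mathbf{32}$ is uniquely of the shape $(\psi,\phi)\mapsto(\overline{\psi}\,M\,\phi)$ for some $M\in\mathrm{End}(\mathbf{32})$. By the Clifford expansion \eqref{CliffordExpansionOnAnyMatrix} (with Hodge duality \eqref{HodgeDualityOnCliffordAlgebra}, cf.~\eqref{ExamplesOfHodgeDualCliffordElements}, folding $p>5$ down to $p\le 5$), $M$ is a combination of $\Gamma^{(p)}$ with $0\le p\le 5$, and by the dimension count above this presentation is unique, i.e.\ those $\Gamma^{(p)}$ are a basis of $\mathrm{End}(\mathbf{32})$.

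\textbf{Step 2 (symmetry signs).} Define $s_p\in\{\pm1\}$ by $(\overline{\psi}\,\Gamma^{(p)}\,\phi) = s_p\,(\overline{\phi}\,\Gamma^{(p)}\,\psi)$. Skew-symmetry of the pairing gives $s_0 = -1$; transposing across the pairing reverses the $p$ anticommuting factors of $\Gamma^{(p)}$, contributing the sign $(-1)^{p(p-1)/2}$, together with one sign per generator $\Gamma_a$. Pinning the single-generator sign by the requirement that $(\overline{\psi}\,\Gamma_a\,\psi)$ — which spans the manifestly non-trivial vector representation $\mathbf{11}$ — be \emph{symmetric}, so $s_1 = +1$, one gets $s_p = +1$ exactly for $p\equiv 1,2 \,(\mathrm{mod}\,4)$ and $s_p = -1$ exactly for $p\equiv 0,3 \,(\mathrm{mod}\,4)$; in particular $s_p = s_{11-p}$, compatibly with \eqref{HodgeDualityOnCliffordAlgebra}.

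\textbf{Step 3 (conclusion).} Since $\psi$ is a commuting spinor (Rem.~\ref{CommutingSpinors}), an antisymmetric bilinear form vanishes on the diagonal; hence $(\overline{\psi}\,\Gamma^{(p)}\,\psi) = 0$ whenever $s_p = -1$, i.e.\ for $p\in\{0,3,4,7,8,11,\dots\}$, which is exactly the list \eqref{VanishingQuadraticForms} (the values $p=7,8,11$ also following from $p=4,3,0$ by \eqref{HodgeDualityOnCliffordAlgebra}) — this is Part (i). For Part (ii): by Step 1 a quadratic form corresponds to $M$ in the span of $\Gamma^{(p)}$, $0\le p\le 5$; its $p\in\{0,3,4\}$ components produce nothing by the above, so it is a linear combination of the $11+55+462 = 528$ forms \eqref{TheNontrivialQuadraticFormsOn32}, which are linearly independent (independence inherited from the uniqueness in Step 1), and $528 = \binom{33}{2} = \dim\mathrm{Sym}^2(\mathbf{32}^\ast)$; hence they form a basis of the space of quadratic forms on $\mathbf{32}$, giving the claimed unique expansion. (Equivalently, this reproves $\mathrm{Sym}^2(\mathbf{32})\cong\mathbf{11}\oplus\mathbf{55}\oplus\mathbf{462}$ of \eqref{IrrepsInSymmetricPowersOf32}.) \textbf{Main obstacle.} The only delicate step is Step 2: extracting the exact signs $s_p$ for the specific pairing $(\overline{\psi}\,\phi) = \mathrm{Re}(\psi^\dagger\Gamma_0\phi)$, since $\psi^\dagger$ carries an octonionic conjugation rather than a bare transpose, so the standard ``$C\Gamma^{(p)}$ symmetry table'' must be re-derived either from the explicit matrices \eqref{OctonionicCliffordReprsentation} or from a Schur-type argument on the one-dimensional spaces of invariant maps $\mathbf{32}\otimes\mathbf{32}\to\Lambda^p(\mathbb{R}^{1,10})^\ast$, fixing the overall normalization so as to be consistent with the non-vanishing asserted in \eqref{TheNontrivialQuadraticFormsOn32}; everything else is dimension counting.
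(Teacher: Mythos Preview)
Your approach coincides with the paper's: both reduce to determining the symmetry sign $s_p$ of the bilinear form $(\overline\psi\,\Gamma^{(p)}\,\phi)$, then invoke Hodge duality \eqref{HodgeDualityOnCliffordAlgebra} and a dimension/irrep count for part (ii). Your Step~1 and Step~3 are fine, and your dimension-counting presentation of part (ii) is in fact more self-contained than the paper's appeal to \eqref{IrrepsInSymmetricPowersOf32}.

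The gap is exactly where you flag it: Step~2. Pinning the per-generator sign by \emph{assuming} $s_1=+1$ is circular for part~(ii), since nonvanishing of $(\overline\psi\,\Gamma_a\,\psi)$ is part of what is being claimed. The paper closes this in three lines by computing directly from the definition \eqref{SpinorPairing}: using $\Gamma_0^\dagger=-\Gamma_0$ and $\Gamma_0^{-1}\Gamma_a^\dagger\Gamma_0=-\Gamma_a$ (both immediate from the explicit matrices \eqref{OctonionicCliffordReprsentation}), one gets
\[
(\overline\psi\,\Gamma^{(p)}\,\phi)
\;=\;
-(-1)^{p+p(p-1)/2}\,(\overline\phi\,\Gamma^{(p)}\,\psi)
\;=\;
-(-1)^{p(p+1)/2}\,(\overline\phi\,\Gamma^{(p)}\,\psi),
\]
which is precisely your claimed $s_p$ with no pinning required. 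Alternatively, within your own framework you could avoid circularity by noting that the two possible values of your undetermined per-generator sign $\epsilon$ would give symmetric subspaces of dimensions $528$ versus $55+165=220$, and only the former matches $\dim\mathrm{Sym}^2(\mathbf{32}^\ast)=528$; this fixes $\epsilon=-1$ without assuming any nonvanishing.
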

\begin{proof}
With the skew-symmetry of the spinor pairing \eqref{SpinorPairing} we compute as follows: 
$$
  \def\arraystretch{1.6}
  \begin{array}{ll}
    \big(\,
    \overline{\psi}
    \Gamma_{[a_1 \cdots a_p]}
    \phi
    \big)
       & \;\defneq\;
    \mathrm{Re}\big(
      \psi^\dagger
      \Gamma_0
      \Gamma_{[a_1 \cdots a_p]}
      \phi
    \big) 
    \\
   & \;=\;
    -
    \mathrm{Re}\Big(
      \phi^\dagger
      \big(\Gamma_{[a_1 \cdots a_p]}\big)^\dagger
      \Gamma_0
      \psi
    \Big) 
    \\
  &  \;=\;
    -
    \mathrm{Re}\Big(
      \phi^\dagger
      \Gamma_0
      \Gamma^{-1}_0
      \big(\Gamma_{[a_1 \cdots a_p]}\big)^\dagger
      \Gamma_0
      \psi
    \Big) 
    \\
   & \;=\;
    -
    (-1)^{p + p(p-1)/2}
    \mathrm{Re}\big(
      \phi^\dagger
      \Gamma_0
      \Gamma_{[a_1 \cdots a_p]}
      \psi
    \big) 
    \\
  &  \;=\;
    -(-1)^{p(p+1)/2}
    \big(\,
      \overline{\phi}
      \,\Gamma_{[a_1 \cdots a_p]}\,
      \psi
    \big)
    \,.
  \end{array}
$$
Moreover, due to \eqref{HodgeDualityOnCliffordAlgebra}
only the first three of the relations \eqref{VanishingQuadraticForms} are independent statements. This implies that all non-vanishing quadratic forms are linear combinations of those in \eqref{TheNontrivialQuadraticFormsOn32}, and by \eqref{IrrepsInSymmetricPowersOf32} all of these are nontrivial and independent.
\end{proof}
Below we need, among others, the following corollary of the above: 
\begin{lemma}[{\bf Mixed nondegeneracy}]
  \label{XiVanishesIfAllSymmetricPairingsOntoPsiVanish}
  Given $\psi, \, \xi \,\in\, \mathbf{32}$ with   $\psi \,\neq\, 0$ such that 
  $$
    \big(\,
      \overline{\psi}
      \,\Gamma_{a_1}\,
      \xi
    \big)
    \;=\;
    0
    \,,
    \hspace{.4cm}
    \mbox{and}
    \hspace{.4cm}
    \big(\,
      \overline{\psi}
      \,\Gamma_{a_1 a_2}\,
      \xi
    \big)
    \;=\;
    0
    \hspace{.4cm}
    \mbox{and}
    \hspace{.4cm}
    \big(\,
      \overline{\psi}
      \,\Gamma_{a_1 \cdots a_5}\,
      \xi
    \big)
    \;=\;
    0
    \;\;\;\;\;\;\;
    \mbox{for all $a_1, a_2 \cdots a_5$},
  $$
  then $\xi = 0$.
\end{lemma}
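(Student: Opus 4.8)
The plan is to reconstruct the $\FR$-linear endomorphism
$M := \xi\,\overline{\psi} + \psi\,\overline{\xi} \,\in\, \mathrm{End}(\mathbf{32})$,
that is, the map $\chi \mapsto \xi\cdot\big(\overline{\psi}\,\chi\big) + \psi\cdot\big(\overline{\xi}\,\chi\big)$
(well-defined since the $\psi,\xi\in\mathbf{32}$ here are ordinary, commuting elements), from its Clifford components via \eqref{CliffordExpansionOnAnyMatrix}, to show from the hypotheses that all these components vanish, and then to extract $\xi$ from the resulting identity $M=0$ using the nondegeneracy of the spinor pairing \eqref{SpinorPairing}.

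The key computation is the evaluation of the traces appearing in \eqref{CliffordExpansionOnAnyMatrix}. Using that the trace of a rank-one operator $\chi \mapsto v\cdot f(\chi)$ equals $f(v)$, one gets for every Clifford basis element
$$
  \mathrm{Tr}\big( M \circ \Gamma_{a_1 \cdots a_p}\big)
  \;=\;
  \big(\,\overline{\psi}\,\Gamma_{a_1 \cdots a_p}\,\xi\,\big)
  \;+\;
  \big(\,\overline{\xi}\,\Gamma_{a_1 \cdots a_p}\,\psi\,\big)\,.
$$
Now the symmetry identity established inside the proof of Lem. \ref{SpinorQuadraticForms}, namely $\big(\overline{\psi}\,\Gamma_{[a_1\cdots a_p]}\,\xi\big) = -(-1)^{p(p+1)/2}\big(\overline{\xi}\,\Gamma_{[a_1\cdots a_p]}\,\psi\big)$, shows that this trace vanishes identically for $p\in\{0,3,4\}$ (the skew degrees) and equals $2\big(\overline{\psi}\,\Gamma_{a_1\cdots a_p}\,\xi\big)$ for $p\in\{1,2,5\}$ (the symmetric degrees). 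The latter are exactly the three families of bilinear forms assumed to vanish in the hypothesis, and $\{0,\dots,5\}$ is precisely the range of $p$ occurring in \eqref{CliffordExpansionOnAnyMatrix} (the remaining degrees being Hodge-dual to these via \eqref{HodgeDualityOnCliffordAlgebra}). Hence every coefficient in the Clifford expansion of $M$ is zero, so $M = 0$.

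It remains to conclude. Since the pairing \eqref{SpinorPairing} is nondegenerate and $\psi \neq 0$, there is some $\chi_0$ with $\big(\overline{\psi}\,\chi_0\big) \neq 0$; evaluating $M = 0$ on $\chi_0$ gives $\xi = \lambda\,\psi$ with $\lambda := -\big(\overline{\xi}\,\chi_0\big)\big/\big(\overline{\psi}\,\chi_0\big)$. Substituting into the hypothesis $\big(\overline{\psi}\,\Gamma_{a}\,\xi\big)=0$ with $a=0$ yields $\lambda\,\big(\overline{\psi}\,\Gamma_0\,\psi\big) = 0$, and since $\Gamma_0\Gamma_0 = \eta_{00} = -1$ by \eqref{TheCliffordAlgebra} one has $\big(\overline{\psi}\,\Gamma_0\,\psi\big) = \mathrm{Re}\big(\psi^\dagger\Gamma_0\Gamma_0\psi\big) = -\vert\psi\vert^2 \neq 0$, forcing $\lambda = 0$, i.e. $\xi = 0$. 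The only genuinely delicate point is the middle step's sign bookkeeping: one must verify that, after reducing to $p\le 5$, the symmetric Clifford degrees are exactly $p=1,2,5$, so that the three hypotheses annihilate all of the surviving expansion coefficients (and that nothing more than these three is needed).
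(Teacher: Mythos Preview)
Your proof is correct but takes a genuinely different route from the paper's. The paper argues more abstractly: by Lem.~\ref{SpinorQuadraticForms} the three families $\big(\overline{\psi}\,\Gamma_a\,\xi\big)$, $\big(\overline{\psi}\,\Gamma_{a_1 a_2}\,\xi\big)$, $\big(\overline{\psi}\,\Gamma_{a_1\cdots a_5}\,\xi\big)$ span \emph{all} symmetric bilinear forms on $\mathbf{32}\cong\FR^{32}$ (the dimension count $11+55+462=528=\binom{33}{2}$ in \eqref{TheNontrivialQuadraticFormsOn32} makes this manifest), so the hypothesis says $B(\psi,\xi)=0$ for every symmetric bilinear form $B$; the paper then invokes the elementary linear-algebra fact that if a nonzero vector pairs to zero with a second vector under every symmetric bilinear form, the second vector must vanish, proved by putting $\psi$ in the form $(1,0,\ldots,0)$ and testing against the basis forms $B_{1i}=B_{i1}=1$.

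Your route instead builds the symmetrized rank-two endomorphism $M=\xi\,\overline{\psi}+\psi\,\overline{\xi}$ and kills it term-by-term in its Clifford expansion \eqref{CliffordExpansionOnAnyMatrix}, then extracts $\xi=\lambda\psi$ from $M=0$ and finishes with the definiteness of $\big(\overline{\psi}\,\Gamma_0\,\psi\big)=-\vert\psi\vert^2$. Both arguments ultimately rest on the same input (the identification of the symmetric Clifford degrees as $p=1,2,5$), but the paper's version is shorter and dimension-agnostic, while yours stays closer to the Clifford machinery already set up in \S\ref{TheSpinRep} and makes the nondegeneracy step completely explicit. One small point: you silently use that the skew pairing \eqref{SpinorPairing} is nondegenerate to produce $\chi_0$; this is true (since $\Gamma_0$ is invertible) but not stated in Lem.~\ref{TheSpinorPairing}, so it is worth a word.
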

\begin{proof}
  By Lem. \ref{SpinorQuadraticForms}, the statement reduces to observing the following: 
  given two vectors in $\mathbb{R}^n$ with one of them non-vanishing but having vanishing pairing onto the other vector
  with respect to {\it all} symmetric bilinear forms on $\mathbb{R}^n$, then the second vector must be zero. 

      This is the case: For instance, we may assume without restriction that the first vector has components $(1,0,0, \cdots, 0)$, and consider as a linear basis for the space of bilinear forms $B$ those whose representing matrices have all entries vanishing except for $B_{1 i}  = B_{i 1} = 1$ for any fixed index $i$. Then the vanishing of the pairing of the two vectors with respect to all bilinear forms is equivalent to their vanishing in all these basis elements, which is equivalently the vanishing of the components $\xi_i$ for all $i$.
\end{proof}

\begin{proposition}[\bf The Fierz identities controlling $D=11$ supergravity]
\label{TheFierzIdentitiesOf11dSupergravity}
The following relations hold between quartic symmetric forms on $\mathbf{32}$:
\begin{equation}
  \label{TheQuarticFierzIdentities}
    \def\arraystretch{1.5}
    \begin{array}{rcl}
    \big(\,
      \overline{\psi}
      \Gamma_{a b}
      \psi
    \big)
    \big(\,
      \overline{\psi}
      \,\Gamma^{a}\,
      \psi
    \big)
    &=&
    \phantom{+}
    0
    \,,
    \\
    \big(\,
      \overline{\psi}
      \Gamma_{a b_1\cdots b_4}
      \psi
    \big)
    \big(\,
      \overline{\psi}
      \,\Gamma^{a}\,
      \psi
    \big)
    &=&
    \phantom{+}
    3
    \,
    \big(\,
      \overline{\psi}
      \,\Gamma_{[b_1 b_2}\,
      \psi
    \big)
    \big(\,
      \overline{\psi}
      \,\Gamma_{b_3 b_4]}\,
      \psi
    \big)
    \\
    &=&
    -
    \,
    \tfrac{1}{6}
    \,
    \big(\,
     \overline{\psi}
     \,\Gamma^{a_1 a_2}\,
     \psi
    \big)
    \big(\,
      \overline{\psi}
      \,\Gamma_{a_1 a_2 b_1 \cdots b_4}\,
      \psi
    \big)
    \,.
    \end{array}
\end{equation}
\end{proposition}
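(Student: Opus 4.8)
The plan is to reduce all three identities in \eqref{TheQuarticFierzIdentities} to statements about the irreducible decomposition of $\big(\mathbf{32}^{\otimes 4}\big)_{\mathrm{sym}}$ in \eqref{IrrepsInSymmetricPowersOf32}, exploiting the general cubic Fierz identities \eqref{GeneralCubicFierzIdentities} together with the $\Gamma$-tracelessness conditions \eqref{TheHigherTensorSpinors} on the tensor-spinors $\Xi^{(N)}$. The point is that each quartic expression $\big(\overline\psi\,\Gamma_{\bullet}\,\psi\big)\big(\overline\psi\,\Gamma^{\bullet}\,\psi\big)$ is manifestly a Lorentz-covariant element of a symmetric power of $\mathbf{32}$, and the key structural input is that $\big(\mathbf{32}^{\otimes 4}\big)_{\mathrm{sym}}$ contains the singlet $\mathbf{1}$ (for the first identity) and the $\mathbf{330}\cong\Lambda^4(\mathbb R^{11})$ (for the second and third) each with multiplicity \emph{one}. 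Multiplicity one forces any two Lorentz-covariant tensors of that symmetry type built from $\psi^{\otimes 4}$ to be proportional; the identities then amount to pinning down the proportionality constants $0$, $3$, and $-\tfrac16$.

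First I would insert the cubic expansions \eqref{GeneralCubicFierzIdentities} into each left-hand side, writing e.g. $\psi\big(\overline\psi\,\Gamma^a\,\psi\big)=\tfrac1{11}\Gamma^a\Xi^{(32)}+\Xi^{(320)\,a}$, and then contract with $\overline\psi\,\Gamma_{ab}$ (resp. $\overline\psi\,\Gamma_{ab_1\cdots b_4}$). This converts a quartic spinor identity into a sum of pairings of the form $\big(\overline\psi\,\Gamma_{\cdots}\,\Gamma_{\cdots}\,\Xi^{(N)}\big)$. Using the product formula \eqref{GeneralCliffordProduct} to expand the $\Gamma\Gamma$ products, and then repeatedly using the defining tracelessness $\Gamma^{a_1}\Xi^{(N)}_{a_1 a_2\cdots}=0$, most terms drop out; what survives on the right is again expressible through the $\Xi^{(N)}$, and hence back through the cubic forms by re-inverting \eqref{GeneralCubicFierzIdentities}. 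For the first identity the surviving contribution must lie in a representation that does not appear (there is no way to form a scalar, since $\mathbf 1\not\subset(\mathbf{32}^{\otimes 2})_{\mathrm{sym}}$ paired suitably, or more precisely the relevant channel vanishes on tracelessness grounds), giving $0$. For the second and third, the surviving terms land in the $\mathbf{330}$-channel with a computable numerical coefficient.

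Alternatively — and this is probably the cleaner route to present — I would appeal directly to multiplicity-one: observe that $\big(\overline\psi\,\Gamma_{ab_1\cdots b_4}\,\psi\big)\big(\overline\psi\,\Gamma^a\,\psi\big)$, $\big(\overline\psi\,\Gamma_{[b_1 b_2}\,\psi\big)\big(\overline\psi\,\Gamma_{b_3 b_4]}\,\psi\big)$, and $\big(\overline\psi\,\Gamma^{a_1 a_2}\,\psi\big)\big(\overline\psi\,\Gamma_{a_1 a_2 b_1\cdots b_4}\,\psi\big)$ are all totally antisymmetric rank-$4$ Lorentz tensors quartic in $\psi$, hence all live in the $\mathbf{330}$-isotypic component of $\big(\mathbf{32}^{\otimes 4}\big)_{\mathrm{sym}}$, which by \eqref{IrrepsInSymmetricPowersOf32} is one-dimensional. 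Therefore each is a scalar multiple of any fixed nonzero one of them; the constants $3$ and $-\tfrac16$ are then fixed by a single scalar contraction each (e.g. contracting $b_1\cdots b_4$ against a generic antisymmetric tensor, or setting $\psi$ to a convenient octonionic value from \eqref{TheSpinRepresentation} and evaluating both sides numerically), and the first identity follows because the scalar channel ($p=1$ paired with $p=1$) would have to be proportional to $\big(\overline\psi\,\Gamma^a\,\psi\big)\big(\overline\psi\,\Gamma_a\,\psi\big)$, which by a short computation using \eqref{GeneralCubicFierzIdentities} and $\Gamma^a\Xi^{(320)}_a=0$ equals zero.

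The main obstacle is the bookkeeping in the Clifford product expansions \eqref{GeneralCliffordProduct}: correctly tracking the sign factors $\pm l!\binom jl\binom kl$ and the nested antisymmetrizations when contracting $\Gamma_{ab_1\cdots b_4}\Gamma^a$ and $\Gamma^{a_1a_2}\Gamma_{a_1a_2 b_1\cdots b_4}$, and then confirming that every term proportional to an ``illegal'' tensor-spinor structure is killed by the $\Gamma$-traceless conditions \eqref{TheHigherTensorSpinors}. This is exactly the kind of multi-index sign computation that is error-prone by hand, which is presumably why the authors back it up with the computer-algebra checks in \cite{AncillaryFiles}; in the write-up I would present the representation-theoretic skeleton in full and relegate the determination of the three constants to an explicit evaluation (either the octonionic one or the symbolic one recorded in the ancillary files).
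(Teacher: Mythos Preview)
Your overall instinct --- reduce to representation theory of $(\mathbf{32}^{\otimes 4})_{\mathrm{sym}}$ via \eqref{IrrepsInSymmetricPowersOf32} --- matches the paper, but your treatment of the \emph{first} identity is misdirected. The expression $\big(\overline\psi\,\Gamma_{ab}\,\psi\big)\big(\overline\psi\,\Gamma^a\,\psi\big)$ carries one free index $b$, so it is a vector in $\mathbf{11}$, not a scalar in $\mathbf{1}$. The paper's argument is precisely the Schur-type one you are reaching for, but with the correct target: since $\mathbf{11}$ does \emph{not} appear in the decomposition \eqref{IrrepsInSymmetricPowersOf32} of $(\mathbf{32}^{\otimes 4})_{\mathrm{sym}}$, any equivariant map to $\mathbf{11}$ vanishes, and you are done in one line. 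Your detour through ``proportional to $\big(\overline\psi\,\Gamma^a\,\psi\big)\big(\overline\psi\,\Gamma_a\,\psi\big)$, which equals zero'' is both the wrong channel and unnecessary.

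For the second and third identities your multiplicity-one argument in the $\mathbf{330}$ channel is sound and is a genuinely different route from the paper's. The paper does not compute the constant $3$ from scratch: it cites the closer analysis in \cite[(3.28a), Table 2]{DF82} and \cite[(2.28)]{NOF86} for the second line, and then derives the third line from the second by Hodge-dualizing $\Gamma_{a_1 a_2 b_1\cdots b_4}$ via \eqref{ExamplesOfHodgeDualCliffordElements}, invoking a further relation \cite[(2.29)]{NOF86}, and contracting the resulting pair of $\epsilon$-tensors with \eqref{ContractingKroneckerWithSkewSymmetricTensor}. Your approach would be more self-contained (no external citations for the constants), at the cost of having to actually execute the evaluation that fixes $3$ and $-\tfrac16$; the paper's approach trades that computation for a short dualization argument leaning on the literature.
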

\noindent
(The first two are equivalent to the fundamental $\mathfrak{l}S^4$-valued super-cocycle relation \eqref{TheSupercocycles} and as such control the super-flux Bianchi identities in Lem. \ref{SuperBianchiIdentityForG4InComponents} \& \ref{SuperBianchiIdentityForG7InComponents}, while the last line appears in the gravitino Bianchi identity in Lem. \ref{SuperFluxAndGravitinoBianchiEquivalentToRaritaSchwinger}.)
\begin{proof}
  On the first expression:
  This is the quartic diagonal of a $\mathrm{Spin}(1,10)$-equivariant map
  $$
    \big(
      \mathbf{32}
      \,\otimes\,
      \mathbf{32}
      \,\otimes\,
      \mathbf{32}
      \,\otimes\,
      \mathbf{32}
    \big)_{\mathrm{sym}}
    \longrightarrow
    \mathbf{11}
    \,.
  $$
  But by \eqref{IrrepsInSymmetricPowersOf32} the irrep summand $\mathbf{11}$ does not appear on the left, hence 
  this map has to vanish by Schur's Lemma (\cite[(3.13)]{DF82}).
  
  For the second expression one needs a closer analysis \cite[(3.28a), Table 2]{DF82}\cite[(2.28)]{NOF86}.

  For the third line we dualize a further such relation proven in \cite{NOF86}:
  $$
    \def\arraystretch{1.6}
    \begin{array}{lll}
    & \phantom{aaa}
    \big(\,
        \overline{\psi}
        \,\Gamma^{a_1 a_2}\,
        \psi
      \big)
      \big(\,
        \overline{\psi}
        \,\Gamma_{a_1 a_2 b_1 \cdots b_4}\,
        \psi
      \big)
      \\
      & \! =
      \tfrac{1}{5!}
      \big(\,
        \overline{\psi}
        \,\Gamma^{a_1 a_2}\,
        \psi
      \big)
      \big(\,
        \overline{\psi}
        \,\Gamma^{
          \color{darkblue}
          c_1 \cdots c_5
        }\,
        \psi
      \big)
      \epsilon_{
        a_1 a_2
        \,
        b_1 \cdots b_4
        \,
        {\color{darkblue}
          c_1 \cdots c_5
        }
      }
      &
      \proofstep{
        by \eqref{ExamplesOfHodgeDualCliffordElements}
      }
      \\
      &  
      \!=
      \tfrac{1}{5!}
      \big(\,
        \overline{\psi}
        \,\Gamma^{a_1}\,
        \psi
      \big)
      \big(\,
        \overline{\psi}
        \,\Gamma^{
          a_2
          \,
          {
            \color{darkblue}
            c_1 \cdots c_5
          }
        }\,
        \psi
      \big)
      \,
      \epsilon_{
        a_1 a_2 
        \,
        b_1 \cdots b_4 
        \,
        {\color{darkblue}
          c_1 \cdots c_5
        }
        }      
      &
      \proofstep{
        by \cite[(2.29)]{NOF86}
      }
      \\
      &   
      \!=
      \tfrac{1}{5! \cdot 5!}
      \big(\,
        \overline{\psi}
        \,\Gamma^{a_1}\,
        \psi
      \big)
      \big(\,
        \overline{\psi}
        \,
        \Gamma_{
          \color{darkorange}
          d_1 \cdots d_5
        }
        \,
        \psi
      \big)
      \,
      \epsilon^{
        a_2 
        \,
        {\color{darkblue}
          c_1 \cdots c_5
        }
        \,
        {\color{darkorange}
          d_1 \cdots d_5
        }
      }
      \,
      \epsilon_{
        a_1 a_2
        \,
        b_1 \cdots b_4
        \,
        {\color{darkblue}
          c_1 \cdots c_5
        }
      }      
      &
      \proofstep{
        by \eqref{ExamplesOfHodgeDualCliffordElements}
      }
      \\
      &    
      \!=
      -
      \tfrac{5! \cdot 6!}{5! \cdot 5!}      
      \big(\,
        \overline{\psi}
        \,\Gamma^{a_1}\,
        \psi
      \big)
      \big(\,
        \overline{\psi}
        \,
        \Gamma_{
          \color{darkorange}
          d_1 \cdots d_5
        }
        \,
        \psi
      \big)
      \delta
        ^{
          {
            \color{darkorange}
            d_1 \cdots d_5
          }
        }
        _{a_1 b_1 \cdots b_4}      
      &
      \proofstep{
        by \eqref{ContractingKroneckerWithSkewSymmetricTensor}      
      }
      \\
      & 
      \!=
      -
      6
      \,
      \big(\,
        \overline{\psi}
        \,\Gamma^{a_1}\,
        \psi
      \big)
      \big(\,
        \overline{\psi}
        \,\Gamma_{a_1 b_1 \cdots b_4}\,
        \psi
      \big)
      &
      \proofstep{
        by \eqref{ContractingKroneckerWithSkewSymmetricTensor}
      }
      \\
      &    
      \!=
      -
      \,
      3 \cdot 6
      \,
      \big(\,
        \overline{\psi}
        \,\Gamma_{[b_1 b_2}\,
        \psi
      \big)
      \big(\,
        \overline{\psi}
        \,\Gamma_{b_1 b_2]}\,
        \psi
      \big)
      &
      \proofstep{
        by previous claim in \eqref{TheQuarticFierzIdentities}.
      }
    \end{array}
  $$

  \vspace{-4mm} 
\end{proof}

\subsubsection{Super-frame and Super-gravity fields}
\label{SuperFrameAndSupergravityFields}

A {\it super-spacetime} should be a super-manifold equipped with a field configuration of super-gravity (not necessarily satisfying any equations of motion, at this point).
Mathematically this means, for our purposes, that a $(D\vert\mathbf{N})$-dimensional {\it super-spacetime} of super-dimension $(D\vert\mathbf{N})$ with $D \in \mathbb{N}_{\geq 1}$ and $\mathbf{N} \,\in\, \mathrm{Rep}\big( \mathrm{Spin}(1,D-1) \big)$ is:

\vspace{1mm} 
\begin{itemize}[leftmargin=.4cm]
\setlength\itemsep{2pt}
\item
 a supermanifold equipped with a super-frame filed $(e,\psi)$ and a (super-)torsion-free spin-connection $\omega$, locally ``soldering'' the supermanifold to the super-Minkowski-spacetime $\FR^{1,D-1\vert \mathbf{N}}$.

\item
More abstractly, this is a torsion-free super-Cartan geometry modeled on the super-Poincar{\'e} group $\mathrm{Iso}(\FR^{1,D\vert \mathbf{N}})$.

\item
Yet more abstractly, in the language of \cite{SS20Orb}, this is a $\FR^{1,D-1\vert \mathbf{N}}$-fold equipped with a smooth $\mathrm{Spin}(1,D-1)$-structure which coincides with the left-invariant one on $\FR^{1,D-1\vert \mathbf{N}}$ on the bosonically first-order infinitesimal neighborhood of every point.
\end{itemize}

\medskip

Concretely:

\begin{definition}[\bf Super-spacetime and Super-gravity fields]
\label{SuperSpacetime}
A $D\vert \mathbf{N}$-dimensional {\it super-spacetime} is:

\begin{itemize}[leftmargin=1cm]

 \item[{\bf (i)}]
 A {\it supermanifold} $X$, admitting a cover by local diffeomorphisms (``{\'e}tale maps'') from the supermanifold underlying the super-Minkowski Lie algebra \eqref{TheSuperMinkowskiLieAlgebra}:
 $$
   \big\{
     \hspace{-4pt}
     \begin{tikzcd}[column sep=8pt]
       \FR^{1,D-1\vert \mathbf{N}} 
       \ar[
         from=r, 
         shorten=-3pt,
         "{ \sim }"{pos=.4,swap}
       ]
        &
       \;
       U_i 
       \; 
       \ar[
        r, 
        shorten=-5pt, 
        "{ \scalebox{.7}{\'et} }"{yshift=1pt}
      ] 
        & 
       \,  X 
     \end{tikzcd}
     \hspace{-4pt}
   \big\}_{i \in I}
   \,;
 $$

 \item[{\bf (ii)}] equipped with a {\it super Cartan connection} with respect to the canonical subgroup inclusion $\mathrm{Spin}(1,D-1) \hookrightarrow \mathrm{Iso}\big(\FR^{1,D-1\vert \mathbf{N}}\big)$ into the super-Poincar{\'e} group (e.g. \cite[\S 6.5]{Varadarajan04}), namely equipped with:

 \begin{itemize}[leftmargin=.7cm]
 \item[\bf (a)]
  A {\it super-coframe field}, hence on each $U_i$\footnote{Crucially, this does not necessarily mean that $\phi_i$ \eqref{CartanProperty} 
  is $T(U_i\xrightarrow{\sim} \FR^{1,D-1\vert \mathbf{N}})$, i.e., not necessarily the pushfoward along the corresponding local chart trivialization.
  } 
  \begin{equation}
  \label{SuperFrameField}
  \Big(
    (e^a_i)_{a=0}^{D-1}
    ,\,
    (\psi^\alpha_i)_{\alpha=1}^N
  \Big)
  \;\in\;
  \Omega^1_{\mathrm{dR}}\big(
    U_i
    ;\,
    \FR^{1,D-1\vert \mathbf{N}}
  \big)
  \end{equation}
such that 
\begin{itemize}[leftmargin=.4cm]
\item[$\bullet$] on every $U_i$ these differential forms constitute
an isomorphism from the super-tangent bundle over $U_i$ to that of super-Minkowski space (this extra property makes $(e, \psi,\omega)$ a {\it Cartan connection}):
\begin{equation}
  \label{CartanProperty}
  \phi_i
  \,:=\,
  (e_i,\psi_i)
  \;:\;
  \begin{tikzcd}
    T U_i
    \ar[
      r,
      "{ \sim }"{swap}
    ]
    &
    T \FR^{1,D-1\vert \mathbf{N}} 
    \,;
  \end{tikzcd}
\end{equation}
\item[$\bullet$] on double overlaps $ U_i \cap U_j$ the transition
$$
  \gamma_{i j}
  \;:\;
  \begin{tikzcd}
    T \big(
      U_i 
        \cap 
      U_j
    \big)
    \ar[
      rr,
      "{
        \phi_i
      }"
    ]
    &&
    T \FR^{1,D-1\vert\mathbf{N}}
    \ar[
      rr,
      "{
        \phi_j^{-1}
      }"
    ]
    &&
    T \big(
      U_i 
        \cap 
      U_j
    \big)
  \end{tikzcd}
$$
is by the action of an element of 
$\mathrm{Spin}(1,D-1)$, hence 
$$
  \Big(
    \big(\gamma_{i j}\big)^a{}_b
  \Big)_{a,b=0}^{D-1}
  \;\in\;
  \Omega^0_{\mathrm{dR}}\big(
    U_i \cap U_j
    ;\,
    \mathrm{Spin}(1,D-1) 
  \big).
$$
\end{itemize}
\item[\bf (b)]
A {\it spin-connection} hence on each $U_i$ 
\begin{equation}
  \label{TheSpinConnectionForm}
  \big(
    \omega^a{}_b
  \big)_{a,b = 0}^d
  \;\in\;
  \Omega^1_{\mathrm{dR}}\big(
    U_i
    ;\,
    \mathfrak{so}(1,D-1)
  \big)
\end{equation}
such that on double overlaps $U_i \cap U_j$ we have
$$
  (\omega_i)^a{}_b
  =
  \big(\gamma_{i j}\big)^{a}{}_{a'}
  \,
  (\omega_j)^{a'}{}_b'
  \,
  \big(\gamma^{-1}_{i j}\big)^{b'}{}_{b}
  +
  \big(\gamma_{i j}\big)^{a}{}_{c}
  \,
  \differential
  \,
  \big(\gamma^{-1}_{i j}\big)^{c}{}_{b}\;,
$$
\end{itemize}
which represents a supergravity field configuration on $X$ (not necessarily on-shell):
\begin{equation}
  \label{GravitonAndGravitino}
  \def\arraystretch{1.5}
  \begin{array}{rl}
    \scalebox{.8}{
      \color{darkblue} 
      \bf 
      Graviton
    }
    &
    \big(
      (e^a)_{a=0}^{D-1},
      \,
      (\omega^{ab})_{a,b=0}^{D-1}
    \big)
    \;\; \in\,
    \Omega^1_{\mathrm{dR}}\big(
      U
      ;\,
      \mathfrak{iso}(\FR^{1,D-1})
    \big)
    \\
    \scalebox{.8}{
      \color{darkblue} 
      \bf 
      Gravitino}    
    &
    \big(\psi^\alpha\big)_{\alpha=1}^{N}
    \qquad \qquad \qquad \, \in\,
    \Omega^1_{\mathrm{dR}}\big(
      U
      ;\,
      \mathbf{N}_{\mathrm{odd}}
    \big) , 
  \end{array}
\end{equation}
\item[{\bf (iii)}]
such that the (bosonic coframe field-component of the super-)torsion \eqref{GravitationalFieldStrengths} vanishes, on each super-chart:
  \begin{equation}
    \label{TorsionConstraint}
    T^a_i
    \;:=\;
    \differential
    \,
    e^a_i 
    +
    (\omega_i)^a{}_b
    \,
    e_i^{b}
    \;-\;
    \big(\,
    \overline{\psi}_i
    \,\Gamma^a\,
    \psi_i
    \big)
    \;=\;
    0
    \,.
  \end{equation}
\end{itemize}
\end{definition}

\smallskip 
\begin{remark}[\bf Frame- and Coordinate-indices]
\label{FrameAndCoordinateIndices}
On a given super-coordinate chart $U$, a coframe field \eqref{SuperFrameField} is expressed in the coordinate differentials as
\begin{equation}
  \label{FrameAndCoordinateDifferentials}
  e^a
  \;=\;
  e^a{}_\evencoordinateindex
  \,
  \mathrm{d}x^\mu
  \,+\,
  e^a{}_\oddcoordinateindex
  \,
  \mathrm{d}\theta^{\oddcoordinateindex}
  \,,\;\;\;\;\;\;\;
  \psi^\alpha
  \;=\;
  \psi^\alpha{}_\evencoordinateindex
  \,
  \mathrm{d}x^\evencoordinateindex
  \,+\,
  \psi^\alpha{}_\oddcoordinateindex
  \,
  \mathrm{d}\theta^\oddcoordinateindex
  \,.
\end{equation}
As usual, one uses these coefficients to translate between frame- and coordinate-indices other tensors, such as:
$$
  \partial_a
  \;:=\;
  e_a{}^{\evencoordinateindex}
  \frac{\partial}{\partial x^{\evencoordinateindex}}
  \,+\,
  e_a{}^{\oddcoordinateindex}
  \frac{\partial}{\partial \theta^{\oddcoordinateindex}}
  \,,
  \;\;\;\;\;\;\;\;\;\;
  \Gamma_{\evencoordinateindex_1 \cdots  \evencoordinateindex_p} 
    \;:=\; 
  \Gamma_{a_1 \cdots a_p} e^{a_1}{}_{[\evencoordinateindex_1} \cdots e^{a_p}{}_{\evencoordinateindex_p]}
  \,.
$$
See also \eqref{SuperBianchiIdentityForG4InComponents} and Rem. \ref{ExteriorAndCovariantDerivatives} below.
\end{remark}

\smallskip 
\begin{example}[\bf Super Minkowski Spacetime]
  \label{SuperMinkowskiSpacetime}
  The supermanifold $\mathbb{R}^{1,10\vert \mathbf{32}}$ with its canonical coordinate functions denoted $\big( (x^\evencoordinateindex)_{\evencoordinateindex=0}^{10}, (\theta^\oddcoordinateindex)_{\oddcoordinateindex=1}^{32}\big)$ becomes a super-spacetime (Def. \ref{SuperSpacetime})  
  by choosing a single chart $U := \mathbb{R}^{1,10\vert \mathbf{32}}$ and equipping it with coframe fields defined by
  \begin{equation}
    \label{SuperMinkowskiFrameField}
    \scalebox{.7}{
      \color{darkblue}
      \bf
      \def\arraystretch{.9}
      \begin{tabular}{c}
        Supergravity fields on
        \\
        super-Minkowski spacetime
      \end{tabular}
    }
    \def\arraystretch{1.2}
    \begin{array}{ccl}
      e^a \
      &:=&
      \delta^a_\evencoordinateindex
      \,\mathrm{d}x^{\evencoordinateindex}
      +
      \big(\,
        \overline{\theta}
        \,\Gamma^a\,
        \mathrm{d}
        \theta
      \big)
      \\
      \psi^\alpha
      &:=&
      \delta^\alpha_\oddcoordinateindex
      \,
      \theta^\oddcoordinateindex
      \\
      \omega^a{}_b
      &:=&
      0
      \,.
    \end{array}
  \end{equation}

  Notice that the SDG-algebra generated (over $\mathbb{R}$) by these fields is just the Chevalley-Eilenberg algebra of the super-Minkowski Lie algebra (Ex. \ref{SuperMinkowskiLieAlgebra}), thus identifying the linear span of these fields with the {\it left-invariant} 1-forms on the super-Minkowski group. It is this identification which requires the $\big(\overline{\theta} \,\Gamma^a\, \mathrm{d}\theta\big)$-term in \eqref{SuperMinkowskiFrameField} and thus the super-torsion constraint in \eqref{TorsionConstraint}, cf. Rem. \ref{FormOfTheSuperTorsionConstraint}.

\smallskip 
  Said more conceptually: A crucial difference between the bosonic translation group $\mathbb{R}^{11}$ (under addition) and the super-Minkowski translation group $\mathbb{R}^{1,10 \vert \mathbf{32}}$ is that the latter is (mildly but crucially) {\it nonabelian}, with non-trivial super Lie bracket being the super-symmetry bracket \eqref{TheSupersymmetryBracketInComponents}. It is the condition that a super-spacetime locally (on each first-order infinitesimal neighborhood) looks like super-Minkowski spacetime {\it with} its super-translation symmetry structure which demands the super-torsion constraint \eqref{TorsionConstraint} 
  that is so crucial for the theory of supergravity; cf. again Rem. \ref{FormOfTheSuperTorsionConstraint} below.
\end{example}

\begin{example}[\bf Super-Spacetimes extending ordinary pseudo-Riemannian Spin-manifolds]
\label{SuperSpacetimesExtendingOrdinarySpinManifolds}
Consider an ordinary $D=11$ manifold $\bosonic{X}$ equipped with geometric $\mathrm{Spin}(1,10)$-structure represented by a 
$\mathrm{Spin}(1,10)$-bundle 
$P \xrightarrow{\;} \bosonic{X}$. 
Via the action of $\mathrm{Spin}(1,10)$ on the representation space $\mathbf{32}$ \eqref{TheSpinRepresentation} this induces the associated spinor bundle 
$
  \mathbf{32} 
  \underset{
    \;
    \mathclap{
      \scalebox{.5}{$
      \mathrm{Spin}(1,10)
      $}
    }
    \;
  }{\times} 
  P 
  \xrightarrow{\;} \bosonic{X}$,
which in turn induces the  supermanifold extension of $\bosonic{X}$ (cf. \cite[\S 2]{Rogers84}) that in the notation 
\eqref{PlotsOfOddVctorBundle} of Ex. \ref{SmoothManifoldsAsSmoothSuperSets} reads 
\vspace{-.3cm}
\begin{equation}
  \label{ExtendingSpinManifoldToSupermanifold}
  X
  \;:=\;
  \bosonic{X}
  \,
  \big\vert
  \,
  \big(
  \mathbf{32} 
  \underset{
    \;
    \mathclap{
      \scalebox{.5}{$
      \mathrm{Spin}(1,10)
      $}
    }
    \;
  }{\times} 
  P 
  \big)
  \,.
\end{equation}
Via Batchelor's theorem \cite{Batchelor79} every $11\vert\mathbf{32}$-dimensional super-spacetime in the sense of Ex. \ref{SuperSpacetime} has underlying super-manifold of this form $X$ \eqref{ExtendingSpinManifoldToSupermanifold}, and Thm. \ref{11dSugraEoMFromSuperFluxBianchiIdentity} below implies that solutions of 11d SuGra on $\bosonic{X}$ extend to the extending super-spacetime $X$ in a unique rheonomic way (Cor. \ref{11dSugraOnBosonicSpacetimeFromQuantizableSuperFlux} below).
\end{example} 

The following Def. \ref{SuperGravitationalFieldStrengths} is the evident generalization (\cite[p. 362]{WZ77}\cite[\S 2]{GWZ79}, cf. \cite[\S III.3.2]{CDF91}) to super-geometry of the classical {\it Cartan structural equations} and their {\it Bianchi identities} (\cite[p. 368]{Cartan23}\cite[\S 2]{Scholz19}\cite[p. 748]{Chern44}\cite[\S I.2]{CDF91}\cite[\S 22.2]{Tu17}):

\begin{definition}[\bf Super-Gravitational field strengths]
\label{SuperGravitationalFieldStrengths}
Given a super-spacetime (Def. \ref{SuperSpacetime}), we define
\footnote{
  \label{ConventionsForTorsionAndCurvature}
  Our notation and conventions in \eqref{GravitationalFieldStrengths} is that of \cite[(3.5,18)]{DF82}\cite[(III.8.5,14)]{CDF91} up to an overall sign on the connection. 
  The gravitino field strength is equivalently the odd frame component 
  $\rho^\alpha \,=\, T^\alpha$ of the torsion tensor of the full super-coframe field $E := (e,\psi)$. This latter perspective is conceptually more homogeneous (used elsewhere in the literature, e.g. \cite{WZ77}\cite[\S 2]{GWZ79}) but notationally less transparent in component computations.
  
  Our choice of sign in the covariant derivative $\dd \mathcolor{purple}{+} 
 \omega$ means that the 
 connection and curvature are related by 
 $\omega_{\mathrm{CDF}} \leftrightarrow \mathcolor{purple}{-}\omega_\mathrm{GSS}$ and  $R_\mathrm{CDF} \leftrightarrow \mathcolor{purple}{-}R_\mathrm{GSS}$, respectively.
  This choice ultimately governs the sign in front of the energy-momentum tensor in the Einstein equation \eqref{TheEinsteinEquation}. With our sign convention, 
the scalar curvature of a compact Riemannian manifold contributes with a {\it positive} sign (as is most quickly verified for the round $S^3 \simeq \mathrm{SU}(2)$), which is clearly desirable. But the opposite convention is used, besides by \cite{CDF91}, also in other classical texts, e.g. \cite[below (4b)]{FreundRubin80}.

Regarding infinitesimal gauge transformations, these are given in our convention by $\delta_{\lambda}^\mathrm{GSS}(e^a,\psi) = ( + \lambda^{a}{}_b \cdot e^b\, , \, + \tfrac{1}{4}\lambda^{ab}\Gamma_{ab} \psi )$ and $\delta_{\lambda}^\mathrm{GSS} \omega^{ab} = - \dd \lambda^{ab} - [\omega, \lambda]^{ab} $, while in \cite{CDF91} the former is the same but that of the connection is given by $\delta_{\lambda}^\mathrm{CDF} \omega^{ab} =  \mathcolor{purple}{+}\dd \lambda^{ab} - [\omega, \lambda]^{ab}$.
} super-chartwise the {\it structural equations}:
\vspace{-2mm} 
\begin{equation}
  \label{GravitationalFieldStrengths}
  \def\arraystretch{1.7}
  \begin{array}{ccccll}
    \scalebox{.7}{
      \color{darkblue} 
      \bf
      \def\arraystretch{.9}
      \begin{tabular}{c}
        Super-
        \\
        Torsion
      \end{tabular}
    }
    &
    \big(
    \,
    T^a
    &:=&
    \differential
    \,
        e^a
    &
    + \,
    \omega^{a}{}_b \, e^b
    -
    (\,
      \overline{\psi}
      \,\Gamma^a\,
      \psi
    )
    \,
    \big)_{a=0}^{D-1}
    &
    \in
    \Omega^2_{\mathrm{dR}}\big(
      U;\,
      \FR^{1,D-1}
    \big),
    \\
    \scalebox{.7}{
      \color{darkblue} 
      \bf 
      Curvature
    }
    &
    \big(
    \,
    R^{ab}
    &:=&
    \differential
    \,
      \omega^{a b}
    &
    +\, 
    \omega^a{}_c 
    \,
    \omega^{c b}
    \,
    \big)_{a,b = 0}^{D-1}
    &
    \in
    \Omega^2_{\mathrm{dR}}\big(
      U;\,
      \mathfrak{so}(1,D-1)
    \big),
    \\
    \scalebox{.7}{ \color{darkblue} \bf 
      \def\arraystretch{.9}
      \def\tabcolsep{-.2cm}
      \begin{tabular}{c}
        Gravitino
        \\
        field strength
      \end{tabular}
    }
    &
    \big(
    \,
    \rho
    &:=&
    \differential 
    \,
        \psi
    &
    +\, 
    \tfrac{1}{4}
    \omega^{a b} 
     \,
    \Gamma_{a b}
    \psi
    \,
    \big)_{\alpha=1}^{N}
    &
    \in
    \Omega^2_{\mathrm{dR}}\big(
      U;\,
      \mathbf{N}_{\mathrm{odd}}
    \big).
  \end{array}
\end{equation}
By exterior calculus, these satisfy the following {\it Bianchi identities}:
\begin{equation}
  \label{SuperGravitationalBianchiIdentities}
  \def\arraystretch{1.5}
  \begin{array}{ccl}
    \overbrace{
      \differential
      \,
      T^a
      +
      \omega^a{}_b 
      \, 
      T^b
    }^{{\color{gray} 0}}
    &=&
    +
    R^{ab}
    \,
    e_b
    +
    2
    \,
    \big(\,
      \overline{\psi}
      \,\Gamma^a\,
      \rho
    \big),
    \\
    \differential
    \,
    \rho
    +
    \tfrac{1}{4}
    \,
    \omega^{a b}
    \,\Gamma_{ab}\,
    \rho
    &=&
    +
    \tfrac{1}{4}
    \,
    R^{a b}
    \,\Gamma_{ab}\,
    \psi,
    \\
    \differential
    \,
    R^{ab}
    +
    \omega^a{}_{a'}
    \,R^{a'b}
    -
    R^{a b'}
    \omega^{b}{}_{b'}
    &=&
    0
    \,,
  \end{array}
\end{equation}
  where over the brace we used the torsion constraint \eqref{TorsionConstraint}.
\end{definition}

\begin{remark}[Two meanings of ``Torsion'']
  \label{OnTheTermTorsion}
  Unfortunately, the term {\it torsion} is used for two completely unrelated notions in different fields of mathematics. This happens and is usually of little concern since these fields are rarely discussed in common -- not so for us, though:
  \begin{itemize}[leftmargin=.7cm]
    \item[\bf (i)] 
      In group theory and cohomology theory, a {\it torsion element} of an abelian group $A$ (such as a cohomology group), is an element $a \in A$ such that some multiple of it vanishes: $n \cdot a = 0$ for some $n \in \mathbb{N}$.

      One says a cohomology group {\it has torsion} if it contains such torsion elements.
    \item[\bf (ii)] 
    In differential geometry, the {\it torsion tensor} of a $G$-structure (such as a pseudo-Riemannian metric structure) on a smooth manifold is a measure for this $G$-structure being infinitesimally non-trivial. If compatible coframe fields and connections are introduced, then the torsion tensor is the covariant derivative of the coframe field \eqref{GravitationalFieldStrengths}. One says that a $G$-structure {\it has torsion} if the torsion tensor of the vielbein is non-vanishing. 
    \end{itemize}
\end{remark}

\begin{notation}[\bf coframe field expansion of super-gravitational field strengths]
\label{FrameFieldExpansionOfGravitationalFieldStrength}
Due to the Cartan property \eqref{CartanProperty} of the coframe fields, every differential form on super-spacetime has a unique expansion in the super-coframe fields $(e,\psi)$.
The corresponding expansion of the supergravity field strengths \eqref{GravitationalFieldStrengths} we denote as follows:
\begin{equation}
    \label{RhoFrameFieldExpansion}
    \mathllap{
      \scalebox{.7}{
        \color{darkblue}
        \bf
        \def\arraystretch{.9}
        \begin{tabular}{c}
          gravitino field  
          \\
          strength expansion
        \end{tabular}
      }
      \;\;\;
    }
    \rho
    \;\;=:\;\;
    \tfrac{1}{2}
    \rho_{a b}
    \,
    e^a\, e^b
    \;+\;
    H_a \psi \, e^a
    \;+\;
    \big(\,
      \overline{\psi}
        \,\kappa\,
      \psi
    \big)
  \end{equation}
  
  \begin{equation}
    \label{CurvatureTensorInFrameField}
    \mathllap{
      \scalebox{.7}{
        \color{darkblue}
        \bf
        \def\arraystretch{.9}
        \begin{tabular}{c}
          curvature 
          \\
          expansion
        \end{tabular}
      }
      \;\;\;
    }
    R^{a_1 a_2}
    \;=:\;
    \tfrac{1}{2}
    R^{a_1 a_2}{}_{b_1 b_2}
    \,
    e^{b_1}
    \,
    e^{b_2 }
    \;+\;
    \big(\,
     \overline{\CurvatureAtPsiOne}{}^{a_1 a_2}{}_b
     \,
     \psi
    \big)
    e^b
    \;+\;
    \big(\,
      \overline{\psi}
      K^{a_1 a_2}
      \psi
    \big)
    \,.
  \end{equation}
  Here the choice of symbols for the components follows \cite[\S III.8]{CDF91}, except that:
  \begin{itemize}[leftmargin=.6cm]
  \item[(i)]
  we do not set to zero the term denoted $\kappa$ in \eqref{RhoFrameFieldExpansion}; instead, we will show that this term vanishes as a consequence of the duality-symmetric super-flux Bianchi identity, see Lem. \ref{SuperFluxAndGravitinoBianchiEquivalentToRaritaSchwinger} and Rem. \ref{RoleOfDualitySymmetricBianchiIdentitiesInEnforcingTheSuperTorsionConstraint} below;
  \item[(ii)] we use ``$\CurvatureAtPsiOne$'' instead of ``$\theta$'' in \eqref{CurvatureTensorInFrameField} so as not to clash with the standard symbol for odd coordinate functions 
 \eqref{SpaceOfDifferentialFormsOnSuperpoint}\eqref{FrameAndCoordinateDifferentials}.
  \end{itemize}
\end{notation}

\medskip

Some remarks are in order:

\begin{remark}[{\bf Role of the super-torsion constraint}]
\label{FormOfTheSuperTorsionConstraint}
$\,$

\noindent {\bf (i)} The extra summand
$
  \big(\,\overline{\psi} \,\Gamma_a\, \psi\big)
$ 
which the super-torsion tensor \eqref{TorsionConstraint} \eqref{GravitationalFieldStrengths}, has 
(reflecting the {\it intrinsic torsion} of super-Minkowski spacetime, cf. Ex. \ref{SuperMinkowskiSpacetime} and Rem. \ref{TheSupersymmetryBracket}) on top of the ordinary torsion $\mathrm{d}\, e^a - \omega^a{}_b\, e^b$
is the all-important term that drives essentially everything that is non-trivial about 11d supergravity (cf. \cite{Howe97}).

\smallskip 
\noindent {\bf (ii)} For instance, without this term all quadratic spinorial expressions of the form 
$\tfrac{1}{p!}\big(\,\overline{\psi} \, \Gamma_{a_1 \cdots a_p} \, \psi\big) e^{a_1} \cdots e^{a_p}$ would be 
closed for vanishing gravitino field strength, while with this term it takes delicate Fierz identities 
(Prop. \ref{TheFierzIdentitiesOf11dSupergravity}) to make 
$\tfrac{1}{2}\big(\,\overline{\psi} \,\Gamma_{a_1 a_2}\,\psi\big) e^{a_1}\, e^{a_2}$ 
closed and the differential of 
$\tfrac{1}{5!}\big(\,\overline{\psi} \,\Gamma_{a_1 \cdots a_5}\,\psi\big) e^{a_1} \cdots e^{a_5}$ 
to be proportional to the square of 
$\tfrac{1}{2}\big(\,\overline{\psi} \,\Gamma_{a_1 a_2}\,\psi\big) e^{a_1}\, e^{a_2}$. This is how the 
structure of C-field flux Bianchi identity \eqref{SuperCFieldBianchiInIntro} is preconfigured in the
fermion structure of $11\vert\mathbf{32}$-dimensional super-spacetimes.

\smallskip 
\noindent {\bf (iii)} Even more: Next requiring that these relations remain intact even for non-vanishing 
gravitino field strength is what implies nothing less than the equations of motion of 11d supergravity 
(Thm. \ref{11dSugraEoMFromSuperFluxBianchiIdentity}).
\end{remark}

\begin{remark}[\bf Role of the super-gravitational Bianchi identities]
\label{RoleOfTheSuperGravitationalBianchiIdentities}
Equations \eqref{SuperGravitationalBianchiIdentities} are not conditions but identities satisfied by any super-spacetime.  Conversely, this means that when {\it constructing} a super-spacetime (say subject to further constraints, such as Bianchi identities for flux densities), these equations \eqref{SuperGravitationalBianchiIdentities} are a necessary condition to be satisfied by any {\it candidate} super-vielbein field, and as such they may play the role of equations of motion for the super-gravitational field, as we will see next section.
\end{remark}

\begin{remark}[{\bf Exterior and covariant derivatives on super-spacetime}]
\label{ExteriorAndCovariantDerivatives}
On a given coordinate chart, the exterior derivative on super-spacetime is given (cf. Rem. \ref{FrameAndCoordinateIndices}) by
$$
  \mathrm{d}
  \;\;=\;\;
  \mathrm{d}x^\evencoordinateindex
  \frac{\partial}{\partial x^\evencoordinateindex} 
  + 
  \dd \theta^\oddcoordinateindex 
  \frac{\partial}{\partial \theta^\oddcoordinateindex}
  \;\;=\;\;
  e^a
  \partial_a
  + 
  \psi^\alpha
  \partial_\alpha
  \,.
$$
Besides the explicit covariantizations of $\mathrm{d}$ that appear on the left of \eqref{SuperGravitationalBianchiIdentities}, the differentials of the super-frame forms \eqref{GravitationalFieldStrengths} induce covariantization of contracted indices, for example:

\noindent {\bf (i)} 
The differential of
$$
  \omega 
  \;:=\;
  \tfrac{1}{p!}
  \,
  \omega_{a_1 \cdots a_p}
  \,
  e^{a_1} \cdots e^{a_p}
$$
may be expanded as 
\begin{equation}
  \def\arraystretch{1.8}
  \def\arraycolsep{10pt}
  \begin{array}{rcl}
  \mathrm{d}
  \,
  \underset{
    \mathclap{
      \raisebox{-3pt}{
        \scalebox{.7}{
          \color{darkblue}
          \bf
          \def\arraystretch{.9}
          \begin{tabular}{c}
          exterior derivative
          on super-spacetime
          \end{tabular}
        }
      }
    }
  }{
   \big(
   \tfrac{1}{p_1}
   \, 
   \omega_{a_1 \cdots a_p}
   \, 
   e^{a_1} \cdots e^{a_p}
   \big)
  }
  &=&
  \tfrac{1}{p!}
  \underset{
    \mathclap{
      \raisebox{-3pt}{
        \scalebox{.7}{
          \color{darkblue}
          \bf
          \def\arraystretch{.9}
          \begin{tabular}{c}
          ordinary 
          \\
          covariant derivative
          \end{tabular}
        }
      }
    }
  }{
  \big(
    \covariantderivative_{\!a_0}
    \,
    \omega_{a_1 \cdots a_p}
  \big)
  }
  e^{a_0} \cdots e^{a_p}
  \;+\;
  \tfrac{1}{(p-1)!}
  \omega_{a_1 a_2 \cdots a_p}
  \underset{
    \mathclap{
      \raisebox{-3pt}{
        \scalebox{.7}{
          \color{orangeii}
          \bf
          \def\arraystretch{.9}
          \begin{tabular}{c}
          intrinsic
          \\
          super-torsion
          \end{tabular}
        }
      }
    }
  }{
  \big(\,
    \overline{\psi}
    \,\Gamma^{a_1}\,
    \psi
  \big)
  }
  e^{a_2} \cdots e^{a_p}
  \\
  &&
  \;+\;
  \tfrac{1}{p!}
  \underset{
    \mathclap{
      \raisebox{-3pt}{
        \scalebox{.7}{
          \color{darkgreen}
          \bf
          \def\arraystretch{.9}
          \begin{tabular}{c}
          spinorial
          \\
          covariant derivative
          \end{tabular}
        }
      }
    }
  }{
  \big(
    \covariantderivative_{\!\alpha}
    \,
    \omega_{a_1 \cdots a_p}
  \big)
  }
  \psi^\alpha
  e^{a_1} \cdots e^{a_p}
  \,.
  \end{array}
\end{equation}

Here the expressions $\covariantderivative_a, \covariantderivative_\alpha$ denote the components of the (super) {\it covariant derivative} $\covariantderivative(\omega_{a_1 \cdots a_p})$ of the component functions along the super-frame. The second term arises from the contribution of the intrinsic super-torsion $(\,\overline{\psi}\Gamma^a\psi)$ 
\eqref{TorsionConstraint}.

\noindent {\bf (ii)} Similarly, for the exterior derivative of differential forms with only gravitino components, 
where the gravitino field strength $\rho$ is in general a non-vanishing torsion term, we decompose:
\begin{equation}
  \def\arraystretch{1.8}
  \def\arraycolsep{10pt}
  \begin{array}{rcl}
  \mathrm{d}
  \,
  \underset{
    \mathclap{
      \raisebox{-3pt}{
        \scalebox{.7}{
          \color{darkblue}
          \bf
          \def\arraystretch{.9}
          \begin{tabular}{c}
            exterior derivative
            \\
            on super-spacetime
          \end{tabular}
        }
      }
    }
  }{
   \big(
     \kappa_\alpha
     \,
     \psi^\alpha
   \big)
  }
  &=&
  \underset{
    \mathclap{
      \raisebox{-3pt}{
        \scalebox{.7}{
          \color{darkblue}
          \bf
          \def\arraystretch{.9}
          \begin{tabular}{c}
          ordinary 
          \\
          covariant derivative
          \end{tabular}
        }
      }
    }
  }{
  \big(
    \covariantderivative_{\!a}
    \,
    \kappa_\alpha
  \big)
  }
  e^a
  \,
  \psi^\alpha
  \;+\;
  \kappa_\alpha
  \,
  \underset{
    \mathclap{
      \raisebox{-3pt}{
        \scalebox{.7}{
          \color{orangeii}
          \bf
          \def\arraystretch{.9}
          \begin{tabular}{c}
          intrinsic
          \\
          super-torsion
          \end{tabular}
        }
      }
    }
  }{
   \rho^\alpha
  }
  \\
  &&
  \;+\;
  \underset{
    \mathclap{
      \raisebox{-3pt}{
        \scalebox{.7}{
          \color{darkgreen}
          \bf
          \def\arraystretch{.9}
          \begin{tabular}{c}
          spinorial
          \\
          covariant derivative
          \end{tabular}
        }
      }
    }
  }{
  \big(
    \covariantderivative_{\!\beta}
    \,
    \kappa_\alpha
  \big)
  }
  \psi^\beta
  \,
  \psi^\alpha
  \,.
  \end{array}
\end{equation}

\noindent {\bf (iii)} In particular, for $\mathrm{Spin}(1,10)$-invariant pairings of super-frame forms with constant coefficients, the covariant derivative vanishes and just the torsion component remains, e.g.:
$$
  \mathrm{d}
  \,
  \Big(
    \tfrac{1}{p!}
    \big(\,
      \overline{\psi}
      \,\Gamma_{a_1 \cdots a_p}\,
      \psi
    \big)
    e^{a_1} \cdots e^{a_p}
  \Big)
  \;=\;
  \tfrac{1}{(p-1)!}
  \big(\,
    \overline{\psi}
    \,\Gamma_{a_1 \cdots a_p}\,
    \psi
  \big)
  \underset{
    \mathclap{
      \raisebox{-4pt}{
        \scalebox{.7}{
          \color{orangeii}
          \bf
          intrinsic super-torsion
        }
      }
    }
  }{
    {
    \color{orangeii}
    \big(\,
      \overline{\psi}
      \Gamma^{a_1}
      \psi
    \big)
    }
  }
  \,
  e^{a_2} \cdots e^{a_p}
  \;-\;
    \tfrac{2}{p!}
  \underset{
    \mathclap{
      \raisebox{-4pt}{
        \scalebox{.7}{
          \color{darkblue}
          \bf
          gravitino field strength
        }
      }
    }
  }{
    \big(\,
      \overline{\psi}
      \,\Gamma_{a_1 \cdots a_p}\,
      {\color{darkblue} \rho}
    \big)
    }
    e^{a_1} \cdots e^{a_p}
  \,.
$$

\smallskip 
\noindent
These kinds of manipulations govern the computations in \S\ref{Supergravity}.
\end{remark}

\begin{example}[{\bf Gamma-matrices are covariantly constant}, e.g. {\cite[(10)]{Pollock10}}]
With the Clifford generators, regarded as sections of the tensor product of the tangent bundle with the
endomorphism bundle of the spinor bundle, hence with all three indices being free, they are covariantly 
constant (shown here on any chart):
\begin{equation}
  \label{GammaMatricesAreCovariantlyConstant}
  \hspace{-3.5mm} 
  \def\arraystretch{2}
  \begin{array}{lll}
    \nabla_\evencoordinateindex
    \,
    \Gamma_a{}^{\alpha}{}_{\beta}
    &
    \;=\;
    \underbrace{
      \partial_{\evencoordinateindex}
      \Gamma_{a}{}^\alpha{}_\beta
    }_{ \color{gray} = 0}
    \,+\,
    \omega_{\evencoordinateindex}{}_{a}{}^{a'}
    \,
    \Gamma_{a'}{}^\alpha{}_\beta
    \,+\,
    \tfrac{1}{4}
    \omega_\evencoordinateindex{}^{b_1 b_2}
    \Big((\Gamma_{b_1 b_2})^{\alpha}{}_{\alpha'}
    \Gamma_a{}^{\alpha'}{}_\beta -
    \Gamma_a{}^{\alpha}{}_{\beta'}
    (\Gamma_{b_1 b_2})^{\beta'}{}_{\beta}
    \Big)
    &
    \proofstep{
      by 
      Rem. \ref{ExteriorAndCovariantDerivatives}
      \&
      \eqref{GravitationalFieldStrengths}
    }
    \\
 &   \;=\;
    \omega_{\evencoordinateindex}{}_{a}{}^{a'}
    \,
    \Gamma_{a'}{}^\alpha{}_\beta
    \,+\,
    \tfrac{1}{4}
    \underbrace{
    \omega_\evencoordinateindex{}^{b_1 b_2}
    \big[
      \Gamma_{b_1 b_2}
      ,\,
      \Gamma_{a}
    \big]}_{\color{gray} 
      =\;- 
      4 
      \,
      \omega
        _{\evencoordinateindex a}
        {}^{a'}
        \Gamma{a'}
        
    }{}^\alpha{}_\beta
    \\[-2pt]
 &   \;=\;
    \omega_{\evencoordinateindex}{}_{a}{}^{a'}
    \,
    \Gamma_{a'}{}^\alpha{}_\beta
    \,-\,
    \omega_{\evencoordinateindex a}{}^{a'}{}
    \,
    \Gamma_{a'}{}^\alpha{}_\beta
    \;=\;
    0
    &
    \proofstep{
      by \eqref{TheSpinConnectionForm}.
    }
  \end{array}
\end{equation}
Alternatively, this is just the component-incarnation of the fact that the Lorentz-invariant expression
$\big(\, \overline{\psi}\,\Gamma^a\,\psi\big) e_a$ is closed up to torsion terms.
\end{example}

\medskip

\noindent
{\bf Metric and Spinor metric.} We mention also the following very basic fact, since it is important in carefully checking Lem. \ref{ImplicationsOfFluxBianchiIdentityOnGravitinoFieldStrength} below.  
The spin connection \eqref{TheSpinConnectionForm} is of course compatible with the Minkowski metric (even if torsionful),
witnessed by the skew-symmetry of its indices, in that the covariant derivative of the metric vanishes (e.g. \cite[(3.16)]{Krasnov20}):
$$
  \nabla_{\!A} 
  \,
  \eta_{a_1 a_2}
  \;=\;
  \underbrace{
  \partial_A 
  \,
  \eta_{a_1 a_2}
  }_{\color{gray} 0}
  +
  \underbrace{
  \omega_A{}_{a_1}{}^{a'_1}
  \eta_{a'_1 a_2}
  }_{ \color{gray}
    \omega_{A a_1 a_2}
  }
  +
  \underbrace{
  \omega_A{}_{a_2}{}^{a'_2}
  \eta_{a_1 a'_2}
  }_{\color{gray}
    \omega_{A a_2 a_1}
  }
  \;=\;
  0
  \,,
$$
 reflecting the fact that the vector pairing $(v,w) \mapsto v^{a} \eta_{a b} w^b$
is $\mathrm{Spin}(1,10)$-equivariant. With \eqref{ContractingKroneckerWithSkewSymmetricTensor} this 
implies for instance that also the Levi-Civita tensor \eqref{NormalizationOfLeviCivitaSymbol} is covariantly constant:
\begin{equation}
  \label{LCTensorIsCovariantyConstant}
  0
  \,=\,
  \nabla_{\!A}
  \big(
  \epsilon_{a_1 \cdots a_{11}}
  \epsilon^{a_1 \cdots a_{11}}
  \big)
  \,=\,
  2
  \,
  \epsilon_{a_1 \cdots a_{11}}
  \nabla_{\! A}
  \epsilon^{a_1 \cdots a_{11}}
  \hspace{.5cm}
  \Rightarrow
  \hspace{.5cm}
  \nabla_{\!A}
  \,
  \epsilon_{a_1 \cdots a_{11}}
  \;=\;
  0
  \,.
\end{equation}

\smallskip 
We recall this because also the spinor pairing 
\eqref{SpinorPairing} is $\mathrm{Spin}(1,10)$-equivariant (Lem. \ref{TheSpinorPairing}), thus immediately serving as the {\it spinor metric} (the odd-odd component of a super-metric), with analogous statements holding for spinors: If we denote by $\eta_{\alpha \beta}$ the components of the spinor pairing
$$
  \big(\,
    \overline{\psi}
    \,
    \phi
  \big)
  \;\;
  =
  \;\;
  \psi^\alpha 
  \, \eta_{\alpha \beta} \,
  \phi^\beta
$$
and use it to shift spinor indices, then also this shifting passes through the covariant derivative:
\begin{equation}
  \label{ShiftingOfSpinorialIndicesUnderCovariantDerivative}
  \nabla_{\!A}
  \psi_\beta
  \;=\;
  \nabla_{\!A}
  \big(
    \eta_{\beta \beta'}
    \psi^{\beta'}
  \big)
  \;=\;
  \eta_{\beta \beta'}
  \nabla_{\!A}
  \big(
    \psi^{\beta'}
  \big)
  \,.
\end{equation}

\newpage 

\section{11d SuGra EoM from Super-Flux Bianchi}
\label{Supergravity}

Here we spell out in detail the proof of the following theorem, which enters our main Claim \ref{FluxQuantizedSuperFieldsOf11dSugra}:

\begin{theorem}[\bf 11d SuGra EoM from super-flux Bianchi identity]
\label{11dSugraEoMFromSuperFluxBianchiIdentity}
$\,$

\hspace{-6mm} 
\colorbox{lightblue}{
\begin{tabular}{l}
\begin{minipage}{17cm}
An $(11\vert\mathbf{32})$-dimensional super-spacetime $\big(X, (e,\psi,\omega)  \big)$  (according to Def. \ref{SuperSpacetime})
carries super C-field flux 
\begin{center}
  $(G_4^s, \, G_7^s) 
  \,\in\, 
  \Omega^1_{\mathrm{dR}}\big(X;\, \mathfrak{l}S^4\big)_{\mathrm{clsd}}
$
\end{center}
of the form of expressions \eqref{SuperFluxDensitiesInIntroduction} and diagram 
\eqref{FluxQuantizedSugraFields}
iff
\begin{itemize}[leftmargin=.6cm]
\item[\bf (i)]
it solves the equations of motion of 11d supergravity \footnotemark
with the given $G_4$-flux source:
\begin{itemize}
\item[\bf (a)] the Maxwell equation for the C-field flux 
\eqref{MaxwellEquationDerived},
\item[\bf (b)] the Rarita-Schwinger equation for the gravitino \eqref{GravitinoEquation},
\item[\bf (c)] the Einstein equation for the field of gravity \eqref{TheEinsteinEquation}.
\end{itemize}
\item[\bf (ii)]
the super-fields form a unique (``rheonomic'') extension of their restriction (in the sense of \S\ref{SuperSmoothFieldSpaces}) to $\bosonic{X}$.
\end{itemize}
\end{minipage}
\end{tabular}
}
\end{theorem}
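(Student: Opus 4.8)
The plan is to prove the two directions of the ``iff'' by a careful frame-field expansion of the super-flux Bianchi identities \eqref{SuperCFieldBianchiInIntro}, organized by the number of gravitino-form factors ($\psi$-degree). Since $G_4^s = \tfrac{1}{4!}(G_4)_{a_1\cdots a_4}e^{a_1}\cdots e^{a_4} + \tfrac{1}{2}(\overline\psi\,\Gamma_{a_1a_2}\psi)e^{a_1}e^{a_2}$ and similarly for $G_7^s$, and since the super-torsion constraint \eqref{TorsionConstraint} fixes $\mathrm{d}e^a$ in terms of $\omega$, $e$, $\psi$ while $\mathrm{d}\psi = \rho + \tfrac14\omega^{ab}\Gamma_{ab}\psi$ introduces the gravitino field strength $\rho$ with its frame expansion \eqref{RhoFrameFieldExpansion}, the two equations $\mathrm{d}G_4^s = 0$ and $\mathrm{d}G_7^s = \tfrac12 G_4^s\wedge G_4^s$ each decompose into a finite list of tensorial equations indexed by $\psi$-degree $0,1,2,3,4,\ldots$. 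The bulk of the work is to identify, degree by degree, exactly which of these equations are automatically satisfied (by the Fierz identities of Prop.~\ref{TheFierzIdentitiesOf11dSupergravity}, which is precisely why the purely-bifermionic pieces already close on super-Minkowski spacetime, Ex.~\ref{4SphereValuedSuperCocycle}) and which impose genuine constraints --- the latter being, collectively, the 11d SuGra equations of motion.

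\textbf{Key steps, in order.} First, I would expand $\mathrm{d}G_4^s = 0$. The top ($\psi^0$) component gives $\mathrm{d}_\omega(G_4)_{[a_1\cdots a_4}e^{\cdots]} = 0$, the ordinary closure of the bosonic flux (step~(i), first half, eq.~\eqref{DerivingClosureOfBosonic4FluxDensity}); the mixed components relate $\nabla_\alpha (G_4)_{abcd}$ to the gravitino field strength components and, crucially, force the $(\psi^2)$-component $\kappa$ of $\rho$ in \eqref{RhoFrameFieldExpansion} to vanish (this is Rem.~\ref{RoleOfDualitySymmetricBianchiIdentitiesInEnforcingTheSuperTorsionConstraint}, and is where the duality-symmetric formulation does work that was previously done ``by hand''); the purely bifermionic component is the first super-cocycle relation in \eqref{TheSupercocycles}, an identity. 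Second, I would expand $\mathrm{d}G_7^s = \tfrac12 G_4^s\wedge G_4^s$. Here the lowest-$\psi$-degree nontrivial component yields the Hodge self-duality relation \eqref{HodgeDualityImplied}, $(G_7)_{a_1\cdots a_7} = \tfrac{1}{4!}\epsilon_{a_1\cdots a_7 b_1\cdots b_4}(G_4)^{b_1\cdots b_4}$, using the Clifford/$\epsilon$ identities \eqref{ExamplesOfHodgeDualCliffordElements} and \eqref{ContractingKroneckerWithSkewSymmetricTensor}; another component reproduces the bosonic $G_7$-Bianchi identity $\mathrm{d}G_7 = \tfrac12 G_4 G_4$ (step~(i) completed); the intermediate $\psi$-degree components, after substituting $\kappa = 0$ and using the cubic and quartic Fierz identities \eqref{GeneralCubicFierzIdentities}, \eqref{TheQuarticFierzIdentities}, collapse to the Rarita-Schwinger equation for $\rho$ (step~(ii), eq.~\eqref{TheAlgebraicImplicationsOfGravitinoEquation}) together with an algebraic expression of $\rho_{ab}$ in terms of $(G_4)_{abcd}$. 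Third, I would feed the now-constrained data $(\rho, G_4, G_7)$ into the gravitino Bianchi identity (second line of \eqref{SuperGravitationalBianchiIdentities}); its frame expansion, combined with the Rarita-Schwinger equation and Fierzing, produces the Einstein equation \eqref{TheEinsteinEquation} for the curvature $R^{ab}$ (step~(iii)) --- this is the classical ``Bianchi implies EoM'' mechanism of \cite{CF80}\cite{DF82}. For the converse direction, I would assume the three EoM hold and run the same expansions backwards: the self-duality relation makes $G_7^s$ well-defined of the stated form, and each tensorial component of $\mathrm{d}G_7^s - \tfrac12 G_4^s G_4^s$ vanishes by exactly the equation one has assumed (or by a Fierz identity). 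The computer-algebra files \cite{AncillaryFiles} would be cited to certify the more involved index contractions.

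\textbf{Main obstacle.} The hard part is the intermediate-$\psi$-degree components of $\mathrm{d}G_7^s = \tfrac12 G_4^s G_4^s$: these mix $\nabla_\alpha(G_7)$-type terms, $\rho$-contractions with $\Gamma_{a_1\cdots a_5}$, and quartic-in-$\psi$ terms coming from $(\overline\psi\Gamma\psi)(\overline\psi\Gamma\psi)$, and showing that they are jointly equivalent to the Rarita-Schwinger equation requires precisely the nontrivial quartic Fierz identities \eqref{TheQuarticFierzIdentities} together with the mixed-nondegeneracy Lemma \ref{XiVanishesIfAllSymmetricPairingsOntoPsiVanish} to strip off the $\psi$ spectator legs. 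Getting all combinatorial prefactors (the $\tfrac{1}{p!}$'s, the $\binom{j}{l}$'s from \eqref{GeneralCliffordProduct}, the $\epsilon$-contraction signs from \eqref{ContractingKroneckerWithSkewSymmetricTensor}, and the CDF-vs-our-convention factors of $\mathrm{i}$ from Rem.~\ref{CliffordConventionsInCDF}) to conspire correctly is where essentially all the risk lies, and is the reason the computation is recorded in \cite{AncillaryFiles}. A secondary subtlety is bookkeeping which pieces of the Rarita-Schwinger and Einstein equations are ``algebraic'' (solving for $\rho_{ab}$, and for the antisymmetric/trace parts of $R^{ab}{}_{cd}$) versus genuinely differential, so that the statement ``the super-spacetime solves the EoM'' is matched precisely rather than up to field redefinitions.
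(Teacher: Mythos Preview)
Your overall architecture is right --- expand both super-Bianchi identities by $\psi$-degree and combine with the gravitational Bianchi identities \eqref{SuperGravitationalBianchiIdentities} --- but you have misallocated which equation produces which conclusion, and this matters. The $G_7^s$-Bianchi does \emph{not} yield the Rarita--Schwinger equation. In the paper's argument (Lem.~\ref{SuperBianchiIdentityForG7InComponents}), its $(\psi^0)$-component is the bosonic $G_7$-Bianchi, its $(\psi^1)$-component is rheonomy for $G_7$, its $(\psi^2)$-component is the Hodge duality \eqref{G7HodgeDualToG4InComponents} (not the ``lowest nontrivial'' one), its $(\psi^3)$-component gives only the $\Gamma_{a_1\cdots a_5}$-pairing condition on $\kappa$, and $(\psi^4)$ is a Fierz identity. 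The Rarita--Schwinger equation \eqref{GravitinoEquation} comes instead from the $(\psi^2)$-component of the \emph{gravitino} Bianchi identity (Lem.~\ref{SuperFluxAndGravitinoBianchiEquivalentToRaritaSchwinger}), after substituting the expressions for $\theta_{abc}$ and $K^{ab}$ obtained from the torsion Bianchi and then Fierzing the resulting $Q_{ca}$, $Q_{c_1c_2a}$, $Q_{c_1\cdots c_5a}$. The Einstein equation then follows (Lem.~\ref{DerivingTheEinsteinEquation}) by taking the odd covariant derivative of the Rarita--Schwinger equation and using the $(\psi^1)$-component of the gravitino Bianchi, not by reading the gravitino Bianchi directly.

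Relatedly, the vanishing of $\kappa$ does not come from the $G_4^s$-Bianchi alone: the $(\psi^3)$-components of the $G_4^s$-, $G_7^s$-, and torsion Bianchi identities give, respectively, the $\Gamma_{a_1a_2}$-, $\Gamma_{a_1\cdots a_5}$-, and $\Gamma_a$-pairing conditions \eqref{AllPairingsWithThePsi2ComponentOfRhoVanish}, and only all three together feed into Lem.~\ref{XiVanishesIfAllSymmetricPairingsOntoPsiVanish} to force $(\overline\psi\,\kappa\,\psi)=0$ (this is precisely the content of Rem.~\ref{RoleOfDualitySymmetricBianchiIdentitiesInEnforcingTheSuperTorsionConstraint}). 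Also, what the $(\psi^2)$-component of the $G_4^s$-Bianchi fixes algebraically in terms of $G_4$ is $H_a$ (the $\psi\,e^a$-coefficient of $\rho$, eq.~\eqref{FormOfGravitinoFieldStrengthImpliedByG4sBianchi}), not $\rho_{ab}$. If you rewire the argument so that the flux Bianchi identities produce (Maxwell, Hodge duality, rheonomy, the three $\kappa$-conditions) and the \emph{gravitational} Bianchi identities produce ($\kappa=0$ via Lem.~\ref{XiVanishesIfAllSymmetricPairingsOntoPsiVanish}, then Rarita--Schwinger, then Einstein), you recover the paper's proof.
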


\footnotetext{
  The 11d SuGra EoMs in their superspace form that we are deriving are neatly summarized in \cite[Table 3]{DF82}\cite[Table III.8.1]{CDF91},
  from which their original formulation on ordinary spacetime (e.g. \cite[\S 3.1]{MiemiecSchnakenburg06}) follows by expanding as in Ex. \ref{FermionicGravitinoFieldSpace}; cf. \cite[(3.33)]{DAuria19}.
}

Thm. \ref{11dSugraEoMFromSuperFluxBianchiIdentity} is a mild but consequential reformulation (as explained in \S\ref{DualitySymmetryOnSuperSpacetime}) of the claim of \cite[\S III.8.5]{CDF91} where some easy parts of the proof are indicated (and we do not assume constraints on the gravitino field strength but show that these are implied, cf. Rem. \ref{RoleOfDualitySymmetricBianchiIdentitiesInEnforcingTheSuperTorsionConstraint}), which in turn is a manifestly duality-symmetric reformulation of the original claim in \cite{CF80}\cite{BrinkHowe80} (see also \cite[\S 6]{CL94}\cite{Howe97}\cite[\S 2.5]{CGNT05}) where less details were given.

\smallskip

We spell out the detailed proof broken up into the following Lemmas \ref{SuperBianchiIdentityForG4InComponents}, \ref{SuperBianchiIdentityForG7InComponents}, \ref{SuperFluxAndGravitinoBianchiEquivalentToRaritaSchwinger},
\ref{DerivingTheEinsteinEquation},
and \ref{RheonomyForSuperCFieldFLux},
where we invoke mechanized computer algebra \cite{AncillaryFiles} to verify the steps that are heavy on Clifford algebra. \footnote{
The run-time of our computer code \cite{AncillaryFiles} suggests that a complete hand-checked proof of Thm. \ref{11dSugraEoMFromSuperFluxBianchiIdentity} would be remarkable. Comparatively easy is the derivation of the equations of motion from the Bianchi identities, but a full proof requires verifying also the converse implication that no further contraints are implied by the Bianchi identities, which may previously have received less attention.}

The following computations make intensive use of the super-coframe field components declared in Ntn. \ref{FrameFieldExpansionOfGravitationalFieldStrength} and their (covariant) differentials computed according to  Rem. \ref{ExteriorAndCovariantDerivatives}.

\begin{lemma}[\bf Bianchi identity for $G_4^s$ in components]
\label{SuperBianchiIdentityForG4InComponents}
$\,$

\noindent   The Bianchi identity $\differential \, G_4^s \,=\, 0$
  is equivalent to the following set of conditions:
  \begin{enumerate}
  \item[{\bf (i)}]
  The $G_4$-Bianchi identity holds, in that:
  \begin{equation}
    \label{DerivingClosureOfBosonic4FluxDensity}
    \hspace{.6cm}
   \colorbox{lightblue}{$
      \covariantderivative_{[a}(G_4)_{a_1 \cdots a_4]}
    \;=\;
    0
    \,.
    $}
  \end{equation}
  \item[{\bf (ii)}]
  The $(\psi^1)$-component of the gravitino field strength 
  \eqref{RhoFrameFieldExpansion}
  is a linear function of $G_4$:
  \begin{equation}
    \label{FormOfGravitinoFieldStrengthImpliedByG4sBianchi}
     \colorbox{lightblue}{$
     H_a
     \;\;
     =
     \;\;
      \tfrac{1}{6}
      \tfrac{1}{3!}
      (G_4)_{a \, b_1 b_2 b_3}
      \Gamma^{b_1 b_2 b_3}
      \,-\,
      \tfrac{1}{12}
      \tfrac{1}{4!}
      (G_4)^{b_1 \cdots b_4}
      \Gamma_{a \, b_1 \cdots b_4}
      \,.
      $}
  \end{equation}
  \item[{\bf (iii)}]
  Rheonomy for $G_4$: the odd covariant derivatives of $G_4$
  are fixed by the components $\rho_{a_1 a_2}$ of the gravitino field strength:
  \begin{equation}
    \label{OddCovariantDerivativeOfFluxDensity}
     \colorbox{lightblue}{$
      \psi^\alpha
      \,
      \covariantderivative_\alpha
      (G_4)_{a_1 \cdots a_4}
      \;=\;
    12
    \,
    \big(\,
      \overline{\psi}
      \,
      \Gamma_{[a_1 a_2}
      \,
      \rho_{a_3 a_4]}
    \big)
    \,.
    $}
  \end{equation}
  \item[{\bf (iv)}]
  The $(\psi^2)$-component of the gravitino field strength \eqref{FormOfGravitinoFieldStrengthImpliedByG4sBianchi}
  satisfies
  \begin{equation}
    \label{FirstConditionOnSpuriousRhoComponent}
     \colorbox{lightblue}{$
    \Big(
      \overline{\psi}
      \,\Gamma_{a_1 a_2}\,
      \big(\,
        \overline{\psi}
        \,\kappa\,
        \psi
      \big)
    \Big)
    \;=\;
    0
    \,.
    $}
  \end{equation}
  \end{enumerate}
\end{lemma}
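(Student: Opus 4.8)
The strategy is the standard one for deriving component constraints from a super-differential-form Bianchi identity: expand $\differential G_4^s$ in the super-frame basis $(e^a,\psi^\alpha)$ using the structural equations of Def.~\ref{SuperGravitationalFieldStrengths} and the covariant-derivative bookkeeping of Rem.~\ref{ExteriorAndCovariantDerivatives}, then sort the resulting terms by their number of $\psi$-legs (equivalently, by bidegree in the $(e,\psi)$-expansion). Since $\differential G_4^s = 0$ is a single equation in $\Omega^5_{\mathrm{dR}}(X)$, its vanishing is equivalent to the separate vanishing of each homogeneous component $e^5$, $e^4\psi$, $e^3\psi^2$, $e^2\psi^3$, \dots (here $e^k\psi^\ell$ schematically denotes the span of wedge products of $k$ frame forms with $\ell$ gravitinos). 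The four assertions \textbf{(i)}--\textbf{(iv)} are precisely the content of these graded pieces, read in order of increasing fermion number.

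\textbf{Key steps in order.} First I would write $G_4^s = \tfrac{1}{4!}(G_4)_{a_1\cdots a_4}e^{a_1}\cdots e^{a_4} + \tfrac{1}{2}(\overline{\psi}\,\Gamma_{a_1a_2}\,\psi)e^{a_1}e^{a_2}$ and apply $\differential$ termwise. On the first summand, $\differential$ produces (a) the purely bosonic piece $\tfrac{1}{4!}(\covariantderivative_a(G_4)_{a_1\cdots a_4})e^a e^{a_1}\cdots e^{a_4}$, (b) an $e^4\psi$-piece $\tfrac{1}{4!}\psi^\alpha\covariantderivative_\alpha(G_4)_{a_1\cdots a_4}e^{a_1}\cdots e^{a_4}$ from the spinorial covariant derivative, plus (c) torsion contributions from $\differential e^{a_i}$: the bosonic torsion vanishes by \eqref{TorsionConstraint}, but the $(\overline{\psi}\Gamma^{a_i}\psi)$-term contributes to $e^3\psi^2$. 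On the bifermionic summand, the computation in Ex.~\ref{4SphereValuedSuperCocycle}/\eqref{TheSupercocycles} (i.e.\ the basic Fierz identity, Prop.~\ref{TheFierzIdentitiesOf11dSupergravity}) shows that the ``flat'' part $\differential$ would vanish, so what survives is exactly the torsion/gravitino-field-strength correction: using $\differential\psi = \rho + \tfrac14\omega^{ab}\Gamma_{ab}\psi$ and $\differential e^a = \omega^a{}_b e^b + (\overline{\psi}\Gamma^a\psi)$, the Lorentz-connection terms reassemble into covariant derivatives and cancel against the analogous terms in the $G_4$-summand, leaving $-(\overline{\psi}\Gamma_{a_1a_2}\rho)e^{a_1}e^{a_2}$ together with the intrinsic-torsion term $(\overline{\psi}\Gamma_{a_1a_2}\psi)(\overline{\psi}\Gamma^{a_1}\psi)e^{a_2}$ --- the latter vanishing again by the first quartic Fierz identity in \eqref{TheQuarticFierzIdentities}. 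Collecting: the $e^5$-component gives \textbf{(i)} \eqref{DerivingClosureOfBosonic4FluxDensity}; the $e^4\psi$-component equates $\psi^\alpha\covariantderivative_\alpha(G_4)_{a_1\cdots a_4}$ with $-2\cdot 4!\cdot\tfrac{1}{2\cdot 2!}(\overline{\psi}\Gamma_{[a_1a_2}\rho_{a_3a_4]})$ up to combinatorial factors, giving \textbf{(iii)} \eqref{OddCovariantDerivativeOfFluxDensity} but only \emph{after} one substitutes the $e^2\psi$-part of $\rho$; and the $e^3\psi^2$ and $e^2\psi^3$-components, after inserting the frame-field expansion \eqref{RhoFrameFieldExpansion} of $\rho$ and repeatedly Fierzing $(\overline{\psi}\Gamma_\bullet H_a\psi)$-type cubic spinor expressions via Prop.~\ref{TheFierzIdentitiesOf11dSupergravity} and the general Fierz decomposition \eqref{GeneralCubicFierzIdentities}, force the $H_a$-term to take the specific form displayed in \textbf{(ii)} \eqref{FormOfGravitinoFieldStrengthImpliedByG4sBianchi} and force \textbf{(iv)} \eqref{FirstConditionOnSpuriousRhoComponent} on the $\kappa$-term. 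Concretely, the $e^3\psi^2$-piece reads $-(\overline{\psi}\Gamma_{[a_1a_2}H_{a_3]}\psi)e^{a_1}e^{a_2}e^{a_3} + (\text{torsion correction to }\psi^\alpha\covariantderivative_\alpha G_4) = 0$, and matching the $\mathrm{Spin}(1,10)$-irreps appearing (via \eqref{IrrepsInSymmetricPowersOf32}, \eqref{TheHigherTensorSpinors}) pins down $H_a$ uniquely; the $e^2\psi^3$-piece isolates $(\overline{\psi}\Gamma_{a_1a_2}(\overline{\psi}\kappa\psi))=0$.

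\textbf{Main obstacle.} The routine part is the exterior-calculus bookkeeping; the genuinely hard part is step \textbf{(ii)}/\textbf{(iv)} --- showing that the $e^3\psi^2$ and $e^2\psi^3$ components \emph{determine} $H_a$ and kill the $\Gamma_{ab}$-projection of $\kappa$. This requires expanding cubic-in-$\psi$ contractions of the schematic form $(\overline{\psi}\,\Gamma_{[a_1a_2}\,\Gamma^{b_1b_2b_3}\,\psi)(G_4)_{a_3]b_1b_2b_3}$ and $(\overline{\psi}\,\Gamma_{[a_1a_2}\,\Gamma_{a_3]b_1\cdots b_4}\,\psi)(G_4)^{b_1\cdots b_4}$ using the Clifford product formula \eqref{GeneralCliffordProduct}, discarding the pieces that vanish by the quadratic identities \eqref{VanishingQuadraticForms}, and checking that the surviving $(\overline{\psi}\Gamma_a\psi)$-, $(\overline{\psi}\Gamma_{ab}\psi)$- and $(\overline{\psi}\Gamma_{a_1\cdots a_5}\psi)$-coefficients combine so as to cancel the $\psi^\alpha\covariantderivative_\alpha(G_4)$-torsion term --- which fixes the two rational coefficients $\tfrac{1}{6}\tfrac{1}{3!}$ and $-\tfrac{1}{12}\tfrac{1}{4!}$ in \eqref{FormOfGravitinoFieldStrengthImpliedByG4sBianchi} and simultaneously forces the $\Gamma_{ab}$-component of $\kappa$ to vanish (this is the constraint that in \cite{CDF91} is instead imposed ``by hand'' for rheonomy, cf.\ Rem.~\ref{RoleOfDualitySymmetricBianchiIdentitiesInEnforcingTheSuperTorsionConstraint}). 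This is exactly the kind of Clifford-heavy step that the paper says is verified with computer algebra \cite{AncillaryFiles}; in the written proof I would carry out the representation-theoretic reduction far enough to make the answer manifestly forced, and cite the ancillary computation for the explicit coefficient-matching.
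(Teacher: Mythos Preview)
Your proposal is correct and follows essentially the same approach as the paper: expand $\differential G_4^s$ in the super-frame basis, sort by $\psi$-degree, and read off conditions (i)--(iv) from the $(\psi^0)$- through $(\psi^3)$-components respectively, with the would-be $(\psi^4)$-component vanishing by the first quartic Fierz identity in \eqref{TheQuarticFierzIdentities}. One minor correction: the $(\psi^2)$-component involves only \emph{bilinears} $(\overline{\psi}\,\Gamma_{[a_1a_2}H_{a_3]}\psi)$, not cubics, so neither Prop.~\ref{TheFierzIdentitiesOf11dSupergravity} nor \eqref{GeneralCubicFierzIdentities} is needed there --- the paper determines $H_a$ by exactly the argument you give in your ``Main obstacle'' paragraph (Clifford product \eqref{GeneralCliffordProduct} plus the quadratic classification \eqref{VanishingQuadraticForms}/\eqref{TheNontrivialQuadraticFormsOn32}), which is simpler than your ``Key steps'' paragraph suggests and requires no computer algebra for this lemma.
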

\begin{proof}
  In terms of the coframe field expansion \eqref{RhoFrameFieldExpansion}
  of $\rho$, the $G_4$-Bianchi identity has the following components:   
  \begin{equation}
    \label{ComponentsOfBianchiForGs4}
    \def\arraystretch{1.6}
    \begin{array}{l}
      \differential
      \Big(
      \,
      \tfrac{1}{4!}
      (G_4)_{a_1 \cdots a_4}
      \,
      e^{a_1} \cdots e^{a_4}
      \,+\,
      \tfrac{1}{2}
        \big(\,
          \overline{\psi}
          \Gamma_{a_1 a_2}
        \psi
        \big)
        \,
        e^{a_1}\, e^{a_2}
      \Big)
      \;=\;
      0
      \\[5pt]
      \;\Leftrightarrow\;
      \left\{
      \def\arraycolsep{0pt}
      \begin{array}{l}
        \scalebox{.7}{
          \color{gray}
          $\big(\psi^0\big)$
          \;
        }
        \big(
          \covariantderivative_{[a}
          (G_4)_{a_1 \cdots a_4]}
        \big)
        e^{a}\, e^{a_1} \cdots e^{a_4}
        \;=\;
        0
        \,,
        \\
        \scalebox{.7}{
          \color{gray}
          $\big(\psi^1\big)$
          \;
        }
        \Big(
          \tfrac{1}{4!}
          \psi^\alpha
          \covariantderivative_\alpha
          (G_4)_{a_1 \cdots a_4}
          \;-\;
          \tfrac{1}{2}
          \big(\,
            \overline{\psi}
            \,\Gamma_{[a_1 a_2}\,
            \rho_{a_3 a_4]}
          \big)
        \Big)
        \,
        e^{a_1} \cdots e^{a_4}
        \;=\;
        0
        \,,
        \\
        \scalebox{.7}{
          \color{gray}
          $\big(\psi^2\big)$
          \;
        }
        \tfrac{1}{3!}
        (G_4)_{a b_1 b_2 b_3}
        \big(\,
          \overline{\psi}
          \,\Gamma^a\,
          \psi
        \big)
        \,
        e^{b_1 b_2 b_3}
        \,-\,
        \big(\,
        \overline{\psi}
        \,\Gamma_{[a_1 a_2}\,
        H_{b]}
        \psi
        \big)
        e^{a_1} \, e^{a_2} \, e^b
        \;=\;
        0
        \,,
        \\
        \scalebox{.7}{
          \color{gray}
          $(\psi^3)$
          \;\;
        }
        \Big(\,
          \overline{\psi}
          \,\Gamma_{a_1 a_2}\,
          \big(
            \overline{\psi}
            \,\kappa\,
            \psi
          \big)
        \Big)
        e^{a_1}\, e^{a_2}
        \;=\;
        0
        \,.
      \end{array}
      \right.
    \end{array}
  \end{equation}
  \noindent
  Here:
  \begin{itemize}[leftmargin=.4cm]
  \setlength\itemsep{2pt}
  \item The $(\psi^0)$-component is the claimed relation \eqref{DerivingClosureOfBosonic4FluxDensity}.

  \item The $(\psi^1)$-component is the claimed relation \eqref{OddCovariantDerivativeOfFluxDensity}.
  
  \item The $(\psi^2)$-component is solved for $H_a$ by (e.g. \cite[(III.8.43-49)]{CDF91}) 
  expanding $H_a$ in the Clifford algebra basis according to \eqref{CliffordExpansionOnAnyMatrix}, 
  observing that for $\Gamma_{a_1 a_2} H_{a_3}$ to be a linear combination of the $\Gamma_a$ the matrix $H_a$ needs to have a $\Gamma_{a_1}$-summand or a $\Gamma_{a_1 a_2 a_3}$-summand. The former does not admit a Spin-equivariant linear combination with coefficients $(G_4)_{a_1 \cdots a_4}$, hence it must be the latter. But then we may also need a component $\Gamma_{a_1 \cdots a_5}$ in order to absorb the skew-symmetric product in $\Gamma_{a_1 a_2} H_a$. Hence $H_a$ must be of this form:
  \smallskip 
  \begin{equation}
    \label{AnsatzForHa}
    H_a 
    \;=\;
    \mathrm{const}_1
    \,
    \tfrac{1}{3!}
    (G_4)_{a b_1 b_2 b_3}
    \Gamma^{b_1 b_2 b_3}
    +
    \mathrm{const}_2
    \,
    \tfrac{1}{4!}
    (G_4)^{b_1 \cdots b_4}
    \Gamma_{a b_1 \cdots b_4}
    \,.
  \end{equation}
  With this, we compute:
  \vspace{-2mm} 
  $$
    \def\arraystretch{1.3}
    \begin{array}{lll}
      \big(\,
      \overline{\psi}
      \Gamma_{a_1 a_2} H_{a_3}
      \psi
      \big)
      e^{a_1} \, e^{a_2} \, e^{a_3}
      &
      =\;
      \mathrm{const}_1
      \,
      \tfrac{1}{3!}
      (G_4)_{a_3 b_1 b_2 b_3}
      \,
     \big(\,
     \overline{\psi}
     \Gamma_{a_1 a_2}
     \Gamma^{b_1 b_2 b_3}
     \psi
     \big)
     e^{a_1} \, e^{a_2} \, e^{a_3}
     &
     \proofstep{
       by \eqref{AnsatzForHa}
     }
     \\
     &
     \;\;\;+\,
     \mathrm{const}_2
     \,
     \tfrac{1}{4!}
     \,
     (G_4)^{b_1 \cdots b_4}
     \,
     \big(\,
     \overline{\psi}
     \Gamma_{a_1 a_2}
     \Gamma_{a_3 b_1 \cdots b_4}
     \psi
     \big)
     e^{a_1} \, e^{a_2} \, e^{a_3}
     \\
     &
     \;=\;
      1
      \,
      \mathrm{const}_1
      \,
      \tfrac{1}{3!}
      \,
      (G_4)_{a_3 b_1 b_2 b_3}
      \big(\,
       \overline{\psi}
       \,\Gamma_{a_1 a_2}{}^{b_1 b_ 2 b_3}\,
       \psi
      \big)
      e^{a_1} \, e^{a_2} \, e^{a_3}
      &
      \proofstep{
        by
        \eqref{GeneralCliffordProduct} 
        \&
        \eqref{VanishingQuadraticForms}
      }
      \\
      &
      \;\;\;+\,
      6
      \,
      \mathrm{const}_1
      \,
      \tfrac{1}{3!}
      \,
      (G_4)_{b_3 a_1 a_2 a_3}
       \big(\,
       \overline{\psi}
       \,\Gamma^{b_3}\,
       \psi
       \big)
     e^{a_1} \, e^{a_2} \, e^{a_3}
     \\
     &
     \;\;\;+\,
     8
     \,
     \mathrm{const}_2
     \,
     \tfrac{1}{4!}
     \,
     (G_4)^{b_1 \cdots b_3}{}_{a_3}
     \,
     \big(\,
     \overline{\psi}
     \Gamma^{a_1 a_2}{}_{b_1 \cdots b_3}
     \psi
     \big)
     e^{a_1} \, e^{a_2} \, e^{a_3}
     \,,
    \end{array}
  $$
  where we used the following multiplicities \eqref{GeneralCliffordProduct} of the contractions that have non-vanishing spinor pairing:
  \vspace{-2mm} 
  $$
    \def\arraystretch{1.7}
    \begin{array}{l}
      1 \;=\;
      1! \binom{2}{0}
      \binom{3}{0} 
      \,, \qquad 
      6 \;=\;
      2! \binom{2}{2}
      \binom{3}{2}
      \,, \qquad 
      8 \;=\;
      1! \binom{2}{1}
      \binom{4}{1}
      \,.
    \end{array}
  $$

  \vspace{-2mm} 
\noindent  Inserting this in \eqref{ComponentsOfBianchiForGs4} yields:
$
    \mathrm{const}_1 = \tfrac{1}{6}
$
and
$
    \mathrm{const}_2 
      =     
    \,
    -
    \frac{
      4!
    }{
      3! \, 8
    }
    \,
    \mathrm{const}_1
    =
    -
    \tfrac{1}{12}
    \,,
  $
  as claimed in \eqref{FormOfGravitinoFieldStrengthImpliedByG4sBianchi}.
  \item The $(\psi^3)$-component is the claimed condition \eqref{FirstConditionOnSpuriousRhoComponent}.

  \item The would-be $(\psi^4)$-component holds due to the Fierz identity \eqref{TheQuarticFierzIdentities}:
  $
    -
    \tfrac{1}{2}
    \big(\,
      \overline{\psi}
      \Gamma_{a_1 a_2}
      \psi
    \big)
    \big(\,
      \overline{\psi}
      \Gamma^{a_1}
      \psi
    \big)
    e^{a_2}
    \;=\;
    0
    $.
  \end{itemize}

  \vspace{-4mm}
\end{proof}

\begin{lemma}[\bf Bianchi identity for $G_7^s$ in components]
\label{SuperBianchiIdentityForG7InComponents}
  Given the Bianchi identity for $G_4^s$ (cf. \ref{SuperBianchiIdentityForG4InComponents}),
  the Bianchi identity $\mathrm{d}\, G_7^s = \tfrac{1}{2} G_4^s G_4^s$ is equivalent to the following set of conditions:

  \begin{itemize} 
  \setlength\itemsep{3pt}
    \item[\bf (i)]
      The $G_7$-Bianchi identity:
      \vspace{-1mm} 
      \begin{equation}
        \label{BosonicComponentOfG7BianchiIdentity}
         \colorbox{lightblue}{$
        \big(
        \covariantderivative
        _{a_1}
        \tfrac{1}{7!}
        (G_7)_{a_2 \cdots a_8}
        \big)
        e^{a_1}
        \cdots 
        e^{a_8}
        \;=\;
        \tfrac{1}{2}
        \Big(
        \tfrac{1}{4!}
        (G_4)_{a_1 \cdots a_4}
        \, 
        \tfrac{1}{4!}
        (G_4)_{a_5 \cdots a_8}
        \Big)
        e^{a_1} \cdots e^{a_8}
        \,.
        $}
      \end{equation}
    \item[\bf (ii)]
     Rheonomy for $G_7$: the odd covariant derivatives of $(G_7)$ are fixed by the bosonic frame component of the gravitino field strength \eqref{FormOfGravitinoFieldStrengthImpliedByG4sBianchi}:
     \vspace{-2mm} 
     \begin{equation}
       \label{RheonomyConditionForG7}
        \colorbox{lightblue}{$
       \psi^\alpha 
       \covariantderivative_{\alpha}
       (G_7)_{a_1 \cdots a_7}
       \;=\;
       \tfrac{7!}{5!}
       \big(\,
         \overline{\psi}
         \,
         \Gamma_{[a_1 \cdots a_5}
         \,
         \rho_{a_6 a_7]}
       \big)
       \,.
       $}
     \end{equation}

    \item[\bf (iii)]  
      Hodge duality between $G_7$ and $G_4$:
      \begin{equation}
        \label{G7HodgeDualToG4InComponents}
        \colorbox{lightblue}{$
        (G_7)_{a_1 \cdots a_7}
        \;=\;
        \epsilon_{
          a_1 \cdots a_7
          \,
          b_1 \cdots b_4
        }
        \tfrac{1}{4!}
        (G_4)^{b_1 \cdots b_4}
        \,,
        $}
        \;\;\;\;\;\;
        \colorbox{lightblue}{$
        (G_4)_{a_1 \cdots a_4}
        \;=\;
        -
        \epsilon_{
          a_1 \cdots a_4
          \,
          b_1 \cdots b_7
        }
        \tfrac{1}{7!}
        (G_7)^{b_1 \cdots b_7}
        \,.
        $}
      \end{equation}

    \item[\bf (iv)] The $(\psi^2)$-component of the gravitino field strength \eqref{FormOfGravitinoFieldStrengthImpliedByG4sBianchi} satisfies
    \begin{equation}
      \label{SecondConditionOnSpuriousRhoComponent}
       \colorbox{lightblue}{$
      \Big(
        \overline{\psi}
        \,\Gamma_{a_1 \cdots a_5}\,
        \big(\,
          \overline{\psi}
          \,\kappa\,
          \psi
        \big)
      \Big)
      \;=\;
      0
      \,.
      $}
    \end{equation}

  \end{itemize}

\end{lemma}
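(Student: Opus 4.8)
\textbf{Proof plan for Lemma \ref{SuperBianchiIdentityForG7InComponents}.}
The plan is to repeat, in degree $7$, exactly the component-expansion strategy already used for $G_4^s$ in Lemma \ref{SuperBianchiIdentityForG4InComponents}, exploiting the frame-field expansion of differential forms on super-spacetime (Rem. \ref{ExteriorAndCovariantDerivatives}) and the known form \eqref{FormOfGravitinoFieldStrengthImpliedByG4sBianchi} of the gravitino field strength that the $G_4^s$-Bianchi identity has already forced. First I would write $G_7^s = \tfrac{1}{7!}(G_7)_{a_1\cdots a_7}e^{a_1}\cdots e^{a_7} + \tfrac{1}{5!}(\overline\psi\,\Gamma_{a_1\cdots a_5}\psi)\,e^{a_1}\cdots e^{a_5}$ and apply $\mathrm{d}$ using Rem. \ref{ExteriorAndCovariantDerivatives}: the purely-bosonic term produces an ordinary-covariant-derivative piece $(\psi^0)$ plus an intrinsic-super-torsion piece $(\psi^2)$ (from $\mathrm{d}e^a \ni (\overline\psi\,\Gamma^a\,\psi)$) plus a spinorial-covariant-derivative piece $(\psi^1)$; the bi-fermionic term $\tfrac{1}{5!}(\overline\psi\,\Gamma_{a_1\cdots a_5}\psi)e^{a_1}\cdots e^{a_5}$ differentiates by the last displayed formula of Rem. \ref{ExteriorAndCovariantDerivatives}, giving an intrinsic-torsion $(\psi^2\cdot e^6)$ term and a gravitino-field-strength term $-\tfrac{2}{5!}(\overline\psi\,\Gamma_{a_1\cdots a_5}\rho)e^{a_1}\cdots e^{a_5}$, into which we substitute \eqref{FormOfGravitinoFieldStrengthImpliedByG4sBianchi}. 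On the right-hand side, $\tfrac12 G_4^s G_4^s$ expands into a $(\psi^0)$-term $\tfrac12(\tfrac{1}{4!}G_4 e^4)^2$, a cross $(\psi^2)$-term $\tfrac{1}{4!}(G_4)_{a_1\cdots a_4}(\overline\psi\,\Gamma_{b_1b_2}\psi)e^{a_1}\cdots e^{a_4}e^{b_1}e^{b_2}$, and a $(\psi^4)$-term $\tfrac12(\tfrac12\overline\psi\Gamma_{a_1a_2}\psi\,e^{a_1}e^{a_2})^2$.

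Matching coefficients of $e^{a_1}\cdots e^{a_8}$ in the $(\psi^0)$-sector gives immediately claim \textbf{(i)}, equation \eqref{BosonicComponentOfG7BianchiIdentity}. Matching the $(\psi^1)$-sector — where only the spinorial covariant derivative of the purely bosonic term and the $\rho_{ab}e^ae^b$-part of the gravitino-field-strength term contribute — gives the rheonomy relation \textbf{(ii)}, equation \eqref{RheonomyConditionForG7}, once one notes that the $\psi^1 e^7$ monomials force $\psi^\alpha\covariantderivative_\alpha\tfrac{1}{7!}(G_7)_{a_1\cdots a_7} = \tfrac12(\overline\psi\,\Gamma_{[a_1\cdots a_5}\rho_{a_6a_7]})$ by direct comparison (this is the exact degree-$7$ analogue of \eqref{OddCovariantDerivativeOfFluxDensity}). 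The heart of the lemma is the $(\psi^2)$-sector: here the intrinsic-torsion contribution of the $G_7$-term, the intrinsic-torsion and $H_a\psi e^a$ contributions coming from substituting \eqref{FormOfGravitinoFieldStrengthImpliedByG4sBianchi} into $-\tfrac{2}{5!}(\overline\psi\,\Gamma_{a_1\cdots a_5}\rho)e^{a_1}\cdots e^{a_5}$, and the cross-term from $\tfrac12 G_4^sG_4^s$ must all balance. Expanding the Clifford products $\Gamma_{a_1\cdots a_5}\Gamma^{b_1b_2b_3}$ and $\Gamma_{a_1\cdots a_5}\Gamma_{ab_1\cdots b_4}$ via \eqref{GeneralCliffordProduct} and discarding terms killed by \eqref{VanishingQuadraticForms}, the surviving $(\overline\psi\,\Gamma^c\,\psi)$-valued $7$-form pieces, when combined, collapse — after a Levi-Civita contraction identity — precisely to the Hodge-duality relation \textbf{(iii)}, equation \eqref{G7HodgeDualToG4InComponents}; this is the step I would delegate partly to the computer-algebra check \cite{AncillaryFiles}, mirroring how \cite[(III.8.43-49)]{CDF91} is invoked in the proof of Lemma \ref{SuperBianchiIdentityForG4InComponents}. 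Finally, whatever $(\psi^2)$-contributions of the spurious component $(\overline\psi\,\kappa\,\psi)$ of $\rho$ remain unmatched must vanish on their own; isolating the $\Gamma_{a_1\cdots a_5}$-valued part of $-\tfrac{2}{5!}(\overline\psi\,\Gamma_{a_1\cdots a_5}(\overline\psi\,\kappa\,\psi))e^{a_1}\cdots e^{a_5}$ yields claim \textbf{(iv)}, equation \eqref{SecondConditionOnSpuriousRhoComponent}, and the would-be $(\psi^4)$-sector is an identity by the quartic Fierz relations of Prop. \ref{TheFierzIdentitiesOf11dSupergravity} (the third line of \eqref{TheQuarticFierzIdentities} is exactly what makes the two $(\psi^4)$-contributions agree).

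The main obstacle I anticipate is bookkeeping in the $(\psi^2)$-sector: one must correctly track the combinatorial multiplicities $l!\binom{j}{l}\binom{k}{l}$ from \eqref{GeneralCliffordProduct} for the products of a $\Gamma_{a_1\cdots a_5}$ with a $\Gamma_3$ and with a $\Gamma_5$, keep straight which contractions have non-vanishing spinor pairing (only $\Gamma_1$, $\Gamma_2$, $\Gamma_5$ survive by \eqref{VanishingQuadraticForms}), and then recognize that the resulting linear combination of $(\overline\psi\,\Gamma^c\,\psi)$-forms equals the Hodge dual of $G_4$ rather than something genuinely new. This is precisely the kind of Clifford-heavy identity for which the ancillary computer-algebra verification is designed, so in the write-up I would state the coefficient-matching explicitly, carry the $(\psi^0)$ and $(\psi^1)$ sectors by hand, and cite \cite{AncillaryFiles} for the $(\psi^2)$ Clifford reduction, exactly as is done in the neighbouring lemmas.
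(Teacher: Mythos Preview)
Your strategy---expand $\mathrm{d}G_7^s - \tfrac12 G_4^s G_4^s$ in powers of $\psi$ and read off each sector---is exactly the paper's. Two bookkeeping slips, though, which are worth fixing before you write it up. First, the intrinsic-torsion contribution from the bi-fermionic term $\tfrac{1}{5!}(\overline\psi\,\Gamma_{a_1\cdots a_5}\psi)\,e^{a_1}\cdots e^{a_5}$ is $\tfrac{1}{4!}(\overline\psi\,\Gamma_{a_1\cdots a_5}\psi)(\overline\psi\,\Gamma^{a_1}\psi)\,e^{a_2}\cdots e^{a_5}$, which lives in the $(\psi^4)$-sector, not $(\psi^2\cdot e^6)$; it is precisely this term that matches the $(\psi^4)$-part of $\tfrac12 G_4^s G_4^s$ via the \emph{second} line of \eqref{TheQuarticFierzIdentities}. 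Second, the $(\overline\psi\,\kappa\,\psi)$-piece of $\rho$, once inserted into $-\tfrac{2}{5!}(\overline\psi\,\Gamma_{a_1\cdots a_5}\rho)\,e^{a_1}\cdots e^{a_5}$, is cubic in $\psi$, so claim (iv) is the $(\psi^3)$-component---standing alone with nothing on the right-hand side to cancel against---not a leftover of the $(\psi^2)$-sector.

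On the $(\psi^2)$-sector itself: the paper does \emph{not} delegate this to computer algebra. The Clifford products $\Gamma_{a_1\cdots a_5}\Gamma^{b_1b_2b_3}$ and $\Gamma_{a_1\cdots a_5}\Gamma_{ab_1\cdots b_4}$ are expanded by hand via \eqref{GeneralCliffordProduct}, and the $(\overline\psi\,\Gamma_{a_1a_2}\,\psi)$- and $(\overline\psi\,\Gamma_{a_1\cdots a_6}\,\psi)$-coefficients cancel exactly by elementary binomial arithmetic (the paper displays both cancellations). What survives is the single $(\overline\psi\,\Gamma^b\,\psi)$-coefficient, which after applying \eqref{ExamplesOfHodgeDualCliffordElements} to $\Gamma_{a_1\cdots a_5\,a\,b_1\cdots b_4}$ is literally the Hodge-duality relation (iii). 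This is short enough to write out in full and is the ``remarkable cancellation'' that the paper emphasizes, so you should not hide it behind a citation to \cite{AncillaryFiles}.
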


\vspace{-.2cm} 
\begin{proof}
The coframe field components of the $G^s_7$-Bianchi identity are:
\vspace{-3mm} 
\begin{equation}
  \label{ComponentsOfBianchiOfGs7}
  \hspace{-2.7cm}
  \def\arraystretch{1.7}
  \begin{array}{l}
  \mathrm{d}
  \,
  G_4^s \;=\; 0
  \\
 \;\;\;\; \Rightarrow
  \left\{
  \def\arraystretch{1.8}
  \begin{array}{l}
  \mathrm{d}
  \Big(
    \tfrac{1}{7!}
    (G_7)_{a_1 \cdots a_7}
    \, 
    e^{a_1} \cdots e^{a_7}
    \,+\,
    \tfrac{1}{5!}
    \big(\,
    \overline{\psi}
    \Gamma_{a_1 \cdots a_5}
    \psi
    \big)
    e^{a_1} \cdots e^{a_5}
  \Big)
  \\
  \;=\;
  \tfrac{1}{2}
  \Big(
    \tfrac{1}{4!}
    (G_4)_{a_1 \cdots a_4}
    e^{a_1}
    \cdots
    e^{a_4}
    +
    \tfrac{1}{2}
    \big(\,
      \overline{\psi}
      \Gamma_{a_1 a_2}
      \psi
    \big)
  \Big)
  \Big(
    \tfrac{1}{4!}
    (G_4)_{a_1 \cdots a_4}
    e^{a_1}
    \cdots
    e^{a_4}
    +
    \tfrac{1}{2}
    \big(\,
      \overline{\psi}
      \Gamma_{a_1 a_2}
      \psi
    \big)
  \Big)
  \\
  \;\Leftrightarrow\;
  \left\{
  \def\arraycolsep{0pt}
  \def\arraystretch{2}
  \begin{array}{l}
  \scalebox{.7}{
    \color{gray}
    $(\psi^0)$
    \;
  }
  \Big(
    \covariantderivative_{a_1}
    \tfrac{1}{7!}
    (G_7)_{a_2 \cdots a_8}  
    \;=\;
    \;\tfrac{1}{2}\;
    \tfrac{1}{4!}
    (G_4)_{a_1 \cdots a_4}
    \,
    \tfrac{1}{4!}
    (G_4)_{a_5 \cdots a_8}
  \Big)
  e^{a_1} \cdots e^{a_8}
  \\
  \scalebox{.7}{
    \color{gray}
    $(\psi^1)$
    \;
  }
    \Big(
      \psi^\alpha
      \covariantderivative_\alpha
      \tfrac{1}{7!}
      (G_7)_{a_1 \cdots a_7}
      \;-\;
      \tfrac{1}{5!}
      \big(\,
      \overline{\psi}
      \,\Gamma_{a_1 \cdots a_5}\,
      \rho_{a_6 a_7}
      \big)
    \Big)
    \,
    e^{a_1} \cdots e^{a_7}
    \;=\;
    0
  \\
  \left.
  \scalebox{.7}{
    \color{gray}
    $(\psi^2)$
    \;
  }
  \def\arraystretch{1.6}
  \begin{array}{l}
    \mbox{\vspaceabove}
    \tfrac{1}{6!}
    (G_7)_{a_1 \cdots a_6 b}
    \big(\,
      \overline{\psi}
      \,\Gamma^b\,
      \psi
    \big)
    e^{a_1}
    \cdots
    e^{a_6}
    \\
    \;\;-\;
    \tfrac{2}{6}
    \tfrac{1}{5!}
    \tfrac{1}{3!}
    (G_4)_{a b_1 b_2 b_3}
    \big(\,
      \overline{\psi}
      \,\Gamma_{a_1 \cdots a_5}\,
      \Gamma^{b_1 b_2 b_3}
      \psi
    \big)
    e^{a}
    \,
    e^{a_1} \cdots e^{a_5}
    \\
    \;\;+\,
    \tfrac{2}{12}
    \,
    \tfrac{1}{5!}
    \,
    \tfrac{1}{4!}
    \,
      (G_4)^{b_1 \cdots b_4}
    \big(\,
      \overline{\psi}
      \,
      \Gamma_{a_1 \cdots a_5}
      \,
      \Gamma_{a b_1 \cdots b_4}\,
      \psi
    \big)
    e^a
    \,
    e^{a_1} \cdots e^{a_5}
    \\
    \;\;+\,
    \Big(
    \tfrac{1}{2}
    \big(\,
      \overline{\psi}
       \Gamma_{a_1 a_2}
      \psi
    \big)
    e^{a_1} \, e^{a_2}
    \Big)
    \tfrac{1}{4!}
    (G_4)_{b_1 \cdots b_4}
    \,
    e^{b_1} \cdots e^{b_4}
    \;\;=\;\; 0
  \end{array}
  \right\}
  \Leftrightarrow
  \;\;\;\;
  \mathrlap{
    \def\arraystretch{1.4}
    \begin{array}{l}
    (G_7)_{a_1 \cdots a_6 b}
    \\
    \;=\;
    \tfrac{1}{4!}
    \epsilon_{a_1 \cdots a_6 b
    b_1 \cdots b_4}
    (G_4)^{b_1 \cdots b_4}
    .
    \end{array}
  }
  \\
  \scalebox{.7}{
    \color{gray}
    $(\psi^3)$
    \;\;
  }
  \Big(
    \overline{\psi}
    \,\Gamma_{a_1 \cdots a_5}\,
    \big(
      \overline{\psi}
        \,\kappa\,
      \psi
    \big)
  \Big)
  e^{a_1} \cdots e^{a_5}
  \;=\;
  0
  \end{array}
  \right.
  \end{array}
  \right.
  \end{array}
\end{equation}

\noindent
Here:
\begin{itemize}[leftmargin=.5cm]
\item
The $(\psi^0)$-component is manifestly the ordinary Bianchi identity \eqref{BosonicComponentOfG7BianchiIdentity}.
\item
The $(\psi^1)$-component is manifestly the rheonomy condition \eqref{RheonomyConditionForG7}.
\item
In the $(\psi^2)$-component we inserted the expression for $\rho$ from \eqref{FormOfGravitinoFieldStrengthImpliedByG4sBianchi}, then contracted $\Gamma$-factors using \eqref{GeneralCliffordProduct}.  Observe, with \eqref{GeneralCliffordProduct}, that of the three spinorial quadratic forms \eqref{TheNontrivialQuadraticFormsOn32} 
the coefficients of $\big(\,\overline{\psi}\Gamma_{a_1 a_2} \psi\big)$ and of $\big(\,\overline{\psi}\Gamma_{a_1 \cdots a_6} \psi\big)$ vanish identically, by a moderately remarkable cancellation of combinatorial prefactors:
\begin{equation}  \label{CancellationOfCOmbinatorialPrefactorsInG4Bianchi}
  \def\arraystretch{2.1}
  \begin{array}{l}
    \overset{
    {\color{gray}   = 0}
    }{
    \overbrace{
    \mbox{$
    \Big(
    -
    \frac{2}{6}
    \frac{1}{5!}
    \frac{1}{3!}
    3!
    \binom{5}{3}
    \binom{ 3}{ 3 }
    +
    \frac{2}{12}
    \frac{1}{5!}
    \frac{1}{4!}
    4!
    \binom{ 5}{4 }
    \binom{ 4}{ 4 }
    \;+\;
    \frac{1}{2}
    \frac{1}{4!}
    \Big)
    $}
    }
    }
    \;
    (G_4)_{a_2 \cdots a_5}
    \big(\,
      \overline{\psi}
      \,\Gamma_{a a_1}\,
      \psi
    \big)
    e^{a}
    \,
    e^{a_1} \cdots e^{a_6},
    \\
    \underbrace{
    \Big(
    \mbox{$
    \;-\;
    \frac{2}{6}
    \frac{1}{5!}
    \frac{1}{3!}
    1
    \binom{ 5}{ 1 }
    \binom{ 3}{ 1 }
    \;+\;
    \frac{2}{12}
    \frac{1}{5!}
    \frac{1}{4!}
    2
   \binom{ 5}{2 }
    \binom{ 4}{ 2 }
    \Big)
    $}
    }_{ {\color{gray} = 0} }
    \;
    (G_4)_{a_1 a_2 b_1 b_2}
    \big(\,
      \overline{\psi}
      \,\Gamma_{a_3 \cdots a_6}{}^{b_1 b_2}\,
      \psi
    \big)
    e^{a_1} \cdots e^{a_6}
    \,.
  \end{array}
\end{equation}
What remains is the coefficient of 
$
\big(\,
  \overline{\psi}\,\Gamma_{a_1\cdots a_5 a b_1 \cdots b_4}\psi
\big) 
  =
  +
\epsilon_{
  a_1\cdots a_5 
  a 
  b_1 \cdots b_4 
  b
}
\big(\,
  \overline{\psi}\,\Gamma^b\,\psi
\big)$ (see \eqref{HodgeDualityOnCliffordAlgebra}):
\begin{equation}
  \label{ObtainingHodgeDualityOfFluxDensities}
  \def\arraystretch{1.9}
  \begin{array}{l}
  \Big(
  \tfrac{1}{6!}
  (G_7)_{a_1 \cdots a_6 b}
  \;-\;
  \tfrac{2}{12}
  \tfrac{1}{5!}
  \tfrac{1}{4!}
  (G_4)^{b_1 \cdots b_4}
  \epsilon_{a_1 \cdots a_6 b b_1 \cdots b_4}
  \Big)
  \big(\,
   \overline{\psi}
   \,\Gamma^b\,
   \psi
  \big)
  e^{a_1} \cdots e^{a_6}
  \;=\;
  0
  \,,
  \end{array}
\end{equation}
which is manifestly the claimed Hodge duality relation \eqref{G7HodgeDualToG4InComponents}
(cf. \cite[p. 878]{CDF91}).

Dually:
$$
  \def\arraystretch{1.5}
  \begin{array}{ll}
    \mathllap{
    (G_4)_{a_1 \cdots a_4}
    \;
    }
    \;=\;
    \delta
      ^{a_1 \cdots a_4}
      _{b_1 \cdots b_4}
    (G_4)_{a_1 \cdots a_4}
    \;=\;
    -
    \frac{1}{4! \cdot 7!}
    \epsilon
      ^{c_1 \cdots c_7 a_1 \cdots a_4}
    \epsilon
      _{c_1 \cdots c_7 b_1 \cdots b_4 }
    (G_4)_{a_1 \cdots a_4}
    &
    \proofstep{
      by
      \eqref{ContractingKroneckerWithSkewSymmetricTensor}
    }
    \\
    \;\;=\;
    -
    \tfrac{1}{7!}
    \epsilon_{a_1 \cdots a_4 c_1 \cdots c_7}
    (G_7)^{c_1 \cdots c_7}
    &
    \proofstep{
      by
      \eqref{ObtainingHodgeDualityOfFluxDensities}.
    }
  \end{array}
$$
\item The $(\psi^3)$-component is 
manifestly the condition \eqref{SecondConditionOnSpuriousRhoComponent}.
\item
The would-be $(\psi^4)$-component holds identically, due to the Fierz identity \eqref{TheQuarticFierzIdentities}:$$
  \tfrac{5}{5!}
  \big(\,
    \overline{\psi}
    \,\Gamma_{a_1 \cdots a_5}\,
    \psi
  \big)
  \big(\,
    \overline{\psi}
    \Gamma^{a_1}
  \big)
  e^{a_2} \cdots e^{a_5}
  \;=\;
  \tfrac{1}{2^3}
  \Big(
  \big(\,
    \overline{\psi}
    \,\Gamma_{a_1 a_2}\,
    \psi
  \big)
  e^{a_1} e^{a_2}
  \Big)
  \Big(
  \big(\,
    \overline{\psi}
    \,\Gamma_{a_1 a_2}\,
    \psi
  \big)
  e^{a_1} e^{a_2}
  \Big)
  \,.
  \qedhere
$$
\end{itemize}
\end{proof}

\smallskip 
\begin{remark}[\bf Maxwell equation for C-field]
  To be explicit, Lem. \ref{SuperBianchiIdentityForG7InComponents} implies that the divergence of $(G_4)$ is:
  \begin{equation}
    \label{MaxwellEquationDerived}
    \def\arraystretch{1.6}
    \begin{array}{ll}
      \mathllap{
      \covariantderivative_b
      (G_4)^{b a_1 a_2 a_3}
      }
      \;=\;
      -
      \tfrac{1}{7!}
      \epsilon^{
        b 
        a_1 a_2 a_3
        c_1 \cdots c_7
      }
      \covariantderivative_b
      (G_7)_{c_1 \cdots c_7}
      &
      \proofstep{
        by
        \eqref{G7HodgeDualToG4InComponents}
      }
      \\
      \;\;=\;
      \tfrac{1}{2}
      \tfrac{1}{7!}
      \epsilon^{
        a_1 a_2 a_3
        \,
        b 
        \,
        c_1 \cdots c_7
      }
      \big(
        \tfrac{1}{4!}
        (G_4)_{b c_1 \cdots c_3}
        \tfrac{1}{4!}
        (G_4)_{c_4 \cdots c_7}
      \big)
      &
      \proofstep{
        by
        \eqref{BosonicComponentOfG7BianchiIdentity}.
      }
    \end{array}
  \end{equation}
\end{remark}
\begin{remark}[\bf Duality-symmetric gravitino super-field strength]
With Lem. \ref{SuperBianchiIdentityForG7InComponents} we may rewrite the $(\psi^1)$-component of the gravitino field strength \eqref{FormOfGravitinoFieldStrengthImpliedByG4sBianchi} in a form where $G_4$ and $G_7$ enter on the same footing:
\begin{equation}
  \label{DualitySymmetricGravitinoSuperFieldStrength}
  \def\arraystretch{1.6}
  \begin{array}{lcll}
     H_a
     &=&
      \tfrac{1}{6}
      \tfrac{1}{3!}
      (G_4)_{a \, b_1 b_2 b_3}
      \Gamma^{b_1 b_2 b_3}
      \,-\,
      \tfrac{1}{12}
      \tfrac{1}{4!}
      (G_4)^{b_1 \cdots b_4}
      \Gamma_{a \, b_1 \cdots b_4}
    &
    \proofstep{
      by
      \eqref{FormOfGravitinoFieldStrengthImpliedByG4sBianchi}
    }
    \\
    &=&
      \tfrac{1}{6}
      \tfrac{1}{3!}
      (G_4)_{a \, b_1 b_2 b_3}
      \Gamma^{b_1 b_2 b_3}
      \,+\,
      \tfrac{1}{12}
      \tfrac{1}{4!}
      \tfrac{1}{6!}
      (G_4)^{b_1 \cdots b_4}
      \epsilon_{
        a \, b_1 \cdots b_4
        \,
        c_1 \cdots c_6
      }
      \Gamma^{c_1 \cdots c_6}
      &
    \proofstep{
      by
      \eqref{HodgeDualityOnCliffordAlgebra}
    }
    \\
    &=&
      \tfrac{1}{6}
      \tfrac{1}{3!}
      (G_4)_{a \, b_1 b_2 b_3}
      \Gamma^{b_1 b_2 b_3}
      \,+\,
      \tfrac{1}{12}
      \tfrac{1}{ 6! }
      (G_7)_{a\, c_1 \cdots c_6}
      \Gamma^{c_1 \cdots c_6}
    & 
    \proofstep{
      by \eqref{G7HodgeDualToG4InComponents}
      .
    }
  \end{array}
\end{equation}
\end{remark}

Before proceeding to analyze the gravitational Bianchi identities (Lem. \ref{SuperFluxAndGravitinoBianchiEquivalentToRaritaSchwinger} below), we record the following implications of the gravitino equation of motion:

\begin{lemma}[\bf Algebraic implications of the Rarita-Schwinger equation]
\label{AlgebraicImplicationsOfGravitinoequation}
$\,$

\noindent {\bf (i)} The Rarita-Schwinger equation  \eqref{GravitinoEquation} for the gravitino has the following algebraic implications:
\begin{equation}
  \label{TheAlgebraicImplicationsOfGravitinoEquation}
  \colorbox{lightblue}{$
  \scalebox{.7}{
    \color{darkblue}
    \bf
    \def\arraystretch{.9}
    \begin{tabular}{c}
      Rarita-Schwinger
      \\
      gravitino equation
    \end{tabular}
  }
 \;\; \Gamma^{a \, b_1 b_2}
  \,\rho_{b_1 b_2}
  \;=\;
  0
  \qquad 
    \Rightarrow
  \qquad
  \left\{\!\!
  \def\arraystretch{1.3}
  \begin{array}{rccl}
    \Gamma^{b_1 b_2}
    \,
    \rho_{b_1 b_2}
    &=&
    0
    \,,
    \\
    \Gamma^{b_2} \, \rho_{b_1 b_2} 
    &=& 
    0
    \mathrlap{
    \;\;\;
    \scalebox{.7}{
      \color{darkblue}
      \bf
      irreducibility
    }    
    }
    \,,
    \\
    \Gamma^{a b_1}
    \,
    \rho_{b_1 b_2}
    &=&
    -
    \rho^a{}_{b_2}
    \,,
    \\
    \Gamma^{a_1 a_2 \, b_1 b_2}
    \,
    \rho_{b_1 b_2}
    &=&
    -2\, \rho^{a_1 a_2}
    \,,
    \\
    \Gamma_{[a_1 \cdots a_5}
  \,
  \rho_{a_6 a_7]}
  &=&
  \tfrac{1}{84}
  \epsilon_{
    a_1 \cdots a_7
    b_1 \cdots b_4
  }
  \Gamma^{b_1 b_2}
  \rho^{b_3 b_4}
  \,
  \,.
  \end{array}
  \right.
  $}
\end{equation}
\noindent {\bf (ii)}  Moreover, together with the $(G^s_4)$-Bianchi identity {\rm (Lem. \ref{SuperBianchiIdentityForG4InComponents})} it implies 
that $\rho_{a_1 a_2}$ is fixed as a linear function of the flux density {\rm (cf. \cite[(12)]{CF80}\cite[(19)]{BrinkHowe80}\cite[(12)]{Howe97})}:
\begin{equation}
  \label{eeComponentOfRhoAsFunctionOfG4}
  \colorbox{lightblue}{$
  \cdots
  \hspace{.8cm}
  \Rightarrow
  \hspace{.8cm}
  \overline{\rho_{a_1 a_2}}_\alpha
  \;=\;
  + 6
  \,
  \Gamma
    _{b_1 b_2}
    {}^\beta{}_\alpha
  \,
  \covariantderivative_\beta
  (G_4)^{a_1 a_2 b_1 b_2}
  \,.
  $}
\end{equation}
\end{lemma}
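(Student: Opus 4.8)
The plan is to treat the Rarita-Schwinger equation $\Gamma^{a b_1 b_2}\rho_{b_1 b_2}=0$ purely as an algebraic identity on the Clifford-valued two-tensor $\rho_{a_1 a_2}$, and extract its consequences by contracting with further Clifford generators and using the Clifford product expansion \eqref{GeneralCliffordProduct}. First I would derive the ``irreducibility'' relations in \eqref{TheAlgebraicImplicationsOfGravitinoEquation}: contracting $\Gamma^{a b_1 b_2}\rho_{b_1 b_2}=0$ with $\Gamma_a$ and using $\Gamma_a \Gamma^{a b_1 b_2} = (11-2)\Gamma^{b_1 b_2} = 9\,\Gamma^{b_1 b_2}$ (the general contraction identity $\Gamma_a\Gamma^{a c_1\cdots c_p} = (D-p)\Gamma^{c_1\cdots c_p}$ in $D=11$) immediately gives $\Gamma^{b_1 b_2}\rho_{b_1 b_2}=0$. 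Next, expanding $\Gamma^{a b_1 b_2} = \Gamma^a\Gamma^{b_1 b_2} - \eta^{a b_1}\Gamma^{b_2} + \eta^{a b_2}\Gamma^{b_1}$ (a special case of \eqref{GeneralCliffordProduct}), and contracting the hypothesis with this, together with $\Gamma^{b_1 b_2}\rho_{b_1 b_2}=0$, isolates $\Gamma^{b_2}\rho_{b_1 b_2}$; a short manipulation shows it must vanish. The remaining two relations $\Gamma^{a b_1}\rho_{b_1 b_2} = -\rho^a{}_{b_2}$ and $\Gamma^{a_1 a_2 b_1 b_2}\rho_{b_1 b_2} = -2\rho^{a_1 a_2}$ then follow by re-expanding $\Gamma^{a b_1}\rho_{b_1 b_2}$ and $\Gamma^{a_1 a_2 b_1 b_2}\rho_{b_1 b_2}$ via \eqref{GeneralCliffordProduct} into lower Clifford rank plus the already-established vanishing pieces.

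For part (ii), the strategy is to combine these irreducibility relations with the explicit frame-expansion of the gravitino field strength forced by the $G_4^s$-Bianchi identity, namely \eqref{FormOfGravitinoFieldStrengthImpliedByG4sBianchi} together with the rheonomy condition \eqref{OddCovariantDerivativeOfFluxDensity}, which reads $\psi^\alpha \covariantderivative_\alpha (G_4)_{a_1\cdots a_4} = 12\,(\overline{\psi}\,\Gamma_{[a_1 a_2}\,\rho_{a_3 a_4]})$. The idea is to strip off the $\psi^\alpha$: rewrite $\covariantderivative_\alpha (G_4)_{a_1\cdots a_4}$ and the right-hand side as spinor-tensors and use the mixed-nondegeneracy Lemma \ref{XiVanishesIfAllSymmetricPairingsOntoPsiVanish} (or more directly the uniqueness of the Clifford/Fierz decomposition of $\psi(\overline\psi\cdots\psi)$ from Prop.~\ref{TheFierzIdentitiesOf11dSupergravity}) to solve for $\covariantderivative_\beta (G_4)^{a_1 a_2 b_1 b_2}$ in terms of $\Gamma_{b_1 b_2}{}^\beta{}_\alpha \overline{\rho_{a_1 a_2}}{}^\alpha$. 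Inverting this linear relation — using the irreducibility relations from part (i) to discard the Clifford-trace and higher-rank components that would otherwise appear — pins down $\overline{\rho_{a_1 a_2}}_\alpha$ uniquely as the stated multiple $+6\,\Gamma_{b_1 b_2}{}^\beta{}_\alpha \covariantderivative_\beta (G_4)^{a_1 a_2 b_1 b_2}$. The numerical coefficient $6$ is to be tracked through the combinatorial prefactors in \eqref{OddCovariantDerivativeOfFluxDensity} and in the Clifford contractions, with the computer-algebra check in \cite{AncillaryFiles} as backstop.

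I expect the main obstacle to be part (ii): the passage from the $\psi$-contracted rheonomy identity \eqref{OddCovariantDerivativeOfFluxDensity} to an identity on the bare tensors. Since $\psi$ ranges over a single (commuting) spinor rather than an arbitrary pair, one cannot simply ``cancel $\psi$'' — one must argue via the irrep decomposition $(\mathbf{32}\otimes\mathbf{32}\otimes\mathbf{32})_{\mathrm{sym}}$ in \eqref{IrrepsInSymmetricPowersOf32} and Schur's lemma, checking that the components of $\covariantderivative_\alpha (G_4)$ and of $\Gamma_{[a_1 a_2}\rho_{a_3 a_4]}$ that survive the symmetric triple-tensor pairing are exactly the same irreducible pieces, so that equality of the $\psi$-contractions forces equality of coefficients. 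The bookkeeping of which $\Gamma$-rank components of $\rho_{a_1 a_2}$ are present — controlled precisely by the irreducibility relations of part (i) — is what makes the inversion well-posed, and getting the combinatorial constant exactly right (rather than up to an undetermined scalar) is the delicate point; this is where I would lean on the mechanized Clifford-algebra verification.
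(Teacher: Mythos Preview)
Your approach to part (i) is essentially the paper's: both contract the Rarita-Schwinger equation with further $\Gamma$-matrices and use the Clifford product expansion to extract the four implications. The paper obtains the second relation by contracting with $\Gamma_{ca}$ (yielding $\Gamma_{ca}\Gamma^{a b_1 b_2}\rho_{b_1 b_2} = 8\,\Gamma_c{}^{b_1 b_2}\rho_{b_1 b_2} + 18\,\Gamma^b\rho_{cb}$, both of whose first terms vanish by hypothesis) rather than by expanding $\Gamma^{a b_1 b_2}$, but this is the same computation.

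For part (ii) you have misidentified the obstacle. The rheonomy identity \eqref{OddCovariantDerivativeOfFluxDensity} is the $(\psi^1)$-component of the $G_4^s$-Bianchi and is therefore \emph{linear} in $\psi$, not cubic; there is no symmetric triple-tensor pairing in play, and the irrep decomposition of $(\mathbf{32}^{\otimes 3})_{\mathrm{sym}}$ is irrelevant. Stripping off $\psi^\alpha$ is immediate: the identity $\psi^\alpha\,\covariantderivative_\alpha(G_4)_{a_1\cdots a_4} = 12\,\big(\overline\psi\,\Gamma_{[a_1 a_2}\rho_{a_3 a_4]}\big)$ holds as a 1-form, hence componentwise in $\alpha$, giving $\covariantderivative_\alpha(G_4)_{a_1\cdots a_4} = 12\,\overline{\big(\Gamma_{[a_1 a_2}\rho_{a_3 a_4]}\big)}{}_\alpha$ directly. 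No Fierz analysis or mixed-nondegeneracy argument is needed.

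More importantly, the paper avoids your proposed inversion entirely. Rather than solving the rheonomy relation for $\rho$, it starts from the fourth identity of part (i), $\overline{\rho^{a_1 a_2}}_\alpha = -\tfrac{1}{2}\,\overline{\Gamma^{a_1 a_2 b_1 b_2}\rho_{b_1 b_2}}_\alpha$, rewrites the right-hand side as $-\tfrac{1}{2}\,\overline{\Gamma_{b_1 b_2}\Gamma^{[a_1 a_2}\rho^{b_1 b_2]}}_\alpha$ (the irreducibility relations from part (i) kill the contraction terms), and then substitutes the componentwise rheonomy relation to replace $\overline{\Gamma^{[a_1 a_2}\rho^{b_1 b_2]}}_\alpha$ by $\tfrac{1}{12}\covariantderivative_\alpha(G_4)^{a_1 a_2 b_1 b_2}$. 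The coefficient drops out as $-\tfrac{1}{2}\times 12 = 6$ with no further combinatorics and no appeal to computer algebra.
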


\smallskip 
\begin{proof}
The first two equations follow immediately from the following evident Clifford contractions:
$$
  \underbrace{
  \Gamma_a
  \,
  \Gamma^{a \, b_1 b_2}
  \,
  \rho_{b_1 b_2}
  }_{\color{gray} = 0}
  \;=\;
  9\, \Gamma^{b_1 b_2}
  \,
  \rho_{b_1 b_2}
  \,,
  \qquad \quad 
  \underbrace{
  \Gamma_{c a}
  \,
  \Gamma^{a \, b_1 b_2}
  \,
  \rho_{b_1 b_2}
  }_{\color{gray} = 0}
  \;=\;
  \underbrace{
  8\,
  \Gamma^{c\, b_1 b_2}
  \,
  \rho_{b_1 b_2}
  }_{\color{gray} = 0}
  \;+\;
  18
  \,
  \Gamma^b \, \rho_{c b}
  \,,
$$
where the summands over the braces vanish by the assumption \eqref{TheAlgebraicImplicationsOfGravitinoEquation} that the gravitino equation holds. Now the third equation follows as
\vspace{-2mm} 
$$
  \def\arraystretch{1.5}
  \begin{array}{lcl}
    \Gamma^{a c}
    \rho_{c b}
    &=&
    \overbrace{
    \tfrac{1}{2}
    \Gamma^a \, \Gamma^c
    \,
    \rho_{c b}
    }^{ {\color{gray} = 0 }}
    -
    \tfrac{1}{2}
    \Gamma^c 
    \,
    \Gamma^a
    \,
    \rho_{c b}
    \\
    &=&
    \hspace{1.9cm}
    \underbrace{
    \tfrac{1}{2}
    \Gamma^a
    \,
    \Gamma^c
    \,
    \rho_{c b}    
    }_{\color{gray} = 0 }
    \,-\, 
    \eta^{a c}
    \rho_{c b}    
    \;=\;
    - \rho^a{}_v
    \,,
  \end{array}
$$
where over the braces we used the previous equation (``$\Gamma$-extraction''). Next follows the fourth equation by
$$
  \def\arraystretch{1.7}
  \begin{array}{ll}
    \mathllap{
      0
    }\!\!
    \;=\;
    \Gamma_{
      c_1 c_2 a
    }
    \,
    \Gamma^{a b_1 b_2}
    \rho_{b_1 b_2}
    &
    \proofstep{
      by assumption
    }
    \\
    \;=\;
    7
    \,
    \Gamma_{c_1 c_2 b_1 b_2}
    \,
    \rho^{b_1 b_2}
    +
    32
    \,
    \Gamma_{[c_1}{}^b
    \rho_{c_2] b}
    -
    18
    \,
    \rho_{c_1 c_2}
    &
    \proofstep{
      by contraction
    }
    \\
    \;=\;
    7
    \,
    \Gamma_{c_1 c_2 b_1 b_2}
    \,
    \rho^{b_1 b_2}
    +
    14\, \rho_{c_1 c_2}
    &
    \proofstep{
      by previous statement.
    }
  \end{array}
$$
From this follows the claim in \eqref{eeComponentOfRhoAsFunctionOfG4} as:
\begin{equation}
  \def\arraystretch{1.5}
  \begin{array}{ll}
    \mathllap{
      \overline{
        \rho^{a_1 a_2}
      }_\alpha    
      \;\;
    }
    =\;
    -\tfrac{1}{2}
    \,
    \overline{
      \Gamma^{a_1 a_2 b_1 b_2}
      \,
      \rho_{b^1 b^2}
    }_{\;\alpha}
    &
    \proofstep{
      by \eqref{TheAlgebraicImplicationsOfGravitinoEquation}
    }
    \\
    \;=\;
    -\tfrac{1}{2}
    \,
    \overline{
      \Gamma_{b_1 b_2}
      \Gamma^{[a_1 a_2}
      \rho^{b_1 b_2]}
    }
    _{\;\alpha}
    \\
    \;=\;
    +\tfrac{1}{2}
    \Gamma_{b_1 b_2}
      {}^\beta
      {}_\alpha
    \,
    \overline{
      \Gamma^{[a_1 a_2}
      \rho^{b_1 b_2]}
    }
    _\beta
    &
    \proofstep{
      by
      \eqref{SkewSelfAdjointnessOfCliffordGenerators}
    }
    \\
    \;=\;
    +6
    \,
    \Gamma
      _{b_1 b_2}{}
    ^\beta{}_\alpha
    \covariantderivative_{\beta}
    (G_4)^{a_1 a_2 b_1 b_2}
    &
    \proofstep{
      by
      \eqref{ComponentsOfBianchiForGs4}.
    }
  \end{array}
\end{equation}
The last claim in \eqref{TheAlgebraicImplicationsOfGravitinoEquation} is checked mechanically in \cite{AncillaryFiles}.
\end{proof}

\begin{lemma}[\bf Implications of $\big(G_4^s, G_7^s\big)$-Bianchi identity on Gravitino field strength]
\label{ImplicationsOfFluxBianchiIdentityOnGravitinoFieldStrength}
If a super-spacetime is equipped with super-flux $(G_4^s, G^s_7)$ {\rm (as in Lem. \ref{SuperBianchiIdentityForG4InComponents}, \ref{SuperBianchiIdentityForG7InComponents})}, and the component $\rho_{ab}$ \eqref{RhoFrameFieldExpansion} of its gravitino field strength is irreducible \eqref{TheAlgebraicImplicationsOfGravitinoEquation}, then the exterior derivative of the latter
is given by
\begin{equation}
  \label{ExteriorDerivativeOfGravitinoFieldStrength}
  \def\arraystretch{1.8}
  \begin{array}{ll}
 \colorbox{lightblue}{$    \nabla_{[a_1}
    \rho_{a_2 a_3]}
    \;\;=\;\;
    \tfrac{1}{3}
    \underbrace{
      \Gamma_{b[a_1}
      \nabla^b\rho_{a_2 a_3]}
    }_{
      \mathclap{
      \scalebox{.7}{ \eqref{DiracLikeDerivativeOfGravitinoFieldStrength}
      }}
    }
    \;-\;
    \tfrac{1}{15}
    \Gamma_{[a_1 a_2}
    \underbrace{
    \nabla_b
    \,
    \rho^b{}_{a_3]}
    }_{
      \mathclap{
        \scalebox{.7}{
          \eqref{DivergenceOfGravitinoFieldStrength}
        }
      }
    }
    \;-\;
    \tfrac{1}{3}
    \underbrace{
    \Gamma^{b_1 b_2}
    \Gamma_{[b_1 b_2}
    \nabla_{a_1}
    \rho_{a_2 a_3]} 
    }_{
      \mathclap{
        \scalebox{.7}{
          \eqref{5IndexDerivativeOfRho}        
        }
      }
    }.
    $}
    
  \end{array}
\end{equation}
Here each of the three terms on the right, and hence the exterior derivative itself, is an algebraic expression in $\rho$ and the flux density:

\def\arraystretch{1.7}
\begin{align}
  \label{5IndexDerivativeOfRho}
    \Gamma_{[a_1 a_2}
    \covariantderivative_{a_3}
    \rho_{a_4 a_5]}
    \;
    &=\;
      \overline{H}_{[a_1}
      \Gamma_{a_2 a_3}
      \rho_{a_4 a_5]}
    -
    \tfrac{1}{3}
    \,
    (G_4)_{
      b
      \,
      [a_1 a_2 a_3
    }
    \Gamma^b
    \rho_{a_4 a_5]}
 \\[15pt]\label{DivergenceOfGravitinoFieldStrength}
  \covariantderivative_b
  \,
  \rho^b{}_a
  \;
  &=\;
  \tfrac{5}{84}
  \,
  \Gamma^{b_1 \cdots b_4}
  \underbrace{
    \Gamma_{[b_1 b_2}
    \covariantderivative_{b_3}
    \rho_{b_4 \, a]}
  }_{
    \mathclap{
      \scalebox{.7}{
       \eqref{5IndexDerivativeOfRho} 
      }
    }
  }
 \\\label{DiracLikeDerivativeOfGravitinoFieldStrength}
  \Gamma^{a [c_1}
  \nabla_{a}
  \rho^{c_2 c_3]}
  \;
  &=\;
  \raisebox{-3pt}{$
  \begin{array}{l}
  -
  \Gamma^{[c_1 c_2}
  \overbrace{
    \nabla_b 
    \,
    \rho^{|b| c_3]}
  }^{
    \mathclap{
      \scalebox{.7}{
        \eqref{DivergenceOfGravitinoFieldStrength}
      }
    }
  }
  \,+\,
  2
  \,
  \overline{H}_{b}
  \,
  \Gamma^{[b \, c_1}
  \rho^{c_2 c_3]}
  \\
    +
    \;
    2
    \tfrac
      {5! \cdot 84}
      {7! \cdot 4!}
    \,
    \epsilon^{
      c_1 c_2 c_3 
      \,
      a_1 \cdots a_8
    }
    \big(
      \tfrac
        {12}
        {4! \cdot 4!}
      (G_4)_{a_1 \cdots a_4}
      \Gamma_{a_5 a_6}\rho_{a_7 a_8}
      -
      \tfrac{1}{6!}
      (G_7)_{b a_1 \cdots a_6}
      \Gamma^b
      \rho_{a_7 a_8}
    \big)
    \,.
    \end{array}
  $}
\end{align}
\end{lemma}
\begin{proof}
The strategy is to combine the $(\psi^1)$-components of the fact that $\mathrm{d}^2 = 0$ on the flux densities $G_4$ and $G_7$, while using the Bianchi identities for $G_4^s$ and $G_7^s$ and observing the following two Clifford-contraction identities (checked in \cite{AncillaryFiles}, using the assumption that $\rho_{ab}$ is irreducible and the fact that the Gamma-matrices are covariantly constant \eqref{GammaMatricesAreCovariantlyConstant}):
\vspace{0cm}
\def\arraystretch{1.9}
\begin{align}
\label{DivergenceOfRhoAsContractionOf2ndDiracExteriorDerivative}
  \Gamma^{b_1 \cdots b_4}
  \,
  \Gamma_{[b_1 b_2}
  \covariantderivative_{b_3}
  \rho_{b_4 a]}
  &\;
  =
  \;\;
  \tfrac{84}{5}
  \,
  \covariantderivative_{b}
  \,
  \rho^{b}{}_{a}
  \,,
  \\
  \label{A2IndexContraction}
  \Gamma^{b_1 b_2}
  \,
  \Gamma_{[b_1 b_2}
  \covariantderivative_{a_1}
  \rho_{a_2 a_3]}
  &
  \;=\;\;
  \,
  \Gamma_{b[a_1}
  \covariantderivative^{b}
  \rho_{a_2 a_3]}
  -
  \tfrac{1}{5}
  \Gamma_{[a_1 a_2}
  \covariantderivative^b
  \rho_{|b|\, a_3]}
  -
  3
  \covariantderivative_{[a_1}
  \rho_{a_2 a_3]}
  \,.
\end{align}

\smallskip
To this end, we will write $\mathcal{O}(\psi^{\neq 1})$ for all summands of an expression whose order in $\psi$ 
is different from 1, hence to be disregarded for the present purpose.
Moreover, notice in the following the use of the Dirac adjoint \eqref{BarAdjointnessOfGammaMatrices}
$
  \overline{H_a \psi}
  \;=\;
  \overline{\psi} \; \overline{H}_a
$
of $H_a$ \eqref{FormOfGravitinoFieldStrengthImpliedByG4sBianchi}, given by:
\begin{equation}
 \label{DiracAdjointOfH}
     \overline{H}_a
     \;\;
     =
     \;\;
      \tfrac{1}{6}
      \tfrac{1}{3!}
      (G_4)_{a \, b_1 b_2 b_3}
      \Gamma^{b_1 b_2 b_3}
      \,{\color{purple}+}\,
      \tfrac{1}{12}
      \tfrac{1}{4!}
      (G_4)^{b_1 \cdots b_4}
      \Gamma_{a \, b_1 \cdots b_4}
      \,.
\end{equation}

Now first consider the condition obtained from $G_4$:
$$
  \def\arraystretch{1.7}
  \begin{array}{ll}
    \mathllap{0\;}\!\!
    \;=\;
    \mathrm{d}
    \,
    \mathrm{d}
    \,
    \tfrac{1}{4!}
    \,
    (G_4)_{a_1 \cdots a_4}
    \,
    e^{a_1}\cdots e^{a_4}
    \\
    \;=\;
    \mathrm{d}
    \Big(
      \tfrac{1}{4!}
      \big(
      \covariantderivative_{[a_1}
      (G_4)_{a_2 \cdots a_5]}
      \big)
      e^{a_1} \cdots e^{a_5}
      +
      \tfrac{1}{4!}
      \psi^\beta 
      \big(
      \covariantderivative_\beta
      (G_4)_{a_1 \cdots a_4}
      \big)
      e^{a_1} \cdots e^{a_4}
      \\
      \hspace{1.1cm}
      +
      \tfrac{1}{3!}
      (G_4)_{b a_1 a_2 a_3}
      \big(
        \overline{\psi}
        \,
        \Gamma^b
        \,
        \psi
      \big)
      e^{a_1} e^{a_2} e^{a_3}
    \Big)
    \\
    \;=\;
    \tfrac{1}{4!}
    (H_{[a_1}\psi)^\beta
    \big(
      \nabla_{|\beta|} 
      (G_4)_{a_2 \cdots a_5]}
    \big)
    e^{a_1}
    \cdots
    e^{a_5}  -
    \tfrac{1}{4!}
    \psi^\beta
    \big(
    \covariantderivative_{[a_1}
    \covariantderivative_{|\beta|}
    (G_4)_{a_2 \cdots a_5]}
    \big)
    e^{a_1} \cdots e^{a_5}
    & \proofstep{
      by
      \hspace{-6pt}
      \def\arraystretch{.9}
      \begin{tabular}{l}

 \eqref{RhoFrameFieldExpansion}
 \\ \eqref{DerivingClosureOfBosonic4FluxDensity}
    \end{tabular}
    }
    \\
    \hspace{.7cm}
   
    -
    \tfrac{1}{3!}
    (G_4)_{b a_1 a_2 a_3}
    \big(
      \overline{\psi}
      \,
      \Gamma^b
      \,
      \rho_{a_4 a_5}
    \big)
    e^{a_2} \cdots e^{a_5}
    \\
    \hspace{.6cm}
    +
    \mathcal{O}(\psi^{\neq 1})
    \\
    \;=\;
    \overline{\psi}
    \Big(
    \tfrac{1}{2}
      \overline{H}_{[a_1}
      \Gamma_{a_2 a_3}
      \rho_{a_4 a_5]}
    -\tfrac{1}{2}
    \covariantderivative_{[a_1}
    \Gamma_{a_2 a_3}
    \rho_{a_4 a_5]}
    -
    \tfrac{1}{3!}
    (G_4)_{
      b
      \,
      [a_1 a_2 a_3
    }
    \Gamma^b
    \rho_{a_4 a_5]}
    \Big)
    e^{a_1} \cdots e^{a_5}
    &
    \proofstep{
      by
      \hspace{-6pt}
      \def\arraystretch{.9}
      \begin{tabular}{l}
\eqref{ShiftingOfSpinorialIndicesUnderCovariantDerivative}
   \\ 
   \eqref{OddCovariantDerivativeOfFluxDensity}
    \end{tabular}
    }
    \\
    \hspace{.9cm}
    +
    \mathcal{O}\big(
      \psi^{\neq 1}
    \big)
    \,,
  \end{array}
$$
which proves \eqref{5IndexDerivativeOfRho}.
Inserting this into \eqref{DivergenceOfRhoAsContractionOf2ndDiracExteriorDerivative} proves \eqref{DivergenceOfGravitinoFieldStrength}.

\smallskip

Notice for the following that the divergence
of $\rho_{ba}$, on the right hand side of \eqref{DivergenceOfRhoAsContractionOf2ndDiracExteriorDerivative}, appears in half of the summands of the divergence of $\Gamma_{[b\, c_1} \rho_{c_2 c_3]}$ as follows, just by the combinatorics of skew-symmetrization:
\begin{equation}
  \label{DistributingDifferentialOverGammaabrhocd}
  \nabla^{b}
  \,
  \Gamma_{
    [b 
    \,
    c_1
  }
  \rho_{c_2 c_3]}
  \;\;=\;\;
  \tfrac{1}{2}
  \underbrace{
  \Gamma_{
    b 
    \,
    [c_1
  }
  \nabla^{b}
  \rho_{c_2 c_3]}
  }_{
    \mathclap{
      \scalebox{.7}{
        \eqref{DiracLikeDerivativeOfGravitinoFieldStrength}
      }
    }
  }
  +
  \tfrac{1}{2}
  \Gamma_{
    [c_1
    c_2
  }
  \underbrace{
    \nabla^{b}
    \rho_{|b| c_3]}
  }_{
    \mathclap{
      \scalebox{.7}{
        \eqref{DivergenceOfGravitinoFieldStrength}
      }
    }
  }
\end{equation}
(where, just for emphasis, we also moved the covariant derivative, using again that the $\Gamma$-matrices are covariantly constant \eqref{GammaMatricesAreCovariantlyConstant}).
Then consider the corresponding condition obtained from $G_7$:
$$
  \def\arraystretch{2.1}
  \begin{array}{ll}
    \mathllap{0\;}\!\!
    \;=\;
    \mathrm{d}
    \,
    \mathrm{d}
    \,
    \tfrac{1}{7!}
    (G_7)_{a_1 \cdots a_7}
    \,
    e^{a_1} \cdots e^{a_7}
    \\
    \;=\;
    \mathrm{d}
    \Big(
      \tfrac{1}{7!}
      \big(
        \nabla_{a_1}
        (G_7)_{a_2 \cdots a_8}
      \big)
      e^{a_1} \cdots e^{a_8}
      +
      \tfrac{1}{7!}
      \psi^\beta
      \nabla_\beta
      (G_7)_{a_1 \cdots a_7}
      e^{a_1}
      \cdots 
      e^{a_7}
      \\
      \hspace{1.1cm}
      +
      \tfrac{1}{6!}
      (G_7)_{b a_1 \cdots a_6}
      \big(\,
        \overline{\psi}
        \,
        \Gamma^b
        \,
        \psi
      \big)
      \,
      e^{a_1} \cdots e^{a_6}
    \Big)
    \\
    \;=\;
    \bigg(
    \tfrac{1}{7!}
    \psi^\beta
    \nabla_\beta
    \nabla_{[a_1}
    (G_7)_{a_2 \cdots a_8]}
    \\
    \hspace{1cm}
    +
    \tfrac{1}{7!}
    \,
    \psi^\beta
    \big(\,
      \overline{H}_{[a_1}
      -
      \nabla_{[a_1}
    \big)
      \nabla_{|\beta|}
      (G_7)_{a_2 \cdots a_8]}
    -
    \tfrac{1}{6!}
    (G_7)_{b a_1 \cdots a_6}
    \big(\,
      \overline{\psi}
      \,\Gamma^b\,
      \rho_{a_7 a_8}
    \big)
\!    \bigg)
    e^{a_1}
    \cdots
    e^{a_8}
    &
    \proofstep{
      by 
      \eqref{RhoFrameFieldExpansion}
    }    
    \\
    \hspace{.6cm}
    +
    \mathcal{O}(\psi^{\neq 1})
    \\
    \;=\;
    \overline{\psi}
    \bigg(
    \tfrac
      {12}
      {4! \cdot 4!}
    \,
    (G_4)_{
      [a_1 \cdots a_4}
    \,
    \Gamma_{a_5 a_6}
    \rho_{a_7 a_8]}
    &
    \proofstep{
      by
      \hspace{-6pt}
      \def\arraystretch{.9}
      \begin{tabular}{l}
      \eqref{BosonicComponentOfG7BianchiIdentity}
      \\
      \eqref{OddCovariantDerivativeOfFluxDensity}
      \end{tabular}
    }
    \\
    \hspace{1cm}
    +
    \tfrac
      {1}
      {5! \cdot 84}
    \big(\,
      \overline{H}_{[a_1}
    -
    \nabla_{[a_1}
    \big)
    \epsilon_{
      a_2 \cdots a_8]
      \,
      b_1 \cdots b_4
    }
    \Gamma^{[b_1 b_2}
    \rho^{b_3 b_4]}
    -
    \tfrac{1}{6!}
    (G_7)_{b a_1 \cdots a_6}
    \big(\,
      \overline{\psi}
      \,\Gamma^b\,
      \rho_{a_7 a_8}
    \big)
    \! \bigg)
    e^{a_1}
    \cdots
    e^{a_8}
    \\
    \hspace{.6cm}
    +
    \mathcal{O}(\psi^{\neq 1})    \;.
  \end{array}
$$
Contracting this with 
$\epsilon^{a_1 \cdots a_8 c_1 c_2 c_3}$, using \eqref{ContractingKroneckerWithSkewSymmetricTensor} and \eqref{LCTensorIsCovariantyConstant}, yields (where under the brace we recall \eqref{G7HodgeDualToG4InComponents}):
$$
  \def\arraystretch{1.8}
  \begin{array}{l}
    \tfrac{7! \cdot 4!}{5! \cdot 84}
    \big(\,
      \overline{H}_{a_1}
      -
      \nabla_{a_1}
    \big)
    \Gamma^{[a_1 c_1}
    \rho^{c_2 c_3]}
    \\
    \hspace{.6cm}
    +
    \;
    \epsilon^{
      c_1 c_2 c_3 
      \,
      a_1 \cdots a_8
    }
    \Big(
      \tfrac
        {12}
        {4! \cdot 4!}
      (G_4)_{a_1 \cdots a_4}
      \Gamma_{a_5 a_6}\rho_{a_7 a_8}
      -
      \tfrac{1}{6!}
      \underbrace{
        (G_7)_{b a_1 \cdots a_6}
      }_{\color{gray} 
        \mathclap{
          \tfrac{1}{4!}
          \epsilon_{
            b 
            \,
            a_1 \cdots a_6
            d_1 \cdots d_4
          }
          (G_4)^{d_1 \cdots d_4}
        }
      }
        \Gamma^b
        \rho_{a_7 a_8}
    \Big)
    \;=\;
    0
    \,.
  \end{array}
$$
Inserting \eqref{DistributingDifferentialOverGammaabrhocd} for the differential term in this last expression yields \eqref{DiracLikeDerivativeOfGravitinoFieldStrength}.
Finally, inserting these three equations into \eqref{A2IndexContraction} manifestly gives the final claim
\eqref{ExteriorDerivativeOfGravitinoFieldStrength}.
\end{proof}

With these preliminaries in hand, we dive into the analysis of the torsion and gravitino Bianchi identities:

\newpage 
\begin{lemma}[\bf Gravitational Bianchi identities in components]
\label{SuperFluxAndGravitinoBianchiEquivalentToRaritaSchwinger}
 Assuming the Bianchi identities for $G_4^s$ (Lem. \ref{SuperBianchiIdentityForG4InComponents}) and $G_7^s$ (Lem. \ref{SuperBianchiIdentityForG7InComponents}), the  gravitational Bianchi identities \eqref{SuperGravitationalBianchiIdentities} (Rem. \ref{RoleOfTheSuperGravitationalBianchiIdentities}) are equivalent to the combination of the following conditions:
 \begin{itemize}
   \item[\bf (i)]
   the bosonic coframe component of the gravitino field strength \eqref{RhoFrameFieldExpansion}
   satisfies the Rarita-Schwinger equation:
   \begin{equation}
     \label{GravitinoEquation}
  \colorbox{lightblue}{$      \Gamma^{a \, b_1 b_2}
      \rho_{b_1 b_2}
      \;=\;
      0
      \,,
      $}
   \end{equation}
   \item[{\bf (ii)}]
   the odd coframe components of the super-curvature \eqref{CurvatureTensorInFrameField} are fixed by:
  \begin{equation}
  \label{SolutionForTheta}
  \colorbox{lightblue}{$
  \CurvatureAtPsiOne_{a b c}
  \;=\;
    \,+\, \Gamma_a \, \rho_{b c}
    \,-\, \Gamma_c \, \rho_{a b}
    \,+\, \Gamma_b \, \rho_{c a}
    \,,
  $}
  \end{equation}

  \vspace{-3pt}
  
  \begin{equation}
    \label{SolutionForK}
    \colorbox{lightblue}{$
    \def\arraystretch{1.6}
    \begin{array}{ll}
    K^{a_1 a_2}
    & = \;
    +
    \tfrac{1}{6}
    \Big(
      (G_4)^{a_1 a_2 \, b_1 b_2}
      \Gamma_{b_1 b_2}
      \;+\;
      \tfrac{1}{4!}
      (G_4)_{b_1 \cdots b_4}
      \Gamma
        ^{a_1 a_2 \, b_1 \cdots b_4}
    \Big)
    \\
    & = \;
    +
    \tfrac{1}{6}
    \Big(
      (G_4)^{a_1 a_2 \, b_1 b_2}
      \Gamma_{b_1 b_2}
      \;+\;
      \tfrac{1}{5!}
      (G_7)
        ^{a_1 a_2 \, b_1 \cdots b_5}
      \Gamma
        _{b_1 \cdots b_5}
    \Big)
    \,,
    \end{array}
    $}
  \end{equation}
   \item[{\bf (iii)}] 
   the $(\psi^2)$-component of the gravitino field strength \eqref{RhoFrameFieldExpansion} vanishes:   
   \begin{equation}
     \label{VanishingOfpsi2ComponentOfRho}
  \colorbox{lightblue}{$     
    \big(\, 
      \overline{\psi}
      \,\kappa\, 
      \psi
    \big)
     \;=\;
     0
     \,.
     $}
   \end{equation}
\end{itemize}
\end{lemma}
\begin{proof}
  First, the \colorbox{lightgray}{\bf torsion Bianchi identity}
  \eqref{SuperGravitationalBianchiIdentities}
  has the following coframe field components, in terms of those of the curvature tensor  \eqref{CurvatureTensorInFrameField}:
  \begin{equation}
  \label{ComponentsOfTorsionBianchiIdentity}
  \def\arraystretch{1.6}
  \begin{array}{l}
    R^{a b}\, e_b
    \;=\;
    - \, 2 
    \big(\,
      \overline{\psi}
      \,\Gamma^{a}\,
      \rho
    \big)
    \\
    \;\Leftrightarrow\;
    \left\{
    \def\arraycolsep{0pt}
    \def\arraystretch{1.6}
    \begin{array}{l}
      \scalebox{.7}{
        \color{gray}
        $(\psi^0)$
        \;
      }
      R^{a}{}_{[b_1 b_2 b_3]}
      \,
      e^{b_1}\, e^{b_2}\, e^{b_3}
      \;=\;
      0
      \,,
      \\
      \scalebox{.7}{
        \color{gray}
        $(\psi^1)$
        \;
      }
      \big(
        \,
        \overline{\psi}
        \,
        \CurvatureAtPsiOne^{a}{}_{b_1 b_2}
        \,
      \big)
      \, e^{b_1}\, e^{b_2}
      \;=\;
      +
      \big(\,
        \overline{\psi}
        \,\Gamma^a\,
        \rho_{b_1 b_2}
      \big)
      e^{b_1}\, e^{b_2}
      \,,
      \\
      \scalebox{.7}{
        \color{gray}
        $(\psi^2)$
        \;
      }
      \big(\,
        \overline{\psi}
        \,K^{a b}\,
        \psi
      \big)
      e_b
      \;=\;
      - 2 
      \big(\,
        \overline{\psi}
        \,\Gamma^a\,        
        H_b
        \psi
      \big)
        \,
        e^b
      \,,
      \\
      \scalebox{.7}{
        \color{gray}
        $(\psi^3)$
        \;\;
      }
      2
      \Big(
        \overline{\psi}
        \,\Gamma^a\,
        \big(\,
          \overline{\psi}
          \,\kappa\,
          \psi
        \big)
      \Big)
      \;=\;
      0
      \,.
    \end{array}
    \right.
  \end{array}
\end{equation}
Here:

\smallskip 
\noindent \fbox{{\bf Torsion Bianchi at} $\psi^0$} The $(\psi^0)$-component in \eqref{ComponentsOfTorsionBianchiIdentity} holds identically (via Rem. \ref{RoleOfTheSuperGravitationalBianchiIdentities}) as it does not involve the prescribed field $G_4$.

\smallskip 
\noindent \fbox{{\bf Torsion Bianchi at} $\psi^1$}
The $(\psi^1)$-component  says that
\begin{equation}
  \label{EquationForThetaComponentOfR}
  \tfrac{1}{2}
  \big(
    \CurvatureAtPsiOne_{a b_1 b_2}
    -
    \CurvatureAtPsiOne_{a b_2 b_1}
  \big)
  \;=\;
  + \Gamma_a \rho_{b_1 b_2}
  \,.
\end{equation}
Hence adding up three copies of this equation with cyclically permuted indices, and using the skew symmetries $\CurvatureAtPsiOne^{a b}{}_c = \CurvatureAtPsiOne^{[a b]}{}_c$ and $\rho_{a b} = \rho_{[a b]}$ 
$$
  \def\arraystretch{1.5}
  \begin{array}{r}
  \tfrac{1}{2}
  \big(
    \CurvatureAtPsiOne_{a b_1 b_2}
    -
    \CurvatureAtPsiOne_{a b_2 b_1}
  \big)
  \\
  \;-\;
  \tfrac{1}{2}
  \big(
    \CurvatureAtPsiOne_{b_2 a b_1 }
    -
    \CurvatureAtPsiOne_{b_2 b_1 a}
  \big)
  \\
  \;+\;
  \tfrac{1}{2}
  \big(
    \CurvatureAtPsiOne_{b_1 b_2 a }
    -
    \CurvatureAtPsiOne_{b_1 a b_2}
  \big)
  \\
  \hline
  \CurvatureAtPsiOne_{a b_1 b_2}
  \mathrlap{\,,}
  \phantom{\big)}
  \end{array}
  \;=\;
  \def\arraystretch{1.5}
  \begin{array}{r}
    + \Gamma_a \rho_{b_1 b_2}
    \\
    - \Gamma_{b_2} \rho_{a b_1}
    \\
    + \Gamma_{b_1} \rho_{b_2 a}
    \\
    \phantom{A}
  \end{array}
$$
this implies (cf. \cite[(III.3.218)]{CDF91}) that $\CurvatureAtPsiOne$ is as claimed \eqref{SolutionForTheta}, and conversely this solution for $\CurvatureAtPsiOne$ already solves the original equation \eqref{EquationForThetaComponentOfR} for all $\rho$, since
$$
  \def\arraystretch{1.5}
  \begin{array}{l}
    + \Gamma_{a} \rho_{b_1 b_2}
    - \Gamma_{b_2} \rho_{a b_1}
    + \Gamma_{b_1} \rho_{b_2 a}
    \\
    - \Gamma_{a} \rho_{b_2 b_1}
    + \Gamma_{b_1} \rho_{a b_2}
    - \Gamma_{b_2} \rho_{b_1 a}
    \\
    =\;
    +2\,
    \Gamma_a \,
    \rho_{b_1 b_2}
    \,.
  \end{array}
$$

\noindent \fbox{{\bf Torsion Bianchi at} $\psi^2$}
The $(\psi^2)$-component in \eqref{ComponentsOfTorsionBianchiIdentity} has a solution for $K^{a b}$ iff 
$\big(\, \overline{\psi} \,\Gamma^a H^b\, \psi\big)$ is skew-symmetric in $a,b$, in which case the solution is unique.
And indeed, by
\eqref{FormOfGravitinoFieldStrengthImpliedByG4sBianchi},
\eqref{GeneralCliffordProduct} and 
\eqref{VanishingQuadraticForms} we have
$$
  \def\arraystretch{1.7}
  \begin{array}{rcl}
 \big(\,
   \overline{\psi}
     \,
     \Gamma^a 
     H^b
     \,
   \psi
 \big)
 &=&
 \tfrac{1}{6}
 \tfrac{1}{3!}
 (G_4)^b{}_{b_1 b_2 b_3}
 \big(\,
    \overline{\psi}
      \,
      \Gamma^a\Gamma^{b_1 b_2 b_3}
      \,
    \psi
  \big)
  \;-\;
  \tfrac{1}{12}
  \tfrac{1}{4!}
  (G_4)_{b_1 \cdots b_4}
  \big(\,
  \overline{\psi}
    \,
    \Gamma^a
    \Gamma^{
      b
      \,
      b_1 \cdots b_4
    }
    \,
  \psi
  \big)
  \\
  &=&
  -
 \tfrac{1}{6}
 \tfrac{1}{2!}
 (G_4)^{a b \, b_2 b_3}
 \big(\,
    \overline{\psi}
      \,
      \Gamma_{b_2 b_3}
      \,
    \psi
  \big)
  \;-\;
  \tfrac{1}{12}
  \tfrac{1}{4!}
  (G_4)_{b_1 \cdots b_4}
  \big(\,
  \overline{\psi}
    \,
    \Gamma^{
      a b
      \,
      b_1 \cdots b_4}
    \,
  \psi
  \big)
  \end{array}
$$
and hence (cf. \cite[(III.8.58)]{CDF91}) $K^{a b}$ is as claimed in \eqref{SolutionForK}:
\begin{equation}
  \def\arraystretch{1.8}
  \begin{array}{lll}
  K^{a b}
  & = \;
  +
 \tfrac{1}{6}
 \Big(
   (G_4)^{a b \, b_1 b_2}
   \Gamma_{b_1 b_2}
   \;+\;
   \tfrac{1}{4!}
   (G_4)_{b_1 \cdots b_4}
   \Gamma
     ^{a b \, b_1 \cdots b_4}
  \Big)
  \\
  & =
  +
 \tfrac{1}{6}
 \Big(
   (G_4)^{a b \, b_1 b_2}
   \Gamma_{b_1 b_2}
   \;+\;
   \tfrac{1}{4! \cdot 5!}
   (G_4)_{b_1 \cdots b_4}
   \epsilon
     ^{
       a b \, b_1 \cdots b_4
       \,
       c_1 \cdots c_5
     }
    \Gamma_{c_1 \cdots c_5}
  \Big)
    &
    \proofstep{
      by
      \eqref{ExamplesOfHodgeDualCliffordElements}
    }
  \\
  & =
  +
 \tfrac{1}{6}
 \Big(
   (G_4)^{a b \, b_1 b_2}
   \Gamma_{b_1 b_2}
   \;+\;
   \tfrac{1}{5!}
   (G_7)^{
     a b \, c_1 \cdots c_5
    }
    \Gamma_{c_1 \cdots c_5}
  \Big)
  &
  \proofstep{
    by
    \eqref{G7HodgeDualToG4InComponents}.
  }
  \end{array}
\end{equation}

\noindent \fbox{{\bf Torsion Bianchi at} $\psi^3$}
The $(\psi^3)$-component of the torsion Bianchi \eqref{SuperGravitationalBianchiIdentities}, combined with that of the $G^s_7$-Bianchi \eqref{ComponentsOfBianchiOfGs7} and that of the $G^s_4$-Bianchi \eqref{ComponentsOfBianchiForGs4}, says that {\it all} the following expressions vanish:
\begin{equation}
  \label{AllPairingsWithThePsi2ComponentOfRhoVanish}
  \Big(
    \overline{\psi}
    \,\Gamma_a\,
    \big(\,
      \overline{\psi}
      \,\kappa\,
      \psi
    \big)
  \Big)
  \;=\;
  0
  \,,
  \;\;\;\;\;\;
  \Big(
    \overline{\psi}
    \,\Gamma_{a_1 a_2}\,
    \big(\,
      \overline{\psi}
      \,\kappa\,
      \psi
    \big)
  \Big)
  \;=\;
  0
  \,,
  \;\;\;\;\;\;
  \Big(
    \overline{\psi}
    \,\Gamma_{a_1 \cdots a_5}\,
    \big(\,
      \overline{\psi}
      \,\kappa\,
      \psi
    \big)
  \Big)
  \;=\;
  0
  \,.
\end{equation}
By 
Lem. \ref{XiVanishesIfAllSymmetricPairingsOntoPsiVanish} this finally implies the vanishing of $\big(\overline{\psi}\,\kappa\, \psi\big)$, as claimed \eqref{VanishingOfpsi2ComponentOfRho}.

\medskip

\noindent
Next, the \colorbox{lightgray}{\bf gravitino Bianchi identity} \eqref{SuperGravitationalBianchiIdentities}
  has the following coframe field components:
  \begin{equation}
    \label{ComponentsOfGravitinoBianchiIdentity}
    \begin{array}{l}
      \mathrm{d}
      \,
      \rho
      \;+\;
      \tfrac{1}{4}
      \omega^{a b}
      \,
      \Gamma_{a b}
      \rho
      \;\;
      =
      \;\;
       +
      \tfrac{1}{4}
      R^{a b}
      \Gamma_{a b}
      \psi
      \\[5pt]
      \;\Leftrightarrow\;
      \left\{
      \def\arraycolsep{0pt}
      \def\arraystretch{1.6}
      \begin{array}{l}
        \scalebox{.7}{
          \color{gray}
          $(\psi^0)$
          \;
        }
        \Big(
          \covariantderivative_{[a_1}
          \rho_{a_2 a_3]}
          \;+\;
          H_{[a_1}
          \rho_{a_2 a_3]}
        \Big)
        e^{a_1} \, e^{a_2} \, e^{a_3}
        \;=\;
        0
        \,,
        \\
        \scalebox{.7}{
          \color{gray}
          $(\psi^1)$
          \;
        }
        \Big(
        \psi^\alpha
        \tfrac{1}{2}
        \big(
          \covariantderivative_\alpha
          \,
          \rho_{a_1 a_2}
        \big)
        \;+\;
        \big(
        \covariantderivative_{[a_1}
          H_{a_2]}
        \big)
        \psi
        \;-\;
        H_{a_1}
        H_{a_2}
        \psi
        \;-\;
        \tfrac{1}{4}
        \tfrac{1}{2}
        R^{a b}{}_{a_1 a_2}
        \Gamma_{ab}\psi
        \Big)
        e^{a_1} \, e^{a_2}
        \;=\;
        0
        \,,
        \\
        \scalebox{.7}{
          \color{gray}
          $(\psi^2)$
          \;
        }
        \rho_{a b}
        \big(\,
          \overline{\psi}
          \Gamma^a
          \psi
        \big)
        \,
        e^b
        \;+\;
        \Big(
          \tfrac{1}{6}
          \tfrac{1}{3!}
          \,
          \psi^\alpha
          \big(
            \covariantderivative_\alpha
            (G_4)_{a b_1 b_2 b_3}
          \big)
          \,
          \Gamma^{b_1 b_2 b_3}
          \,-\,
          \tfrac{1}{12}
          \tfrac{1}{4!}
          \,
          \psi^\alpha
          \big(
            \covariantderivative_\alpha
            (G_4)^{b_1 \cdots b_4}
          \big)
          \,
          \Gamma_{a b_1 \cdots b_4}
        \Big)
        \psi
        \, e^a
        \\
        \hspace{.6cm}
        \;\;\;+\;
        \big(\,
          \overline{\psi}
          \,
          \CurvatureAtPsiOne^{b_1 b_2}{}_{a}
        \big)
        \tfrac{1}{4}
        \Gamma_{b_1 b_2}\psi
        \,
        e^a
        \;\;
        =
        \;
        0
        \,,
        \\
        \scalebox{.7}{
          \color{gray}
          $(\psi^3)$
          \;
        }
        H_a \psi
        \big(\,
          \overline{\psi}
          \,\Gamma^a\,
          \psi
        \big)
        -
        \tfrac{1}{2}
        \Gamma_{ab}
        \,
        \psi
        \,
        \big(\,
          \overline{\psi}
          \,
          \Gamma^{[a}H^{b]}
          \,
          \psi
        \big)
        \;\;=\;\;
        0
        \,.
      \end{array}
      \right.
    \end{array}
  \end{equation}

\noindent
Here we used that the $(\psi^2)$-component of $\rho$ vanishes by \eqref{VanishingOfpsi2ComponentOfRho}.
\begin{itemize}[leftmargin=.4cm]

\item 
The $(\psi^1)$-component in
\eqref{ComponentsOfGravitinoBianchiIdentity} gives the rheonomic propagation of $\rho_{a b}$ along the odd super-spacetime directions.
\item
In the  $(\psi^2)$- and $(\psi^3)$-component
we have inserted the particular form of $\rho$ from \eqref{FormOfGravitinoFieldStrengthImpliedByG4sBianchi} and the form of $R^{a b}$ from \eqref{CurvatureTensorInFrameField} and \eqref{ComponentsOfTorsionBianchiIdentity}; 
\end{itemize}

\smallskip

\noindent
Next we discuss the $(\psi^2)$- and then the $(\psi^3)$- and $(\psi^0)$-components in detail.

\medskip

\noindent
\fbox{\bf Gravitino Bianchi at $\psi^2$} 
Observe that the $(\psi^2)$-component in
\eqref{ComponentsOfGravitinoBianchiIdentity} is equivalent to the vanishing of this expression:
$$
  \def\arraystretch{1.6}
  \begin{array}{ll}
        \rho_{c a}
        \big(\,
          \overline{\psi}
          \Gamma^{c}
          \psi
        \big)
        e^{a}
        \;+\;
        \Big(
          \tfrac{1}{6}
          \tfrac{1}{3!}
          \underset{\color{gray}
            \tfrac{4!}{2}
            \big(
              \overline{\psi}
              \,\Gamma_{[a b_1}\,
              \rho_{b_2 b_3]}
            \big)
          }{
            \underbrace{
            \psi^\alpha
            \big(
              \covariantderivative_\alpha
              (G_4)_{a b_1 b_2 b_3}
            \big)
            }
          }
          \Gamma^{b_1 b_2 b_3}
          +
          \tfrac{1}{12}
          \tfrac{1}{4!}
          \underset{\color{gray}
            \tfrac{4!}{2}
            \big(
              \overline{\psi}
              \,\Gamma^{[b_1 b_2}\,
              \rho^{b_3 b_4]}
            \big)
          }{
            \underbrace{
          \psi^\alpha
          (
            \covariantderivative_\alpha
            (G_4)^{b_1 \cdots b_4}
          )
          }
        }
          \Gamma_{a b_1 \cdots b_4}
        \Big)
        \psi
        \, e^a
        \;
        \underset{\color{gray} 
          \mathclap{
           +
           \big(\,
           \overline{\psi}
           (
              \Gamma_{b_1}\rho_{b_2 a}
              -
              \Gamma_a\rho_{b_1 b_2}
              +
              \Gamma_{b_2}\rho_{a b_1}
            )
            \big)
          }
        }{
        \underbrace{
        +\;
        \big(\,
            \overline{\psi}
            \,
            \CurvatureAtPsiOne_{b_1 b_2 a}
          \big)
        }
        }
        \tfrac{1}{4}
        \Gamma^{b_1 b_2}\psi
        \,
        e^a
    \\
    \;=\;
    \rho_{c a}
    \big(\,
      \overline{\psi}
      \,\Gamma^c\,
      \psi
    \big)
    \,
    e^a
    \;-\;
    \frac{1}{3}
    \Gamma^{b_1 b_2 b_3}
    \psi
    \,
    \big(\,
      \overline{\psi}
      \,
      \Gamma_{[a b_1}
      \,
      \rho_{b_2 b_3]}
    \big)
    e^a
    +
    \tfrac{1}{24}
    \Gamma_{a b_1 \cdots b_4}
    \psi
    \,
    \big(\,
      \overline{\psi}
      \,
      \Gamma^{[b_1 b_2}
      \,
      \rho^{b_3 b_4]}
      \,
      \psi
    \big)
    e^a
    \\
    \hspace{8cm}
    \;-\;
    \tfrac{1}{4}
    \Gamma^{b_1 b_2}
    \psi
    \Big(\!
    \big(\,
      \overline{\psi}
      \,
      \Gamma_{b_1}
      \,
      \rho_{b_2 a}
    \big)
    -
    \big(\,
      \overline{\psi}
      \,
      \Gamma_{a}
      \,
      \rho_{b_1 b_2}
    \big)
    +
    \big(\,
      \overline{\psi}
      \,
      \Gamma_{b_2}
      \,
      \rho_{a b_1}
    \big)
   \! \Big)
    e^a
    \\
    \;=:\;
    Q_{c a}
    \big(\,
     \overline{\psi}
     \,\Gamma^c\,
     \psi
    \big)
    \;+\;
    \tfrac{1}{2}
    Q_{c_1 c_2 a}
    \big(\,
     \overline{\psi}
     \,\Gamma^{c_1 c_2}\,
     \psi
    \big)
    \;+\;
    \tfrac{1}{5!}
    Q_{c_1 \cdots c_5 a}
    \big(\,
     \overline{\psi}
     \,\Gamma^{c_1 \cdots c_5}\,
     \psi
    \big)
    \,,
  \end{array}
$$
where under the braces we used 
\eqref{OddCovariantDerivativeOfFluxDensity}
and \eqref{SolutionForTheta};
then we moved the bispinorial coefficients -- observing that we pick up a sign when passing the odd components of $\rho$ past $\psi$ -- in order to bring out the product $\psi \overline{\psi}$
on which we finally apply Fierz decomposition \eqref{FierzDecomposition} to obtain the following three independent quadratic forms:
\smallskip 
\begin{align*}
    32
    Q_{c a}
  &  =
    32
    \cdot
    \rho_{ca}
    \;-\;
    \tfrac{1}{3}
    \Gamma^{b_1 b_2 b_3}
    \Gamma_c
    \Gamma_{[a b_1}
    \rho_{b_2 b_3]}
    +
    \tfrac{1}{24}
    \Gamma_{ab_1 \cdots b_4}
    \Gamma_c
    \Gamma^{[b_1 b_2}\rho^{b_3 b_4]}
    -
    \tfrac{1}{4}
    \Gamma^{b_1 b_2}
    \Gamma_c
    \big(
      \Gamma_{b_1}
      \rho_{b_2 a}
      -
      \Gamma_a \rho_{b_1 b_2}
      +
      \Gamma_{b_2}\rho_{a b_1}
    \big),
    \\[5pt]
    32 
    Q_{c_1 c_2 a}
    &
    =
    +
    \tfrac{1}{3}
    \Gamma^{b_1 b_2 b_3}
    \Gamma_{c_1 c_2}
    \Gamma_{[a b_1}
    \rho_{b_2 b_3]}
    -
    \tfrac{1}{24}
    \Gamma_{ab_1 \cdots b_4}
    \Gamma_{c_1 c_2}
    \Gamma^{[b_1 b_2}\rho^{b_3 b_4]}
    +
    \tfrac{1}{4}
    \Gamma^{b_1 b_2}
    \Gamma_{c_1 c_2}
    \big(
      \Gamma_{b_1}
      \rho_{b_2 a}
      -
      \Gamma_a \rho_{b_1 b_2}
      +
      \Gamma_{b_2}\rho_{a b_1}
    \big),
    \\[5pt]
    32 
    Q_{c_1 \cdots c_5 a}
    &
    =
    -
    \tfrac{1}{3}
    \Gamma^{b_1 b_2 b_3}
    \Gamma_{c_1 \cdots c_5}
    \Gamma_{[a b_1}
    \rho_{b_2 b_3]}
    +
    \tfrac{1}{24}
    \Gamma_{ab_1 \cdots b_4}
    \Gamma_{c_1 \cdots c_5}
    \Gamma^{[b_1 b_2}\rho^{b_3 b_4]}
    -
    \tfrac{1}{4}
    \Gamma^{b_1 b_2}
    \Gamma_{c_1 \cdots c_5}
    \big(
      \Gamma_{b_1}
      \rho_{b_2 a}
      -
      \Gamma_a \rho_{b_1 b_2}
      +
      \Gamma_{b_2}\rho_{a b_1}
    \big)
    .
  \end{align*}

 \vspace{2mm}  
\noindent Hence the $(\psi^2)$-component of the gravitino Bianchi identity is equivalent to the joint vanishing of $Q_{c a}$, $Q_{c_1 c_2 a}$ and $Q_{c_1 \cdots c_5 a}$.
Now, direct computation shows \cite{AncillaryFiles} that the Clifford-contractions of $Q_{c a}$  are as follows:
\begin{equation}
  \def\arraystretch{1.6}
  \begin{array}{ccl}
    \Gamma^c \, Q_{c a}
    &=&
    -\frac{261}{2}
    \,
    \Gamma^{b} \, \rho_{a b}
    -
    \frac{31}{12}
    \,
    \Gamma^{a b_1 b_2}
    \rho_{b_1 b_2}
    \\
    \Gamma^a \, Q_{c a}
    &=&
    \frac{43}{2}
    \Gamma^b \, \rho_{c b}
    +
    \frac{53}{12}
    \Gamma^{c b_1 b_2}
    \rho_{b_1 b_2}
    \,.
  \end{array}
\end{equation}
Since the two lines are not multiples of each other, their joint vanishing implies both the  gravitino equation and the irreducibility of $\rho$ (which itself follows already from the gravitino equation, by Lem. \ref{AlgebraicImplicationsOfGravitinoequation}), so that we have found the implications:
\begin{equation}
  \label{RaritaSchwingerEquationFollows}
  Q_{ca}
  \;=\;0
  \hspace{.6cm}
  \Rightarrow
  \hspace{.6cm}
  \Gamma^{a b_1 b_2} \rho_{b_1 b_2} = 0
  \hspace{.6cm}
  \Rightarrow
  \hspace{.6cm}
  \Gamma^b \rho_{a b}
  \;=\;
  0
  \,.
\end{equation}

Conversely, direct but lengthy computation shows \cite{AncillaryFiles} that the irreducibility condition implies that all three terms vanish:
\begin{equation}
  \Gamma^{b'} \rho_{b b'}
  \;=\;
  0
  \hspace{.6cm}
  \Rightarrow
  \hspace{.6cm}
  \left\{
  \def\arraycolsep{0pt}
  \def\arraystretch{1.1}
  \begin{array}{l}
    Q_{ca} = 0
    \,,
    \\
    Q_{c_1 c_2 a} = 0
    \,,
    \\
    Q_{c_1 \cdots c_5 a} = 0
    \,.
  \end{array}
  \right.
\end{equation}
Together this shows that the $(\psi^2)$-component of the gravitino Bianchi identity is equivalent to the gravitino's Rarita-Schwinger equation \eqref{OddCovariantDerivativeOfFluxDensity}.

\medskip

\noindent
\fbox{\bf Gravitino Bianchi at $\psi^3$}
Using \eqref{SolutionForK}, 
the $(\psi^3)$-component in \eqref{ComponentsOfGravitinoBianchiIdentity}
is equivalent to
\begin{equation}
\label{GravitinoBianchiAtPsi3}
\hspace{-2mm} 
  \def\arraystretch{1.5}
  \begin{array}{l}
  \Big(\!\!
  -
  \tfrac{1}{6}
  \tfrac{1}{3!}
  \Gamma_{[a_1 a_2 a_3}
  \psi
  \big(\,
    \overline{\psi}
    \,\Gamma_{a_4]}\,
    \psi
  \big)
  -
  \tfrac{1}{12}
  \tfrac{1}{4!}
  \Gamma_{b a_1 \cdots a_4}
  \psi
  \big(\,
    \overline{\psi}
    \,\Gamma^b\,
    \psi
  \big)
  \\
\quad   +
  \tfrac{
    1
  }{
    4 \cdot 6
  }
  \Gamma_{[a_1 a_2}
  \,
  \psi
  \big(\,
    \overline{\psi}
    \,\Gamma_{a_3 a_4]}\,
    \psi
  \big)
  +
  \tfrac{1}{4\cdot 6}
  \tfrac{1}{24}
  \Gamma^{b_1 b_2}
  \,
  \psi
  \underbrace{
    \big(\,
      \overline{\psi}
      \,\Gamma_{
        b_1 b_2 
        a_1 \cdots a_4
      }\,
      \psi
    \big)
  }_{
    \mathclap{
    \epsilon_{
      b_1 b_2\, a_1 \cdots a_4
      \,
      {\color{darkblue}
        c_1 \cdots c_5
      }
    }
    \big(
      \overline{\psi}
      \,
      \Gamma^{
        \color{darkblue}
        c_1 \cdots c_5
      }
      \,
      \psi
    \big)
    }
  }
  \Big)
  (G_4)^{a_1 \cdots a_4}
 =\;
  0
  \,,
  \end{array}
\end{equation}
where under the brace we recalled \eqref{ExamplesOfHodgeDualCliffordElements}, for use in the following computations.
We claim that the coefficient of $(G_4)^{a_1 \cdots a_4}$ in  \eqref{GravitinoBianchiAtPsi3} vanishes identically, hence that the whole expression holds identically, independently of $G_4$.

To check this, it may be satisfactory to first consider a weaker consequence which may still reasonably be checked by hand, namely the vanishing of this term after its pairing with $\big(\overline{\psi}\, -\big)$: This makes the first summand vanish by \eqref{VanishingQuadraticForms} and the remaining summands become proportional to each other by the Fierz identities \eqref{TheQuarticFierzIdentities}, such as to cancel out:
$$
  \def\arraystretch{1.8}
  \begin{array}{l}
    -\tfrac{1}{12}
    \tfrac{1}{4!}\big(\,
      \overline{\psi}
      \,\Gamma_{b a_1 \cdots a_4}\,
      \psi
    \big)
    \big(\,
      \overline{\psi}
      \,\Gamma^b\,
      \psi
    \big)
    +
    \tfrac{1}{4 \cdot 6}
    \big(\,
      \overline{\psi}
      \,\Gamma_{[a_1 a_2}\,
      \psi
    \big)
    \big(\,
      \overline{\psi}
      \,\Gamma_{a_3 a_4]}\,
      \psi
    \big)
    +
    \tfrac{1}{4 \cdot 6}
    \tfrac{1}{24}
    \big(\,
      \overline{\psi}
      \,\Gamma^{b_1 b_2}\,
      \psi
    \big)
    \big(\,
      \overline{\psi}
      \,
      \Gamma_{
        b_1 b_2 
        \,
        a_1 \cdots a_4
      }\,
      \psi
    \big)
    \\
    \;=\;
    \Big(
      -
      \tfrac{1}{12}
      \tfrac{1}{4!}
      +
      \tfrac{1}{3 \cdot 4 \cdot 6}
      -
      \tfrac{1}{4}
      \tfrac{1}{24}
    \Big)
    \big(\,
      \overline{\psi}
      \,\Gamma_{b a_1 \cdots a_4}\,
      \psi
    \big)
    \big(\,
      \overline{\psi}
        \Gamma^b
      \psi
    \big)
    \\
    \;=\;
    \tfrac{1}{12}
    \Big(
      -
      \tfrac{2}{48}
      +
      \tfrac{8}{48}
      -
      \tfrac{6}{48}
    \Big)
    \big(\,
      \overline{\psi}
      \,\Gamma_{b a_1 \cdots a_4}\,
      \psi
    \big)
    \big(\,
      \overline{\psi}
        \Gamma^b
      \psi
    \big)
    \;=\;
    0
    \,.
  \end{array}
$$

Now to see the vanishing of the full term \eqref{GravitinoBianchiAtPsi3} using heavier Clifford algebra, we first expand its summands into the $\mathrm{Spin}(1,10)$-irreps from \eqref{GeneralCubicFierzIdentities}, which makes its vanishing equivalent to the following four conditions:
$$
  \def\arraystretch{1.8}
  \hspace{-1mm} 
  \begin{array}{r}
    \Big(
    -\tfrac{1}{6}\tfrac{1}{3!}
    \tfrac{1}{11}
    \Gamma_{[a_1 a_2 a_3}
    \Gamma_{a_4]}
    -
    \tfrac{1}{12}
    \tfrac{1}{4!}
    \tfrac{1}{11}
    \Gamma_{b a_1 \cdots a_4}
    \Gamma^b
    +
    \tfrac{1}{4 \cdot 6}
    \tfrac{1}{11}
    \Gamma_{[a_1 a_2}
    \Gamma_{a_3 a_4]}
    -
    \tfrac{1}{4 \cdot 6}
    \tfrac{1}{24}
    \tfrac{1}{77}
    \tfrac{1}{5!}
    \Gamma^{b_1 b_2}
    \epsilon_{
      b_1 b_2
      \, 
      a_1 \cdots a_4
      \,
      \color{darkblue}
      c_1 \cdots c_5
    }
    \Gamma^{
      \color{darkblue}
      c_1 \cdots c_5
    }
    \Big)
    \Xi^{(32)}
    = 0\,,
    \\
    -\tfrac{1}{6}
    \tfrac{1}{3!}
    \Gamma_{[a_1 a_2 a_3}
    \Xi^{(320)}_{a_4]}
    -
    \tfrac{1}{12}
    \tfrac{1}{4!}
    \Gamma^b{}_{a_1 \cdots a_4}
    \Xi^{(320)}_{b}
    -
    \tfrac{1}{4 \cdot 6}
    \tfrac{2}{9}
    \Gamma_{[a_1 a_2}
    \Gamma_{a_3}
    \Xi^{(320)}_{a_4]}
    +
    \tfrac{1}{4 \cdot 6}
    \tfrac{1}{24}
    \tfrac{5}{9}
    \tfrac{1}{5!}
    \Gamma_{b_1 b_2}
    \,
    \epsilon^{
      b_1 b_2
      \,
      a_1 \cdots a_4
      \,
      \color{darkblue}
      c_1 \cdots c_5
    }
    \Gamma_{
      [
      \color{darkblue}
      c_1 \cdots c_4
    }
    \Xi^{(320)}_{
      {\color{darkblue}
        c_5
      }
    ]}
    \;=\;
    0\,,
    \\
    \tfrac{1}{4 \cdot 6}
    \Gamma_{[a_1 a_2}
    \Xi^{(1408)}_{a_3 a_4]}
    \;+\;
    \tfrac{1}{4 \cdot 6}
    \tfrac{1}{24}
    2
    \tfrac{1}{5!}
    \Gamma_{b_1 b_2}
    \epsilon^{
      b_1 b_2
      \,
      a_1 \cdots a_4
      \,
      \color{darkblue}
      c_1 \cdots c_5
    }
    \Gamma_{
      [
      \color{darkblue}
      c_1 c_2 c_3
    }
    \Xi^{(1408)}_{
      {\color{darkblue}
        c_4 c_5
      }
    ]}
    \;=\;
    0\,,
    \\
    \tfrac{1}{4 \cdot 6}
    \tfrac{1}{24}
    \epsilon^{
      b_1 b_2
      \,
      a_1 \cdots a_4
      \,
      \color{darkblue}
      c_1 \cdots c_5
    }
    \Gamma_{b_1 b_2}
    \Xi^{(4224)}_{
      \color{darkblue}
      c_1 \cdots c_5
    }
    \;=\;
    0\,.
  \end{array}
$$
Direct but lengthy computation,
using the irreducibility ($\Gamma^{a_1} \Xi_{a_1 a_2 \cdots a_p} = 0$) of these representations \eqref{TheHigherTensorSpinors},
shows \cite{AncillaryFiles} that these four terms indeed vanish.
This means that the gravitino Bianchi identity at $(\psi^3)$ provides no further condition on the field components beyond the previous conclusion at $(\psi^2)$.

\smallskip

\noindent
\fbox{\bf Gravitino Bianchi at $\psi^0$} Similarly, also the $(\psi^0)$-component in \eqref{ComponentsOfGravitinoBianchiIdentity} is already  implied by the $(\psi^2)$-component: Namely by the irreducibility of $\rho_{a b}$, the exterior derivative $\covariantderivative_{a_1}\rho_{a_2 a_3}$ is already expressed algebraically via Lem. \ref{ImplicationsOfFluxBianchiIdentityOnGravitinoFieldStrength}, and extremely lengthy Clifford algebra manipulations show \cite{AncillaryFiles} that this expression solves the $(\psi^0)$-component in \eqref{ComponentsOfGravitinoBianchiIdentity}.

\smallskip

With the torsion- and gravitino-Bianchi identity thus solved, it follows (e.g. \cite[p. 63]{CF80}\cite[p. 884]{BBLPT88}) on general grounds (Dragon's
Theorem \cite{Dragon79}\cite{Smith84}\cite[Prop. 7]{Lott90}) that also the
\colorbox{lightgray}{\bf curvature Bianchi identity}

\vspace{-.5cm}
\begin{equation}
  \label{ComponentsOfCurvatureBianchi}
  \hspace{-3mm} 
  \def\arraystretch{1.8}
  \begin{array}{l}
    \mathrm{d}
    \,
    R^{a_1 a_2}
    \,+\,\omega^{a_1}{}_{a'_1}
    R^{a'_1 a_2 }
    \,-\,
    R^{a_1 a'_2 }    
    \,
    \omega^{a_2}{}_{a'_2}
    \;=\;
    0
    \\
    \;\;\;\Leftrightarrow\;
    \left\{\!\!\!
    \def\arraystretch{1.6}
    \begin{array}{l}
      \scalebox{.7}{
        \color{gray}
        $(\psi^0)$
        \;\;
      }
      \Big(
        \big(
          \covariantderivative_{b_1}
          R^{a_1 a_2}{}_{b_2 b_3}
        \big)        
      \;-\;
      \big(\,
        \overline{\CurvatureAtPsiOne}
          ^{a_1 a_2}{}_{b_1}
        \,
        \rho_{b_2 b_3}
      \big)
      \Big)
      e^{b_1}
      \,
      e^{b_2}
      \,
      e^{b_3}
      \;=\;
      0
      \,,
      \\
      \scalebox{.7}{
        \color{gray}
        $(\psi^1)$
        \;\;
      }
      \Big(
      \psi^\alpha
      \big(
        \covariantderivative_{\alpha}
        R^{a_1 a_2}{}_{b_1 b_2}
      \big)  
      \;-\;
      \big(\,
        \overline{\psi}
        \,
        \covariantderivative_{b_1}
        \CurvatureAtPsiOne^{a_1 a_2}{}_{b_2}
      \big)
      \;+\;
      \big(\,
        \overline{\CurvatureAtPsiOne}{}^{a_1 a_2}{}_{b_1}
        \,
        H_{b_2} \psi
      \big)
      \;-\;
      \big(\,
        \overline{\psi}
        \,K^{a_1 a_2}\,
        \rho_{b_1 b_2}
      \big)
      \Big)
      e^{b_1}
      \,
      e^{b_2}
      \;=\;
      0
      \,,
      \\
      \scalebox{.7}{
        \color{gray}
        $(\psi^2)$
        \;\;
      }
      \Big(
      2
      R^{a_1 a_2}{}_{ b c }
      \big(\,
        \overline{\psi}
        \,\Gamma^b\,
        \psi
      \big)
      \;+\;
      \psi^\alpha
      \big(\,
      \overline{
      (
      \covariantderivative_\alpha
      \CurvatureAtPsiOne^{a_1 a_2}{}_c
      )
      }
      \,
      \psi
      \,
      \big)
      \;+\;
      \big(\,
        \overline{\psi}
        \,
        \covariantderivative_{c}
        K^{a_1 a_2}
        \,
        \psi
      \big)
      \;-\;
      2
      \big(\,
        \overline{\psi}
        \,K^{a_1 a_2}\,
        H_c
        \psi
      \big)
      \Big)
      e^c
      \;=\;
      0
      \,,
      \\
      \scalebox{.7}{
        \color{gray}
        $(\psi^3)$
        \;\;
      }
      \big(\,
        \overline{\CurvatureAtPsiOne}{}^{a_1 a_2}{}_b
        \,
        \psi
      \big)
      \big(\,
        \overline{\psi}
        \,\Gamma^b\,
        \psi
      \big)
      \;-\;
      \psi^\alpha
      \big(\,
        \overline{\psi}
        \,
        \covariantderivative_\alpha
        K^{a_1 a_2}
        \psi
      \big)
      \;=\;
      0
      \,,
    \end{array}
    \right.
  \end{array}
\end{equation}

\smallskip 
\noindent is already solved in that it implies no further constraints on the fields.
\end{proof}

\smallskip 
This means in particular that the Einstein equation must already be implied from the gravitino Bianchi identity and hence from the gravitino equation of motion. Remarkably, this is the case:


\begin{lemma}[\bf Einstein equation is Susy partner of Rarita-Schwinger equation]
\label{DerivingTheEinsteinEquation}
$\,$

\noindent Given super-flux densities $(G_4^s, G_7^s) \in \Omega^1_{\mathrm{dR}}\big(X; \mathfrak{l}S^4\big)_{\mathrm{clsd}}$
 {\rm\eqref{SuperCFieldBianchiInIntro}} on a super-spacetime $\big(X, (e, \psi, \omega)\big)$, the latter satisfies 
the Einstein equation for the energy-momentum of the C-field flux: 
\footnote{
Due to our sign convention (cf. footnote \ref{ConventionsForTorsionAndCurvature}) the Einstein equation \eqref{TheEinsteinEquation} agrees with that of \cite[(24)]{Figueroa13}.
}
\begin{equation}
  \label{TheEinsteinEquation}
  \colorbox{lightblue}{$
  R_{a}{}^c{}_{b c}
  -
  \tfrac{1}{2}
  R^{c_1 c_2}{}_{\! c_1 c_2 }
  \,
  \eta_{a b}
  \;=\;
  +
  \tfrac
    { 1 }
    { 12 }
  \Big(
    (G_4)_{
      a
      \,
      c_1 c_2 c_3
    }
    (G_4)_b{}^{c_1 c_2 c_3}
    -
    \tfrac{1}{8}
    (G_4)_{c_1 \cdots c_4}
    (G_4)^{c_1 \cdots c_4}
    \,
    \eta_{a b}
  \Big)
  $}
\end{equation}
\begin{equation}
\label{EinsteinEquationInTermsOfRicciCurvature}
  \hspace{1cm}
  \Leftrightarrow
  \hspace{1cm}
  \colorbox{lightblue}{$
  R_{a}{}^c{}_{b c}
  \;=\;
  +
  \tfrac
    { 1 }
    { 12 }
  \Big(
    (G_4)_{
      a
      \,
      c_1 c_2 c_3
    }
    (G_4)_b{}^{c_1 c_2 c_3}
    -
    \tfrac{1}{12}
    (G_4)_{c_1 \cdots c_4}
    (G_4)^{c_1 \cdots c_4}
    \,
    \eta_{a b}
  \Big)
  $}
  \mathrlap{\,.}
\end{equation}
\end{lemma}
\begin{proof}
Consider the spinorial covariant derivative of the gravitino equation evaluated in $(\overline{\psi}\,-)$:
\begin{equation}
  \label{OddExteriorDerivativeOfGravitinoEquation}
  \def\arraystretch{1.8}
  \begin{array}{ll}
    \mathllap{
      0
      \;=\;\;
    }
    -
    \tfrac{1}{2}
    \Big(\,
    \overline{\psi}
    \,
      \Gamma_a{}^{b_1 b_2}
      \,
      \psi^\alpha
      \covariantderivative_\alpha
    \,
    \rho_{b_1 b_2}
    \Big)
    &
    \proofstep{
      by
      \eqref{GravitinoEquation}
      \&      \eqref{GammaMatricesAreCovariantlyConstant}
    }
    \\
    \;=\;
    \underbrace{
    \Big(\,
      \overline{\psi}
      \,
      \Gamma_a{}^{b_1 b_2}
      \covariantderivative_{[b_1}
      H_{b_2]}
      \psi
    \Big)
    }_{{\color{orangeii} \bf (C)}}
    \;-\;
    \underbrace{
    \big(\,
      \overline{\psi}
      \,
      \Gamma_a{}^{b_1 b_2}
      H_{b_1}
      H_{b_2}
      \,
      \psi
    \big)
    }_{{\color{darkblue} \bf (B)}}
    \;-\;
    \underbrace{
    \tfrac{1}{4}
    \tfrac{1}{2}
    R_{b_1 b_2 a_1 a_2}
    \big(\,
    \overline{\psi}
    \,
    \Gamma_a{}^{b_1 b_2}
    \Gamma^{a_1 a_2}
    \,
    \psi
    \big)
    }_{
      {\color{darkgreen} \bf (A)}
    }
    &
    \proofstep{
      by \eqref{ComponentsOfGravitinoBianchiIdentity}.
    }
  \end{array}
\end{equation}
\fbox{\bf {\color{darkgreen} \bf (A)}} Direct Gamma-expansion \eqref{GeneralCliffordProduct} shows that the rightmost
summand (A) is (minus) the superspace Einstein tensor contracted with 
$\big(\, \overline{\psi}\,\Gamma^c\, \psi\big)$:

\vspace{-5mm} 
$$
  \def\arraystretch{1.7}
  \def\arraycolsep{2pt}
  \begin{array}{lcl}
    \tfrac{1}{4}
    R_{b_1 b_2 a_1 a_2}
    \big(\,
    \overline{\psi}
    \Gamma_a{}^{b_1 b_2}
    \Gamma^{a_1 a_2}
    \psi
    \big)
   &=& 
    \tfrac{1}{4}
    \overbrace{
      R^{b_1 [b_2 a_1 a_2]}
    }^{{
      \color{gray}
      = \, 0
      \mathrlap{
        \scalebox{.6}{
          \eqref{ComponentsOfTorsionBianchiIdentity}
        }
      }
    }}
    \big(\,
      \overline{\psi}
      \,\Gamma_{a b_1 b_2 a_1 a_2}\,
      \psi
    \big)
    -
    \tfrac{1}{2}
    R^{b_1 b_2}{}_{a_1 a_2}
    \big(
      \delta
        ^{a_1 a_2}
        _{b_1 b_2}
        \eta_{a c}
        -
        \delta
          ^{a_1 a_2}
          _{a \; b_2}
        \eta
          _{b_1 c}
        +
        \delta
          ^{a_1 a_2}
          _{a \; b_1}
         \eta
           _{b_2 c}
    \big)
    \big(
      \overline{\psi}
      \Gamma^c
      \psi
    \big)
    \\
    &=& 
    -
    \tfrac{1}{2}
    \big(
      R^{b_1 b_2}{}_{b_1 b_2}
      \,
      \eta_{ac}
      -
      R_a{}^b{}_{c b}
      -
      R_a{}^b{}_{c b}
    \big)
    \big(\,
      \overline{\psi}
      \,\Gamma^c\,
      \psi
    \big)
   \\ 
   &=& 
    \big(
    R_a{}^b{}_{cb}
    -
    \tfrac{1}{2}
    R^{b_1 b_2}{}_{b_1 b_2}
    \,
    \eta_{ac}
    \big)
    \big(\,
      \overline{\psi}
      \Gamma^c
      \psi
    \big)
    \,.
  \end{array}
$$
Therefore we need to consider only the $\big(\,\overline{\psi}\Gamma^c\psi\big)$-components of the other two summands.

\smallskip 
\noindent
\fbox{\color{darkblue} \bf (B)} Direct but laborious Gamma-expansion inside the (B)-summand of \eqref{OddExteriorDerivativeOfGravitinoEquation} shows \cite{AncillaryFiles}
that its $\big(\,\overline{\psi}\Gamma^c\psi\big)$-component is the energy-momentum tensor of the C-field:
$$
\hspace{-1mm}
  \def\arraystretch{1.9}
  \begin{array}{l}
    \big(\,
      \overline{\psi}
      \,\Gamma_a{}^{b_1 b_2}\,
      H_{b_1}
      H_{b_2}
      \psi
    \big)
    \\
    \;= \!
    \bigg(
      \overline{\psi}
      \,
      \Gamma_a{}^{b_1 b_2}
      \Big(
        \tfrac{1}{6}\tfrac{1}{3!}
        (G_4)_{b_1 c_1 c_2 c_3}
        \Gamma^{c_1 c_2 c_3}
        -
        \tfrac{1}{12}
        \tfrac{1}{4!}
        (G_4)^{c_1 \cdots c_4}
        \Gamma_{b_1 c_1 \cdots c_4}
      \Big)
      \Big(
        \tfrac{1}{6}\tfrac{1}{3!}
        (G_4)_{b_2 c_1 c_2 c_3}
        \Gamma^{c_1 c_2 c_3}
        -
        \tfrac{1}{12}
        \tfrac{1}{4!}
        (G_4)^{c_1 \cdots c_4}
        \Gamma_{b_2 c_1 \cdots c_4}
      \Big)
      \psi
    \bigg)
    \\
    \;=
    -
    \tfrac
      {1}
      {24}
    \Big(
    (G_4)_{
      a
      \,
      b_1 b_2 b_3
    }
    (G_4)
     _c
     {}
     ^{b_1 b_2 b_3}
     -
     \tfrac{1}{8}
    (G_4)_{b_1 \cdots b_4}
    (G_4)^{b_1 \cdots b_4}
    \,
    \eta_{a c}
    \Big)
    \big(\,
      \overline{\psi}
      \,\Gamma^c\,
      \psi
    \big)
    \,+\,
    \scalebox{.9}{$
  Q_{a a_1 a_2}
  \big(\,
    \overline{\psi}
    \,
    \Gamma^{a_1 a_2}
    \,
    \psi
  \big)
  \;+\;
  Q_{a a_1 \cdots a_5}
  \big(\,
    \overline{\psi}
    \,
    \Gamma^{a_1 \cdots a_5}
    \,
    \psi
  \big)
  $}
  \,.
  \end{array}
$$

\smallskip 
\noindent
\fbox{\color{orangeii} \bf (C)} Finally, simple inspection shows that Gamma-expansion inside the (C)-summand in \eqref{OddExteriorDerivativeOfGravitinoEquation} produces vanishing $\big(\, \overline{\psi}\Gamma^c\psi\big)$-component:
$$
  \def\arraystretch{1.6}
  \begin{array}{lll}
  \big(\,
    \overline{\psi}
    \,\Gamma_a{}^{b_1 b_2}\,
    \covariantderivative_{[b_1}
    H_{b_2]}
    \,
    \psi
  \big)
  &
  =\;
  \Big(
    \overline{\psi}
    \,\Gamma_a{}^{b_1 b_2}\,
    \covariantderivative_{[b_1}
    \big(
      \tfrac{1}{6}
      \tfrac{1}{3!}
      (G_4)_{b2] c_1 c_2 c_3}
      \Gamma^{c_1 c_2 c_3}
      -
      \tfrac{1}{12}
      \tfrac{1}{4!}
      (G_4)^{c_1 \cdots c_4}
      \Gamma_{b_2] c_1 \cdots c_4}
    \big)
    \,
    \psi
  \Big)
  &
  \proofstep{
   by \eqref{FormOfGravitinoFieldStrengthImpliedByG4sBianchi}
  }
  \\
 & =\;
  Q'_{a a_1 a_2}
  \big(\,
    \overline{\psi}
    \,
    \Gamma^{a_1 a_2}
    \,
    \psi
  \big)
  \;+\;
  Q'_{a a_1 \cdots a_5}
  \big(\,
    \overline{\psi}
    \,
    \Gamma^{a_1 \cdots a_5}
    \,
    \psi
  \big)
  &
  \proofstep{
    by 
    \eqref{GeneralCliffordProduct}.
  }
  \end{array}
$$
Inserting these three expressions
for $\big(\,\overline{\psi}\,\Gamma^c\,\psi\big)$-components back into \eqref{OddExteriorDerivativeOfGravitinoEquation} evidently yields the claimed Einstein equation \eqref{TheEinsteinEquation}.

Finally, just to observe that both \eqref{TheEinsteinEquation} and \eqref{EinsteinEquationInTermsOfRicciCurvature} imply that the scalar curvature is given by
\begin{equation}
  \label{ImplicationOfEinsteinequationOnScalarCurvature}
  R^{c_1 c_2}{}_{\! c_1 c_2 }
  \;=\;
  +
    \tfrac
      { 1 }
      { 24 }
    \tfrac{1}{12}
    (G_4)_{c_1 \cdots c_4}
    (G_4)^{c_1 \cdots c_4}
\end{equation}
and that the difference between \eqref{TheEinsteinEquation} and \eqref{EinsteinEquationInTermsOfRicciCurvature} is just half this equation \eqref{ImplicationOfEinsteinequationOnScalarCurvature}.
\end{proof}

\begin{remark}[\bf Role of duality-symmetric Bianchi identities in enforcing the super-torsion constraints]
 \label{RoleOfDualitySymmetricBianchiIdentitiesInEnforcingTheSuperTorsionConstraint}
 The conclusion in \eqref{AllPairingsWithThePsi2ComponentOfRhoVanish} that the $(\psi^2)$-component of $\rho$ vanishes \eqref{VanishingOfpsi2ComponentOfRho} is ultimately enforced by our independent (duality-symmetric) imposition of the $G^s_4$ and $G_7^s$-flux Bianchi identities, since this is what gives the necessary 2-index and the 5-index constraints in \eqref{AllPairingsWithThePsi2ComponentOfRhoVanish}. In previous discussions the same constraint is obtained instead from an extra scaling condition \cite[(16)]{BrinkHowe80}\cite[(III.8.37)]{CDF91}\cite[(53)]{Howe97}.
\end{remark}

\smallskip

\noindent
{\bf Rheonomy.}
\label{Rheonomy}
It just remains to observe that the super-fields used in this super-space formulation of 11d supergravity 
carry -- despite their plethora of super-components -- no further data than expected. This is the property 
called {\it rheonomy} in \cite[\S III.3.3]{CDF91}, where the sketch of a general argument is given. 
A detailed recursive expression of the 11d on-shell superfields on the super-spacetime starting from their restriction to the bosonic body $\bosonic{X}$ is worked out in \cite{Tsimpis04}.

\smallskip 
For our purpose, we highlight rheonomy of the flux density forms:

\begin{lemma}[\bf Rheonomy for Super C-Field flux]
\label{RheonomyForSuperCFieldFLux}
Choosing super-flux densities
$$
  \big(
    G_4^s
    ,\,
    G_7^s
  \big)
  \;\;
    :
  \;\;
  \begin{tikzcd}
    X
    \ar[rr]
    &&
    \Omega^1_{\mathrm{dR}}(
      -
      ;\,
      \mathfrak{l}S^4
    )
  \end{tikzcd}
$$
on a super-spacetime $X$ is tantamount to choosing a solution of 11d SuGra with respect to ordinary flux densities
$$
  \big(
    \eta^{\rightsquigarrow}_X
  \big)^\ast
  \big(
    G_4^s
    ,\,
    G_7^s
  \big)
  \;\;
    :
  \;\;
  \begin{tikzcd}
    \bos{X}
    \ar[
      r,
      "{
        \eta^{\rightsquigarrow}_X
      }"
    ]
    &
    X
    \ar[
      rr,
      "{
        (
          G^s_4
          ,\,
          G^s_7
        )
      }"
    ]
    &&
    \Omega^1_{\mathrm{dR}}(
      -
      ;\,
      \mathfrak{l}S^4
    )
    \,.
  \end{tikzcd}
$$
\end{lemma}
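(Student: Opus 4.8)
The plan is to show that the map
$
  \big(G_4^s,G_7^s\big)\colon X \to \Omega^1_{\mathrm{dR}}(-;\mathfrak{l}S^4)
$
landing in the \emph{closed} sub-super-set $\Omega^1_{\mathrm{dR}}(-;\mathfrak{l}S^4)_{\mathrm{clsd}}$ is equivalent data to the pulled-back ordinary flux densities on $\bos X$ \emph{together with} a solution of 11d SuGra. One direction is immediate: given a superspacetime solving 11d SuGra with the super-fluxes of the form \eqref{SuperFluxDensitiesInIntroduction}, restriction along $\eta^\rightsquigarrow_X$ produces the purely bosonic fluxes \eqref{PurelyBosonicFluxDensitiesInIntroduction}, which still satisfy the Bianchi identities by naturality (Remark \ref{PullbackOfDifferentialFormsViaClassifyingSuperSets}); by Thm.\ \ref{11dSugraEoMFromSuperFluxBianchiIdentity} the existence of the super-flux is equivalent to the equations of motion holding, so nothing is lost. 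The substance is the converse: showing that \emph{any} choice of ordinary flux densities $(G_4,G_7)$ on $\bos X$ extending to a 11d SuGra solution determines the full super-flux $(G_4^s,G_7^s)$ \emph{uniquely}, i.e.\ that the super-components at nonzero Grassmann stage are forced.

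First I would invoke the Cartan property \eqref{CartanProperty}: every differential form on a super-spacetime has a unique frame-field expansion. Thus writing out $G_4^s$ and $G_7^s$ in the super-frame $(e,\psi)$, the only freedom a priori is in the coefficient tensors at each bi-fermionic order; but Thm.\ \ref{11dSugraEoMFromSuperFluxBianchiIdentity}, via Lemmas \ref{SuperBianchiIdentityForG4InComponents} and \ref{SuperBianchiIdentityForG7InComponents}, has already shown that imposing the Bianchi identity \eqref{SuperCFieldBianchiInIntro} \emph{fixes} the $(\psi^2)$-terms of $G_4^s$ and $G_7^s$ to be the canonical spinor bilinears $\tfrac12(\overline\psi\,\Gamma_{a_1a_2}\,\psi)e^{a_1}e^{a_2}$ and $\tfrac{1}{5!}(\overline\psi\,\Gamma_{a_1\cdots a_5}\,\psi)e^{a_1}\cdots e^{a_5}$, with no higher-order terms surviving (the would-be $(\psi^3)$- and $(\psi^4)$-components being killed by the quadratic Fierz identities of Prop.\ \ref{TheFierzIdentitiesOf11dSupergravity} and by Lem.\ \ref{SpinorQuadraticForms}). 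Likewise the odd covariant derivatives $\covariantderivative_\alpha (G_4)_{a_1\cdots a_4}$ and $\covariantderivative_\alpha (G_7)_{a_1\cdots a_7}$ are pinned down by the rheonomy conditions \eqref{OddCovariantDerivativeOfFluxDensity} and \eqref{RheonomyConditionForG7} in terms of the gravitino field strength $\rho_{a_1a_2}$, which in turn (Lem.\ \ref{AlgebraicImplicationsOfGravitinoequation}, eq.\ \eqref{eeComponentOfRhoAsFunctionOfG4}) is an algebraic function of $\covariantderivative_\beta (G_4)$. Hence, recursively in Grassmann degree, all $\theta$-dependent components of $(G_4^s,G_7^s)$ are determined by the $\theta=0$ data $(G_4,G_7)$ and the on-shell superspace geometry $(e,\psi,\omega)$ — which is the content of rheonomy in the sense of \cite[\S III.3.3]{CDF91}, with the detailed recursion as in \cite{Tsimpis04}.

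Putting this together, I would argue that the assignment
$
  (G_4^s,G_7^s)\ \longmapsto\ \big(\eta^\rightsquigarrow_X\big)^\ast(G_4^s,G_7^s)
$
is a bijection between closed $\mathfrak{l}S^4$-valued super-forms of the prescribed shape on $X$ and those pairs of ordinary $(G_4,G_7)$ on $\bos X$ for which $(X,(e,\psi,\omega))$ solves 11d SuGra with $G_4$-source and $G_7$ its Hodge dual. Injectivity is the uniqueness of the frame expansion; surjectivity/well-definedness onto the stated target is exactly Thm.\ \ref{11dSugraEoMFromSuperFluxBianchiIdentity} (the existence of the closed super-form is equivalent to the full set of equations of motion). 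This is what the statement means by ``tantamount''.

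The main obstacle I anticipate is not any single computation but making the recursive/rheonomic uniqueness argument airtight: one must check that at each Grassmann stage the Bianchi identities leave \emph{no} residual freedom beyond $(G_4,G_7)|_{\theta=0}$ and the (already constrained) geometric superfields, i.e.\ that the ``would-be $(\psi^{n})$-components'' are genuinely consumed by Fierz identities rather than imposing or permitting new data. For the flux forms this is contained in Lemmas \ref{SuperBianchiIdentityForG4InComponents}--\ref{SuperFluxAndGravitinoBianchiEquivalentToRaritaSchwinger} up to order $\psi^2$, but a clean statement that the induction closes at all orders relies on Dragon's theorem (cited in the proof of Lem.\ \ref{SuperFluxAndGravitinoBianchiEquivalentToRaritaSchwinger}) plus the general rheonomy principle; I would cite \cite{Tsimpis04} for the explicit higher-order recursion and \cite[\S III.3.3]{CDF91} for the structural argument rather than re-derive it here.
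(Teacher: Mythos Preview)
Your overall strategy is sound and matches the paper's intent: one direction is restriction, the other is rheonomic reconstruction, with Thm.~\ref{11dSugraEoMFromSuperFluxBianchiIdentity} supplying the on-shell equivalence. However, the paper's proof takes a more concrete route than your abstract frame-expansion argument. Rather than invoking the general rheonomy principle and deferring the recursion to \cite{Tsimpis04}, the paper passes to \emph{super-Riemann normal coordinates} about each point $x_0 \in \bos{X}$ (satisfying $\theta^\rho e^a_\rho = \theta^\rho \psi^\alpha_\rho = \theta^\rho \omega_\rho{}^a{}_b = 0$, cf.\ \cite{McArthur84}\cite{Tsimpis04}). In these coordinates the rheonomy condition \eqref{OddCovariantDerivativeOfFluxDensity} integrates directly to an explicit formula
\[
  (G_4)_{a_1\cdots a_4}(x_0,\theta)
  \;=\;
  (G_4)_{a_1\cdots a_4}(x_0,0)
  \;+\;
  12\,\big(\overline{\theta}\,\Gamma_{[a_1 a_2}\,\rho_{a_3 a_4]}(x_0,\theta)\big),
\]
exhibiting the full $\theta$-expansion as determined by the bosonic restriction plus $\rho$. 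Closure of this extension is then checked using the $(\psi^0)$-component of the gravitino Bianchi identity \eqref{ComponentsOfGravitinoBianchiIdentity}.

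Your approach buys generality and makes the logical structure transparent, but the obstacle you identify --- closing the recursion at all Grassmann orders --- is precisely what the normal-coordinate trick sidesteps: it converts the differential rheonomy constraint into an algebraic one for the $\theta$-expansion in one step, rather than order-by-order. You would benefit from incorporating this tool explicitly rather than citing around it.
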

\begin{proof}
  On a neighborhood of any point $ x_0 \in \bos{X}$, we may find {\it super-Riemann normal coordinates} 
  $\big\{(x^\evencoordinateindex)_{\evencoordinateindex = 0}^{10}, (\theta^\oddcoordinateindex)_{\oddcoordinateindex=1}^{32}\big\}$ 
  on $X$ such that (\cite[(43)-(44)]{Tsimpis04} following \cite{McArthur84}):
  \vspace{-2mm} 
  \begin{equation}
    \label{PropertiesOfSuperNormalCoordinates}
    \def\arraystretch{1.4}
    \def\arraycolsep{2pt}
    \begin{array}{rrr}
      \theta^{\oddcoordinateindex}
      \,
      e_{\oddcoordinateindex}^a 
      =
      0\,,
      \\
      \theta^{\oddcoordinateindex}
      \,
      \psi_{\oddcoordinateindex}^{\alpha}
      =
      0\,,
      \\ 
      \theta^{\oddcoordinateindex}
      \,
      \omega_{\oddcoordinateindex}{}^a{}_b
      =
      0
      \,.
   \end{array}
  \end{equation}
  Therefore the $(\psi^1)$-component 
  $\psi^\alpha \covariantderivative_{\alpha} (G_4)_{a_1 \cdots a_4} \,=\, 12 \, \big(\, \overline{\psi} \, \Gamma_{[a_1 a_2}\, \rho_{a_3 a_4]} \big)$ 
  of the $G_4$-Bianchi identity \eqref{OddCovariantDerivativeOfFluxDensity}  says that at any point $x_0$ 
  we may decompose the super-flux density as (cf. \cite[(53)]{Tsimpis04}):
  $$
    (G_4)_{a_1 \cdots a_4}
      \big(
        x_0, 
        \{\theta^{\oddcoordinateindex}\}_{\oddcoordinateindex = 1}^{32}
      \big)
    \;=\;
    \underbrace{
      \big(
        \eta^\rightsquigarrow_X
      \big)^\ast
      (G_4)_{a_1 \cdots a_4}
    (x_0)
    }_{
      \mathclap{
        \raisebox{-4pt}{
          \scalebox{.7}{
            \color{gray}
            \bf
            ordinary flux density
          }
        }
      }
    }
    \;+\;
    \underbrace{
    12
    \,
    \Big(
      \overline{\theta}
      \,\Gamma_{[a_1 a_2}
      \,
      \rho_{a_3 a_4]}(x_0, \{\theta^{\oddcoordinateindex}\}_{\oddcoordinateindex = 1}^{32})
    \Big)
    }_{
      \mathclap{
        \scalebox{.7}{
          \color{gray}
          \bf
          its higher superfield components
        }
      }
    } 
    \,.
  $$
  
  Conversely, given the ordinary 4-flux density, this equation defines its extension to a super-flux-density which is closed, since
  $$
    \def\arraystretch{1.5}
    \begin{array}{lll}
      \Big(
        \covariantderivative_{a_0}
        \big(\,
          \overline{\theta}
          \,
          \Gamma_{[a_1 a_2}
          \,
          \rho_{a_3 a_4]}
        \big)
      \Big)
      e^{a_0} \cdots e^{a_4}
         & \;=\;
      \Big(
        \overline{\theta}
        \,
        \Gamma_{[a_1 a_2}
        \,
        \covariantderivative_{a_0}
        \rho_{a_3 a_4]}
      \Big)
      e^{a_0} \cdots e^{a_4}
      \\
    &  \;=\;
      0
      &
      \proofstep{
        by $(\psi^0)$-component of 
        \eqref{ComponentsOfGravitinoBianchiIdentity}
      .}
    \end{array}
  $$
  
However, we need to show more, since the $(\psi^1)$-component of the $G_7^s$-Bianchi identity \eqref{ComponentsOfBianchiOfGs7} similarly prescribes the rheonomic extension of $G_7$ \eqref{RheonomyConditionForG7}, which however by the $(\psi^2)$-component of \eqref{ComponentsOfBianchiOfGs7} is linearly dependent on $G_4$ \eqref{G7HodgeDualToG4InComponents}. In order for this not to be a further constraint, we observe that the rheonomy \eqref{RheonomyConditionForG7} of $G_7$ is already implied by that for $G_4$ (using their Hodge duality and the gravitino equation of motion):
$$
  \def\arraystretch{1.6}
  \begin{array}{lll}
    \psi^\alpha
    \,
    \covariantderivative_\alpha
    \tfrac{1}{7!}
    (G_7)_{a_1 \cdots a_7}
    &
    \;=\;
    \tfrac{1}{7!\cdot 4!}
    \,
    \psi^\alpha
    \,
    \covariantderivative_\alpha
    \epsilon_{a_1 \cdots a_7 b_1 \cdot b_4} 
    (G_4)^{b_1 \cdots b_4}
    &
    \proofstep{
      by \eqref{G7HodgeDualToG4InComponents}
    }
    \\
    &\;=\;f
    \tfrac{1}{2\cdot 7!}
    \,
    \epsilon_{
      a_1 \cdots a_7 
      b_1 \cdots b_4
    } 
    \big(\,
      \overline{\psi}
      \,
      \Gamma^{[b_1 b_2}
      \,
      \rho^{b_3 b_4]}
    \big)
    &
    \proofstep{
      by \eqref{OddCovariantDerivativeOfFluxDensity}
    }
    \\
   & \;=\;
    \underbrace{
    \tfrac{84}{2\cdot 7!}
    }_{ \color{gray} 1/5! }
    \,
    \big(\,
      \overline{\psi}
      \,
      \Gamma_{[a_1 \cdots a_5}
      \,
      \rho_{a_6 a_7]}
    \big)
    &
    \proofstep{
      by \eqref{TheAlgebraicImplicationsOfGravitinoEquation}\,.
    }
  \end{array}
$$

\vspace{-5mm}
\end{proof}

\medskip 
With Lem. \ref{SuperBianchiIdentityForG4InComponents},\,
\ref{SuperBianchiIdentityForG7InComponents},\, \ref{SuperFluxAndGravitinoBianchiEquivalentToRaritaSchwinger},
\,\ref{DerivingTheEinsteinEquation},
and \ref{RheonomyForSuperCFieldFLux}
the proof of Thm. \ref{11dSugraEoMFromSuperFluxBianchiIdentity} is now complete.

\medskip

In concluding, we highlight how this relates back to fields on the bosonic underlying spacetime $\bosonic{X}$:

\begin{corollary}[\bf 11d SuGra on bosonic spacetime from quantizable super-flux]
\label{11dSugraOnBosonicSpacetimeFromQuantizableSuperFlux}
  Given an ordinary 11-dimensional smooth manifold $\bosonic{X}$ equipped with geometric $\mathrm{Spin}(1,10)$-structure $P \xrightarrow{\;} \bosonic{X}$, there is an isomorphism of smooth super-sets (Def. \ref{SuperSmoothSets}) between 
  \begin{itemize}[leftmargin=.7cm]
  \item[\bf (i)] the on-shell field space (as in \S\ref{SuperSmoothFieldSpaces}) of 11d SuGra on $\bosonic{X}$,

  \item[\bf (ii)] super-flux densities $\big(G_4^s,, G^s_7\big)$ of the form \eqref{SuperFluxDensitiesInIntroduction} on super-spacetime structures on the extending super-manifold $X := \bosonic{X}\big\vert \mathbf{32} \underset{\mathclap{\scalebox{.5}{$\mathrm{Spin}(1,10)$}}}{\times} P$ \eqref{ExtendingSpinManifoldToSupermanifold} which are closed as $\mathfrak{l}S^4$-valued differential forms (Ex. \ref{ClosedlS4ValuedDifferentialForms}).
  \end{itemize}
\end{corollary}
\begin{proof}
  Thm. \ref{11dSugraEoMFromSuperFluxBianchiIdentity} shows that the restriction from $X$ to $\bosonic{X}$ is plotwise injective, and with rheonomy as in \cite[\S 4]{Tsimpis04} (cf. Lem. \ref{RheonomyForSuperCFieldFLux}) it follows that it is plotwise surjective.
\end{proof}

\smallskip

\noindent
{\bf Conclusion.}
While Thm. \ref{11dSugraEoMFromSuperFluxBianchiIdentity} --- apart from some mild but consequential changes of perspective, cf. Rem. \ref{FormOfTheSuperTorsionConstraint} --- is essentially the claim of \cite[\S III.8.5]{CDF91}, which in turn is essentially the claim originating with \cite{CF80}\cite{BrinkHowe80}, the proof seems to have never been recorded, and the necessity of proving also the converse direction (namely that the $\psi^0$- and the $\psi^3$-components of the gravitino Bianchi imply no further conditions besides the Rarita-Schwinger equation, which ends up being the bulk of the work) may not have received attention before and seems out of reach without computer algebra such as \cite{Gr01}.

At any rate, this derivation of on-shell 11d SuGra from just the demand of quantizable super C-field flux is remarkable in view of the (UV-)completion of the theory, as discussed in \S\ref{SuperFluxQuantization}. We discuss further implications in \cite{GSS-M5Brane}\cite{GSS-Exceptional}.

\medskip
\medskip

\end{document}